\documentclass[11pt]{report}

\usepackage{AMonographSections/arxiv}
\renewcommand{\headeright}{Preprint}
\renewcommand{\undertitle}{Preprint}
\renewcommand{\shorttitle}{Quadratic Forms in Gaussian Random Variables}

\usepackage[T1]{fontenc}
\usepackage[utf8]{inputenc}
\usepackage{lmodern}
\usepackage{amsmath,amsfonts,amsthm,mathrsfs,amssymb}
\usepackage{graphicx}
\usepackage{pdflscape}
\usepackage{makecell}
\usepackage{adjustbox}
\usepackage{algorithm,algpseudocode}
\usepackage{tikz}
\makeatletter
\input{WierdTexLibrary/TikZNetworkText}
\makeatother
\usepackage[hidelinks,colorlinks,unicode,linkcolor=blue,citecolor=blue,urlcolor=blue]{hyperref}
\usepackage{doi}
\usepackage{longtable}
\usepackage{multirow}
\usepackage{xparse}
\usepackage{colortbl}
\usepackage{booktabs}
\usepackage{array}
\usepackage{ragged2e}
\usepackage{ltablex}
\keepXColumns
\usepackage[flushleft]{threeparttable}
\usepackage{xcolor}
\usetikzlibrary{calc}
\usetikzlibrary{shapes.symbols}
\usepackage[caption=false,font=footnotesize,subrefformat=parens,labelformat=parens]{subfig}
\usepackage[subfigure]{tocloft}
\usepackage{tabularx}
\usepackage{tabulary}
\usepackage{cleveref}
\usepackage[numbers,sort&compress]{natbib}

\title{Quadratic Forms in Gaussian Random Variables\\\large Theoretical Results and Applications}
\author{%
Mohanad Ahmed\\
\href{mailto:mohanad@caltech.edu}{mohanad@caltech.edu}
\And
Mahmoud Ghazal\\
\href{mailto:mahmoud.ghazal@kaust.edu.sa}{mahmoud.ghazal@kaust.edu.sa}
\And
Maaz Mahadi\\
\href{mailto:maaz.mahadi@kaust.edu.sa}{maaz.mahadi@kaust.edu.sa}
\And
Tareq Y. Al-Naffouri\\
\href{mailto:tareq.alnaffouri@kaust.edu.sa}{tareq.alnaffouri@kaust.edu.sa}\\[0.75em]
Computer, Electrical and Mathematical Science and Engineering Division (CEMSE)\\
King Abdullah University of Science and Technology (KAUST)\\
Thuwal 23955, Saudi Arabia
}
\date{}

\ExplSyntaxOn
\NewDocumentCommand{\mref}{m}{\quinn_mref:n {#1}}
\seq_new:N \l_quinn_mref_seq
\cs_new:Npn \quinn_mref:n #1
 {
  \seq_set_split:Nnn \l_quinn_mref_seq { , } { #1 }
  \seq_pop_right:NN \l_quinn_mref_seq \l_tmpa_tl
  (
  \seq_map_inline:Nn \l_quinn_mref_seq
    { \ref{##1},\nobreakspace }
  \exp_args:NV \ref \l_tmpa_tl
  )
 }
\ExplSyntaxOff

\newcommand{\mhndcite}[2][]{\textcolor{blue}{\cite[#1]{#2}}, \citeauthor*{#2}, \citeyear{#2}}

\newif\ifWeWantMhnd
\newif\ifWeWantMhndO
\WeWantMhndtrue
\ifWeWantMhnd
    \newcommand{\citeF}[2][]{\mhndcite[#1]{#2}}
\else
\ifWeWantMhndO
    \newcommand{\citeF}[2][]{\mhndciteO[#1]{#2}}
\else
    \newcommand{\citeF}[2][]{\cite[#1]{#2}}
\fi
\fi

\DeclareMathOperator{\diag}{diag}
\DeclareMathOperator{\tr}{tr}
\DeclareMathOperator{\etr}{etr}
\DeclareMathOperator{\erf}{erf}
\DeclareMathOperator{\rank}{rank}
\DeclareMathOperator{\Var}{Var}
\DeclareMathOperator{\sign}{sign}
\newcommand\Rank[1]{\mathscr{R}_\mathbf{#1}}

\newif\ifWeWantUglyEqns
\WeWantUglyEqnstrue
\ifWeWantUglyEqns
    \newcommand{\UglyEquation}[1]{\begin{equation}#1\end{equation}}
    \newcommand{\UglyAlign}[1]{\begin{align}#1\end{align}}
    \newcommand{\UglyAlignS}[1]{\begin{align*}#1\end{align*}}
    \newcommand{\UglyTabular}[1]{\begin{tabular}#1\end{tabular}}
    \newcommand{\UglyInline}[1]{\(#1\)}
\else
    \newcommand{\UglyEquation}[1]{\begin{equation}\text{There was an ugly equation here!}\end{equation}}
    \newcommand{\UglyAlign}[1]{\begin{align}\text{There was an ugly align here!}\end{align}}
    \newcommand{\UglyAlignS}[1]{\begin{align*}\text{There was an ugly align* here!}\end{align*}}
    \newcommand{\UglyTabular}[1]{\\ \text{There was an ugly tabular here!}\\}
    \newcommand{\UglyInline}[1]{\textcolor{red}{There was ugly inline math here!}}
\fi

\newcolumntype{Y}{>{\RaggedRight\arraybackslash}X}
\newcolumntype{L}[1]{>{\raggedright\let\newline\\\arraybackslash\hspace{0pt}}m{#1}}
\newcolumntype{C}[1]{>{\centering\let\newline\\\arraybackslash\hspace{0pt}}m{#1}}
\newcolumntype{R}[1]{>{\raggedleft\let\newline\\\arraybackslash\hspace{0pt}}m{#1}}
\newcommand{\rankFinal}{{\tilde{N}}}
\newcommand{\rankSigma}{{\Rank{\dmt{\Sigma}}}}
\newcommand{\dv}[1]{{\boldsymbol{#1}}}
\newcommand{\dmt}[1]{\MakeUppercase{\boldsymbol{#1}}}
\newcommand{\rs}[1]{\mathsf{#1}}
\newcommand{\rv}[1]{\boldsymbol{{\mathsf{#1}}}}
\newcommand{\rmt}[1]{\boldsymbol{\MakeUppercase{\mathsf{#1}}}}

\newcommand{\expect}[1]{\mathbb{E}\left[ #1 \right]}
\newcommand{\trace}[1]{\operatorname{Tr}\left[{#1}\right]}
\newcommand{\real}[1]{\mathcal{R}\left({#1}\right)}

\newcommand{\reals}[1]{\mathbb{R}^{#1}}
\newcommand{\complexs}[1]{\mathbb{C}^{#1}}

\newtheorem{theorem}{Theorem}[chapter]
\newtheorem{proposition}[theorem]{Proposition}

\newtheorem{corollary}[theorem]{Corollary}

\theoremstyle{definition}

\theoremstyle{remark}
\newtheorem{remark}[theorem]{Remark}
\newtheorem{openProblem}[theorem]{Open Problem}

\begin{document}

\maketitle

\begin{abstract}
This manuscript reviews theoretical results and applications related to quadratic forms in Gaussian random variables. It summarizes definitions, canonical representations, exact and approximate distributional results, numerical inversion methods, applications, and selected open problems for real and complex quadratic forms, multiforms, and ratios of quadratic forms.
\end{abstract}

\tableofcontents

\chapter{Introduction}

\section{Introduction}
     Quadratic forms in Gaussian Random Variables (QFGRVs) are frequently encountered in the signal processing literature, either through proposing pure theoretical results, e.g., \cite{ramirez-espinosaNewApproachStatistical2019, wiegandSeriesApproximationsRayleigh2019, cuiExactDistributionProduct2016, nadimiRatioIndependentComplex2018}, or analyzing the performance of different metrics across diverse signal processing applications. For example, transient analysis in adaptive filters \cite{naffouriTransientAnalysisOfData-normalizedAdaptiveFilters2003}, directional of arrival estimation \cite{athleyThresholdRegionPerformance2005}, periodogram, \cite{gismallaPeriodogramEnergyDetection2011,gismallaPerformanceEnergyDetectionBartlett2012}, differential modulation for wireless systems \cite{zhaoDifferentialModulation2007}, and more \cite{jinAsymptoticSEROutageMRC2006, gerkmannStatisticsSprctralAmplitudes2009,havaryOptimalDistributedBeamforming2009,raghavanFalseAlarm2018,bessonAnalysisSNRLossDistribution2020}. QFGRVs also find applications in other fields where statistical signal processing is applied, such as communication  \cite{khammassiNewAnalyticalApproximation2022,beaulieuNovelSimpleRepresentations2011,kimSINRRandomBeamforming2013}, genetics \cite{xuGenomeWideAssociationStudies2022,jiangGeneralizedLinearMixed2021,zhouEfficientlyControllingGenetic2018,wangMethodologyAssociationStudies2021,chenSequenceKernelAssociation2013,crawfordDetectingEpistasisGenetic2017,limBATComboPowerfulTest2023}, reliability, performance engineering and vibration engineering \cite{maoModelQuantifyingUncertainty2012, yanUnifiedSchemeSolving2020}.     
     
    Consider a vector of random quadratic forms, $ \boldsymbol{\mathsf{Q}} = [\mathsf{Q}_1, \mathsf{Q}_2, \ldots, \mathsf{Q}_M]^T$, where the $m$th ($m =1,2,\ldots M$) quadratic form \footnote{ \color {black} Strictly speaking, this is a quadratic polynomial! A quadratic form should have $\boldsymbol{b}_m = \boldsymbol{0}$ and $c_m = 0$. We use the term quadratic form to include both and leave the distinction to  context.} is expressed as follows: 
    \begin{equation}
        \label{eqn:QF}
        \mathsf{Q}_m =\boldsymbol{\mathsf{x}}^T\boldsymbol{A}_m\boldsymbol{\mathsf{x}} + \boldsymbol{b}_m^T\boldsymbol{\mathsf{x}} + c_m. 
    \end{equation}
     In this formula, $\boldsymbol{\mathsf{x}}$ is a Gaussian random vector of dimension $N$ with mean $\dv{\mu}$ and covariance $\dmt{\Sigma}$, $\boldsymbol{A}_m$ is a symmetric matrix,  $\boldsymbol{b}_m$ is a vector of linear coefficients and  $c_m$ is an additive constant. 
    From this generic formula, we can have a rough classification of quadratic forms. 
    If $M=1$, we deal with  \emph{single} quadratic forms; otherwise, we deal with \emph{quadratic multiform}. If $\boldsymbol{\mathsf{x}}$ is complex we call \eqref{eqn:QF} \emph{complex} quadratic forms \footnote{In this case, we should replace the transpose `$T$' by Hermitian `$H$'. }. An extended classification of quadratic forms is given later in this monograph. 

    Researchers have a particular interest in various quantities related to quadratic forms, including probability density (PDF), cumulative distribution function (CDF), moments, and moment-generating function (MGF), cumulants, cumulant generating function, and charactersitc function. This work mainly surveys the results not included in Mathai and Provost \cite{mathaiQuadraticFormsRandom1992}. There are at least 60 references that have proposed new results and more applying quadratic forms results. New results pertaining to these quantities are scattered throughout the literature (sometimes from very different fields/venues) and would not appear applicable to the same problems without a deeper inspection and a considerable effort in collecting them. The contributions of this monograph are
\begin{enumerate}
    \item \textbf{Unified Presentation and Compilation of Diverse Results:} We present the results relating to QFGRVs using unified notation. We also reduce all QFGRVs collected from the literature to a common set of parameters that can be used to judge whether a particular method/result about a QFGRV is applicable to another QFGRV. The results and computational methods are collected from literature in various venues in signal processing, communications, statistics, graphics, and various other fields.  
    \item \textbf{Numerical Implementation of Select Methods:} In single QFGRVs there are several numerical libraries that allow computation. In the quadratic multiform case, this is not so. We select some of the most general methods (that have no publicly available implementations) and provide numerical implementations. In addition, we provide analysis of the computational properties and performance of these algorithms/implementations. We note here, particularly the works of Royen which provide quite general results but lack implementation.
    \item \textbf{Applications:} We provide concise descriptions and derivations of QFGRVs from various fields to demonstrate how they arise and the computational requirements that arise from these diverse applications. We also believe this will allow the reader to appreciate the broad applicability of QFGRVs and the importance of computational tools in this area.    
    \item \textbf{Open Problems:} we highlight several of the open problems related to statistical quantities of QFGRVs, the algorithms for computing these statistical quantities and why we believe these open problems are important in light of the applications and popularity of some of the methods.
    \item \textbf{Appendices / Supplementary Materials:} We also provide appendices/supplementary document containing 
    \begin{enumerate}
        \item \textit{Handbook style formulae collection} collected from the literature in unified and harmonized notation to enable the reader to quickly locate methods/results that will apply to their particular problem.
        \item \textit{Pseudocodes and/or Implementations} we provide implementations for the most general methods. For the ones we did not implement we provide pseudocode and computational complexity analysis that would allow the reader to pursue their own implementation if they so desire.
        \item \textit{Original authors notation} to allow easy comparison of an extracted result against the original authors result we provide the method/formula in their original notation and indicate any algebraic manipulations or changes we have made to make the result more suitable for the computation of QFGRVs.
    \end{enumerate}
\end{enumerate}

    Also, we support our monograph with many visual aids. The inclusion of classification diagrams and summary tables aids in organizing and synthesizing complex information, facilitating comprehension. Some of the summary lists/tables/figures produced are: 
    \begin{enumerate} 
        \item \textit{Classification Diagrams}: We identify the main categories along which quadratic forms appear in the literature., e.g., Fig.~ 
        \item \textit{Formulae Type Table}: We identified six categories: (a) Finite expressions, (b) Infinite Series, (c) Numerical Integration, (d) Sequence of Approx. RV, (e) Moment Matching and (f) Saddle Point Approximations. These types can be seen in Table 
        \item \textit{Citation Count}: We can ask ``which methods are more frequently used?''. An indirect measure of this is the citation count.
        \item \textit{M vs N Graph}: We plot the number of QFRGV (M) vs the number of Gaussian variables (N) Fig. 2 as they appear in the literature.
        \item \textit{Subsumption graphs}: Not all treatments of QFGRV provide the same generality. Some methods are more general than others. Organizing the literature along lines of which formula/method applies to which QFGRV produces Figs. 
        \item \textit{Applications Summary Table}: We collect references to a wide variety of applications of QFGRVs and organize these by field and venue Table and Table. 
    \end{enumerate}

    \section{Definitions}
        This section introduces the different forms of QFGRVs studied in this monograph. The definition of a single quadratic form is the foundation for subsequent discussions on multiforms and ratios of quadratic forms. The formulation of quadratic forms can vary depending on whether the random variables are real or complex. Therefore, there are six different quadratic forms defined as follows: 

            \subsection{(Single) Quadratic Form in Gaussian Random Variables} \label{sssec:QFGRV}

                A (single) quadratic form in Gaussian random variables is simply a quadratic polynomial in Gaussian random variables \begin{equation}
                    \rs{Q} = \rv{x}^T\dmt{A}\rv{x}+\dv{b}^T\rv{x}+c, \qquad \rv{x}\sim \mathcal{N}_N(\dv{\mu},\dmt{\Sigma}), \label{eq:QFGRV} \tag{QF}
                \end{equation}
                where $\dmt{A}\in\reals{N\times N}$, $\dv{b}\in\reals{N}$ and $c\in\mathbb{R}$. Without loss of generality, we can assume that the matrix $\dmt{A}$ is symmetric: $\dmt{A}=\dmt{A}^T$. This follows from the fact that for any matrix $\dmt{A}\in\reals{N\times N}$ and vector $\dv{x}\in\reals{N}$, $\dv{x}^T\dmt{A}\dv{x}=\dv{x}^T\left(\dfrac{\dmt{A}+\dmt{A}^T}{2}\right)\dv{x}$. The form is said to be \emph{complete} if $\dv{b}=\dv{0}$ and $c=0$. Otherwise, the form is said to be \emph{incomplete} \footnote{Note that other authors \cite{mathaiQuadraticFormsRandom1992,mohsenipourDistributionQuadraticExpressions2012} use the terms quadratic forms and quadratic expressions to differentiate between what we call complete and incomplete quadratic forms.}.

            \subsection{Quadratic Multiform in Gaussian Random Variables} \label{sssec:QMFGRV}

                A quadratic multiform in Gaussian random variables is simply a tuple (or a vector) of quadratic polynomials in Gaussian random variables \begin{equation}
                    \rs{Q}_m = \rv{x}^T\dmt{A}_m\rv{x}+\dv{b}_m^T\rv{x}+c_m, \qquad \rv{x}\sim \mathcal{N}_N(\dv{\mu},\dmt{\Sigma}),\qquad m=1,2,\ldots,M \label{eq:QMFGRV} \tag{MultiQF}
                \end{equation}
                where $\dmt{A}_m\in\reals{N\times N}$, $\dv{b}_m\in\reals{N}$ and $c_m\in\mathbb{R}$ for $m=1,2,\ldots,M$. Similarly, we can assume that the matrices $\{\dmt{A}_m\}_{m=1}^M$ are symmetric. From now on, the capital letter $N$ is always used to denote the number of Gaussian variables, and the capital letter $M$ is always used to denote the number of forms.

            \subsection{(Single) Quadratic Form in Complex Gaussian Random Variables} \label{sssec:QFCGRV}

                A (single Hermitian) quadratic form in complex Gaussian random variables is also a quadratic polynomial in complex Gaussian random variables \begin{equation}
                    \rs{Q} = \rv{x}^H\dmt{A}\rv{x}+\real{\dv{b}^H\rv{x}}+c, \qquad \rv{x}\sim \mathcal{CN}_N(\dv{\mu},\dmt{\Sigma}) \label{eq:QFCGRV} \tag{CQF}
                \end{equation}
                where $\dmt{A}\in\complexs{N\times N}$, $\dv{b}\in\complexs{N}$ and $c\in\mathbb{R}$. We assume that the matrix $\dmt{A}$ is Hermitian. Note that, unlike in the real case, not all quadratic forms in complex variables can be rewritten with a Hermitian matrix. In fact, $\dmt{A}$ is Hermitian, if and only if, for all $\dv{x}\in\mathbb{C}$, $\dv{x}^H\dmt{A}\dv{x}\in\mathbb{R}$.

            \subsection{Quadratic Multiform in Complex Gaussian Random Variables} \label{sssec:QMFCGRV}

                A (Hermitian) quadratic multiform in complex Gaussian random variables is a tuple of quadratic polynomials in complex Gaussian random variables \begin{equation}
                    \rs{Q}_m = \rv{x}^H\dmt{A}_m\rv{x}+\real{\dv{b}_m^H\rv{x}}+c_m, \quad \rv{x}\sim \mathcal{CN}_N(\dv{\mu},\dmt{\Sigma}), \quad m=1,2,\ldots,M \label{eq:QMFCGRV} \tag{MultiCQF}
                \end{equation}
                where $\dmt{A}_m\in\complexs{N\times N}$, $\dv{b}_m\in\complexs{N}$ and $c_m\in\mathbb{R}$ for $m=1,2,\ldots,M$. We assume that the matrices $\{\dmt{A}_m\}_{m=1}^M$ are Hermitian. 

            \subsection{Ratio of Quadratic Forms in Gaussian Random Variables} \label{sssec:RQFGRV}

                A ratio of quadratic forms in Gaussian random variables \footnote{We could include incomplete forms (i.e., $\frac{\rv{x}^T\dmt{A}\rv{x} + \dv{b}^T \rv{x} + c}{\rv{x}^T\dmt{B}\rv{x} +\dv{d}^T \rv{x} + e}$); however, we did not encounter such forms in the literature.} is defined by \begin{equation}
                    \rs{R}=\dfrac{\rv{x}^T\dmt{A}\rv{x}}{\rv{x}^T\dmt{B}\rv{x}}, \qquad \rv{x}\sim \mathcal{N}_N(\dv{\mu},\dmt{\Sigma}) \label{eq:RQFGRV} \tag{QFRatio}
                \end{equation}
                where $\dmt{A},\dmt{B}\in\reals{N\times N}$. Without loss of generality, we can assume that the matrices $\dmt{A}$ and $\dmt{B}$ are symmetric.

            \subsection{Ratio of (Hermitian) Quadratic Forms in Complex Gaussian Random Variables} \label{sssec:RQFCGRV}

                A ratio of quadratic forms in complex is defined by \begin{equation}
                    \rs{R}=\dfrac{\rv{x}^H\dmt{A}\rv{x}}{\rv{x}^H\dmt{B}\rv{x}}, \qquad \rv{x}\sim \mathcal{CN}_N(\dv{\mu},\dmt{\Sigma}) \label{eq:RQFCGRV} \tag{CQFRatio}
                \end{equation}
                where $\dmt{A}\in\complexs{N\times N}$ and $\dmt{B}\in\complexs{N\times N}$ are assuemd to be Hermititan.
        
        \section{Representation of Quadratic Forms}
            In the previous section, we introduced the various forms that will be analyzed in detail throughout the upcoming sections. When examining the distributions of these quadratic forms, it is often advantageous to identify equivalent distributions where possible, simplifying the analysis and providing insights into their behavior. In this section, we show that two specific quadratic forms, \eqref{eq:QFGRV} and \eqref{eq:QFCGRV}, can be expressed mainly as linear combinations of chi-square distribution, offering a more flexible framework for further study. 
            \subsection{Quadratic Forms in Gaussian Random Variables}
                Consider an incomplete quadratic form in \(N\) Gaussian variables $\rs{Q}=\rv{x}^T\dmt{A}\rv{x}+\dv{b}^T\rv{x}+c$, where $\rv{x}\sim \mathcal{N}_N(\dv{\mu},\dmt{\Sigma})$. We can rewrite the form in terms of the central random vector $\rv{x}-\dv{\mu}$ as follows:
                \begin{equation}
                    \rs{Q}=(\rv{x}-\dv{\mu})^T\dmt{A}(\rv{x}-\dv{\mu})+(2\dv{\mu}^T\dmt{A}+\dv{b}^T)(\rv{x}-\dv{\mu})+\dv{\mu}^T\dmt{A}\dv{\mu}+\dv{b}^T\dv{\mu}+c.
                \end{equation}
                Suppose that the rank of the covariance matrix is \(\Rank{\dmt{\Sigma}}\) and write \(\dmt{\Sigma}=\dmt{B}\dmt{B}^T\) with \(\dmt{B}\in \mathbb{R}^{N\times 
                \Rank{\Sigma}}\) \footnote{Hence $\rank(\dmt{B})=\Rank{\Sigma}$. $\dmt{B}$ can be computed in a number of different ways, for example, start from the sorted diagonalization: $\dmt{\Sigma}=\dmt{U}^T\dmt{DU}$, write $\dmt{U}^T=[\dmt{U}_1^T,\dmt{U}_2^T]$ and $\dmt{D}=\diag(\dmt{D}_1,\dmt{0})$ with $\dmt{U}_1\in\reals{N\times \rankSigma}$ and $\dmt{D}_1\in\reals{\rankSigma\times\rankSigma}$, hence choose $\dmt{B}=\dmt{U}_1^T\dmt{D}_1^{1/2}$.}. Let \(\rv{u}\sim\mathcal{N}(\dv{0},\dmt{I}_{\Rank{\Sigma}})\). Then it can be easily shown that \(\dv{\mu}+\dmt{B}\rv{u}\) is identically distributed as \(\rv{x}\). Hence we can write 
                \begin{equation}
                    \rs{Q}\stackrel{d}{=}\rv{u}^T\dmt{B}^T\dmt{A}\dmt{B}\rv{u}+(2\dv{\mu}^T\dmt{A}+\dv{b}^T)\dmt{B}\rv{u}+\dv{\mu}^T\dmt{A}\dv{\mu}+\dv{b}^T\dv{\mu}+c \label{eq:B^TAB}
                \end{equation} 
                By applying orthogonal diagonalization, \(\dmt{B}^T\dmt{AB}=\dmt{P}\dmt{\Lambda}\dmt{P}^T\), and taking \(\rv{z}=\dmt{P}^T\rv{u}=[\rs{z}_1,\ldots,\rs{z}_{\Rank{\Sigma}}]^T\) (hence \(\rv{z}\sim \mathcal{N}(\dv{0},\dmt{I}_{\Rank{\Sigma}})\) since the standard multivariate normal distribution is invariant under orthogonal transformation), we can write 
                \begin{equation}
                    \rs{Q}\stackrel{d}{=}\rv{z}^T\dmt{\Lambda}\rv{z}+[\dmt{P}^T\dmt{B}^T(2\dmt{A}\dv{\mu}+\dv{b})]^T\rv{z}+(\dv{b}^T\dv{\mu}+\dv{\mu}^T\dv{A}\dv{\mu}+c).
                \end{equation}  Let \(\dmt{\Lambda}=\diag(\lambda_1,\ldots,\lambda_{\Rank{{\Sigma}}})\), \(\dv{d}=\dmt{P}^T\dmt{B}^T(2\dmt{A}\dv{\mu}+\dv{b})=[d_1,\ldots,d_{\Rank{{\Sigma}}}]^T\), and \(c'=\dv{b}^T\dv{\mu}+\dv{\mu}^T\dv{A}\dv{\mu}+c\). Then \begin{equation}
                    \rs{Q}\stackrel{d}{=}\rv{z}^T\dmt{\Lambda}\rv{z}+\dv{d}^T\rv{z}+c'=\sum_{i=1}^{\Rank{{\Sigma}}}\lambda_iz_i^2+\sum_{i=1}^{\Rank{{\Sigma}}}d_iz_i+c'\label{RawDecomposition}
                \end{equation}
                Note that the eigenvalues of ($\dmt{B}^T\dmt{A}\dmt{B}$) and ($\dmt{\Sigma}\dmt{A}$) are equal. These eigenvalues play an important role in determining the distribution of $\rs{Q}$, as will be clearer soon. We may refer from now on to theses eigenvalues as \emph{the eigenvalues of the quadratic form}. The eigenvalues of ($\dmt{B}^T\dmt{A}\dmt{B}$) are always real (since $\dmt{A}$ is symmetric). Depending on the definiteness of $\dmt{A}$, the eigenvalues can all be positive, negative, or a combination of both. To provide a general treatment, let the number of nonzero eigenvalues is $\tilde{N}$  composed of $N_P$ positive eigenvalues and $N_N$ negative eigenvalues, i.e., \(\rankFinal=N_N+N_P\). We can easily show that \(\rankFinal=\rank(\dmt{B}^T\dmt{AB})=\rank(\dmt{\Sigma}^{1/2}\dmt{A}\dmt{\Sigma}^{1/2})=\rank(\dmt{\Sigma}\dmt{A})\). From now on, arrange the eigenvalues as follows:
                \[\lambda_1\geq\lambda_2\geq\ldots\geq\lambda_{N_P}>0>\lambda_{N_{P+1}}\geq \lambda_{N_{P+2}}\geq\ldots\geq \lambda_{N_P+N_N},\]
                and \[\lambda_{N_P+N_N+1}=\lambda_{N_P+N_N+2}=\ldots=\lambda_{\Rank{\Sigma}}=0.\]
                So we have put the \(N_P\) positive eigenvalues first, then the \(N_N\) negative ones next, then the \([\Rank{\Sigma}-(N_N+N_P)]\) zero eigenvalues last. 
                The following proposition characterizes the distribution of a single quadratic form in Gaussian random variables \eqref{eq:QFGRV}. 
                \begin{proposition} \label{prop:1}
                    Any single quadratic form in Gaussian random variables \eqref{eq:QFGRV} can be written as the sum of
                    \begin{enumerate}
                        \item a linear combination of independent non-central chi-square variables \footnote{Note that \(0\) is a linear combination of chi-square variables. In general, the chi-squares disappear whenever \(\dmt{\Sigma}^{1/2}\dmt{A}\dmt{\Sigma}^{1/2}=\dv{0}\), or equivalently, whenever the eigenvalues of the matrix \(\dmt{\Sigma}\dmt{A}\) are all zero.},
                        \item possibly a Gaussian variable that is independent of the chi-square variables,
                        \item possibly a constant.
                    \end{enumerate}
                \end{proposition}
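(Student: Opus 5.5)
Starting from the decomposition \eqref{RawDecomposition}, $\rs{Q}\stackrel{d}{=}\sum_{i=1}^{\Rank{\Sigma}}\lambda_i\rs{z}_i^2+\sum_{i=1}^{\Rank{\Sigma}}d_i\rs{z}_i+c'$ with $\rs{z}_i$ i.i.d.\ standard normal, I would split the index set according to the ordering of the eigenvalues fixed above: indices $i\le \rankFinal$ with $\lambda_i\neq 0$, and indices $i>\rankFinal$ with $\lambda_i=0$. The proof then amounts to treating each group separately and using the independence of the $\rs{z}_i$ to conclude that the pieces are independent.

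For the nonzero eigenvalues I complete the square termwise:
\begin{equation*}
\lambda_i\rs{z}_i^2+d_i\rs{z}_i=\lambda_i\Bigl(\rs{z}_i+\frac{d_i}{2\lambda_i}\Bigr)^2-\frac{d_i^2}{4\lambda_i},\qquad i=1,\ldots,\rankFinal .
\end{equation*}
Since $\rs{z}_i+d_i/(2\lambda_i)\sim\mathcal{N}(d_i/(2\lambda_i),1)$, its square is a non-central chi-square variable with one degree of freedom and noncentrality $d_i^2/(4\lambda_i^2)$. These are independent across $i$, so the first group equals $\sum_{i\le\rankFinal}\lambda_i\chi'^2_1(d_i^2/4\lambda_i^2)$ plus the constant $-\sum_{i\le\rankFinal}d_i^2/(4\lambda_i)$. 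If desired, terms sharing the same eigenvalue can be merged into a single non-central chi-square with more degrees of freedom, though this is not needed for the statement. This gives item 1 (an empty sum when $\rankFinal=0$, in agreement with the footnote).

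For the zero eigenvalues, the remaining random part is $\sum_{i>\rankFinal}d_i\rs{z}_i$. This is a centered Gaussian with variance $\sum_{i>\rankFinal}d_i^2$, and it depends only on $\rs{z}_i$ for $i>\rankFinal$, hence it is independent of the chi-square terms. It is degenerate (absent) exactly when those $d_i$ all vanish, which accounts for the word ``possibly'' in item 2. Collecting $c'-\sum_{i\le\rankFinal}d_i^2/(4\lambda_i)$ gives the constant in item 3.

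There is no real obstacle here; the only point needing care is the bookkeeping of independence, namely that the chi-square variables and the Gaussian remainder are functions of disjoint subsets of the independent coordinates $\rs{z}_i$. One should also note that the representation holds in distribution, inherited from \eqref{eq:B^TAB}, which is what ``can be written as'' means in the proposition.
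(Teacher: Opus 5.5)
Your proposal is correct and follows the paper's proof step for step. Both arguments start from \eqref{RawDecomposition}, separate the indices with $\lambda_i=0$, complete the square for $i\le\rankFinal$ to obtain $\lambda_i(\rs{z}_i+h_i)^2$ with $h_i=d_i/(2\lambda_i)$, and collect $\sum_{i>\rankFinal}d_i\rs{z}_i$ as the Gaussian term (standard deviation $\sigma=\sqrt{\sum_{i>\rankFinal}d_i^2}$) together with the constant $c''=c'-\sum_{i\le\rankFinal}d_i^2/(4\lambda_i)$; independence is justified, as in the paper, by the chi-squares and the Gaussian depending on disjoint sets of the independent $\rs{z}_i$.
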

                \begin{proof}
                    Equation \eqref{RawDecomposition} can be rewritten as \[\rs{Q}\stackrel{d}{=}\sum_{i=1}^{\rankFinal
                    }\lambda_i\rs{z}_i^2+\sum_{i=1}^{\rankFinal}d_i\rs{z}_i+\sum_{i=\rankFinal+1}^{\Rank{{\Sigma}}}d_i\rs{z}_i+c'.\]
                    By completing the square, we can write
                    \begin{align}
                        \rs{Q}&\stackrel{d}{=}\sum_{i=1}^{\rankFinal}\lambda_i\left(\rs{z}_i+\dfrac{d_i}{2\lambda_i}\right)^2-\dfrac{1}{4}\sum_{i=1}^{\rankFinal}\dfrac{d_i^2}{\lambda_i}+\sum_{i=\rankFinal+1}^{\Rank{{\Sigma}}}d_i\rs{z}_i+c'\nonumber\\&=\sum_{i=1}^{\rankFinal}\lambda_i(\rs{z}_i+h_i)^2+\sum_{i=\rankFinal+1}^{\Rank{\Sigma}}d_i\rs{z}_i+c'', \label{eqNC}\\
                        &\sim \left[\sum_{i=1}^{\rankFinal}\lambda_i\chi_1^2(h_i^2)\right]+\sigma\mathcal{N}(0,1)+c'', \label{eqn:QF_rep1_ch1}
                    \end{align}
                    where $h_i=\frac{d_i}{2\lambda_i},\;c''=c'-\sum_{i=1}^{\rankFinal} \lambda_i h_i^2$ and $\sigma=\sqrt{\sum_{i=\rankFinal+1}^{\Rank{{\Sigma}}}d_i^2}$. Note that each \((\rs{z}_i+h_i)^2\) follows a non-central chi-square distribution with a single degree of freedom and \emph{non-centrality parameter} \(h_i^2\). The random variable $\sum_{i={\rankFinal+1}}^{\Rank{\Sigma}}d_i\rs{z}_i$  follows the normal distribution as it is a linear combination of jointly normal random variables and is independent of the aforementioned chi-squares due to the independence of \(\rs{z}_i\)'s.
                \end{proof}
                
                Note that in Equation \eqref{eqNC}, in case $\rankFinal=N=\Rank{\Sigma}$, the set of zero eigenvalues is empty, so we can follow the convention of a zero sum over the empty set, thus there is no Gaussian term ($\sigma = 0$) and equality holds true.
                An alternative representation can be obtained by grouping repeated eigenvalues, we may write  \begin{equation}
                     \sum_{i=1}^{\rankFinal}\lambda_i\chi_1^2(h_i^2) = \sum_{\ell=1}^L\omega_\ell\chi_{\nu_\ell}^2(\delta_\ell^2), 
                \end{equation}
                where $\{\omega_\ell\}_{\ell=1}^L$ are the distinct eigenvalues, $L$ is the number of distinct nonzero eigenvalues, $\nu_\ell$ is the multiplicity of the eigenvalue $\omega_\ell$, i.e., $$\nu_\ell=\operatorname{Card}\left(\{n:\lambda_n=\omega_\ell\}\right),$$
                where $\delta_\ell^2$ is the sum of the squares of $h_n$ corresponding to the same eigenvalue $\omega_\ell$, i.e., 
                $$
                    \delta_\ell^2=\displaystyle\sum_{\lambda_n=\omega_\ell}h_n^2.
                $$
                Hence, \eqref{eqn:QF_rep1_ch1} can be equivalently written as 
                \begin{equation}
                    \label{eqn:QF_rep2_ch1}
                    \rs{Q}\sim \sum_{\ell=1}^L\omega_\ell\chi_{\nu_\ell}^2(\delta_\ell^2)+\sigma\mathcal{N}(0,1)+c''.
                \end{equation}
                                 
                The next proposition states the necessary and sufficient condition for a quadratic form \eqref{eq:QFGRV} to be distributed as a linear combination of chi-square random variables up to an additive constant (i.e., no Gaussian term).
                \begin{proposition}
                    Consider a single quadratic form in Gaussian random variables as previously defined and refer to Equation \eqref{eqNC}.  Then, the form is distributed as a linear combination of independent non-central chi-square variables (up to an additive constant) if and only if
                    \[\forall j\in\{\rankFinal+1,\ldots,\Rank{\Sigma}\},\; d_j = 0.\label{condition}\]
                    In this case, we have 
                    \begin{equation}
                        \label{eqn:QF_rep3}
                         \rs{Q}\stackrel{d}{=}\sum_{i=1}^{\rankFinal}\lambda_i\left(\rs{z}_i+h_i\right)^2+c''\sim\sum_{i=1}^{\rankFinal}\lambda_i\chi_1^2(h_i^2)+c'',    
                    \end{equation}
                    or by grouping equal eigenvalues, we get 
                    \begin{equation}
                        \rs{Q}\sim\sum_{\ell=1}^{L}\omega_\ell\chi_{\nu_\ell}^2(\delta_\ell^2)+c'',    
                    \end{equation}
                \end{proposition}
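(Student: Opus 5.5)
Sufficiency follows directly from Equation \eqref{eqNC}. If $d_j=0$ for every $j\in\{\rankFinal+1,\ldots,\Rank{\Sigma}\}$, the middle sum in \eqref{eqNC} vanishes, so $\sigma=0$. What remains is
\[\rs{Q}\stackrel{d}{=}\sum_{i=1}^{\rankFinal}\lambda_i(\rs{z}_i+h_i)^2+c''.\]
Since the $\rs{z}_i$ are i.i.d. $\mathcal{N}(0,1)$, each $(\rs{z}_i+h_i)^2\sim\chi_1^2(h_i^2)$, and these terms are mutually independent. This gives \eqref{eqn:QF_rep3}. The grouped form comes from collecting the indices with equal $\lambda_i=\omega_\ell$: a sum of $\nu_\ell$ independent $\chi_1^2(h_n^2)$ variables is $\chi^2_{\nu_\ell}(\delta_\ell^2)$ with $\delta_\ell^2=\sum_{\lambda_n=\omega_\ell} h_n^2$. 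This is the usual additivity of independent noncentral chi-squares, which one sees by multiplying their MGFs.

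For necessity, suppose $\rs{Q}\stackrel{d}{=}\sum_k a_k\chi^2_{\nu_k}(\eta_k)+c_0$, with independent chi-squares. I would compare cumulant generating functions, which are analytic near $0$. From \eqref{eqNC} and independence, the CGF of $\rs{Q}$ is
\[K_{\rs Q}(t)=\sum_{i=1}^{\rankFinal}\Big[-\tfrac12\log(1-2\lambda_i t)+\frac{\lambda_i h_i^2 t}{1-2\lambda_i t}\Big]+c''t+\tfrac12\sigma^2t^2,\]
and the hypothesized representation has CGF
\[\sum_k\Big[-\tfrac{\nu_k}2\log(1-2a_kt)+\frac{a_k\eta_k t}{1-2a_kt}\Big]+c_0t.\]
The MGFs agree on a neighbourhood of $0$, so the two CGFs agree there. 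By analytic continuation they then agree as meromorphic functions of $t$ away from the branch points. Both sides are rational functions in $t$ plus a combination of logarithms $\log(1-2at)$. So the difference $\tfrac12\sigma^2 t^2$ must equal a combination of logarithms and rational functions whose only poles are at the points $t=1/(2a)$ with $a\neq 0$.

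The main obstacle is making this comparison rigorous. I would argue by growth. Consider $t\to\infty$ along a ray avoiding the real singularities, say $t=iy$ with $y\to\infty$. Each term $\log(1-2at)$ grows only like $\log|y|$. Each term $\frac{at}{1-2at}$ stays bounded. The linear terms $c''t$ and $c_0t$ are $O(|y|)$. Meanwhile $\tfrac12\sigma^2t^2=-\tfrac12\sigma^2y^2$, which is quadratic in $y$. Matching the two sides therefore forces $\sigma^2=0$.

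An equivalent route uses characteristic functions instead. The characteristic function of a linear combination of independent chi-squares plus a constant decays at most polynomially, with modulus $\ge C|s|^{-n/2}$ times a bounded exponential factor. By contrast, $e^{-\sigma^2 s^2/2}$ decays like a Gaussian, and the chi-square part of $\rs{Q}$ cannot compensate for it. Either way, once $\sigma=0$ we have $\sum_{j>\rankFinal} d_j^2=0$, and hence every such $d_j=0$. This completes the ``only if'' direction.
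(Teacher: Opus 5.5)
Your proof follows the paper's route: sufficiency by completing the square in \eqref{eqNC}, and necessity by observing that the transform of $\rs{Q}$ carries the factor $e^{\sigma^2t^2/2}$, which no linear combination of independent non-central chi-squares plus a constant can produce. The paper simply asserts that last point, whereas your comparison of $|\phi_{\rs{Q}}(s)|$ as real $s\to\infty$ (Gaussian decay $e^{-\sigma^2s^2/2}$ against at most polynomial decay) actually proves it; this characteristic-function version is also the cleanest, since it avoids analytic continuation and any branch choices for the logarithms.
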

                \begin{proof}  
                    The sufficient condition can be proved by completing the squares. For the necessary condition, suppose, without loss of generality, that the form can be written as \(\sum_{i=1}^{\Rank{{\Sigma}}-1}(\rs{z}_i+h_i)^2+\sigma \rs{z}_{\Rank{{\Sigma}}}+c\),
                    where $\rv{z}=[\rs{z}_1,\ldots,\rs{z}_{\Rank{\Sigma}}]$ is a standard Gaussian vector. We can do so since any linear combination of jointly Gaussian variables is Gaussian. Since $\rs{z}_{\Rank{{\Sigma}}}$ is independent from \(\{\rs{z}_1,\ldots,\rs{z}_{\Rank{{\Sigma}}-1}\}\), the MGF can be written as
                    \[\begin{aligned}
                        M_\rs{Q}(t)&=\exp(c t+\sigma^2 t^2/2)\exp\left[t\sum_{j=1}^{\Rank{{\Sigma}}-1}h_j^2\lambda_j(1-2t\lambda_j)^{-1}\right]\\ &\times\prod_{j=1}^{\Rank{{\Sigma}}-1}(1-2\lambda_jt)^{-1/2}.
                    \end{aligned}\]
                    This is not the MGF of a linear combination of independent non-central chi-square variables (which is the product of scaled MGFs of chi-square variables). Since the MGF defines the distribution uniquely, the expression cannot be written as a form.
                \end{proof}
                
                \begin{corollary}
                    A quadratic form $\rs{Q}=\rv{x}^T\dmt{A}\rv{x}+\dv{b}^T\rv{x}+c$, where $\rv{x}\sim \mathcal{N}_N(\dv{\mu},\dmt{\Sigma})$, $\dmt{\Sigma}$ is nonsingular and \(\det(\dmt{A})\neq 0\) (equivalently, \(\rankFinal=\Rank{\dmt{\Sigma}}\)) is distributed as a linear combination of independent non-central chi-square random variables (up to an additive constant)
                    \[\rs{Q}\stackrel{d}{=}\sum_{i=1}^{\Rank{{\Sigma}}}\lambda_i\left(\rs{z}_i+h_i\right)^2+c''\sim\sum_{i=1}^{\Rank{{\Sigma}}}\lambda_i\chi_1^2(h_i^2)+c''.\]
                \end{corollary}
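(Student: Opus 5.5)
The plan is to derive this corollary directly from the preceding proposition. By that proposition, it suffices to show that the index set $\{\rankFinal+1,\ldots,\Rank{\Sigma}\}$ is empty, i.e.\ $\rankFinal=\Rank{\dmt{\Sigma}}$. Then the condition $d_j=0$ holds vacuously, and the displayed representation follows from \eqref{eqn:QF_rep3}.

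\textbf{Step 1: reduce to a rank count.} Since $\dmt{\Sigma}$ is nonsingular, $\Rank{\dmt{\Sigma}}=N$. We may take $\dmt{B}$ square and invertible in the factorization $\dmt{\Sigma}=\dmt{B}\dmt{B}^T$, for instance $\dmt{B}=\dmt{\Sigma}^{1/2}$. It was noted before Proposition~\ref{prop:1} that $\rankFinal=\rank(\dmt{B}^T\dmt{A}\dmt{B})=\rank(\dmt{\Sigma}\dmt{A})$. Because $\dmt{B}$ is invertible, multiplying by it on either side preserves rank, so $\rank(\dmt{B}^T\dmt{A}\dmt{B})=\rank(\dmt{A})$.

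\textbf{Step 2: the equivalence in the statement.} If $\det(\dmt{A})\neq0$, then $\rank(\dmt{A})=N$. Step 1 gives $\rankFinal=N=\Rank{\dmt{\Sigma}}$, so all eigenvalues $\lambda_i$ of $\dmt{B}^T\dmt{A}\dmt{B}$ are nonzero. Conversely, given nonsingular $\dmt{\Sigma}$, $\rankFinal=\Rank{\dmt{\Sigma}}=N$ forces $\rank(\dmt{A})=N$. This justifies the parenthetical ``equivalently''.

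\textbf{Step 3: conclude.} With $\rankFinal=\Rank{\dmt{\Sigma}}$, every $\lambda_i\neq0$, so completing the square in \eqref{RawDecomposition} is legitimate for every $i$. The residual linear sum $\sum_{i=\rankFinal+1}^{\Rank{\Sigma}}d_i\rs{z}_i$ is an empty sum, so $\sigma=0$. The $\rs{z}_i$ are independent standard normals, so $(\rs{z}_i+h_i)^2\sim\chi_1^2(h_i^2)$ independently. This yields $\rs{Q}\stackrel{d}{=}\sum_{i=1}^{\Rank{\Sigma}}\lambda_i(\rs{z}_i+h_i)^2+c''$.

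There is no real obstacle here. The only point needing care is the rank identity in Step 1, namely that congruence by an invertible $\dmt{B}$ preserves rank, so that nonsingularity of $\dmt{A}$ transfers to nonsingularity of $\dmt{B}^T\dmt{A}\dmt{B}$, i.e.\ to all of its eigenvalues being nonzero.
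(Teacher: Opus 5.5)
Your proposal is correct and matches the paper's implicit argument. The paper gives no separate proof and treats the corollary as the special case of the preceding proposition where the index set $\{\rankFinal+1,\ldots,\Rank{\Sigma}\}$ is empty, so the condition $d_j=0$ holds vacuously. Your Step 1 rank argument also supplies the justification, missing from the paper, for the parenthetical equivalence $\det(\dmt{A})\neq 0 \iff \rankFinal=\Rank{\Sigma}$: with $\dmt{\Sigma}$ nonsingular, congruence by the invertible factor $\dmt{\Sigma}^{1/2}$ preserves rank.
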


                As the constant does not present significant complications in studying the distribution functions, it is rarely considered. Actually, except for one author \cite{daviesNumericalInversionCharacteristic1973,daviesAlgorithm155Distribution1980}, the literature usually focuses on the case of a linear combination of independent non-central chi-squares which  will be denoted by \begin{equation}
                     \rs{Q}\stackrel{d}{=}\sum_{i=1}^\rankFinal \lambda_i(\rs{z}_i+h_i)^2 \sim\sum_{i=1}^{\rankFinal}\lambda_i\chi_1^2(h_i^2)\label{eq:lcOnlyChiQFGRV}
                \end{equation}
                Similarly, taking into account multiple eigenvalues, we write \begin{equation}
                     \rs{Q}\sim \sum_{\ell=1}^L \omega_\ell \chi_{\nu_\ell}^2(\delta_\ell^2) \label{eq:lcOnlyChi}
                \end{equation}
                
                \textbf{In summary}, we have precisely four possible equivalent distributions for a quadratic form in Gaussian random variables:
                \begin{enumerate}
                    \item constant random variable \footnote{Consider, for example, $\rs{Q}=\rv{x}^T\dmt{A}\rv{x}$, where $\rv{x}\sim\mathcal{N}(\dv{\mu},\dmt{\Sigma})$, \(\dmt{A}=\left[\begin{smallmatrix}
                    1 &0\\ 0 &0
                \end{smallmatrix}\right]\), \(\dmt{\Sigma}=\left[\begin{smallmatrix}
                    0 &0\\ 0 &1
                \end{smallmatrix}\right]\) and \(\dv{\mu}=[1,1]^T\).},
                    \item Gaussian random variable \footnote{Consider, for example,
                    $\rs{Q}=\rv{x}^T\dmt{A}\rv{x}$, where $\rv{x}\sim\mathcal{N}(\dv{\mu},\dmt{\Sigma})$, \(\dv{\mu}=[1,1,1]^T\), \(\dmt{\Sigma}=\left[\begin{smallmatrix}
                    1 &0 &0\\ 0 &1 &0\\ 0 &0 &0
                \end{smallmatrix}\right]\) and \(\dmt{A}=\left[\begin{smallmatrix}
                    0 &0 &0\\ 0 &0 &1\\ 0 &1 &1
                \end{smallmatrix}\right]\). In this case, $\rs{Q}\sim \mathcal{N}(3,4)$.},
                    \item linear combination of independent (non-central) chi-square random variables (up to an additive constant),
                    \item sum of \begin{enumerate}
                        \item a linear combination of independent (non-central) chi-square random variables, and
                        \item a Gaussian variable that is independent of the previous ones.
                    \end{enumerate} 
                \end{enumerate}
                
                The first two cases are trivial, so they will be disregarded. The third case is the one that is studied mostly in the literature. Up to our knowledge, only Davies \cite{daviesNumericalInversionCharacteristic1973} studies the fourth case.

                Finally, it is useful from a practical point of view to characterize three different types of input parameters of a quadratic form in Gaussian random variables to study its distribution. The basic set, or what we will refer to as the \emph{raw input}, is just the parameters in the basic definition \eqref{eq:QFGRV}, i.e., $\dmt{\mu}\in\reals{N}$, $\dmt{\Sigma}\in\reals{N\times N}$, $\dmt{A}\in\reals{N\times N}$, $\dv{b}\in\reals{N}$ and $c\in\mathbb{R}$. Another set of parameters that defines the final representation will be called the \emph{effective input}. It incorporates $\lambda_n,h_n\in\mathbb{R}$ for $n=1,\ldots,\rankFinal$ \footnote{We may drop the tilde from $\rankFinal$.}, $\sigma\geq 0$, and $c''\in\mathbb{R}$. Furthermore, if we take repeated eigenvalues into consideration and ignore the constant, we have the \emph{reduced effective input} $\omega_\ell\in \mathbb{R}$, $\delta_\ell^2>0$ for $\ell=1,2,\ldots,L$, and $\sigma>0$. These three sets of input are summarized in Table \ref{tab:Input_QFGRV}.
                \begin{table}[ht!]
		          \centering
		          \begin{adjustbox}{max width=\textwidth}
                        \begin{tabular}{|>{\centering\arraybackslash}m{3cm}|>{\raggedright\arraybackslash}m{6cm}|>{\raggedright\arraybackslash}m{5cm}|}
                        \hline
                        \textbf{Input} & \textbf{Form} & \textbf{Parameters}\\ \hline
				    Raw Input & $\rs{Q} = \rv{x}^T\dmt{A}\rv{x}+\dv{b}^T\rv{x}+c,\;\; \rv{x}\sim                    \mathcal{N}_N(\dv{\mu},\dmt{\Sigma})$ & $\dv{\mu}\in\mathbb{R}^N$,                               $\dmt{\Sigma}\in\mathbb{R}^{N\times N}$, $\dmt{A}\in\mathbb{R}^{N\times N}$,                    $\dv{b}\in\mathbb{R}^N$, $c\in\mathbb{R}$ \\ 
                        \hline
                        Effective Input & $\rs{Q}\stackrel{d}{=}\displaystyle\sum_{n=1}^\rankFinal \lambda_n(\rs{z}_n+h_n)^2 +\sigma \rs{z}_{\rankFinal+1} + c''$, $[\rs{z}_1,\ldots,\rs{z}_\rankFinal,\rs{z}_{\rankFinal+1}]\sim\mathcal{\rankFinal}_{\rankFinal+1}(\dv{0},\dmt{I}_{\rankFinal+1})$ & $\lambda_n,h_n\in\mathbb{R}$ for $n=1,\ldots,\rankFinal$, $\sigma\geq 0$, $c''\in\mathbb{R}$ \\ 
                        \hline
                        Reduced Effective Input & $\rs{Q}\sim\displaystyle\sum_{\ell=1}^L\omega_\ell\chi_{\nu_\ell}^2(\delta_\ell^2) + \mathcal{N}(0,\sigma^2)$ & $\omega_\ell\in \mathbb{R},\;\delta_\ell^2>0$ for $\ell=1,2,\ldots,L$, $\sigma>0$, for $k\neq \ell$, $\omega_k\neq\omega_\ell$ \\ 
                        \hline
			        \end{tabular}
		          \end{adjustbox}
    		      \caption{Input Classification of a Quadratic Form in Gaussian Random Variables.}
    		      \label{tab:Input_QFGRV}
    	    \end{table}
                
                From now on, we will classify forms that happen to be linear combinations of non-central chi-squares.
                \paragraph{Central Quadratic Forms in Gaussian Random Variables}
                Consider a quadratic form that can be written as a linear combination of non-central chi-squares as in Equation \eqref{eq:lcOnlyChiQFGRV},
                \begin{equation*}
                     \rs{Q}\stackrel{d}{=}\sum_{i=1}^\rankFinal \lambda_i(\rs{z}_i+h_i)^2,
                \end{equation*} such a form is said to be \emph{central} if the non-centrality parameters $\{h_i\}_{i=1}^\rankFinal$ are all null: $h_i=0 \text{ for } i=1,\ldots,\rankFinal$. Equivalently, the form can be written as a linear combination of independent (central) chi-squares \begin{equation*}
                     \rs{Q}\sim \sum_{\ell=1}^L \omega_\ell \chi_{\nu_\ell}^2.
                \end{equation*}
                Evidently, a complete quadratic form, $$\rs{Q}=\rv{x}^T\dmt{A}\rv{x},$$ with $\rv{x}\sim\mathcal{N}_N(\dv{\mu},\dmt{\Sigma})$ with a non-singular covariance matrix $\dmt{\Sigma}$ is central if and only if the random vector $\rv{x}$ is central, i.e., $\dmt{A}\dv{\mu}=\dv{0}$. The following proposition is a straightforward result of Proposition \ref{prop:1}.
                \begin{proposition}
                    Any central complete quadratic form is distributed as a linear combination of independent chi-square random variables:
                    \[\rs{Q}\stackrel{d}{=}\sum_{i=1}^{\Rank{{\Sigma}}}\lambda_i\rs{z}_i^2\sim\sum_{i=1}^{\Rank{{\Sigma}}}\lambda_i\chi_1^2\]
                \end{proposition}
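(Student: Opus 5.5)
Since $\rs{Q}=\rv{x}^T\dmt{A}\rv{x}$ is complete ($\dv{b}=\dv{0}$, $c=0$), I will specialize the representation \eqref{RawDecomposition} and show that centrality kills both the linear term and the constant. Write $\dmt{\Sigma}=\dmt{B}\dmt{B}^T$ and $\rv{x}\stackrel{d}{=}\dv{\mu}+\dmt{B}\rv{u}$. With $\dmt{B}^T\dmt{A}\dmt{B}=\dmt{P}\dmt{\Lambda}\dmt{P}^T$ and $\rv{z}=\dmt{P}^T\rv{u}$, the derivation preceding Proposition~\ref{prop:1} gives
\[
\rs{Q}\stackrel{d}{=}\sum_{i=1}^{\Rank{\Sigma}}\lambda_i\rs{z}_i^2+\sum_{i=1}^{\Rank{\Sigma}}d_i\rs{z}_i+c',\qquad \dv{d}=2\dmt{P}^T\dmt{B}^T\dmt{A}\dv{\mu},\quad c'=\dv{\mu}^T\dmt{A}\dv{\mu}.
\]
The natural meaning of ``central complete form'' here is the one given just before the statement: $\dmt{A}\dv{\mu}=\dv{0}$. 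Then $\dv{d}=\dv{0}$ at once, and $c'=\dv{\mu}^T(\dmt{A}\dv{\mu})=0$. The representation therefore collapses to $\sum_{i=1}^{\Rank{\Sigma}}\lambda_i\rs{z}_i^2$ with $\rs{z}_i$ i.i.d.\ $\mathcal{N}(0,1)$. Each $\rs{z}_i^2\sim\chi_1^2$, and independence holds because $\rv{z}$ is standard normal.

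The indices with $\lambda_i=0$ contribute nothing, so the sum may run over all $\Rank{\Sigma}$ indices as stated, or equivalently over the first $\rankFinal$. This is consistent with Proposition~\ref{prop:1}: there $h_i=d_i/(2\lambda_i)=0$, $\sigma=0$ because all $d_j=0$, and $c''=c'=0$.

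The only real subtlety is the definition of ``central.'' It could instead be meant as ``$h_i=0$ for all $i$ in the chi-square representation \eqref{eq:lcOnlyChiQFGRV}.'' I will check that this version also yields the result. Assume the form is a pure linear combination of chi-squares, so $d_j=0$ for $j>\rankFinal$, and assume $h_i=0$, so $d_i=0$ for $i\le\rankFinal$. Then $\dv{d}=\dv{0}$, which means $\dmt{B}^T\dmt{A}\dv{\mu}=\dv{0}$.

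It remains to show that the constant vanishes, i.e.\ $c'=\dv{\mu}^T\dmt{A}\dv{\mu}=0$. This is the step I expect to need care. If $\dmt{\Sigma}$ is nonsingular, $\dmt{B}$ is invertible, so $\dmt{A}\dv{\mu}=\dv{0}$ and $c'=0$. In the singular case, $c'$ can fail to vanish; the paper's own footnote example of a constant random variable is an instance. So I will state the result under the hypothesis $\dmt{A}\dv{\mu}=\dv{0}$, which is the reading the text adopts, and remark that for nonsingular $\dmt{\Sigma}$ the two notions of centrality coincide.
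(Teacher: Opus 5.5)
Your proof is correct and follows the paper's route. The paper gives no separate argument; it only says the proposition is a straightforward consequence of Proposition~\ref{prop:1}, which is exactly your specialization of \eqref{RawDecomposition} under $\dmt{A}\dv{\mu}=\dv{0}$, giving $\dv{d}=\dv{0}$ and $c'=0$. Your caveat is also sound: if ``central'' were read as $h_i=0$ only up to an additive constant, then for singular $\dmt{\Sigma}$ one can have $\dmt{B}^T\dmt{A}\dv{\mu}=\dv{0}$ while $\dv{\mu}^T\dmt{A}\dv{\mu}\neq 0$, so taking $\dmt{A}\dv{\mu}=\dv{0}$ as the hypothesis (automatic when $\dmt{\Sigma}$ is nonsingular) is the right choice.
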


                \paragraph{Positive Definite Quadratic Forms in Gaussian Random Variables} \label{par:Positve_Definite}
                Consider a quadratic form that can be written as a linear combination of chi-squares as in Equation \eqref{eq:lcOnlyChiQFGRV}, \begin{equation*}
                     \rs{Q}\stackrel{d}{=}\sum_{i=1}^\rankFinal \lambda_i(\rs{z}_i+h_i)^2, 
                \end{equation*} such a form is said to be \emph{positive definite} if all the linear coefficients $\{\lambda_i\}_{i=1}^\rankFinal$ are positive: $\lambda_i > 0 \text{ for } i=1,\ldots,\rankFinal$. Null eigenvalues are simply ignored. Note that, in the case of a complete form with a non-degenerate Gaussian distribution, i.e., $\dv{b}=\dv{0}$, $c=0$ and $\dmt{\Sigma}$ is positive definite, positive definiteness of the form is simply equivalent to the positive semi-definiteness of the matrix $\dmt{A}$.

                On the other hand, a form is said to be \emph{negative-definite} if all the eigenvalues of the quadratic form $\{\lambda_i\}_{i=1}^\rankFinal$ are negative. Equivalently, a quadratic form is negative definite if and only if its additive inverse is a positive definite form. Due to that fact, negative forms do not need to be studied separately. Finally, a form in which eigenvalues alternate signs, i.e., $\exists i_1, i_2, \; \lambda_{i_1}\lambda_{i_2}<0$, is called an indefinite form. The latter can be written as the difference of two positive definite forms.

                \textbf{Semi-definite Forms}
                Some writers \cite{mathaiQuadraticFormsRandom1992} distinguish positive semi-definite quadratic forms in Gaussian random variables from positive forms following the standard algebraic notions. If we are interested in the form itself, and this form can be written as a linear combination of independent chi-squares \eqref{eq:lcOnlyChi}, it does not matter if some eigenvalues are zero. In the end, we can perceive the form as the identically distributed one with a smaller number of variables. However, in a ratio \ref{eq:RQFGRV}, (1) we cannot necessarily write both forms in terms of the same chi-squares, and (2) if even if we can, i.e., the matrices $\dmt{A}$ and $\dmt{B}$ are simultaneously orthogonally diagonalizable, the ``dead" chi-squares in one form may be living and breathing in the other. Due to these considerations, in the context of a ratio, we will use the standard linear algebraic terminology for a quadratic form, i.e., a form $\rv{x}^T\dmt{A}\rv{x}$ is positive definite (respectively, semi-definite), if and only, the matrix $\dmt{A}$ is positive definite (respectively, positive semi-definite). 
                
                Here is an illustrative example. Consider the forms $\rs{Q}_1=\rv{x}^T\dmt{A}_1\rv{x}$ and $\rs{Q}_2=\rv{x}^T\dmt{A}_2\rv{x}$ where $\dmt{A}_1=\dmt{I}_2$, $\dmt{A}_2=\begin{bmatrix}
                     0 & 0\\
                     0 & 1
                \end{bmatrix}$, and $\rv{x}=[\rs{x}_1,\rs{x}_2]^T\sim\mathcal{N}_2(\dv{0},\dmt{I}_2)$. In itself, the quadratic form $\rs{Q}_2$ can be perceived as a form in one Gaussian variable, namely $\rs{x}_2$. So we can say it is a positive definite single form. However, within the ratio, it does not contain any $\rs{x}_1$, which appears in the numerator $\rs{Q}_1$. In the ratio, we would say $\rs{Q}_2$ is positive semi-definite, not positive definite.
         
                We conclude this section with two non-trivial numerical examples that summarize the section.
    
                Consider a complete quadratic form 
                \begin{equation}
                \rs{Q}=\rv{x}^T\dmt{A}\rv{x}, \label{eq:First-Example}
                \end{equation}
                where $\rv{x}=[\rs{x}_1,\rs{x}_2,\rs{x}_3,\rs{x}_4]^T\sim \mathcal{N}_4(\dv{\mu},\dmt{\Sigma})$, and raw input parameters $$\dmt{\Sigma}=\dfrac{1}{4}\begin{bmatrix}
                    5 & 5 & 3 & 3\\
                    5 & 5 & 3 & 3\\
                    3 & 3 & 9 & 1\\
                    3 & 3 & 1 & 9
                \end{bmatrix},\;\dv{\mu}=\begin{bmatrix}
                    0 \\
                    1 \\
                    0 \\
                    1
                \end{bmatrix},\; \text{and}~ \dmt{A}=\dfrac{1}{2}\begin{bmatrix}
                    -1 & -1 & 1 & -1\\
                    -1 & -1 & -1 & 1\\
                    1 & -1 & 1 & 1\\
                    -1 & 1 & 1 & 1
                \end{bmatrix} .$$
         
                Note first that the covariance matrix $\dmt{\Sigma}$ is rank-deficient with $\Rank{\Sigma}=3$. To factorize this matrix, it is possible to start from the orthogonal diagonalization $$\dmt{\Sigma}=\dmt{QD}\dmt{Q}^T$$ with $$\dmt{Q}=\begin{bmatrix}
                    \frac12 &\frac12 &\frac12 &\frac12\\
                    0 &0 &-\frac{1}{\sqrt{2}} &\frac{1}{\sqrt{2}}\\
                    -\frac12 &-\frac12 &\frac12 &\frac12\\
                    -\frac{1}{\sqrt{2}} &\frac{1}{\sqrt{2}} &0 &0
                \end{bmatrix} \text{ and } \dmt{D} = \begin{bmatrix}
                    4 & 0 & 0 & 0\\
                    0 & 2 & 0 & 0\\
                    0 & 0 & 1 & 0\\
                    0 & 0 & 0 & 0\\
                \end{bmatrix}.$$ Partition the matrices as $\dmt{Q}=[\dmt{Q}_1\;\; \dmt{Q}_2]$, $\dmt{Q}_1\in\mathbb{R}^{4\times 3}$ and $\dmt{D}=\diag(\dmt{D}_1,0)$, to finally write $\dmt{\Sigma}=\dmt{B}\dmt{B}^T$ with $$\dmt{B}=\dmt{Q}_1\dmt{D}_1^{1/2}=\begin{bmatrix}
                    1 &0 &-\frac12\\
                    1 &0 &-\frac12\\
                    1 & -1 &\frac12\\
                    1 &1 &\frac12
                \end{bmatrix}.$$ Referring to Equation \eqref{eq:B^TAB}, we calculate the matrix $$\dmt{B}^T\dmt{AB}=\begin{bmatrix}
                    0 &0 &2\\
                    0 &0 &0\\ 
                    2 &0 &0
                \end{bmatrix}.$$ Orthogonally eigendecomposing, we get $\dmt{B}^T\dmt{AB}=\dmt{P\Lambda}\dmt{P}^T$, with $$\dmt{\Lambda}=\begin{bmatrix}
                    2 & 0 & 0\\
                    0 & -2 & 0\\
                    0 & 0 & 0
                \end{bmatrix} \text{ and } \dmt{P}=\begin{bmatrix}
                    \frac{1}{\sqrt{2}} & -\frac{1}{\sqrt{2}} & 0\\
                    0 & 0 & 1\\
                    \frac{1}{\sqrt{2}} & \frac{1}{\sqrt{2}}  & 0
                \end{bmatrix}.$$
                The eigenvalues of the quadratic form are $\lambda_1= 2, \lambda_2 = -2$ and $\lambda_3 = 0$. Hence, the form has 2 non-zero eigenvalues: $\rankFinal=N_P+N_N=2<\Rank{\Sigma}=3<N=4$. Referring to Equation \eqref{RawDecomposition}, calculate the vector $$\dv{d}=2\dmt{P}^T\dmt{B}^T\dmt{A}\dv{\mu}=\begin{bmatrix}
                    \sqrt{2} & \sqrt{2} & 2
                \end{bmatrix}^T.$$ Referring to the same equation, the constant $c'$ is given by $c'=\dv{\mu}^T\dmt{A}\dv{\mu}=1$. Hence, we have \begin{equation*}
                    \rs{Q}\stackrel{d}{=}\rv{z}^T\dmt{\Lambda}\rv{z}+\dv{d}^T\rv{z}+c'=2\rs{z}_1^2-2\rs{z}_2^2+\sqrt{2}\rs{z}_1+\sqrt{2}\rs{z}_2+2\rs{z}_3+1
                \end{equation*} where $\rv{z}\sim\mathcal{N}_3(\dv{0},\dmt{I}_3)$. Completing the two squares as in the proof of Proposition \ref{prop:1} (Equation \ref{eqNC}), we can  write $$\begin{aligned}
                    \rs{Q}&\stackrel{d}{=}\sum_{i=1}^{\rankFinal}\lambda_i(\rs{z}_i+h_i)^2+\sum_{i=\rankFinal+1}^{\Rank{\Sigma}}d_i\rs{z}_i+c''\\&=2\left(\rs{z}_1+\frac{\sqrt{2}}{4}\right)^2-2\left(\rs{z}_2+\dfrac{\sqrt{2}}{4}\right)^2+2\rs{z}_3 + 1.
                \end{aligned}$$ Therefore, we characterize the distribution of the quadratic form in Equation \eqref{eq:First-Example} by $$\rs{Q}\sim 2\chi_1^2(\frac18)-2\chi_1^2(\frac18)+2\mathcal{N}(0,1)+1$$
                Note that the complete quadratic form is the sum of a linear combination of independent non-central chi-squares, an independent central Gaussian, and a constant (which clearly can be used to shift the Gaussian).The effective input parameters are $\lambda_1 =2$, $\lambda_2=-2$, $h_1^2=h_2^2=\frac{1}{8}$, $\sigma=2$ and $c''=1$. We classify $\rs{Q}$ as a non-central indefinite quadratic form.

                Consider now the incomplete form $$\rs{Q}=\rv{x}^T\dmt{A}\rv{x}+\dv{b}^T\rv{x},$$ where $\rv{x}\sim\mathcal{N}_3(\dv{0},\dmt{I}_3)$, and raw input parameters $$
                \dmt{\Sigma}=\dmt{I}_3,\; \dv{\mu} = \dv{0},\; \dmt{A}=\begin{bmatrix}
                    7  & 24 & 0\\
                    24 & -7 & 0\\
                    0  &  0 & 25
                \end{bmatrix} \text{ and } \dv{b}=\begin{bmatrix}
                    40\\
                    50\\
                    30
                \end{bmatrix}.$$ As $\rv{x}$ is a standard Gaussian vector, we can replace every $\dmt{B}$ by the identity matrix. Eigendecomposing $\dmt{B}^T\dmt{AB}=\dmt{A}$, we get $\dmt{A}=\dmt{P}^T\dmt{\Lambda P}$ where $$\dmt{P}=\begin{bmatrix}
                    -0.8 &  0.6 & 0\\
                    -0.6 & -0.8 & 0\\
                    0   &  0   & 1
                \end{bmatrix} \text{ and } \dmt{\Lambda} = \begin{bmatrix}
                    25 &  0  & 0\\
                    0  &  25 & 0\\
                    0  &  0  & -25
                \end{bmatrix}$$ The eigenvalues of the quadratic form are $\lambda_1 = \lambda_2 =25$, and $\lambda_3 = -25$, or $\omega_1 = 25$ and $\omega_2=-25$. So the number of non-zero eigenvalues is $\rankFinal = 3$, and the distinct eigenvalues is $L=2$. 
                Now, we can write $\dv{d}=\dmt{P}^T\dv{b}=[-62,30,-16]^T$; hence, the noncentrality parameters are $h_1^2 =\dfrac{961}{625} $, $h_2^2 = \dfrac{9}{25}$ and $h_3^2 =\dfrac{64}{625}$. Reexpress the form as 
                \begin{equation*}
                \begin{aligned}
                    \rs{Q}&\stackrel{d}{=}\rv{z}^T\dmt{\Lambda}\rv{z}+\dv{d}^T\rv{z}+c'=25\rs{z}_1^2+25\rs{z}_2^2-25\rs{z}_3^2-62\rs{z}_1+30\rs{z}_2-16 \rs{z}_3 \\
                    &\sim 25\chi_1^2\left( \dfrac{961}{625}\right)+25\chi_1^2 \left(\dfrac{9}{25} \right)-25\chi_1^2 \left( \dfrac{64}{625}\right)-\dfrac{1122}{25} 
                \end{aligned}
                \end{equation*} or $$\rs{Q}\sim 25\chi_2^2\left(\dfrac{1186}{625}\right)-25\chi_1^2\left(\dfrac{64}{625} \right)-\dfrac{1122}{25}.$$
                The reduced effective input parameters are $\omega_1=25$, $\omega_2 = -25$, $\delta_1^2 = \dfrac{1186}{625}$, $\delta_2^2 =\dfrac{64}{625}$, $\sigma = 0$ and $c'' =-\dfrac{1122}{25}$.

            \subsection{Quadratic Forms in Complex Gaussian Random Variables}
                Consider now a quadratic form in \(N\) complex Gaussian random variables \eqref{eq:QFCGRV}, \(\rs{Q}=\rv{x}^H\dmt{A}\rv{x}+\Re(\dv{b}^H\rv{x})+c\), where \(\rv{x}\sim \mathcal{CN}_N(\dv{\mu},\dmt{\Sigma})\) and \(\rank(\dmt{\Sigma})=\Rank{\Sigma}\). Following a similar approach from the previous section, it can be easily shown that \[\rs{Q}=(\rv{x}-\dv{\mu})^H\dmt{A}(\rv{x}-\dv{\mu})+\Re\left[(2\dmt{A}\dv{\mu}+\dv{b})^H(\rv{x}-\dv{\mu})\right]+\dv{\mu}^H\dmt{A}\dv{\mu}+\Re(\dv{b}^H\dv{\mu})+c.\] 
                Write \(\dmt{\Sigma}=\dmt{B}\dmt{B}^H\), where \(\dmt{B}\in \mathbb{C}^{N\times \Rank{\Sigma}}\). Let \(\rv{u}\in \mathcal{CN}(\dv{0},\dmt{I}_{\Rank{\Sigma}})\). We can show that the random vector \(\dv{\mu}+\dmt{B}\rv{u}\) is identically distributed as \(\rv{x}\). Then, \[\rs{Q}\stackrel{d}{=}\rv{u}^H\dmt{B}^H\dmt{AB}\rv{u}+\Re\left[(2\dmt{A}\dv{\mu}+\dv{b})^H\dmt{B}\rv{u}\right]+\dv{\mu}^H\dmt{A}\dv{\mu}+\Re(\dv{b}^H\dv{\mu})+c.\]
                Unitarily diagonalizing, write \(\dmt{B}^H\dmt{AB}=\dmt{P}\dmt{\Lambda}\dmt{P}^H\), where \[\dmt{\Lambda}=\diag (\lambda_1,\ldots,\lambda_{\Rank{\Sigma}})\] Then,
                \[\rs{Q}\stackrel{d}{=}\rv{u}^H\dmt{P}\dmt{\Lambda}\dmt{P}^H\rv{u}+\Re\left[(2\dmt{A}\dv{\mu}+\dv{b})^H\dmt{B}\rv{u}\right]+\dv{\mu}^H\dmt{A}\dv{\mu}+\Re(\dv{b}^H\dv{\mu})+c.\]
                Let \(\rv{z}=\dmt{P}^H\rv{u}\). Since \(\dmt{P}\) is unitary, \(\rv{z}\sim \mathcal{CN}(\dv{0},\dmt{I}_{\Rank{\Sigma}})\). Therefore,
                \[\rs{Q}\stackrel{d}{=}\rv{z}^H\dmt{\Lambda}\rv{z}+\Re\left[\dv{d}^H\rv{z}\right]+c',\]
                where \(\dv{d}=\dmt{P}^H\dmt{B}^H(2\dmt{A}\dv{\mu}+\dv{b})\) and  \(c'=\Re[\dv{b}^H\dv{\mu}]+\dv{\mu}^H\dv{A}\dv{\mu}+c\). Writing \(\dv{d} =[d_1,\ldots,d_{\Rank{\Sigma}}]^T\) and \(\rv{z}=[\rs{z}_1,\ldots,\rs{z}_{\Rank{\Sigma}}]^T\), we have
                \[\rs{Q}\stackrel{d}{=}\sum_{j=1}^{\Rank{\Sigma}}\lambda_j|z_j|^2+\Re\left[\sum_{j=1}^{\Rank{\Sigma}}\bar{d}_jz_j\right]+c'.\]
                Now since \(\rv{z}\sim \mathcal{CN}(\dv{0},\dmt{I}_{\Rank{\Sigma}})\), we can write \(\rs{z}_j=\rs{w}_j+i\rs{v}_j\), where \(\{\rs{w}_j,\rs{v}_j,j=1,\ldots,\Rank{\Sigma}\}\) are independent and follow \(\mathcal{N}(0,\frac{1}{2})\). Let \(d_j=f_j+ig_j\) where \(f_j,g_j\in \mathbb{R}\). Then, we have \begin{equation}
                    \rs{Q}\stackrel{d}{=}\sum_{j=1}^{\Rank{\Sigma}}\left[\lambda_j(\rs{w}_j^2+\rs{v}_j^2)+f_j\rs{w}_j+g_j\rs{v}_j\right]+c'.\label{eqNCC}
                \end{equation}
                Similarly, 
                the eigenvalues of the quadratic form can have $\rankFinal = N_P+N_N$ nonzero eigenvalues and \(\Rank{\Sigma}-\rankFinal\) zero eigenvalues. We can easily show that \(\rankFinal=\rank(\dmt{B}^H\dmt{AB})=\rank(\dmt{\Sigma}^{1/2}\dmt{A}\dmt{\Sigma}^{1/2})\). The eigenvalues are arranged by putting the \(N_P\) positive eigenvalues first, the \(N_N\) negative ones, then the zero eigenvalues last,
                \[\lambda_1\geq\lambda_2\geq\ldots\geq\lambda_{N_P}>0>\lambda_{N_P+1}\geq \lambda_{N_P+2}\geq\ldots\geq \lambda_{N_P+N_N}\]
                and \[\lambda_{N_P+N_N+1}=\lambda_{N_P+N_N+2}=\ldots=\lambda_{\Rank{\Sigma}}=0\]
                 Therefore, we have
                \begin{align}
                    \rs{Q}&\stackrel{d}{=}\sum_{j=1}^{\rankFinal}\lambda_j\left[\left(\rs{w}_j+\dfrac{f_j}{2\lambda_j}\right)^2+\left(\rs{v}_j+\dfrac{g_j}{2\lambda_j}\right)^2\right]+\sum_{j=\rankFinal+1}^{\Rank{\dmt{\Sigma}}}(f_j\rs{w}_j+g_j\rs{v}_j)+c''\\
                    &\sim \sum_{j=1}^{\rankFinal}\dfrac{\lambda_j}{2}\chi_2^2(h_j^2)+\sigma\mathcal{N}(0,1)+c''
                \end{align}
                where \begin{align*}
                    h_j^2&=\dfrac{|d_j|^2}{2\lambda_j^2},\\
                    \sigma&=\dfrac{1}{\sqrt{2}}\sqrt{\displaystyle\sum_{j=\rankFinal+1}^{\Rank{\dmt{\Sigma}}}|d_j|^2},\\
                    c''&=c'-\dfrac{1}{4}\sum_{j=1}^{\rankFinal}\dfrac{|d_j|^2}{\lambda_j}.
                \end{align*}
                or by grouping the repeated eigenvalues 
                \begin{equation}
                     \rs{Q}\sim \sum_{\ell=1}^L \tilde{\omega}_\ell \chi_{2\nu_\ell}^2(\delta_\ell^2), 
                \end{equation}
                 where $\{\tilde{\omega}_\ell\}_{\ell=1}^L$ are the distinct eigenvalues, $L$ is the number of distinct nonzero eigenvalues, $\nu_\ell$ is the multiplicity of the eigenvalue $\omega_\ell$, i.e., $$\nu_\ell=\operatorname{Card}\left(\{n:\frac{\lambda_n}{2}=\tilde{\omega}_\ell\}\right),$$
                where $\delta_\ell^2$ is the sum of the squares of $h_n$ corresponding to the same eigenvalue $\tilde{\omega}_\ell$, i.e., 
                $$
                    \delta_\ell^2=\displaystyle\sum_{\lambda_n=\tilde{\omega}_\ell}h_n^2.
                $$
                Similar propositions hold true:
                \begin{proposition}
                    Any quadratic form in complex Gaussian random variables can be written as the sum of
                    \begin{enumerate}
                        \item a linear combination of independent non-central chi-square variables with even degrees of freedom,
                        \item possibly a Gaussian variable that is independent of the chi-square variables,
                        \item possibly a constant.
                    \end{enumerate}
                \end{proposition}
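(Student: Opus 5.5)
The plan mirrors the proof of Proposition \ref{prop:1}, starting from the real representation \eqref{eqNCC}. That display already gives
\[\rs{Q}\stackrel{d}{=}\sum_{j=1}^{\Rank{\Sigma}}\left[\lambda_j(\rs{w}_j^2+\rs{v}_j^2)+f_j\rs{w}_j+g_j\rs{v}_j\right]+c',\]
where $\{\rs{w}_j,\rs{v}_j\}$ are $2\Rank{\Sigma}$ independent $\mathcal{N}(0,\frac12)$ variables. The $\lambda_j$ are real because $\dmt{B}^H\dmt{A}\dmt{B}$ is Hermitian. The sum splits into the indices $j\le\rankFinal$ with $\lambda_j\neq0$ and the indices $j>\rankFinal$ with $\lambda_j=0$.

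\textbf{Nonzero eigenvalues.} For each $j\le\rankFinal$, complete the square separately in $\rs{w}_j$ and $\rs{v}_j$:
\[\lambda_j\rs{w}_j^2+f_j\rs{w}_j=\lambda_j\left(\rs{w}_j+\tfrac{f_j}{2\lambda_j}\right)^2-\tfrac{f_j^2}{4\lambda_j},\]
and the same for $\rs{v}_j$ with $g_j$. The subtracted terms sum to $-|d_j|^2/(4\lambda_j)$, which goes into $c''$. Next rescale: $\sqrt2\rs{w}_j$ and $\sqrt2\rs{v}_j$ are independent standard normals, so
\[\left(\rs{w}_j+\tfrac{f_j}{2\lambda_j}\right)^2+\left(\rs{v}_j+\tfrac{g_j}{2\lambda_j}\right)^2=\tfrac12\left[\left(\sqrt2\rs{w}_j+\tfrac{f_j}{\sqrt2\lambda_j}\right)^2+\left(\sqrt2\rs{v}_j+\tfrac{g_j}{\sqrt2\lambda_j}\right)^2\right].\]
The bracket is a non-central $\chi_2^2$ with noncentrality $(f_j^2+g_j^2)/(2\lambda_j^2)=h_j^2$. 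So the $j$th term is $\frac{\lambda_j}{2}\chi_2^2(h_j^2)$, which has even degrees of freedom. Different $j$ involve disjoint pairs $(\rs{w}_j,\rs{v}_j)$, so these chi-squares are independent. Grouping equal eigenvalues, a sum of independent $\chi_2^2$'s with a common weight is $\chi^2_{2\nu_\ell}(\delta_\ell^2)$; the degrees of freedom stay even and the noncentralities add.

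\textbf{Zero eigenvalues.} For $j>\rankFinal$ the remaining part $\sum_{j>\rankFinal}(f_j\rs{w}_j+g_j\rs{v}_j)$ is a linear combination of independent Gaussians. It is therefore Gaussian with mean $0$ and variance $\frac12\sum_{j>\rankFinal}|d_j|^2=\sigma^2$, i.e.\ it equals $\sigma\mathcal{N}(0,1)$ in distribution. It uses only the variables with $j>\rankFinal$, so it is independent of the chi-square part. Everything left over, $c'-\frac14\sum_{j\le\rankFinal}|d_j|^2/\lambda_j=c''$, is the constant.

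\textbf{Main point to check.} The one step needing care is \eqref{eqNCC} itself, namely that $\rs{w}_j,\rs{v}_j$ are independent $\mathcal{N}(0,\frac12)$. This uses that $\rv{z}=\dmt{P}^H\rv{u}$ is circularly-symmetric $\mathcal{CN}(\dv{0},\dmt{I})$: a unitary map preserves the covariance $\dmt{I}$ and keeps the pseudo-covariance zero. It also uses $\Re(\bar d_j\rs{z}_j)=f_j\rs{w}_j+g_j\rs{v}_j$. Both are immediate from the standard definition of $\mathcal{CN}$, and with them the proposition follows. Degenerate cases, where the chi-square sum is empty or $\sigma=0$, are handled by the empty-sum convention, as after Proposition \ref{prop:1}.
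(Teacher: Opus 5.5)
Your proposal is correct and follows the paper's own derivation essentially step for step. Both start from \eqref{eqNCC} and complete the square in each pair $(\rs{w}_j,\rs{v}_j)$ with $\lambda_j\neq 0$ to get $\frac{\lambda_j}{2}\chi_2^2(h_j^2)$, where $h_j^2=|d_j|^2/(2\lambda_j^2)$; the zero-eigenvalue terms are collected into $\sigma\mathcal{N}(0,1)$ with $\sigma^2=\frac12\sum_{j>\rankFinal}|d_j|^2$, and the remainder into $c''$. Your explicit $\sqrt{2}$ rescaling and your remark on circular symmetry only fill in steps the paper leaves implicit.
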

                As a result, a Hermitian quadratic form in complex Gaussian random variables can be viewed as a special case of a form in real ones. In fact, we could have shown the equivalence starting from the raw input of the complex case.
                
                Sorting the eigenvalues in non-increasing order, we can write the following proposition:
                \begin{proposition}
                Consider a quadratic form in complex Gaussian random variables as previously defined and refer to Equation \eqref{eqNCC}.  Then the form is distributed as a linear combination of independent non-central chi-square variables with even degrees of freedom (up to an additive constant) if and only if
                \[\forall j\in\{\rankFinal,\rankFinal+1,\ldots,\Rank{\Sigma}\},\; d_j = 0 \label{condition2}.\]
                \end{proposition}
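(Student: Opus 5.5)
Note first that the index set in the statement should be read as $\{\rankFinal+1,\ldots,\Rank{\Sigma}\}$, the indices of the zero eigenvalues. Requiring $d_{\rankFinal}=0$ as well would be an artifact of indexing, since $\lambda_{\rankFinal}\neq 0$ and that coordinate is absorbed by completing the square. I prove the statement in that form, mirroring the real case.

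\textbf{Sufficiency.} Start from \eqref{eqNCC}. For every $j\le\rankFinal$ complete the square in both $\rs{w}_j$ and $\rs{v}_j$, exactly as in the display following \eqref{eqNCC}. If $d_j=f_j+ig_j=0$ for all $j>\rankFinal$, the linear terms in the zero-eigenvalue coordinates vanish. Each term $\lambda_j[(\rs{w}_j+\tfrac{f_j}{2\lambda_j})^2+(\rs{v}_j+\tfrac{g_j}{2\lambda_j})^2]$ equals $\tfrac{\lambda_j}{2}$ times a noncentral $\chi_2^2(h_j^2)$ variable, because $\sqrt2\rs{w}_j$ and $\sqrt2\rs{v}_j$ are independent standard normals. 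Independence across $j$ follows from independence of the $\rs{w}_j,\rs{v}_j$. This leaves a linear combination of independent noncentral $\chi^2_2$ variables plus the constant $c''$. Grouping equal eigenvalues gives even degrees of freedom $2\nu_\ell$.

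\textbf{Necessity.} Suppose instead that some $d_j\neq0$ with $j>\rankFinal$. Then the leftover term $\sum_{j>\rankFinal}(f_j\rs{w}_j+g_j\rs{v}_j)$ is a centered Gaussian with variance $\sigma^2=\tfrac12\sum_{j>\rankFinal}|d_j|^2>0$. It is independent of the chi-square part, since it involves disjoint coordinates. Hence the MGF of $\rs{Q}$ factorizes as
\[
M_\rs{Q}(t)=e^{c''t+\sigma^2t^2/2}\prod_{j=1}^{\rankFinal}\bigl(1-\lambda_jt\bigr)^{-1}\exp\!\Bigl(\tfrac{\lambda_jh_j^2t/2}{1-\lambda_jt}\Bigr),
\]
valid on a neighborhood of $0$. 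Now suppose $\rs{Q}$ also had the law $\sum_k\alpha_k\chi^2_{m_k}(\eta_k)+c_0$. Then on a neighborhood of $0$ we would have the equality
\[
e^{\sigma^2t^2/2+c''t}\,F(t)=e^{c_0t}\,G(t),
\]
where $F$ and $G$ are finite products of factors of the form $(1-at)^{-p}\exp(\tfrac{bt}{1-at})$.

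\textbf{Main obstacle: making necessity rigorous.} The paper's real-case argument simply asserts that ``this is not the MGF of such a combination''. I would make this precise by comparing the logarithms of both sides, i.e.\ the cumulant generating functions, extended analytically to the complex plane minus finitely many points. Each side is a rational-times-logarithmic function plus a polynomial part. Both sides continue meromorphically off the poles $t=1/a$, and for large real $|t|$ away from the poles they grow at most like $O(|t|)$ on the right-hand side. The left-hand side, however, contains the term $\sigma^2t^2/2$, whose quadratic growth cannot be cancelled by terms of the form $\log(1-at)$ or $\tfrac{bt}{1-at}$, all of which are $O(|t|)$ or bounded at infinity. By the identity theorem the equality would then extend globally, a contradiction. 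Equivalently, one can compare cumulants: cumulants of order $r\ge3$ determine the $\alpha_k$ and weights via power sums, and the second cumulant then exceeds what is available by $\sigma^2$. Either argument forces $\sigma=0$, that is, $d_j=0$ for all $j>\rankFinal$.
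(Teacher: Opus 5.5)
Your proof is correct and follows the paper's route. The paper gives no separate proof for the complex case and defers to its real-case argument, which is exactly yours: completing the square for sufficiency, and for necessity the observation that $\sigma>0$ contributes a factor $e^{\sigma^2t^2/2}$ to $M_{\rs{Q}}$ that no linear combination of chi-squares can produce. You are also right that the index set must start at $\rankFinal+1$, since with $\rankFinal$ included the ``only if'' direction is false as written.

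Your extra rigor in the necessity step is welcome, but it needs one small repair. The logarithms $\log(1-at)$ do not continue meromorphically, so either of the following works instead:
\begin{itemize}
\item Differentiate once. Then $K_{\rs{Q}}'$ is a rational function plus $c''+\sigma^2 t$, while the competing cumulant generating function has a derivative that is rational and bounded at infinity, so equality near $0$ forces $\sigma^2=0$.
\item Compare $|\phi_{\rs{Q}}(\beta)|$ on the real line. The Gaussian decay $e^{-\sigma^2\beta^2/2}$ cannot match the at-most-polynomial decay of the characteristic function of any chi-square combination.
\end{itemize}
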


                As a numerical example, consider the complete Hermitian form $\rs{Q}=\rv{x}^H\dmt{A}\rv{x}+\Re[\dv{b}^H\rv{x}]$, where $\rv{x}\sim\mathcal{CN}_2(\dv{\mu},\dmt{\Sigma})$, and raw input parameters $$\dmt{A}=\begin{bmatrix}
                    1 & -i\\
                    i &  1
                \end{bmatrix},\;\;\dv{b}=\begin{bmatrix}
                    1 \\ 1
                \end{bmatrix},\;\;\dv{\mu}=\begin{bmatrix}
                    1\\
                    1+i
                \end{bmatrix}, \text{ and } \dmt{\Sigma}=\begin{bmatrix}
                    10 & -6i\\
                    6i &  10
                \end{bmatrix}.$$ Firstly, as the covariance matrix $\dmt{\Sigma}$ is non-singular, we can use the factorization $\dmt{\Sigma}=\dmt{\Sigma}^{1/2}\dmt{\Sigma}^{1/2}$, where $$\dmt{\Sigma}^{1/2}=\begin{bmatrix}
                    3 & -i\\
                    i &  3
                \end{bmatrix}.$$ Proceeding, we compute the matrix $$\dmt{\Sigma}^{1/2}\dmt{A}\dmt{\Sigma}^{1/2}=\begin{bmatrix}
                    16  & -16i\\
                    16i &  16
                \end{bmatrix}.$$ Eigendecomposing, we get $\dmt{\Sigma}^{1/2}\dmt{A\Sigma}=\dmt{P\Lambda}\dmt{P}^H$, with $$\dmt{P}=\dfrac{1}{\sqrt{2}}\begin{bmatrix}
                    -i & i\\
                    1  & 1
                \end{bmatrix} \text{ and } \dmt{\Lambda}=\begin{bmatrix}
                    32 & 0\\
                    0  & 0
                \end{bmatrix}.$$ 
                There are two eigenvalues $\lambda_1 =32$ and $\lambda_2 =0$. Hence, $ \rankFinal=1<\Rank{\Sigma}=N=2$. Compute the vector $\dv{d}=\dmt{P}^H\dmt{\Sigma}^{1/2}(2\dmt{A}\dv{\mu}+\dv{b})=\sqrt{2}[10+18i,1-i]^T$. The first (and only) non-centrality parameter is then given by $h_1^2=|d_1|^2/2\lambda_1^2=53/128$. Then, the form is identically distributed as $$\rs{Q}\stackrel{d}{=}32|z_1|^2+\Re\left[\sqrt{2}(10+18i)z_1+\sqrt{2}(1-i)z_2\right]$$ As the non-zero second entry of $\dv{d}$ corresponds to a zero eigenvalue, an independent Gaussian appears, with a standard deviation of $\sqrt{2}$. Therefore, the quadratic form $\rs{Q}$ follows: $$\rs{Q}\sim 16\chi_2^2\left(\dfrac{53}{128}\right)+\sqrt{2}\mathcal{N}(0,1)+\dfrac{3}{8}.$$ 
                
                In fact, unlike what was derived in the last example, the literature is only concerned with quadratic forms in complex Gaussian variables of the form \begin{equation}
                     \rs{Q}\stackrel{d}{=}\sum_{n=1}^N \lambda_n|\rs{z}_n+h_n|^2 \label{eq:lcOnlyChiQFCGRV}
                \end{equation}
                where $\rv{z}=[\rs{z}_1,\rs{z}_2,\ldots,\rs{z}_N]^T\sim\mathcal{CN}_N(\dv{0},\dmt{I}_N)$.
                Similarly, taking into account multiple eigenvalues, we write \begin{equation}
                     \rs{Q}\sim \sum_{\ell=1}^L \omega_\ell \chi_{2\nu_\ell}^2(\delta_\ell^2) \label{eq:lcOnlyChiC}
                \end{equation}

        \section{Quantities of Interest}
            In this section, we briefly review basic definitions of the quantities of interest that concern quadratic forms in Gaussian random variables, namely, the cumulative distribution function, the probability density function, moments, moment generating function, characteristic function, cumulant generating function, and cumulants. We restrict these definitions to the continuous random variables since all the quadratic forms under study are functions of continuous random variables. 
            \subsection{Cumulative Distribution Function}
                The \emph{cumulative distribution function (CDF)} of a random variable $ \rs{x}$, denoted $  F_\rs{x}(x) $, is a function that calculates the probability that the value of a random variable $\rs{x}$  is less than or equal to a specific real number $x$. For any given $\rs{x} \in \mathbb{R} $, $  F_\rs{x}(x) $  represents the probability that  $ \rs{x} $  takes on a value within the interval  $  (-\infty, x] $, i.e., $$
                F_\rs{x}(x) = \mathbb{P}[\rs{x} \le x].
                $$ This probability is computed over the \emph {sample space} $ \Omega $  by finding all \emph{outcomes} $ \omega $  such that $  \rs{x}(\omega) \le x $, capturing the probability that $\rs{x}$ does not exceed $x$. Mathematically, we write
                \begin{equation}
                  \label{eqn:CDF}
                  F_{\rs{x}}(x) = \mathbb{P}[\{\omega \in \Omega| \rs{x}(\omega) \le x\}], \quad x \in \mathbb{R}.
                \end{equation}
                The subscript denotes the random variable. For notation simplicity, it is wide common to use $\mathbb{P}[\rs{x} \le x]$ instead of $\mathbb{P}[\{\omega | \rs{x}(\omega) \le x\}]$. 
                The CDF is a monotonic non-decreasing function and must satisfy $\lim_{x \rightarrow -\infty} F_\rs{x}(x)= 0$ and $\lim_{x \rightarrow \infty}F_\rs{x}(x) = 1$.

                The joint CDF of  $N$ random variables, $\rv{x}=[\rs{x}_1, \rs{x}_2, \ldots, \rs{x}_N]$ is the mapping, $F_{\rv{x}}: \mathbb{R}^N \rightarrow [0,1]$, and given by 
                 \begin{equation}
                    F_{\rv{x}}(x_1, x_2, \ldots x_N) = \mathbb{P}[\rs{x}_1 \le x_1, \rs{x}_2 \le x_2, \ldots, \rs{x}_N \le x_N] 
                 \end{equation}
                 A related quantity of interest in some applications is the inverse of the CDF, known as the \emph{quantile function}, which gives the value of the random variable when the probability is known, i.e.,
                 \begin{equation}
                     \label{eqn:quntileFn}
                     x = F_\rs{x}^{-1}(p), 
                 \end{equation}
                 where $p = F_\rs{x}(x)$.
                 
                 The \emph{median} is the value at which the CDF is $\frac{1}{2}$. To be more precise, if $w$ is the median of a random variable $\rs{x}$ it means that $F_{\rs{x}}(w)\ge \frac{1}{2}$ and $1-F_{\rs{x}}(w)\ge \frac{1}{2}$.

                 As will be seen later, much of the literature is concerned with finding the CDF of different quadratic forms. 
                 
            \subsection{Probability Density Function}
                For an absolutely continuous random vector $\rv{x}$, there exists a nonnegative function, $f_{\rs{x}} : \mathbb{R}^N \rightarrow [0,\infty)$, called \emph{probability density function (PDF)}, such that 
                \begin{equation}
                    \mathbb{P}[\rv{x}\in A] = \int_A f_{\rv{x}}(\dv{x}) d \dv{x},
                \end{equation}
                for every $A \subseteq \mathbb{R}^N$. Consequently, if $F_{\rv{x}}$ is sufficiently differentiable at $(x_1, x_2, \ldots, x_N)$, then the joint PDF is
                \begin{equation}
                    f_{\rv{x}}(x_1, x_2, \ldots, x_N) = \dfrac{ \partial^N F_{\rv{x}}(x_1, x_2, \ldots, x_N)}{\partial x_1, x_2, \ldots, x_N}. 
                \end{equation}
                A non-constant quadratic multiform is absolutely continuous.
                
                One quantity of interest related to the PDF is the \emph{mode}, which is the value at which the PDF of a random variable  $\rs{x}$, is maximum, i.e., 
                \begin{equation}
                    \text{Mode}(\rs{x}) = \arg \max_x f_\rs{x} (x)
                \end{equation}
            \subsection{Moments}
                The $n$th \emph{moment} of a continuous random variable, $\rs{x}$, is given 
                \begin{equation}
                    m_n = \mathbb{E}[\rs{x}^n] = \int x^n  f_{\rs{x}}(x) dx .
                \end{equation}
                Another quantity of  interest is the $n$-th central moment, which is defined as 
                \begin{equation}
                    \mu_n = \mathbb{E}[(\rs{x}-m_1)^n] = \int (x-m_1)^n  f_{\rs{x}}(x) dx .
                \end{equation}
                
            \subsection{Moment Generating Function and Characteristic Function}
                The \emph {moment generating function (MGF)} of a random variable $\rs{x}$ is useful for computing moments, identifying distribution and studying convergence in distribution, etc \cite{GengMGF}. It is defined as follows: 
                \begin{equation}
                    M_{\rs{x}}(t) = \mathbb{E}[e^{t\rs{x}}] = \int_{-\infty}^\infty e^{tx} dF_\rs{x}(x),
                \end{equation}
                where $t \in \mathbb{R}$. By separating the integration, we can study its existence \cite{2013-Gallager-StochasticProcessesTheoryForApplications}, 
                \begin{equation}
                    M_{\rs{x}}(t) = \int_0^\infty e^{tx} dF_\rs{x}(x) + \int_{-\infty}^0 e^{tx} dF_\rs{x}(x).
                \end{equation}
                From the definition, the MGF is always positive. The first integral is $\mathbb{P}[\rs{x}>0]$ and the second is $\mathbb{P}[\rs{x} \le 0]$; hence, both integrals exist for $t=0$. If $t_{+}$ is the supremum of values of $t$ for which the first integral exists, it should be in the range $0 \le t_{+} \le \infty$ and the first integral exists for all $t < t_{+}$. Similarly,  If $t_{-}$ is the infimum of values of $t$ for which the second integral exists, the range is $-\infty \le t_{-} \le 0$ and the second integral exists for all $t > t_{-}$. By combining the two integrals one can deduce the interval  $t$ over which $M_\rs{x}(t)$ exists which is $t_{-} \le0 $ to $t_{+} \ge 0$. This interval could be degenerate, i,e, $\{0\} $, finite or infinite without further implications at the endpoints, i.e., open or closed \cite{GengMGF}.

                A complex variable can replace the real value $t$ in the MGF. This results in a different transformation. Specifically, if $M_\rs{x}$ is well defined on some interval $I$, the complex function 
                \begin{equation}
                    \Psi_\rs{x}(z) \triangleq \mathbb{E}[e^{zx}],
                \end{equation}
                is well-defined and holomorphic in the strip $\{ z = t + is: t\in I, s \in \mathbb{R}\}$ \cite{GengMGF}. 
                If $z$ is restricted to the imaginary axis, i.e, $t=0$, this corresponds to \emph{the characteristic function (CF)}, 
                \begin{equation}
                     \Psi_{\rs{x}}(is) = \int_0^\infty e^{isx} dF_\rs{x}(x) + \int_{-\infty}^0 e^{isx} dF_\rs{x}(x).
                \end{equation}

                The MGF is a useful tool for finding the moments of random variables. If $M_\rs{x}(t)$ is well defined on a neighborhood of 0,  It can be shown that the random variable, $x$, is uniquely determined by its moments. The $n$th moment is given by
                \begin{equation}
                    \mathbb{E}[\rs{x}^n] =  \left. \frac{d^n M_\rs{x}(t)}{dt^n} \right\rvert_{t=0}.
                \end{equation}
            \subsection{Cumulant Generating Function}
                The \emph{cumulant generating function (CGF)} of a random variable $\rs{x}$ is the logarithm of the MGF,
                \begin{equation}
                    K_{\rs{x}}(t) = \ln M_{\rs{x}}(t).
                \end{equation}
            \subsection{Cumulants}
                If $M_{\rs{x}}(t)$ is finite in a neighborhood about 0, then $K_\rs{x}$ has the following convergent Taylor series:
                \begin{equation}
                    K_{\rs{x}}(t) = \sum_{j=1}^\infty \dfrac{1}{j!} \kappa_j t^j.
                \end{equation}
                The coefficient $\kappa_j$ is called the $j$th \emph{cumulant} of $\rs{x}$.
            \subsection{Support}
                The \emph{support} of a random vector $\rv{x}\in\mathbb{R}^N$ can be defined as the smallest closed subset of $\mathbb{R}^N$ whose probability content is $1$. For example, the support of a uniformly distributed random variable over $[0,1]$ is $[0,1]$, and the support of bivariate normal vector is $\mathbb{R}^2$. Equivalently, the support can be defined as the collection of points whose arbitrarily small open neighborhoods are not null, i.e., $$\operatorname{Supp}(\rv{x})=\{\dv{a}\in\mathbb{R}^N:\;\forall \epsilon>0, \mathbb{P}\left(\rv{x}\in\mathcal{B}(\dv{a},\epsilon)\right)>0\}$$ where $$\mathcal{B}(\dv{a},\epsilon)=\{\dv{y}\in\mathbb{R}^N:\|\dv{y}-\dv{a}\|<\epsilon\}.$$ If the random vector $\rv{x}$ is absolutely continuous, then $$\operatorname{Supp}(\rv{x})=\operatorname{Closure}\left(\{\dv{x}\in\mathbb{R}^N:f_{\rv{x}}(\dv{x})>0\}\right).$$

\chapter{Applications}
\section{Introduction}
    Quadratic forms in Gaussian random variables appear in a diverse range of applications. Central limit theorems provide a theoretical basis for the approximation of random variables by Gaussian ones. A study of magnitudes or distances may incorporate second-degree polynomials in such variables.  

    The diversity of the applications comes with a diversity in the parameters, from covariance structures to the number of forms and variables. In Figure \ref{fig:MvsN}, we show the number of forms $M$ versus the number of variables $N$ as they appear in different papers. Note that the axes do not have a uniform scale; entire intervals are used to represent 2, 3, 4, and 5 variables, as well as 1, 2, 3, and 4 forms, while the remaining parts of the axes are "continuous" with varying log scales as revealed in the green grid. Each point refers to a single implementation of a formula, and implementations of the same paper are joined by a solid line. The single dashed line represents an implementation at every single point. It is noteworthy that the highest number of variables is a single form that appears in genetics (\mhndcite{chenNumericalEvaluationMethods2019}), and the highest number of forms appear in a theoretical work on fluid antenna systems (\mhndcite{khammassiNewAnalyticalApproximation2022}).

    \begin{figure}[ht!]
        \centering
        \input{Applications/MvsN.tex}
        \caption{Number of Forms $M$ vs Number of Variables $N$}
        \label{fig:MvsN}
    \end{figure}
    
\section{Telecommunication}    
    \begin{figure}[ht!]
        \centering
        \begin{tikzpicture}
    \draw (0,2.5) rectangle (2.5,5.5) node[midway]{Transmitter};
    \draw (10,2.5) rectangle (12.5,5.5) node[midway]{Receiver};
    \draw (2.5,3.75) -- (3.5,3.75) -- (3.5,4.75);
    \draw (3.5,4.25) -- (3.2,4.75) -- (3.8,4.75) -- (3.5,4.25);
    \draw (10,3.75) -- (9,3.75) -- (9,4.75);
    \draw (9,4.25) -- (8.7,4.75) -- (9.3,4.75) -- (9,4.25);
    \draw[dashed] (3.8,4.75) -- (8.7,4.75);
    \node[cloud, draw, fill=white] (c) at (6.25,4.75) {$h$};
    \node[fill=white] (d) at (8.7,5.1) {$y=hx+z$};
    \node[fill=white] (e) at (3.8,5.1) {$\stackrel{x}{\longrightarrow}$};
\end{tikzpicture}
        \caption{SISO Channels}
        \label{fig:SISO}
    \end{figure}
    A very straightforward way to assess the quality of a wireless communication system (see single-input-single-output system Fig. \ref{fig:SISO}) is to answer the question ``How much is it likely that the connection will break?" A numerical answer to this question is called the outage probability. The signal in a system undergoes what is known as multi-path loss, which arises due to different sorts of reflections and delays. Under some modeling assumptions, this loss is characterized by a Gaussian attenuation factor, i.e., a Gaussian channel gain. The magnitude of such a gain is used to evaluate the outage probability. Moreover, to enhance connectivity the multiple antennas might be used at either side of the channel, and several combination methods are employed to obtain a better signal. In some instances, the resulting channel gain might be a quadratic form in complex Gaussian random variables. In others, it might be the maximum component of a multivariate quadratic form.

    \subsection{SIMO}
    \begin{figure}[ht!]
        \centering
        \begin{tikzpicture}
    \draw (0,0) rectangle (2.5,7.5) node[midway]{Transmitter};
    \draw (10,0) rectangle (12.5,7.5) node[midway]{Receiver};
    \draw (2.5,3.75) -- (3.5,3.75) -- (3.5,4.75);
    \draw (3.5,4.25) -- (3.2,4.75) -- (3.8,4.75) -- (3.5,4.25);

    \draw (10,3.75) -- (9,3.75) -- (9,4.75);
    \node at (9.5,3.75) [anchor=south]{$n$};
    \draw (9,4.25) -- (8.7,4.75) -- (9.3,4.75) -- (9,4.25);

    \draw (10,0.5) -- (9,0.5) -- (9,1.5);
    \node at (9.5,0.5) [anchor=south]{$N$};
    \draw (9,1) -- (8.7,1.5) -- (9.3,1.5) -- (9,1);

    \draw (10,7) -- (9,7) -- (9,8);
    \node at (9.5,7) [anchor=south]{$1$};
    \draw (9,7.5) -- (8.7,8) -- (9.3,8) -- (9,7.5);
    
    \node at (9,2.7) {\vdots};
    \node at (9,6) {\vdots};

    \draw[dashed] (3.8,4.75) -- (8.7,8);
    \node[cloud, draw, fill=white] (c) at (6.25,6.375) {$h_1$};
    \draw[dashed] (3.8,4.75) -- (8.7,4.75);
    \node[cloud, draw, fill=white] (c) at (6.25,4.75) {$h_n$};
    \draw[dashed] (3.8,4.75) -- (8.7,1.5);
    \node[cloud, draw, fill=white] (c) at (6.25,3.125) {$h_N$};
\end{tikzpicture}
        \caption{SIMO Channels}
        \label{fig:SIMO}
    \end{figure}
    Consider a single-input-multiple-output (SIMO) system (see Fig. \ref{fig:SIMO}). Assume that the complex signal received at the $n^{\text{th}}$ antenna is modeled as  
    $$
    \rs{y}_n=\rs{h}_n x + \rs{z}_n,\;n=1,2,\ldots,N, 
    $$ where $x$ is the deterministic transmitted symbol, $\rs{h}_n$ is the corresponding channel gain and $\rs{z}_n$ is additive zero-mean complex Gaussian noise. Stacking the received signal gives 
    $$
    \rv{y}=\rv{h}x+\rv{z}
    $$ 
    where $\rv{y},\rv{h},\rv{z}\in \mathbb{C}^N$. 

    The outage probability is defined in terms of the signal-to-noise ratio (SNR), say $\gamma$, as  $$P_{\text{out}}=\mathbb{P}[\gamma<\gamma_{\text{th}}]$$ 
    where $\gamma_{\text{th}}$ is a threshold. The connection is considered failed if the SNR falls below $\gamma_{\text{th}}$. 
    Hence, to determine the outage probability, it is important to assess the probabilistic distribution of the SNR. Other performance metrics, such as the bit error probability (BEP), and the channel capacity, also depend on the SNR.
    
    \subsection{Fading Models}
        Multi-path fading yields a random complex channel gain. Several regimes are used to model the gain, among which are the Rayleigh and Rician fading regimes. The channel gain follows a complex normal distribution in the mentioned cases. In particular:
        \begin{enumerate}
            \item In Rayleigh fading, $\rs{h}\sim\mathcal{CN}(0,\Omega)$. The underlying physical theory assumes no dominant component in the multi-path fading, and the real and imaginary parts of the signal converge under CLT to Gaussian random variables. The variance $\Omega$ can be viewed as the sum of powers of different paths.
            \item In Rician fading, $\rs{h}\sim \mathcal{CN}(\nu^2,\sigma^2)$. A similar underlying physical theory is used, except for the presence of a dominant component, say, Line Of Sight (LOS) component. The power of the latter component is given by $\nu^2$, whereas the powers of the non-dominant paths sum up to $\sigma^2$. Note that the distribution can be re-parametrized in terms of $\Omega=\sigma^2+\nu^2$, the sum of the powers of the dominant and non-dominant components, and $K=\nu^2/\sigma^2$, the ratio of the dominant to non-dominant signal powers.
        \end{enumerate}
    
        Ideally, the antennas must be placed such that the channels are independent. However, this isn't always the case. Hence, we can generalize and claim $\rv{h}\sim \mathcal{CN}(\dv{\mu},\dmt{\Sigma})$ for some $\dv{\mu}\in\mathbb{C}^N$ and positive definite $\dmt{\Sigma}\in \mathbb{C}^{N\times N}$. It is important to note that a non-real covariance of two channel gains corresponds to a correlation between real and imaginary parts of the signals, hence between the in-phase and quadrature parts. In many papers, it is assumed that such a correlation does not exist, which yields a pure real covariance matrix $\dmt{\Sigma}\in\mathbb{R}^{N\times N}$.
    
    \subsubsection{Diversity Techniques}
    
        To enhance the quality of the received signal, the $N$ different received signals may be combined in some way. Numerous diversity techniques exist in the literature, among which two give rise to quadratic forms, maximum ratio combining and selection combining. 
    
        \paragraph{Maximum Ratio Combining (MRC)}
            Assume that the received signals will be linearly combined
            $$\rs{y}=\sum_{n=1}^N w_n \rs{y}_n$$
            Denoting by $\sigma_z^2=\mathbb{E}[|\rs{z}_n|^2]$ the power of the noise, and by $P=|x|^2$ the power of the signal, the instantaneous signal-to-noise ratio (SNR) $\gamma$ is hence given by $$\gamma = \dfrac{\left|\sum_{n=1}^N w_n \rs{h}_n\right|^2}{\mathbb{E}\left[\left|\sum_{n=1}^N w_n \rs{z}_n\right|^2\right]}=\dfrac{P}{\sigma_z^2}\dfrac{\left|\sum_{n=1}^N w_n \rs{h}_n\right|^2}{\left|\sum_{n=1}^N w_n\right|^2}$$
            Assume further that the combiner has access to the channel state information (CSI), i.e., the channel gains $\mathbf{h}$, then it can be easily demonstrated that the optimum linear combination is given by: $w_n=h_n^*$. Hence, the instantaneous SNR for MRC becomes $\gamma = \frac{P}{\sigma_z^2}\sum_{n=1}^N |h_n|^2= \frac{P}{\sigma_z^2} \|h\|^2$. Now in the case of independent channels, the SNR follows a scaled chi-square distribution under Rayleigh fading, and a scaled non-central chi-square distribution under Rician fading. However, this does not (necessarily) hold in the presence of correlation. In the latter case, we have a non-trivial quadratic form $\gamma=\rv{h}^H\dmt{A}\rv{h}$, where $\dmt{A}=\dfrac{P}{\sigma_z^2}\mathbf{I}_N$, and $\rv{h}\sim \mathcal{CN}(\mathbf{0},\mathbf{\Sigma})$ or $\mathcal{CN}(\mathbf{\mu},\mathbf{\Sigma})$. After unitarily diagonalizing, as discussed in the previous chapter, the form will be equivalent to a linear combination of chi-square variables $$\gamma\sim\sum_{n=1}^N\lambda_n\chi_2^2(d_n)$$  As an example, this case have aspired Ramirez-Espinoza et al. \cite{ramirez-espinosaNewApproachStatistical2019} to develop a new method for evaluating the CDF of quadratic forms in complex Gaussian random variables. As we will explain later, they approximate the quadratic form by a sequence of random variables with tractable distributions. 
        
        \paragraph{Selection Combining}
    
            Another diversity combining technique is selection combining. The idea is fairly simple: given the channel state information, only utilize the best channel, i.e., the channel with the largest gain. Hence, the received signal can be written as $$\rs{y}=\rs{y}_{n_0},\;n_0=\arg \max_{n=1,\ldots,N} |\rs{h}_n|^2$$ Hence the instantaneous SNR is given by $\gamma=\frac{P}{\sigma_z^2}\max_{n=1,\ldots,N}|\rs{h}_n|^2$. Thus we are concerned with the CDF of the maximum power, which can be easily computed using the joint distribution as follows \begin{align*}
                \mathbb{P}\left[\max_{n=1,\ldots,N}\frac{P}{\sigma_z^2}|\rs{h}_n|^2<\gamma_{\text{th}}\right]&=\mathbb{P}\left[\forall n\in{1,\ldots,N},\;\frac{P}{\sigma_z^2}|\rs{h}_n|^2<\gamma_{\text{th}}\right]\\
                &=F_{\rv{h}}\left(\gamma_{\text{th}}\frac{\sigma_z^2}{P},\ldots,\gamma_{\text{th}}\frac{\sigma_z^2}{P}\right)
            \end{align*}
            Therefore, we are interested in the joint distribution of the squared moduli of the complex normal vector $\rv{h}$. These squared moduli follow a multivariate (non-central) chi-square distribution (see Section 4), and the moduli themselves are correlated Rayleigh or Rician variables. This has motivated Beaulieu and Hemachandra to study these distributions \cite{beaulieuNovelRepresentationsBivariate2011,beaulieuNovelSimpleRepresentations2011}.

    \subsection{A MIMO Case}
    
    Consider the case of a MIMO with $K$ receive antennas and $M$ transmit ones. Assume that the receiver combines the signals using MRC, whereas the transmitter uses SC to send the signal. Under a Rayleigh or Rician fading regime, we can similarly verify that the powers follow a multivariate quadratic form, that does not necessarily admit marginal (non-central) chi-square distributions.

    For the sake of simplicity, consider the Rayleigh case. Now we have $N=KM$ channel gains: the signal sent from the $m^{\text{th}}$ transmit antenna to the $k^{\text{th}}$ receive antenna is modeled by $$\rs{y}_k=\rs{h}_{km}x+\rs{z}_k$$ For $M=1$, we recover the MRC scheme. Since the transmitter will select one antenna, we need to compare the possible gains corresponding to each antenna. If the $m^{\text{th}}$ antenna is chosen, the corresponding SNR will be proportional to the quadratic form $\rs{Q}_m=\|\rv{h}_m\|^2$, where $\rv{h}_m=[\rs{h}_{1m},\rs{h}_{2m},\ldots,\rs{h}_{Km}]^T$. In case of independence of channel gains $\{\rs{h}_{km}\}_{k=1}^K$, the forms $\rs{Q}_m$ will be scaled chi-squares. However, in a general setting, this is not the case. Again, the SNR is the maximum among the possible SNRs. So to compute the outage probability in such a system, we need the CDF of 
    $$
    \max_{m=1,\ldots,M}\rs{Q}_m.
    $$
    Therfore, the outage probability in this MIMO system requires the joint CDF of the quadratic multiform $\rv{\rs{Q}} = [\rs{Q}_1, \rs{Q}_2, \cdots, \rs{Q}_M]^T`$.

\section{Direction of Arrival}
 
\begin{figure}[!htb]
    \centering
    \subfloat[A plane wavefront impinging on a ULA.]
    {\includegraphics[width=0.75\textwidth]{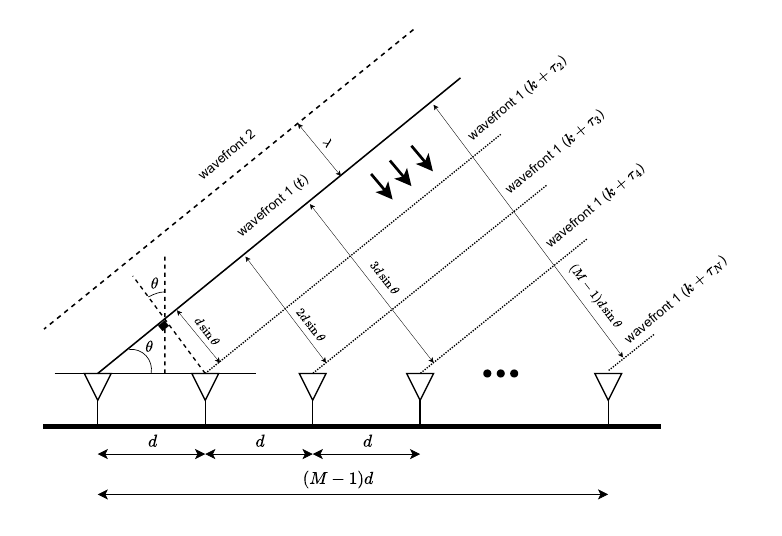}
    \label{subfig:ULA1}}
    \hspace{0.02\textwidth} 
    \subfloat[Noise-free output of all sensors model.]{\includegraphics[width=0.75\textwidth]{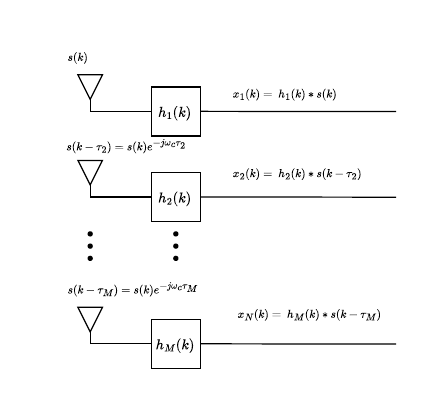}%
    \label{subfig:ULA2}}
    \caption{Uniform linear array.}
    \label{fig:ULA}
\end{figure}

Many modern systems, including radar, wireless communications, and sonar, rely on arrays of sensors, antennas, or hydrophones 
to achieve high-resolution spatial processing. The study of signal processing techniques for such arrays is commonly referred to as \emph{array processing} \cite{vanOptimumArray2002}. One of the central problems in this area is the estimation of the direction of arrival (DOA) of an incoming wave. The performance DOA  estimators has been studied extensively, especially for maximum-likelihood methods and their threshold region behavior. 

Consider a far-field source emitting electromagnetic energy toward a uniform linear array (ULA) (Fig.~\ref{subfig:ULA1}) of $M$ equally spaced sensors \cite{athleySpaceTime2003}. Under the far-field assumption, the impinging wavefront can be treated as planar across the array. If the source and array are assumed coplanar, then the DOA is characterized by a single azimuth angle $\theta$. Let $s(k)$ denote the source signal. Because the wave reaches the sensors at different times, the received signal at each element differs by a propagation delay and possibly sensor-dependent gain and phase effects. Under the narrowband assumption, the noise-free output of the $m$th sensor can be written as \cite{athleyThresholdRegionPerformance2005}
$$
 x_m(k) = h_m(k)\ast s(k-\tau_m),
$$
where $h_m(k)$ is the impulse response of the $m$th sensor and $\tau_m = \tau_m(\theta)$ is the propagation delay from the reference sensor (the left-most sensor in Fig.~\ref{subfig:ULA1}) to the $m$th sensor. The noise-free output of all sensors is shown in Fig.~\ref{subfig:ULA2}. Because of the narrowband assumption, the Fourier transform of the $h_m(k)$, $H_m(\theta)$, is constant over the signal bandwidth and hence, 
$$
x_m(k) = H_m(\theta)e^{-j \omega_c \tau_m (\theta)} s(k),
$$ 
where $\omega_c$ is the carrier frequency. Stacking the sensor outputs yields the familiar array model
$$
    \boldsymbol{x}(k) = \boldsymbol{a}(\theta) s(k),
$$
where $\boldsymbol{a}(\theta) \triangleq [ H_1(\theta) e^{-j \omega_c \tau_1 (\theta)}, H_2(\theta)e^{-j \omega_c \tau_2 (\theta)}, \ldots, H_M(\theta)e^{-j \omega_c \tau_M (\theta)} ]^T $ is the steering vector.
For a ULA with inter-element spacing $d$, it can be seen that (Fig.~\ref{subfig:ULA1}) the propagation delay at the $m$th sensor is 
$$
    \tau_m = \dfrac{(m-1)d \sin \theta}{c}, \quad m= 1,2, \ldots, M,
$$
where $c$ is the propagation speed. If the sensors are isotropic with unit gain, i.e., $H_m(\theta) = 1, (m=1,2, \ldots, M) $ then the steering vector is  
$$
\boldsymbol{a}(u) = [1, e^{-j 2\pi\frac{d}{\lambda}u}, \ldots, e^{-j 2\pi(M-1)\frac{d}{\lambda}u} ]^T, \quad u \triangleq \sin \theta
$$
where $\lambda = 2\pi c/\omega_c$ is the wavelength. In practice, noise and modeling errors are unavoidable. The complex baseband array snapshot is therefore more realistically modeled as
\begin{equation}
    \label{eqn:ArraySignsl}
     \rv{x}(k) = \dv{a}(u_0)s(k) + \rv{z}(k), 
\end{equation}
where $u_0$ is the true direction parameter and $ \rv{z}(k)\sim \mathcal{CN}(\dv{0},\sigma_z^2\dmt{I}) $ is additive Gaussian noise. 

A basic technique for estimating the DOA is \emph{conventional beamforming}. Since the array response is maximized when the steering vector matches the true direction, one may estimate $u_0$ by maximizing the beamformer output power. Specifically, define
$$
    \hat{\rs{u}} = \arg \max_u \rs{v}(u),
$$
where 
\begin{equation*}
             \rs{v}(u) = \dfrac{1}{K} \sum_{k=1}^{K} |\dv{a}^H(u)\rv{x}(k)|^2,
        \end{equation*}
and $K$ is the number of data samples or \emph{snapshots}. The resulting estimator can be interpreted as a maximum-likelihood estimator under the conventional beamforming model. 

A useful way to study the performance of this estimator is through its mean-square error, $ \mathbb{E}\left[(\hat{\rs{u}}-{u_0})^2\right]$. In threshold-region analyses, this error is decomposed into two contributions: one corresponding to small local errors around the true direction, and another corresponding to large \emph{outlier} events in which the estimator locks onto a sidelobe instead of the mainlobe. Thus, 
\begin{equation}
    \label{eqn:MSE_Athley}
    \begin{aligned}
        \mathbb{E}\left[(\hat{\rs{u}}-{u_0})^2\right] &=\mathbb{P}\left[\text{no outlier}\right]\mathbb{E}\left[(\hat{\rs{u}}-{u_0})^2|\text{no outlier}\right]\\&+ \mathbb{P}[\text{outlier}]\mathbb{E}\left[(\hat{\rs{u}}-{u_0})^2|\text{outlier}\right].
    \end{aligned}
\end{equation}
The first term describes estimation errors near the true direction and is often approximated by the Cramer-Rao bound. The second term accounts for the threshold effect, namely, the possibility that a sidelobe peak exceeds the mainlobe peak.

Let $u_p$, $p=1,\cdots,P,$ denote the sidelobe peak locations in the discretized beamformer spectrum. Then the outlier contribution may be approximated by summing over the possible sidelobe events, which leads to
\begin{align*}
    &\mathbb{E}\left[(\hat{\rs{u}}-{u_0})^2\right] \approx \left( 1 - \sum_{p=1}^{P}E_p \right) \text{CRB} + \sum_{p=1}^{P}E_p (u_p -u_0)^2.
\end{align*}
where $ E_p \triangleq \mathbb{P} \left[ \rs{v}(u_p) > \rs{v}(u_0) \right]$ is the pairwise error probability that the $p$th sidelobe exceeds the mainlobe peak. Thus, the threshold region performance is controlled by these pairwise error probabilities. 

For a single source observed over $K$ snapshots, the pairwise error probability can be written explicitly as 
\begin{equation}
    \label{eqn:PairwiseErrProb_Athley}
    E_p = \mathbb{P} \left[\sum_{k=1}^K\left[  \rv{x}^H(k) \left( \dv{a}_0\dv{a}_0^H -\dv{a}_p\dv{a}_p^H \right)  \rv{x}(k)  \right]<0\right],
\end{equation}
where $\dv{a}_0 = \dv{a}(u_0)$ and $\dv{a}_p = \dv{a}(u_p)$. This is already suggestive: the event is defined through a quadratic form in Gaussian vectors.

The Gaussian model of $\rv{x}$ depends on the signal model. If the source signal $s(k)$ is deterministic, then
$$
\rv{x}(k) \sim \mathcal{CN} (\dv{a}_0s(k), \sigma_{\rv{z}}^2\dmt{I}_M).
$$
If, instead, the source is modeled as a zero-mean Gaussian random signal with variance $\sigma_\rs{s}^2$, independent of the noise, then 
$$
\rv{x}(k) \sim \mathcal{CN} (\dv{0}, \sigma_\rs{s}^2\dv{a}_0\dv{a}_0^H+ \sigma_{\rv{z}}^2\dmt{I}_M).
$$

In either case, the stacked observation vector remains complex Gaussian. 

To express the problem more compactly, define the data matrix and its vectorized form as 
$$
     \rmt{X}        \triangleq [\rv{x}(1), \rv{x}(2), \cdots ,\rv{x}(K)], \qquad
     \tilde{\rv{x}} \triangleq \text{vec} (\rmt{X}).
$$
Then, $\tilde{\rv{x}}$ is a complex Gaussian vector  in $ \mathbb{C}^{KM \times 1}$. Also, define
$$
    \tilde{\dmt{A}} \triangleq \dmt{I}_K \otimes (\dv{a}_0\dv{a}_0^H -\dv{a}_p\dv{a}_p^H).
$$
With these definitions, the pairwise probability becomes 
$$
    E_p = \mathbb{P}[\tilde{\rv{x}}^H\tilde{\dmt{A}}\tilde{\rv{x}} < 0].
$$
Therefore, the pairwise error probability is exactly the CDF of an indefinite single Hermitian quadratic form in complex Gaussian random variables, evaluated at zero.

\section{Performance of Adaptive Filters}

\begin{figure}[!ht]
    \centering
    \includegraphics[width=0.7\textwidth]{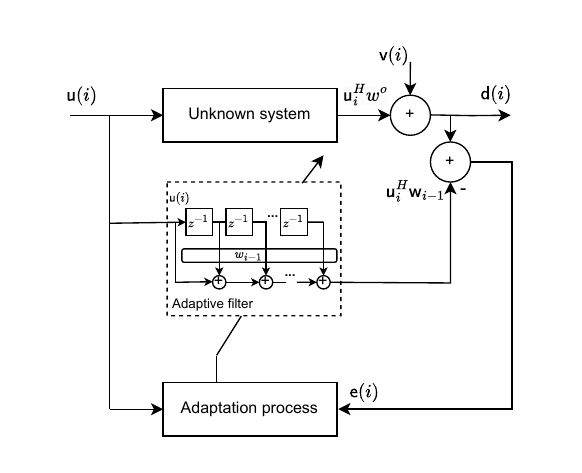}
    \caption{Adaptive system identifier.}
    \label{fig:Adaptive system identifier}
\end{figure}

Adaptive signal processing is concerned with estimating and tracking unknown or time-varying systems from streaming data. Since many practical systems are too complex to model exactly, they are often approximated by linear or finitely parameterized nonlinear models. Adaptive filtering is one of the principal tools for doing so, and many adaptive signal-processing problems can be viewed as instances of adaptive system identification.

Consider an unknown linear system driven by the input sequence \(\{\rs{u}(i)\}\). Define the regression vector
\[
\rv{u}_i \triangleq [\rs{u}(i), \rs{u}(i-1), \ldots, \rs{u}(i-N+1)]^T,
\]
and let \(\dv{w}^o \in \mathbb{C}^N\) denote the optimal coefficient vector. The desired response is modeled as
\[
\rs{d}(i)=\rv{u}_i^H\dv{w}^o+\rs{v}(i),
\]
where \(\rs{v}(i)\stackrel{\mathrm{iid}}{\sim}\mathcal{CN}(0,\sigma_{\rs{v}}^2)\) is additive noise. The optimal coefficient vector is the minimizer of the mean-square error criterion
\[
\dv{w}^o
=
\arg\min_{\dv{w}}
\mathbb{E}\!\left[
\left|\rs{d}(i)-\rv{u}_i^H\dv{w}\right|^2
\right].
\]
Under the usual assumptions, the Wiener solution is given by
\[
\dv{w}^o=\dmt{R}_{\rv{u}}^{-1}\dv{p}_{ud},
\]
where
\[
\dmt{R}_{\rv{u}}\triangleq\mathbb{E}[\rv{u}_i\rv{u}_i^H],
\qquad
\dv{p}_{ud}\triangleq\mathbb{E}[\rv{u}_i \rs{d}^*(i)].
\]
In practice, however, the statistics \(\dmt{R}_{\rv{u}}\) and \(\dv{p}_{ud}\) are unknown, so one resorts to stochastic-gradient methods that estimate \(\dv{w}^o\) recursively from the available data.

A broad class of adaptive algorithms can be written as
\[
\dv{w}_i
=
\dv{w}_{i-1}
+
\mu\frac{\rv{u}_i}{g(\rv{u}_i)}\rs{e}^*(i),
\qquad i\ge 0,
\]
where \(\mu\) is the step size, \(g(\rv{u}_i)\) is a positive normalization function, and
\[
\rs{e}(i)=\rs{d}(i)-\rv{u}_i^H\dv{w}_{i-1}
\]
is the estimation error. The choice \(g(\rv{u}_i)=1\) yields the LMS algorithm, whereas \(g(\rv{u}_i)=\|\rv{u}_i\|^2\) yields the NLMS algorithm \cite{naffouriTransientAnalysisOfData-normalizedAdaptiveFilters2003}.

To study the transient behavior of the adaptive filter, define the weight-error vector
\[
\tilde{\dv{w}}_i\triangleq \dv{w}^o-\dv{w}_i.
\]
Under the standard independence assumptions, two quantities are especially important: the weighted variance relation and the mean weight-error recursion. The weighted variance relation has the form
\begin{equation}
\label{eqn:VarianceRelation}
\mathbb{E}\!\left[\|\tilde{\dv{w}}_i\|_{\dmt{S}}^2\right]
=
\mathbb{E}\!\left[\|\tilde{\dv{w}}_{i-1}\|_{\dmt{S}'}^2\right]
+
\mu^2\sigma_{\rs{v}}^2
\mathbb{E}\!\left[
\frac{\|\rv{u}_i\|_{\dmt{S}}^2}{g^2(\rv{u}_i)}
\right],
\end{equation}
where
\[
\|\dv{a}\|_{\dmt{S}}^2\triangleq \dv{a}^H\dmt{S}\dv{a},
\qquad \forall \dv{a}\in\mathbb{C}^N,
\]
and \(\dmt{S}\) is a Hermitian positive-definite matrix. The matrix \(\dmt{S}'\) is given by
\begin{equation}
\label{eqn:Smatrix}
\dmt{S}'
=
\dmt{S}
-\mu \dmt{S}\,\mathbb{E}\!\left[\frac{\rv{u}_i\rv{u}_i^H}{g(\rv{u}_i)}\right]
-\mu \mathbb{E}\!\left[\frac{\rv{u}_i\rv{u}_i^H}{g(\rv{u}_i)}\right]\dmt{S}
+\mu^2\mathbb{E}\!\left[
\frac{\|\rv{u}_i\|_{\dmt{S}}^2}{g^2(\rv{u}_i)}\rv{u}_i\rv{u}_i^H
\right].
\end{equation}
The mean weight-error recursion is
\begin{equation}
\label{eqn:MeanWeightRecursion}
\mathbb{E}[\tilde{\dv{w}}_i]
=
\left(
\dmt{I}_N
-
\mu \mathbb{E}\!\left[\frac{\rv{u}_i\rv{u}_i^H}{g(\rv{u}_i)}\right]
\right)
\mathbb{E}[\tilde{\dv{w}}_{i-1}].
\end{equation}

Therefore, transient analysis requires the evaluation of expectations of the form
\begin{equation}
\label{eqn:ADF_moments}
\begin{aligned}
\dmt{A}
&\triangleq
\mathbb{E}\!\left[
\frac{\rv{u}_i\rv{u}_i^H}{g(\rv{u}_i)}
\right], \\
\dmt{B}
&\triangleq
\mathbb{E}\!\left[
\frac{\|\rv{u}_i\|_{\dmt{S}}^2}{g^2(\rv{u}_i)}
\rv{u}_i\rv{u}_i^H
\right], \\
c
&\triangleq
\mathbb{E}\!\left[
\frac{\|\rv{u}_i\|_{\dmt{S}}^2}{g^2(\rv{u}_i)}
\right].
\end{aligned}
\end{equation}
These quantities determine the transient behavior of the adaptive filter.

With additional assumptions, these expectations can be reduced to moments of scalar random variables that are ratios, or more generally rational functions, of quadratic forms. This reduction is especially transparent in the NLMS case \cite{al-naffouriMeansquareAnalysisNLMS2021}.

In the NLMS case,
\[
g(\rv{u}_i)=\|\rv{u}_i\|^2.
\]
The denominator is therefore itself a Hermitian quadratic form. Moreover, under the standard independence assumptions used in NLMS analysis, it is convenient to restrict attention to diagonal weighting matrices,
\[
\dmt{S}=\operatorname{diag}(\boldsymbol{\sigma}),
\qquad
\boldsymbol{\sigma}=[\sigma_1,\sigma_2,\ldots,\sigma_N]^T.
\]
Let us define
\begin{equation}
\label{eqn:ADF_moments_NLMS}
\begin{aligned}
\dmt{A}_{\mathrm{NLMS}}
&\triangleq
2\mathbb{E}\!\left[
\frac{\rv{u}_i\rv{u}_i^H}{\|\rv{u}_i\|^2}
\right], \\
\dmt{B}_{\mathrm{NLMS}}
&\triangleq
\mathbb{E}\!\left[
\frac{(\rv{u}_i\rv{u}_i^H)^T \odot (\rv{u}_i\rv{u}_i^H)}{\|\rv{u}_i\|^4}
\right], \\
\dmt{C}_{\mathrm{NLMS}}
&\triangleq
\mathbb{E}\!\left[
\frac{\rv{u}_i\rv{u}_i^H}{\|\rv{u}_i\|^4}
\right].
\end{aligned}
\end{equation}
The entries of \(\dmt{A}_{\mathrm{NLMS}}\), \(\dmt{B}_{\mathrm{NLMS}}\), and \(\dmt{C}_{\mathrm{NLMS}}\) determine the corresponding quantities required in the NLMS transient analysis.

Note that the diagonal entries of the general matrix \(\dmt{B}\) are generated by a coefficient matrix, denoted by \(\dmt{B}_{\mathrm{NLMS}}\), according to
\[
\operatorname{diag}(\dmt{B})=\dmt{B}_{\mathrm{NLMS}}\boldsymbol{\sigma}.
\]
Thus, \(\dmt{B}_{\mathrm{NLMS}}\) should not be identified with the general matrix \(\dmt{B}\); rather, it acts as the matrix that maps the diagonal weighting vector \(\boldsymbol{\sigma}\) to the diagonal of \(\dmt{B}\). Also, the scalar r \(c_{\mathrm{NLMS}}\) is the trace-weighted version of the  matrix \(\dmt{C}_{\mathrm{NLMS}}\), 
$$
c_{\mathrm{NLMS}} = \trace{\dmt{C}_{\mathrm{NLMS}}\diag(\sigma)}.
$$

To compute these matrices, it is convenient to work in the eigenbasis of the input covariance. Let
\[
\dmt{R}_{\rv{u}}=\dmt{\Lambda},
\qquad
\rv{u}_i=\dmt{\Lambda}^{1/2}\bar{\rv{u}}_i,
\qquad
\bar{\rv{u}}_i\sim\mathcal{CN}(\dv{0},\dmt{I}),
\]
where \(\dmt{\Lambda}=\operatorname{diag}(\lambda_1,\ldots,\lambda_N)\). Then
\[
\|\rv{u}_i\|^2
=
\bar{\rv{u}}_i^H\dmt{\Lambda}\bar{\rv{u}}_i
=
\|\bar{\rv{u}}_i\|_{\dmt{\Lambda}}^2.
\]
With this whitening, the moment matrices admit the factorizations
\begin{equation}
    \begin{aligned}
    \dmt{A}_{\mathrm{NLMS}}
    &=
    \dmt{\Lambda}^{1/2}\,
    \overline{\dmt{A}}_{\mathrm{NLMS}}\,
    \dmt{\Lambda}^{1/2},
    \quad
    \dmt{B}_{\mathrm{NLMS}}
    =
    \dmt{\Lambda}\,
    \overline{\dmt{B}}_{\mathrm{NLMS}}\,
    \dmt{\Lambda},
    \\
    \dmt{C}_{\mathrm{NLMS}}
    &=
    \dmt{\Lambda}^{1/2}\,
    \overline{\dmt{C}}_{\mathrm{NLMS}}\,
    \dmt{\Lambda}^{1/2},    
    \end{aligned}
\end{equation}
where
\begin{equation}
\label{eqn:ADF_moments_NLMS2}
\begin{aligned}
\overline{\dmt{A}}_{\mathrm{NLMS}}
&\triangleq
2\mathbb{E}\!\left[
\frac{\bar{\rv{u}}_i\bar{\rv{u}}_i^H}{\|\bar{\rv{u}}_i\|_{\dmt{\Lambda}}^2}
\right], \\
\overline{\dmt{B}}_{\mathrm{NLMS}}
&\triangleq
\mathbb{E}\!\left[
\frac{
(\bar{\rv{u}}_i\bar{\rv{u}}_i^H)^T \odot (\bar{\rv{u}}_i\bar{\rv{u}}_i^H)
}{
\left(\|\bar{\rv{u}}_i\|_{\dmt{\Lambda}}^2\right)^2
}
\right], \\
\overline{\dmt{C}}_{\mathrm{NLMS}}
&\triangleq
\mathbb{E}\!\left[
\frac{\bar{\rv{u}}_i\bar{\rv{u}}_i^H}{
\left(\|\bar{\rv{u}}_i\|_{\dmt{\Lambda}}^2\right)^2}
\right].
\end{aligned}
\end{equation}

The entries of these matrices can be expressed in terms of scalar random variables such as
\[
s_k
\triangleq
\frac{|\bar{u}(k)|^2}{\|\bar{\rv{u}}\|_{\dmt{\Lambda}}^2},
\qquad
s_{k\bar{k}}
\triangleq
\sqrt{\frac{\lambda_k}{\lambda_{\bar{k}}}}
\frac{|\bar{u}(k)|^2}{\|\bar{\rv{u}}\|_{\dmt{\Lambda}}^2}
+
\sqrt{\frac{\lambda_{\bar{k}}}{\lambda_k}}
\frac{|\bar{u}(\bar{k})|^2}{\|\bar{\rv{u}}\|_{\dmt{\Lambda}}^2},
\]
\[
q_k
\triangleq
\frac{|\bar{u}(k)|^2}{\left(\|\bar{\rv{u}}\|_{\dmt{\Lambda}}^2\right)^2},
\qquad
z_k
\triangleq
\frac{|\bar{u}(k)|^2+1}{\|\bar{\rv{u}}\|_{\dmt{\Lambda}}^2},
\qquad
r
\triangleq
\frac{1}{\|\bar{\rv{u}}\|_{\dmt{\Lambda}}^2}.
\]
As summarized in Table~\ref{tab:ADF_moments}, the entries of \(\overline{\dmt{A}}_{\mathrm{NLMS}}\), \(\overline{\dmt{B}}_{\mathrm{NLMS}}\), and \(\overline{\dmt{C}}_{\mathrm{NLMS}}\) can be expressed in terms of moments of these scalar random variables. These variables are ratios, or more generally rational functions, of quadratic forms, and this is precisely the structure that makes quadratic-form-ratio methods relevant in the transient analysis of adaptive filters.

\begin{table}[!ht]
 \caption{Relation of $\overline{\dmt{A}}_{\mathrm{NLMS}}$, $\overline{\dmt{B}}_{\mathrm{NLMS}}$, and $\overline{\dmt{C}}_{\mathrm{NLMS}}$ to moments of scalar random variables arising from quadratic-form denominators.}
    \renewcommand{\arraystretch}{1.5}
    \centering
    \begin{adjustbox}{max width=\textwidth}
    \begin{tabular}{|>{\centering\arraybackslash}m{2.6cm}|>{\raggedright\arraybackslash}m{9.2cm}|}
    \hline
    \textbf{Moment matrix entries} & \textbf{Relation to scalar random variables} \\ \hline

    $\overline{\dmt{A}}_{\mathrm{NLMS}}(k,k)$
    & $2\,\mathbb{E}[s_k]$ \\ \hline

    $\overline{\dmt{A}}_{\mathrm{NLMS}}(k,\bar{k})$, $k\neq \bar{k}$
    & $0$ \\ \hline

    $\overline{\dmt{B}}_{\mathrm{NLMS}}(k,k)$
    & $\mathbb{E}[s_k^2]$ \\ \hline

    $\overline{\dmt{B}}_{\mathrm{NLMS}}(k,\bar{k})$, $k\neq \bar{k}$
    & $\mathbb{E}[s_k s_{\bar{k}}]
    =
    \dfrac{1}{2}\!\left(
    \mathbb{E}[s_{k\bar{k}}^2]
    -\dfrac{\lambda_k}{\lambda_{\bar{k}}}\mathbb{E}[s_k^2]
    -\dfrac{\lambda_{\bar{k}}}{\lambda_k}\mathbb{E}[s_{\bar{k}}^2]
    \right)$ \\ \hline

    $\overline{\dmt{C}}_{\mathrm{NLMS}}(k,k)$
    & $\mathbb{E}[q_k]
    =
    \dfrac{1}{2}\mathbb{E}[z_k^2]
    -\dfrac{1}{2}\mathbb{E}[s_k^2]
    -\dfrac{1}{2}\mathbb{E}[r^2]$ \\ \hline

    $\overline{\dmt{C}}_{\mathrm{NLMS}}(k,\bar{k})$, $k\neq \bar{k}$
    & $0$ \\ \hline
    \end{tabular}
    \end{adjustbox}
    \label{tab:ADF_moments}
\end{table}

The identities in Table~\ref{tab:ADF_moments} were also verified numerically by Monte Carlo simulation with \(\bar{\rv{u}}\sim\mathcal{CN}(\dv{0},\dmt{I})\). The empirical errors were found to be at the level of numerical precision, up to the expected Monte Carlo fluctuations in entries that vanish theoretically.

Therefore, this application differs from the previous ones in an important way: the main task is not directly the evaluation of a PDF or a CDF, but rather the computation of moments of random variables built from ratios, or more generally rational functions, of quadratic forms. 
\section{Spectral Analysis and Periodograms}

\begin{figure}[!ht]
    \centering
    \subfloat[Parametric estimation.]{\includegraphics[width=0.7\textwidth]{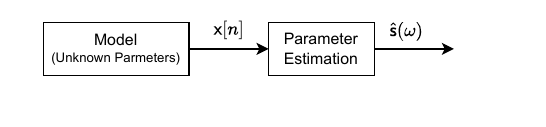}
    \label{subfig:parametric}}
    \hspace{0.02\textwidth}
    \subfloat[Nonparametric estimation.]{\includegraphics[width=0.7\textwidth]{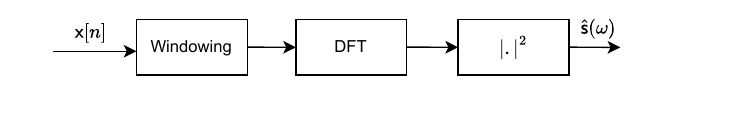}
    \label{subfig:nonparametric}}
    \caption{Power spectrum estimation.}
    \label{fig:PSD_estimation}
\end{figure}

Spectral analysis is concerned with estimating how the power of a stationary random process is distributed across frequency. It plays an important role in many areas, including signal processing, vibration monitoring, economics, meteorology, and astronomy. The central object of interest is the power spectral density (PSD), which characterizes the frequency-domain distribution of power in a time series.

Let \(\rs{x}[n]\) be a discrete-time, second-order stationary random process. The PSD may be defined as \cite{Stoica_SpectralAnalysisOfSignals_2005}
\[
s(\omega)
=
\lim_{N\to\infty}
\mathbb{E}\!\left[
\frac{1}{N}
\left|
\sum_{n=1}^N \rs{x}[n]e^{-j\omega n}
\right|^2
\right].
\]
If we denote the finite-record Fourier transform by
\[
\tilde{\rs{x}}_N(\omega)
\triangleq
\sum_{n=1}^N \rs{x}[n]e^{-j\omega n},
\]
then the same definition can be written as
\[
s(\omega)
=
\lim_{N\to\infty}
\mathbb{E}\!\left[
\frac{1}{N}
\left|
\tilde{\rs{x}}_N(\omega)
\right|^2
\right].
\]
Under standard regularity conditions, this definition is equivalent to the Fourier transform of the autocorrelation sequence,
\[
s(\omega)=\sum_{\ell=-\infty}^{\infty} r(\ell)e^{-j\omega \ell}.
\]
Hence, the spectral estimation problem is to estimate \(s(\omega)\) from a finite observation record \(\{\rs{x}[1],\rs{x}[2],\ldots,\rs{x}[N]\}\).

Methods for PSD estimation are commonly divided into two broad classes: parametric and nonparametric methods (Fig.~\ref{fig:PSD_estimation}). In parametric methods, one postulates a stochastic model for the observed process, estimates the model parameters from the data, and then computes the PSD from the fitted model. In nonparametric methods, no explicit model is assumed; instead, the PSD is estimated directly from the observed data, typically through Fourier-based procedures. The relative performance of the two approaches depends on how well the assumed parametric model matches the true underlying process.

A basic nonparametric estimator of the PSD is the periodogram, defined by
\begin{equation}
\label{eqn:Periodogram}
\hat{\rs{s}}(\omega)
=
\rs{g}(\omega)
=
\frac{1}{N}
\left|
\sum_{n=1}^N \rs{x}[n]e^{-j\omega n}
\right|^2
=
\frac{1}{N}
\left|
\tilde{\rs{x}}_N(\omega)
\right|^2.
\end{equation}
When the angular frequencies are discretized as
\[
\omega_k=\frac{2\pi k}{N},
\qquad
k=0,1,\ldots,N-1,
\]
the periodogram ordinates \(\hat{\rs{s}}(\omega_k)\) are asymptotically independent, and for \(\omega_k\neq 0,\pi\) they are asymptotically distributed as scaled chi-square random variables with two degrees of freedom. More precisely,
\[
\hat{\rs{s}}(\omega_k)
\overset{a}{\sim}
s(\omega_k)\,\frac{\chi_2^2}{2}.
\]
Thus, although the periodogram is simple and widely used, it is not a consistent estimator of the PSD, since its variance does not vanish as the sample size increases \cite{Brillinger_TimeSeriesDataAnalysisAndTheory_2001}.

To reduce this variability, one may average neighboring periodogram ordinates. A simple smoothed periodogram is given by
\begin{equation}
\label{eqn:smoothed_periodogram}
\hat{\rs{s}}(\omega_k)
=
\frac{1}{2m+1}
\sum_{j=-m}^{m}
\rs{g}\!\left(\frac{2\pi(k+j)}{N}\right).
\end{equation}
This averaging improves stability, and the resulting estimator is asymptotically distributed as a scaled chi-square random variable with \(4m+2\) degrees of freedom \cite[Theorem 5.4.3]{Brillinger_TimeSeriesDataAnalysisAndTheory_2001}. Hence, smoothing transforms the raw periodogram from a scaled \(\chi_2^2\)-type statistic into a higher-degree chi-square statistic with reduced variance.

A more general approach uses unequal weights \cite[Theorem 5.5.3]{Brillinger_TimeSeriesDataAnalysisAndTheory_2001},
\begin{equation}
\label{eqn:weighted_periodogram}
\hat{\rs{s}}(\omega_k)
=
\sum_{j=-m}^{m}
w_j\,\rs{g}\!\left(\frac{2\pi(k+j)}{N}\right),
\qquad
\sum_{j=-m}^{m} w_j = 1.
\end{equation}
In this case, the estimator becomes a weighted sum of asymptotically independent scaled chi-square random variables. Therefore, spectral estimation through weighted periodograms provides a direct and practically important example of a random quantity that is distributed as a weighted combination of chi-square variables.

\section{Reliability Analysis}

When designing a mechanical, civil, or electrical structure, the designer may resort to a set of parameters: lengths, cross-sectional areas, Young's modulus, resistance, capacitance, etc. We collect these parameters in a vector $x = \begin{bmatrix}x_1, x_2, x_3 \ldots x_N\end{bmatrix}^T$. A system might fail if stresses or forces exceed yield strength, temperatures exceed safe limits, deflections exceed design limits, currents exceed carrying capacity ... etc. We call the quantities that determine failure \emph{performance functions} or \emph{limit state functions}. Such functions are expressed in terms of the former parameters. If we have $M$ of these functions we will call them $g_i(x)\; \forall \;i \in [1,M]$. Manufacturing tolerances, defects, and other issues cause the desired design values of the parameters not to be achieved. Instead, we know them ``probabilistically". For example, the length of a screw might be randomly distributed, as well as the capacitance of a capacitor, the resistance of a resistor, transistor transconductance parameters, concrete Young's Modulus and cross-sectional area, tensile strength of carbon treated steel \cite{billurChallengesSteels2010} (see Fig. \ref{Fig:Steel}), Young's modulus of recycled aggregate concrete $E_{RAC}$, etc. \cite{chenPractcalEquationRAC2022} (see \ref{Fig:Concrete}).

For example, consider an RC circuit used to filter a dual-tone signal. The gain of the signal at frequency $\omega$ is given by:
$$
\frac{|V_{out}|}{|V_{in}|} = \frac{1}{\sqrt{1 + (\omega R C)^2}} 
$$
Assume the tones are 697 Hz and 1209 Hz. We want the higher tone to be attenuated by 1/4th. Hence, we select the RC components with values $R$ and $C$. What is the probability that, due to tolerances, the gain is above $1/4$? Assume $\sigma_R = 0.1R$ and $\sigma_C = 0.05C$.

\begin{figure}[ht!]

    \centering
    \begin{tikzpicture}
    \draw (0,0) -- (2,0);
    \draw (-1,-0.2) -- (-1,0.2) -- (-0.2,0.2) -- (0,0) -- (-0.2,-0.2) -- (-1,-0.2);
    \draw (2,0) -- (2.15,0.35) -- (2.45,-0.35) -- (2.75,0.35) -- (3.05,-0.35) -- (3.35,0.35) -- (3.65,-0.35) -- (3.95,0.35) -- (4.25,-0.35) -- (4.5,0) -- (7,0) -- (7,-2.8);
    \draw (7,-3.2) -- (7,-5.5) -- (7.75,-5.5) -- (7,-6) -- (6.25,-5.5) --(7,-5.5);
    \draw (6,-2.8) -- (8,-2.8);
    \draw (6,-3.2) -- (8,-3.2);
    \node at (3.25,-0.7) {R};
    \node at (5.5,-3) {C};
    \draw[-{Latex[length=3mm]}] (-1.2,-5.5) -- (-1.2,0);
    \node at (-0.8,-2.75) {$V_{\text{in}}$};
    \draw[-{Latex[length=3mm]}] (8.2,-5.5) -- (8.2,0);
    \node at (8.6,-2.75) {$V_{\text{out}}$};
\end{tikzpicture}
    \caption{\textbf{Concrete Example 1}}
\end{figure}

\begin{figure}[ht!]
    \centering
    \includegraphics[scale=0.25]{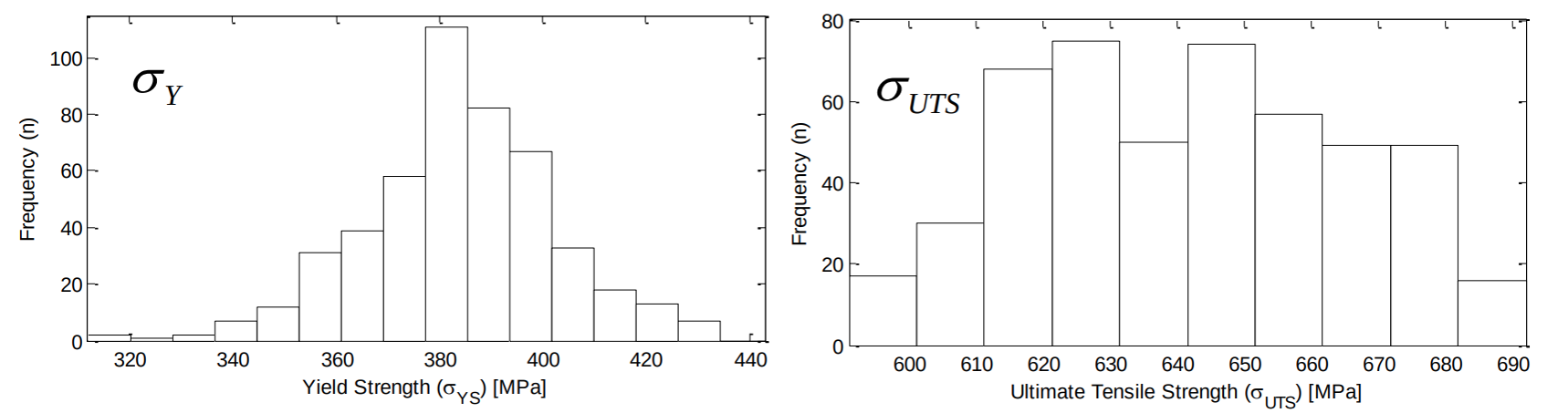}
    \caption{Distribution of yield and tensile strength distributions (in MPa) for DP590 GI material.}
    \label{Fig:Steel}
\end{figure}

\begin{figure}[ht!]
    \centering
    \includegraphics[scale=0.3]{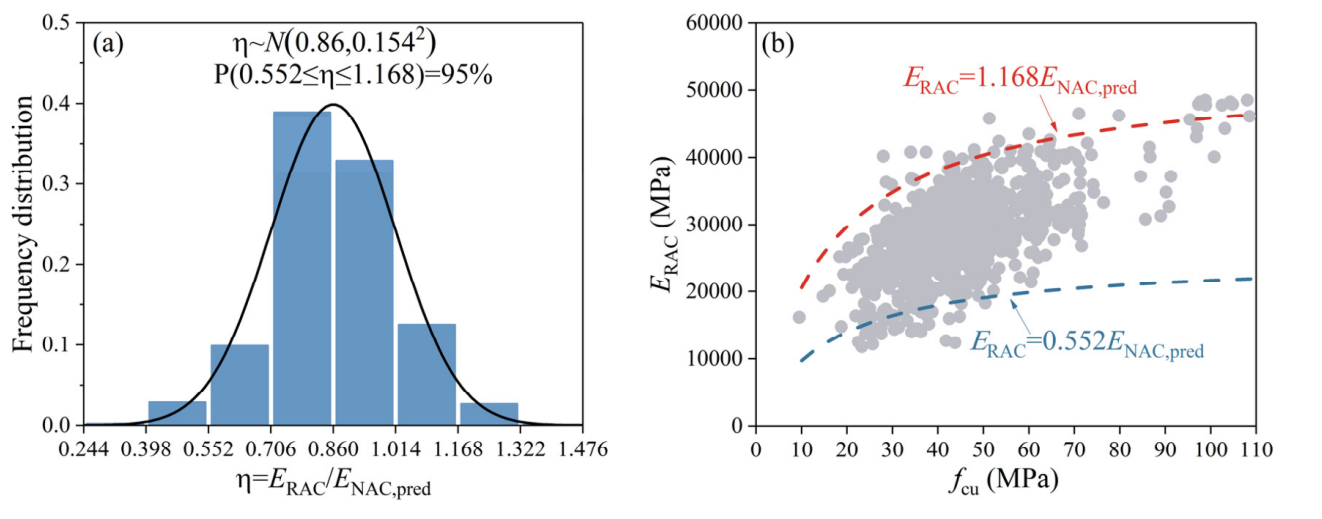}
    \caption{ Distribution of $\eta$ and $E_{\text{RAC}}$ (a) $ \eta \sim \mathcal{N}(0.86,0.154^2)$; (b) $ 1.168E_{\text{NAC,pred}} \ge E_{\text{RAC}} \ge 0.552E_{\text{NAC,pred}}$  }.
    \label{Fig:Concrete}
\end{figure}

    Assuming we have the distribution of each design parameter, we can try to evaluate the probability that one of the performance functions exceeds its failure limit (to calculate the probability of failure). Alternatively, we can evaluate the probability that it stays within bounds to evaluate reliability. $$\begin{aligned}P_f &= \mathbb{P}[g_i(\rs{x}) < 0 ] \\ R &= \mathbb{P}[g_i(\rs{x}) > 0] = 1 - P_f\end{aligned}$$
    Note that since $\rs{x}$ is random, $g_i(\rs{x})$ is a function of a random variable. Let the joint distribution of the parameters $\rs{x}$ be $f_{\rs{x}}(x)$. Define the failure set as: $$ \Omega_i = \{x \in \mathbb{R}^N | g_i(x) < 0\} $$
    Then the probability of a single failure is given by: $$P_{f, i} = \int_{\Omega_i} f_{\rs{x}}(x) dx $$
    The set that defines all failures is simply: $$\Omega = \bigcup_{i} \Omega_i$$ The probability of failure is then: $$P_{f} = \int_{\Omega} f_{\rv{x}}(\dv{x}) d\dv{x} $$
    These integrals are difficult to evaluate, particularly when the distributions are not independent (or not nice!!). However, both the distribution and the domain are so "ugly". Hence, the solutions are to Gaussianize the distribution function and to transform the domain into a manageable one. This can be justified as follows. Distributions usually decay quickly (exponentially). Hence, using approximations close to peaks is good! The domains, although ugly, mostly don't contribute. Thus, taking an easier shape is not very bad. We can make sure to over-approximate such that we "exaggerate" the probability of failure.
     \begin{figure}
        \centering
        \subfloat[Probability integration in $X$-space]{\includegraphics[width=0.49\linewidth]{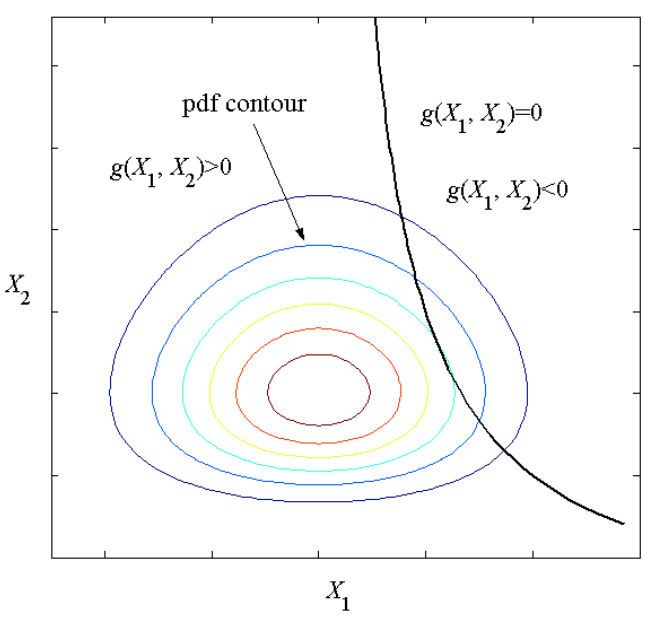}   \label{subfig:1}}
         \hfil
          \subfloat[Probability integration in $U$-space ]{\includegraphics[width=0.49\linewidth]{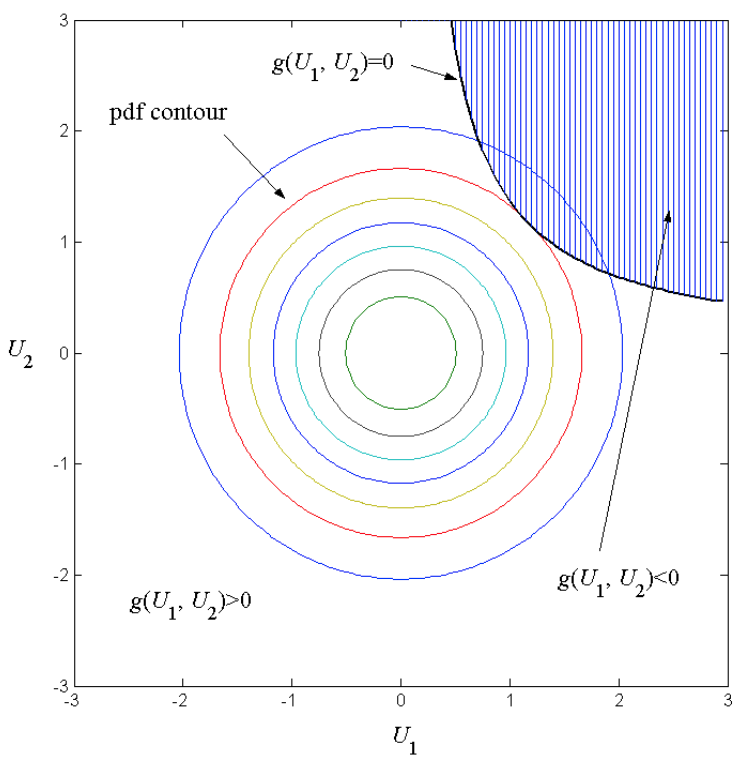}%
        	\label{subfig:2} }
        \caption{Probability Integration in First Order Method \cite{duProbabilisticEngineeringDesign2005}}
    \end{figure}

    To Gaussianize the variables, the \emph{NATAF Transformation}/\emph{Rosenblatt Transformation} is applied. The process starts by applying the CDF on a random variable to obtain a uniform distribution, i.e., $F_{\rs{x}_i}(\rs{x}_i)\sim \mathcal{U}(0,1)$. The inverse CDF of the normal distribution $\Phi^{-1}$ is then applied to obtain a standard normal variable. The mean and variance are ``preserved" for later use. In summary take the parameter $\rs{x}_i$ and transform to $\rs{u}_i$: $$\rs{u}_i = \Phi^{-1}[F_{\rs{X}_i}(\rs{x}_i)]$$
    \begin{figure}[ht!]
            \centering
        \includegraphics[scale=0.35]{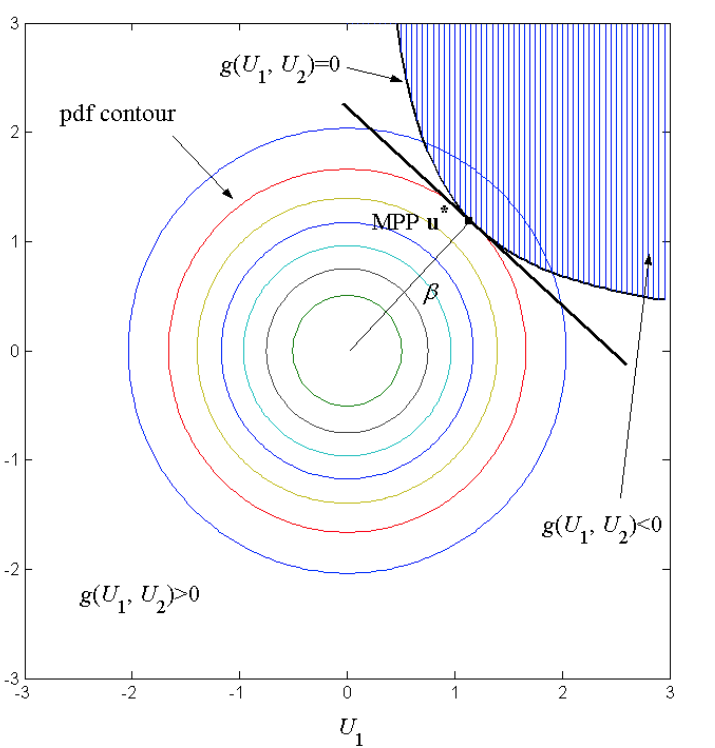}
        \caption{Probability Integration in First Order Method \textcolor{red}{\cite{duProbabilisticEngineeringDesign2005}}}
    \end{figure}
    To make the domain nicer, we can use either Linear domains or Quadratic domains (higher-order polynomials are probably intractable.). We will focus on the quadratic domains in this text. Assume we have one smooth limit state-function and $N$ parameters. To quadratically approximate the domain, the set of failure is approximated using the following Taylor series expansion:
    \[
    \resizebox{\textwidth}{!}{$
    \begin{aligned} 
    \Omega&=\{\dv{x}\in\mathbb{R}^N:g(\dv{x}) < 0\} \\ 
    &=\{\dv{u}\in\mathbb{R}^N:g(\dv{u}_0) + \nabla g(\dv{u}_0)(\dv{u} - \dv{u}_0) + \frac{1}{2}(\dv{u} - \dv{u}_0)^T\nabla^2g(\dv{u}_0)(\dv{u} - \dv{u}_0) + \ldots < 0\} \\
    & \approx \{\dv{u}\in\mathbb{R}^N: g(\dv{u}_0) + \nabla g(\dv{u}_0)(\dv{u} - \dv{u}_0) + \frac{1}{2}(\dv{u} - \dv{u}_0)^T\nabla^2g(\dv{u}_0)(\dv{u} - \dv{u}_0) < 0\}
    \end{aligned}
    $}
    \]
    where $\dv{u}_0$ is an arbitrary point in $\mathbb{R}^N$. The choice of this point is vital for the accuracy of the approximation. We should choose $u_0$ to be the point with the highest probability on the curve $g(u) = 0$. Since probabilities decay quickly, taking the maxima should allow us to account for most of the probability mass. In mathematical notation what we are describing is: Set $\dv{u}_0 = \dv{u}^*$ where: \begin{equation}
        \dv{u}^* = \max_{g(\dv{u}) = 0} f_{\rv{u}}(\dv{u}) = \max_{g(\dv{u}) = 0} c e^{-\|\dv{u}\|^2/2} = \min_{g(\dv{u}) = 0} \|\dv{u}\|^2 \label{eq:reliability-optim}
    \end{equation} 
    The last two inequality manipulations are obvious steps saying that for standard multivariate normal, the point with the highest probability is the closest to the origin with quadratic exponential decay away from the origin.
    
    Hence we have approximated the probability of failure by the following function: \begin{equation} P_f = \mathbb{P}\left[g(\dv{u}^*) + \nabla g(\dv{u}^*)^T(\rv{u} - \dv{u}^*) + \frac{1}{2}(\rs{u} - \dv{u}^*)^T\nabla^2g(\dv{u}^*)(\rv{u} - \dv{u}^*) < 0 \right] \label{eq:reliability-faliure-prob}\end{equation} 
    Therefore, the probability of failure is approximated by the CDF at zero of a quadratic form in Gaussian random variables. 

    A gradient descent-type algorithm with some special termination criteria is used to efficiently calculate $\dv{u}^*, \nabla g(\dv{u}^*), \nabla^2 g(\dv{u}^*)$ used in Equations \eqref{eq:reliability-optim} and \eqref{eq:reliability-faliure-prob}. Moreover, saddlepoint approximation is employed to ``efficiently" calculate the probability of Equation \eqref{eq:reliability-faliure-prob}.

\section{Inference in Genome-Wide Association Studies}

\begin{figure}[!ht]
    \centering
    \includegraphics[width=0.7\linewidth]{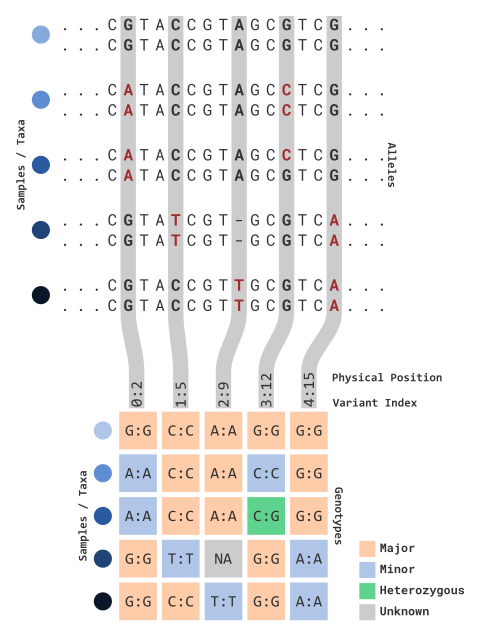}
    \caption{Illustration of genotype coding across multiple loci and samples.}
    \label{fig:genome}
\end{figure}

Genome-wide association studies (GWAS) aim to identify genetic variants associated with a disease or phenotype. In particular, they examine variations in the genome known as single nucleotide polymorphisms (SNPs) and assess whether these variations are statistically associated with the presence of a disease, a treatment response, or another trait of interest. GWAS has become an important tool in studying complex diseases, since individual variants may have small effects, but their collective contribution can still be substantial.

To describe the data structure, consider \(n\) individuals observed across \(p\) genetic variants. At each genomic position, called a locus, there are typically two possible alleles. These combine to form three possible genotypes, corresponding to the two homozygous cases and the heterozygous case. In GWAS, the genotype data are commonly organized into a matrix \(\dmt{G}\in\mathbb{R}^{n\times p}\), where each row corresponds to one individual and each column corresponds to one SNP. The entry \(g_{ij}\) records the number of copies of the minor allele carried by individual \(i\) at variant \(j\), so that the genotype is coded as \(0\), \(1\), or \(2\). The phenotype information is collected in a vector \(\rv{y}\), where, for a binary trait, \(y_i=1\) indicates that the \(i\)-th individual has the disease and \(y_i=0\) otherwise.

To illustrate some of the common terms used in genetic analysis, it is useful to refer to Fig.~\ref{fig:genome}. The figure shows the genetic variation across multiple samples, where each row corresponds to an individual and each highlighted genomic position corresponds to a SNP locus. At each locus, there are two possible alleles, which combine to form three possible genotypes. In the lower part of the figure, these genotypes are displayed in a color-coded grid, illustrating the construction of the genotype matrix from the observed genetic variants.

A standard way to relate phenotype to genotype is through logistic regression. Let \(\dv{x}_i\) denote a vector of covariates for individual \(i\), such as age, sex, or other non-genetic factors, and let \(\dv{g}_i\) denote the \(i\)-th row of the genotype matrix. Then the logistic model is written as
\[
\log\frac{\mathbb{P}(y_i=1)}{1-\mathbb{P}(y_i=1)}
=
\alpha_0+\dv{x}_i^T\dv{\alpha}+\dv{g}_i^T\dv{\beta},
\]
where \(\alpha_0\) is an intercept, \(\dv{\alpha}\) contains the regression coefficients associated with the non-genetic covariates, and \(\dv{\beta}\) contains the regression coefficients associated with the genetic variants. The main inferential question is whether the genetic effects are null, namely,
\[
H_0:\dv{\beta}=\dv{0},
\qquad
H_1:\dv{\beta}\neq \dv{0}.
\]
Thus, GWAS leads naturally to a hypothesis-testing problem.

Quadratic forms arise here through the test statistics used for inference. In many classical settings, statistics such as the likelihood-ratio statistic, Wald statistic, and Rao score statistic are asymptotically chi-square distributed. For example, under a logistic fixed-effects model, Rao’s score statistic may be written as
\[
\rs{T}_n
=
\rs{U}_\beta^T I_{\beta\beta}^{-1}\rs{U}_\beta,
\]
where \(\rs{U}_\beta\) is the score function with respect to \(\dv{\beta}\), and \(I_{\beta\beta}\) is the corresponding Fisher information matrix. Under the null hypothesis, this statistic is asymptotically distributed as \(\chi_p^2\).

In GWAS, however, mixed-effects formulations are often more useful, especially when many variants are tested jointly. In the logistic mixed-effects model, the vector \(\dv{\beta}\) is treated as random, and inference is based on a variance-component score statistic. In that case, under the null hypothesis, the test statistic can be written as
\[
\rs{T}_n
=
(\rv{y}-\hat{\dv{\mu}})^T \dmt{K} (\rv{y}-\hat{\dv{\mu}}),
\]
where \(\hat{\dv{\mu}}\) is the fitted mean vector under the null model and \(\dmt{K}=\dmt{G}\dmt{W}\dmt{G}^T\) is a kernel matrix constructed from the genotype matrix and a weight matrix \(\dmt{W}\). Moreover, under the null hypothesis,
\[
\rs{T}_n
\sim
\sum_{i=1}^n \lambda_i \chi_{1,i}^2,
\]
where \(\lambda_i\) are the eigenvalues of
\[
\dmt{P}_0^{1/2}\dmt{K}\dmt{P}_0^{1/2},
\]
with
\[
\dmt{P}_0
=
\dmt{V}
-
\dmt{V}\tilde{\dmt{X}}
\left(\tilde{\dmt{X}}^T\dmt{V}\tilde{\dmt{X}}\right)^{-1}
\tilde{\dmt{X}}^T\dmt{V},
\qquad
\tilde{\dmt{X}}=[\dv{1},\,\dmt{X}],
\]
and
\[
\dmt{V}
=
\operatorname{diag}\!\bigl(
\hat{\mu}_1(1-\hat{\mu}_1),\ldots,\hat{\mu}_n(1-\hat{\mu}_n)
\bigr).
\]
Here,
\[
\hat{\mu}_i
=
\frac{\exp(\hat{\alpha}_0+\dv{x}_i^T\hat{\dv{\alpha}})}
{1+\exp(\hat{\alpha}_0+\dv{x}_i^T\hat{\dv{\alpha}})},
\]
and the \(\chi_{1,i}^2\) are independent chi-square random variables with one degree of freedom.

Therefore, GWAS provides a practically important example in which statistical inference reduces to evaluating the distribution of a quadratic-form statistic, or equivalently, a weighted sum of independent chi-square random variables. 
\section{Transmissibility of Civil and Mechanical Structures}

Engineers may be interested to assess the robustness of some system (machine, building, ...), then, for instance, they may decide to detect some damage or evaluate the effects of deterioration. Vibration-based techniques are system identification technologies that may give insight into the current state of a system, say the structural properties of a machine part or a structural element in a building. Predetermined artificial excitation is available in some cases, while the only available excitation are unknown natural disturbances in other cases. The latter may be due to safety measures, for example, a significant disturbance may lead to a partial or total failures.  Whenever an element is subject to a predetermined excitation, output-input relations can be measured: for example, the frequency response function (FRF) which is the ratio of output to input in their frequency domains. However, in the case of natural excitation, access to output at different positions allows the estimation of what is known as transmissibility: the ratio of output at a position $k$ to that at position $\ell$. In addition to measurement errors, the methods used to estimate the spectral properties of the output are not perfect. Ratios of correlated chi-squares arise in the study of errors arising from such a spectral estimation. 

Consider an ideal linear time-invariant SIMO model (Fig. \ref{fig:SIMO_Trasmissibility}) in which an excitation $\rs{x}(t)$ is applied, and two outputs $\rs{y}_k(t)$ and $\rs{y}_\ell(t)$ are available for measurement.  The measurements are modeled by the following equalities: 
\begin{equation}
\begin{aligned}
    \rs{y}_k(t)&=(h_k*\rs{x})(t)+\rs{n}_k(t)=\rs{v}_k(t)+\rs{n}_k(t)\\
    \rs{y}_\ell(t)&=(h_\ell*\rs{x})(t)+\rs{n}_\ell(t)=\rs{v}_\ell(t)+\rs{n}_\ell(t)
\end{aligned} \tag{*} \label{Eq:Measurement}
\end{equation}
Here, $\rs{v}_k$ and $\rs{v}_\ell$ are the true responses (velocity, acceleration...) at positions $k$ and $\ell$ respectively, $h_k(t)$ and $h_\ell(t)$ are the corresponding impulse responses, and $\rs{n}_k(t)$ and $\rs{n}_\ell(t)$ are the corresponding additive noises. In the case we are reporting \cite{maoModelQuantifyingUncertainty2012}, $v_k(t)$ is considered to be an acceleration. 
\begin{figure}[ht!]
    \centering
    \begin{tikzpicture}
    \node[text width=2cm,align=center] at (0,0) {excitation  $\rs{x}(t)$};
    \draw (1,0) -- (2,0) -- (2,3);
    \draw[->] (2,3) -- (3,3);
    \draw[->] (2,1) -- (3,1);

    \draw (2,0) -- (2,-3);
    \draw[->] (2,-3) -- (3,-3);
    \draw[->] (2,-1) -- (3,-1);

    \draw (3.1,3.5) rectangle (4.6,2.5);
    \draw (3.1,1.5) rectangle (4.6,0.5);
    \node at (3.85,1) {$h_k$};
    \draw (3.1,-1.5) rectangle (4.6,-0.5);
    \node at (3.85,-1) {$h_\ell$};
    \draw (3.1,-3.5) rectangle (4.6,-2.5);

    \draw[->] (4.6,3) -- (5.6,3);
    \draw[->] (4.6,1) -- (5.6,1);
    \draw (6.1,1) circle (0.5) node [anchor=center]{\LARGE +};
    \draw[->] (6.1,2.2)  node[anchor=south] {$\rs{n}_k$} -- (6.1,1.5);
    \draw[->] (6.6,1) -- (7.5,1) node[anchor=west]{$\rs{y}_k$};
    \draw[->] (4.6,-1) -- (5.6,-1);
    \draw (6.1,-1) circle (0.5) node [anchor=center]{\LARGE +};
    \draw[->] (6.1,-2.2)  node[anchor=north] {$\rs{n}_\ell$} -- (6.1,-1.5);
    \draw[->] (6.6,-1) -- (7.5,-1) node[anchor=west]{$\rs{y}_\ell$};
    \draw[->] (4.6,-3) -- (5.6,-3);
    
\end{tikzpicture}
    \caption{Ideal SIMO Model}
    \label{fig:SIMO_Trasmissibility}
\end{figure}

Moreover, it is preferable to work in the frequency domain \cite{brinckerModalIdentification2001}. Denote by capital letters the Fourier transforms of the lower-case functions, for example, $\mathcal{F}(\rs{v}_k(t))(\omega)=\rs{V}_k(\omega)$. The local acceleration transmissibility is defined as $$T_{k\ell}(\omega):=\dfrac{V_k(\omega)}{V_\ell(\omega)}$$
Its modulus is then $|\rs{T}_{k\ell}|=\sqrt { \dfrac{|\rs{V}_k|^2}{|\rs{V}_\ell|^2} }=\sqrt{\dfrac{\rs{V}_k^*\rs{V}_k}{\rs{V}_\ell^*\rs{V}_\ell}}$. Mao and Todd claim this that we have $$|\rs{T}_{k\ell}|=\sqrt{\dfrac{G_{\rs{v}_k\rs{v}_k}}{G_{\rs{v}_\ell \rs{v}_\ell}}}$$ where $G_{\rs{v}_k\rs{v}_k}$ is the (one-sided) auto-power spectral density function of $\rs{v}_k$. 

Consider the noiseless case, i.e., $n_k=n_\ell=0$. The Fourier transform cannot be computed over the infinite-time domain since we can only access the signal at a finite interval of time. So this estimate will come with its own error even in the absence of the additive noise. 

Denote by $\rs{V}_k(\omega,T)$ the finite Fourier transform $$\rs{V}_k(\omega,T)=\int_0^T v_k(t) e^{-j\omega t}dt$$ If $\rs{v}_k(t)$ is Gaussian, $\rs{V}_k(\omega,T)$ will be a complex Gaussian variable. Now clearly, this is the case if $\rs{x}(t)$ is a Gaussian process. Mao and Todd \cite{maoModelQuantifyingUncertainty2012} note that even if the input $\rs{x}(t)$ is not Gaussian, the output $\rs{v}(t)$ can be approximated by a Gaussian using central limit theorem. The one-sided auto-power spectral density can be written as \cite{bendatRandomData2011}: $$G_{\rs{v}_k\rs{v}_k}(\omega)=2\lim_{T\to \infty}\dfrac{\mathbb{E}\left[|V_k(\omega,T)|^2\right]}{T}$$
Now a rough estimate of this would be obtained by dropping the expectation and the limit: $\tilde{G}_{\rs{v}_k\rs{v}_k}=\dfrac{2|\rs{V}_k|^2}{T}$. According to Bendat \cite{bendatRandomData2011}, $\tilde{G}_{\rs{v}_k\rs{v}_k}$ is related to $G_{\rs{v}_k\rs{v}_k}$ through a multiplicative noise, i.e., $\tilde{G}_{\rs{v}_k\rs{v}_k}\sim\frac12\chi_2^2G_{\rs{v}_k\rs{v}_k}$ whenever $\rs{v}_k(t)$ is a Gaussian process.

Clearly, $\tilde{G}_{\rs{v}_k\rs{v}_k}$ is not a good estimate as $\operatorname{Var}(\frac12\chi_2^2)=1$. Hence, a better estimate $\hat{G}_{\rs{v}_k\rs{v}_k}$ is constructed by averaging over an ensemble of estimates over $N_d$ disjoint records: $$\hat{G}_{\rs{v}_k\rs{v}_k}(\omega)=\dfrac{2}{N_dT}\sum_{s=1}^{N_d}|V_{k,s}(\omega,T)|^2$$ Evidently, $\hat{G}_{v_kv_k}\sim\frac{1}{2N_d}\chi_{2N_d}^2G_{v_kv_k}$. Clearly, this estimate is better as the variance of the coefficient is now $\frac{1}{N_d}$. 

Going back to the transmissibility function $T_{kj}(\omega)$, it can now be approximated by: $$\hat{T}_{k\ell}=\dfrac{\hat{G}_{\rs{v}_k\rs{v}_k}}{\hat{G}_{\rs{v}_\ell \rs{v}_\ell}}$$This estimate is a ratio of two dependent chi-squares (hence a ratio of quadratic forms). Evaluating the distribution of such a ratio may give insight into the accuracy of an estimate, as well as the number of sub-records $N_d$ needed to achieve a predetermined precision.

We conclude this section by presenting \Cref{tab:applicationsOfQFGRVsInLiterature}, which summarizes some representative applications. The table lists the corresponding fields, specific application, relevant references, and quantities of interest (QoI). It shows that quadratic forms arise in a wide range of problems, from communication and array processing to reliability analysis, genetics, and structural dynamics. Thus, the applications surveyed in this section motivate the need for a unified theoretical and computational treatment developed in the following sections.
\begin{landscape}
\footnotesize
\setlength{\tabcolsep}{3pt}
\renewcommand{\arraystretch}{1.1}
\setlength\LTleft{0pt}
\setlength\LTright{0pt}

\begin{tabularx}{\linewidth}{>{\RaggedRight\arraybackslash}p{2.2cm} Y Y >{\RaggedRight\arraybackslash}p{1.5cm}}
\caption{Applications of Quadratic Forms in the Literature}
\label{tab:applicationsOfQFGRVsInLiterature} \\

\toprule
Field/Venue & Applications/Methods & Ref. & QoI \\
\midrule
\endfirsthead

\multicolumn{4}{c}{{\tablename\ \thetable{} -- continued from previous page}} \\
\toprule
Field/Venue & Applications/Methods & Ref. & QoI \\
\midrule
\endhead

\midrule
\multicolumn{4}{r}{{Continued on next page}} \\
\endfoot

\bottomrule
\endlastfoot

Signal Processing 
& Direction of Arrival 
& \cite{athleyThresholdRegionPerformance2005} Eq.~15 (single source), Eq.~34 (multi source) 
& CDF \\

Signal Processing 
& Transient Analysis in adaptive filters 
& \cite[Table~1]{al-naffouriMeansquareAnalysisNLMS2021} 
& Moments \\

Signal Processing 
& Spectral Analysis \& Periodograms 
& \cite{gerkmannStatisticsSprctralAmplitudes2009} Eq.~15; \cite{gismallaPeriodogramEnergyDetection2011,gismallaPerformanceEnergyDetectionBartlett2012} Eq.~8 
& PDF, CDF \\

Signal Processing 
& Spectrum Sensing 
& \cite{raghavanFalseAlarm2018} Secs.~3, 4 
& CDF \\

Signal Processing 
& Adaptive Matched Filters 
& \cite{bessonAnalysisSNRLossDistribution2020} Eq.~22 
& PDF \\

Signal Processing 
& Differential Modulation 
& \cite{zhaoDifferentialModulation2007} App.~II 
& CDF \\

Signal Processing 
& Matched Field Processing 
& \cite{gallMatchedField2014} 
& CDF \\

Communications 
& MIMO Combining Techniques (SNR of MRC, SC, EGC) 
& \cite{beaulieuNovelSimpleRepresentations2011},
  \cite{ropokisQuadraticFormsNormal2008},
  \cite{peppasTrivariateNakagamimDistribution2009},
  \cite{morales-jimenezDiagonalDistributionComplex2011},
  \cite{bithasNovelResultsMultivariate2019}
& CDF, PDF \\

Communications 
& Beamforming 
& \cite{hassanPerformancePerformanceAnalysisOfBeamforming2017},
  \cite[Eq.~4]{kimSINRRandomBeamforming2013},
  \cite[Sec.~2, Eq.~10]{schlegelErrorProbabilityMultibeam1996}
& CDF \\

Communications 
& BER, SER, Capacity and Outage (SISO and MIMO) 
& \cite{raphaeliDistributionNoncentralIndefinite1996},
  \cite{al-naffouriDistributionIndefiniteQuadratic2016},
  \cite{wiegandSeriesApproximationsRayleigh2019},
  \cite[Sec.~III.C]{khammassiNewAnalyticalApproximation2022},
  \cite{simonDifferenceTwoChisquare2001},
  \cite{beaulieuNovelRepresentationsBivariate2011},
  \cite{beaulieuNovelSimpleRepresentations2011},
  \cite[App.~B]{proakisDigitalCommunications2008},
  \cite{tekinayMomentsQuadrivariateRayleigh2020},
  \cite{dharmawansaNewSeriesRepresentation2009}
& CDF, PDF \\

Statistics 
& Goodness of Fit Tests (for different models) 
& \cite[Secs.~7.2--7.5]{mathaiQuadraticFormsRandom1992} 
& CDF \\

Statistics 
& Multivariate Goodness of Fit Tests 
& \cite[Sec.~2]{dickhausSurveyMultivariateChisquare2015} 
& CDF \\

Statistics 
& Distribution of Empirical Quantities: Mean \& Variance, linear regressors, sampled autocovariances, Mahalanobis distance 
& \cite[Eqs.~3,4]{schoneJointDistributionQuadratic2000},
  \cite{blacherMultivariateQuadraticForms2003},
  \cite{andersonMomentsSampleAutocovariances1990},
  \cite{royenMultivariateGammaDistributions1991}
& PDF, CDF, Moments \\

Genetics 
& Association Tests (Genome Wide, Phenome, Sequence Kernel, etc.) 
& \cite[Eq.~3]{tongEfficientCalculation2010},
  \cite[Ch.~3]{xiongBigDataInOmicsAndImagingAssociationAnalysis2017},
  \cite[Eq.~3]{wuRarevariantAssociationTestingForSequencing2011},
  \cite{zhouEfficientlyControllingGenetic2018},
  \cite{crawfordDetectingEpistasisGenetic2017},
  \cite[Sec.~4]{dickhausSurveyMultivariateChisquare2015}
& CDF \\

Other 
& Vibration and Frequency Response 
& \cite[Sec.~3.2]{maoModelQuantifyingUncertainty2012},
  \cite[Sec.~3]{yanCircularlysymmetricComplexNormal2016},
  \cite[Sec.~2]{yanUnifiedSchemeSolving2020}
& PDF \\

Other 
& Reliability 
& \cite[Sec.~7.4]{duProbabilisticEngineeringDesign2005},
  \cite[Sec.~3]{huangAnewDirectSecondOrder2018}
& CDF \\

Other 
& Econometrics 
& \cite{BaoAmanUllahExpectation2010},
  \cite[Sec.~1.4.2]{studerStochasticTaylorExpansions2001},
  \cite{hull2022OptionsFuturesOther}
& Moments, CDF \\

\end{tabularx}
\end{landscape}

\chapter{Classification and Literature Overview: Single Forms and Ratios} \label{Chapter3}

\section{Introduction}
In this section, we provide an overview of the vast literature on single quadratic forms (SQF) in Gaussian random variables. Most works focus on the PDF and CDF of  real and complex SQFs (\Cref{sssec:QFGRV} and \Cref{sssec:QFCGRV}), while others study the PDF/CDF of ratio of SQFs ( \Cref{sssec:RQFGRV} and \Cref{sssec:RQFCGRV}) and their moments.

As explained earlier, both the real and complex SQFs (\eqref{eq:QFGRV} and \eqref{eq:QFCGRV}) are representable as linear combinations of chi-square random variables, possibly together with an independent Gaussian term, and an additive constant. For convenience, we restate these representations here. For the real SQF \eqref{eq:QFGRV}, using the effective parameters ($\{\lambda_n\}_{n=1}^\rankFinal $, $\{h_n^2\}_{n=1}^\rankFinal $, $\sigma$, $c^{\prime\prime} $  ) and the reduced effective parameters ($\{\omega_\ell\}_{\ell=1}^L $, $\{\delta_\ell^2\}_{n=1}^\rankFinal$, $\{\nu_\ell\}_{\ell=1}^L $,$\sigma$, $c^{\prime\prime} $  ), we have
\begin{equation}
    \rs{Q} \sim \sum_{n=1}^{\rankFinal}\lambda_n\chi_1^2(h_n^2) +\sigma\mathcal{N}(0,1)+c'', \label{eqn:QF_rep1} \tag{QFa} \\
\end{equation}
or, equivalently, 
\begin{equation}
     \rs{Q} \sim \sum_{\ell=1}^L\omega_\ell\chi_{\nu_\ell}^2(\delta_\ell^2)+\sigma\mathcal{N}(0,1)+c''.   \label{eqn:QF_rep2}\tag{QFb}
\end{equation}
Similarly, for the complex SQF,
\begin{equation}
    \rs{Q}\sim \sum_{n=1}^{\rankFinal} \frac{\lambda_2}{2}\chi_2^2(h_n^2)+\sigma\mathcal{N}(0,1)+c'' \label{eqn:CQF_rep1} \tag{CQFa}  
\end{equation}
or, equivalently,
\begin{equation}
    \rs{Q} \sim \sum_{\ell=1}^L \tilde{\omega}_\ell \chi_{2\nu_\ell}^2(\delta_\ell^2) +\sigma\mathcal{N}(0,1)+c''. \label{eqn:CQF_rep2} \tag{CQFb}
\end{equation}
The first representation in each case uses the effective parameters, whereas the second uses the reduced effective parameters.The complex case naturally leads to even degrees of freedom in the grouped chi-square representation.

For SQFs, the moment generating function (MGF), the characteristic function (CF), the moments, the cumulant generating function (CGF), and the cumulants are all available in terms of elementary functions of the raw or effective parameters; these quantities are reviewed in Section \ref{sec:MGF_CGF_moments_cumulans}. They form the basis of the exact and approximate methods considered later in this chapter.

In general, obtaining analytical expressions for the PDF/CDF for arbitrary $ \rankFinal$, (or $L$), $\{ \lambda_n \}$, ($\{\omega_\ell\}$), $\{h_n^2\}$, ($\{\delta_\ell^2\}$) and $ \{\nu_\ell\}$ is difficult. We therefore organize the discussion progressively. \Cref{sec:AnalyticalInversionMGF} reviews exact analytical inversion results for tractable cases. \Cref{sec:NumericalInversionMGF} presents exact numerical inversion methods, which are especially useful for CDF evaluation in more general settings. \Cref{sec:Approximate-Formulae} reviews approximation methods, including moment matching, sequence of random variables, and saddlepoint approximations. Finally, \Cref{sec:ratioQF} discusses the moments and the PDF/CDF of ratios of quadratic forms.

\Cref{tab:sqf-pdf-cdf-methods} summarizes the applicability of the main PDF/CDF methods in terms of the reduced effective parameters. It should be read as a guide to the settings in which each method is available. In particular, ``Any'' means no restriction, ``Even'' means that the corresponding degrees of freedom are even, and the condition \(\omega_\ell > 0\) corresponds to the positive-definite case. Throughout the PDF/CDF discussion, we take \(c^{\prime\prime} = 0\), since this additive constant only shifts the distribution and therefore does not introduce significant additional complexity. Indeed, if $\rs{Q} = \rs{Q}_0 + c''$, then $F_{\rs{Q}}(q) = F_{\rs{Q}_0}(q - c^{\prime\prime})$  and ($f_\rs{Q}(q) = f_{\rs{Q}_0}(q - c'')$). Thus, the main difficulty lies in handling the chi-square and Gaussian components, especially whether \(\sigma = 0\) or \(\sigma \neq 0\). In this sense, most exact results reported in the literature concern the chi-square-only setting, namely \(\sigma = 0\).

\begin{table}[t]
\centering
\caption{Applicability of the main PDF/CDF methods for a single quadratic form under the reduced effective representation
\(
Q \sim \sum_{\ell=1}^{L}\omega_\ell \chi^2_{\nu_\ell}(\delta_\ell^2)
   + \sigma \mathcal{N}(0,1).
\)}
\label{tab:sqf-pdf-cdf-methods}
\setlength{\tabcolsep}{4.5pt}
\renewcommand{\arraystretch}{1.12}
\small
\begin{tabular}{@{}llcccccc@{}}
\toprule
& & \multicolumn{5}{c}{Applicability conditions} & \\
\cmidrule(lr){3-7}
Method & Section & \(L\) & \(\omega_\ell\) & \(\nu_\ell\) & \(\delta_\ell^2\) & \(\sigma\) & Rep. result \\
\midrule
Analytic inv.(Cent,Even)   & \S\ref{sec:EvenDegreesHermitianForms} & Any & Any   & Even & \(0\) & \(0\)   & \eqref{eq:Naffouri-Central-CDF}  \\
Analytic inv. (PD)  & \S\ref{sec:SeriesExpansionsforPositiveDefiniteForms}& Any & \(>0\) & Any  & Any   & \(0\)   & \eqref{eqn:Chi-square Expansion_pdf}  \\
Analytic inv. (IndefC)   & \S\ref{ssec:IndefiniteComplex(Even multiplicity)} & Any & Any   & Even & Any   & \(0\)   & \eqref{eq:Raphaelli_pdf}  \\
Analytic inv. (IndefR)   & \S\ref{ssec:IndefiniteRealForms} & Any & Any   & Any  & Any   & \(0\)   & \eqref{eq:ProvostRudiuk1_pdf}  \\
Numerical inv.  & \S\ref{ssec:Gil–Pelaez inversion formula} & Any & Any   & Any  & Any   & Any     & \eqref{eq:Fc2_davies}  \\
Moment matching      &  \S\ref{ssec:Moment-Matching} & Any & \(>0\) & Any  & Any   & \(0\)   & \eqref{eq:wood}  \\
Sequence of RVs      & \S\ref{sssec:IndefiniteForms_SeqRvs}& Any & Any   & Any  & Any   & \(0\)   & \eqref{eq:HaProvost1}   \\
Saddlepoint approx.  & \S\ref{ssec:Saddlepoint-Approx}  & Any & Any   & Any  & Any   & Any     & \eqref{eq:QF_SPA_CDF_LR}  \\
\bottomrule
\end{tabular}
\end{table}

\section{ MGF, CGF, Moments and Cumulants} \label{sec:MGF_CGF_moments_cumulans}
 We collect the main transform and moment quantities associated with SQFs. These quantities will be used repeatedly in the analytical inversion, numerical inversion, and approximation methods presented later. In particular, the MGF and CF are the starting point for exact inversion formulas, while the CGF and cumulants play a crucial role in moment-matching and saddlepoint approximation. 

 Consider $\rs{Q}$ written in effective form \eqref{eqn:QF_rep1}, or equivalently, in reduced effective form \eqref{eqn:QF_rep2}. The MGF of $\rs{Q}$ follows directly from the product of the MGF of independent noncentral chi-square random variables and the MGF of a Gaussian random variable. In terms of effective parameters, it is given by
\begin{equation}
    \label{eq:MGF_incmplete_1}
    M_{\rs{Q}}(t)=\exp\left(t\sum_{n=1}^\rankFinal \dfrac{h_n^2\lambda_n} {1-2\lambda_n t}+\dfrac{\sigma^2t^2}{2}+c'' t \right)\prod_{n=1}^\rankFinal (1-2\lambda_n t)^{-\frac{1}{2}}.
\end{equation}
Equivalently, in terms of the reduced effective parameters, 
\begin{equation}
    \label{eq:MGF_incmplete_2}
    M_{\rs{Q}}(t)=\exp\left(t\sum_{\ell=1}^L \dfrac{\delta_\ell^2\omega_\ell}{1-2\omega_\ell t}+\dfrac{\sigma^2t^2}{2} +c'' t \right)\prod_{\ell=1}^L (1-2\omega_\ell t)^{-\frac{\nu_\ell}{2}}.
\end{equation}            
If $\sigma = c^{\prime\prime} = 0$, the MGF is reduced to the chi-square only form, 
\begin{equation}
    M_{\rs{Q}}(t)=\expect{e^{t\rs{Q}}}=\exp\left(\sum_{n=1}^\rankFinal\dfrac{h_n^2 t}{1-2\lambda_n t}\right)\prod_{n=1}^\rankFinal\dfrac{1}{(1-2\lambda_n t)^{1/2}}, \label{eq:MGF-QFGRV-onlyChi}
\end{equation}
or, equivalently, 
            \begin{equation}
            \label{eq:MGF_cmplete_2}
            M_{\rs{Q}}(t)=\exp\left(t\sum_{\ell=1}^L \dfrac{\delta_\ell^2\omega_\ell}{1-2\omega_\ell t }\right)\prod_{\ell=1}^L (1-2\omega_\ell t)^{-\frac{\nu_\ell}{2}}.
            \end{equation}  
If further the quadratic form is central, i.e., $\{h_n^2\}_{n=1}^\rankFinal = 0$,  then the exponential factor disappears and the MGF simplifies to
\begin{equation}
            M_{\rs{Q}}(t)=\prod_{n=1}^\rankFinal\dfrac{1}{(1-2\lambda_n t)^{1/2}}, \label{eq:MGF_central_1}
\end{equation}
or equivalently, when  $\{\delta_\ell^2\}_{\ell=1}^L = 0,$
\begin{equation}
            M_{\rs{Q}}(t)=\prod_{\ell=1}^L\dfrac{1}{(1-2\omega_\ell t)^{\nu_\ell/2}}. \label{eq:MGF_cenrtal_2}
\end{equation}

The MGF exists for all $t$ in the interval $(t_L, r_r)$ containing the origin, where 
\begin{align}
    t_L &\triangleq -\dfrac{1}{2\displaystyle\max_{\lambda_n<0}|\lambda_n|} \label{eq:tL}\\ 
    t_R &\triangleq \dfrac{1}{2\displaystyle\max_{\lambda_n>0}\lambda_n} \label{eq:tR}
\end{align}
If there are only positive eigenvalues, we take $t_L = -\infty$; if there are only negative eigenvalues, we take $t_R = +\infty$. Thus, for a positive semidefinite form, the MGF exists for all $t<t_R$, whereas for a negative semidefinite form it exists for all $t>t_L$. 

Because the MGF exists in a neighborhood of the origin, there exists some $\epsilon >0$ such that it is finite for all real $t \in (-\epsilon, \epsilon)$. Hence, it admits analytic continuation, denoted by $\Psi(z)$, to the vertical strip \cite{GengMGF}, 
$$
\{ z = \gamma + i\beta : \gamma \in (-\epsilon, \epsilon), \beta \in \mathbb{R} \}.
$$
Thus, $\Psi(z)$ agrees with $M_\rs{Q}(t)$ whenever $z=t$ is real and $t \in (-\epsilon,\epsilon)$. For simplicity, and since no ambiguity will arise, we will continue to use the notation $M_\rs{Q}(t)$ even when the argument $t$ is complex. In particular, the CF is obtained by evaluation along the imaginary axis: $\phi_{\rs{Q}}(\beta) = M_\rs{Q}(i\beta)$. Explicitly, 
\begin{equation}
    \label{eq:CF}
    \phi_{\rs{Q}}(\beta) = \exp \left( i\beta\sum_{n=1}^\rankFinal \dfrac{h_n^2\lambda_n}{1-i2\lambda_n\beta} - \dfrac{\sigma^2\beta^2}{2} + i\beta c^{\prime \prime}\right)\prod_{n=1}^\rankFinal (1-i2\lambda_n \beta)^{-\frac{1}{2}} 
\end{equation}
or, equivalently, 
\begin{equation}
    \label{eq:CF_reduced}
    \phi_{\rs{Q}}(\beta) = \exp \left( i\beta\sum_{\ell=1}^L \dfrac{\delta_\ell^2\omega_\ell}{1-i2\omega_\ell\beta} - \dfrac{\sigma^2\beta^2}{2} + i\beta c^{\prime \prime}\right)\prod_{\ell=1}^L (1-i2\omega_\ell \beta)^{-\frac{\nu_\ell}{2}} 
\end{equation}

Let $f_\rs{Q}(q)$ denote the density of $\rs{Q}$. If $\rs{Q}$ is supported on the nonengative real line, its unilateral Laplace transform is defined by 
$$
\hat{f}_\rs{Q}(s) = \mathcal{L}\{f_\rs{Q}\}(s) =  \int_0^\infty e^{-sq}f_\rs{Q}(q) dq. 
$$

In this case, since $\rs{Q \ge 0}$, the MGF and the unilateral Laplace transform are directly related through 
\begin{equation}
    \hat{f}_\rs{Q}(s)  = M_\rs{Q}(-s). 
\end{equation}
Thus, for positive definite forms, one may work equivalently with the density's unilateral Laplace transform or with the MGF evaluated at $-s$. 

For indefinite quadratic forms, the density is suopported on the whole real line, so the unilateral Laplace transform ove $[0,\inf)$ captures only the positive half-line and is therefore insufficient to characterize the full distribution. In that case, one must instead work with a bilateral transform, or equivalently, with the Fourier inversion of the CF. 



The cumulant generating function (CGF) is defined as the natural logarithm of the MGF, $K_\rs{Q}(t) = \ln M_\rs{Q}(t)$. Therefore,
    \begin{equation}
        \label{eq:CGF}
        K_\rs{Q}(t) = -\dfrac{1}{2}\sum_{n=1}^\rankFinal \ln{(1-2\lambda_n t)} + t\sum_{n=1}^\rankFinal \frac{h_n^2\lambda_n}{1-2\lambda_n t}+\frac{\sigma^2t^2}{2} +c^{\prime\prime}t,
    \end{equation}
or, in reduced effective form, 

\begin{equation}
    \label{eq:CGF_2}
    K_\rs{Q}(t) = -\dfrac{1}{2}\sum_{\ell=1}^L \nu_\ell\ln{(1-2\omega_\ell t)} + t\sum_{\ell=1}^L \frac{\delta_\ell^2\omega_\ell}{1-2\omega_\ell t}+\frac{\sigma^2t^2}{2} +c^{\prime\prime}t.
\end{equation}
    
The $k$th \emph{raw moment} of a SQF is $\tilde{\mu}_k := \mathbb{E}[\rs{Q}^k]$, and can be obtained from the $k$th derivative of the MGF at the origin, 
\begin{equation}
    \tilde{\mu}_k =\dfrac{d^k}{dt^k} M_\rs{Q}(t) |_{t=0}.
\end{equation}
Note that we denote the $k$th raw moment by $\tilde{\mu}_k$ instead of $\mu_k$ to avoid confusion with the elements of the mean vector of the underlying Gaussian random variable. 

The moments can be recursively calculated from the cumulants. Indeed, since the CGF is defined by $$K_{\rs{Q}}(t)=\ln\left(M_{\rs{Q}}(t)\right)=\sum_{\ell=0}^{\infty}\kappa_\ell\dfrac{t^\ell}{\ell!}$$ it holds that $K_{\rs{Q}}'(t)M_{\rs{Q}}(t)=M_{\rs{Q}}'(t)$, where $\kappa_\ell$ denotes the $\ell$th cumulant. Differentiating term-wise and applying Cauchy product, we can recursively calculate the $k^\text{th}$ moment by: 
    \begin{equation}
        \tilde{\mu}_k =\sum_{\ell=0}^{k-1}\binom{k-1}{\ell}\tilde{\mu}_\ell\kappa_{k-\ell}, \quad \tilde{\mu}_\ell=1. \label{eq:Moments-from-Cumulants}
    \end{equation} 
Note that this formula applies to any random variable whose MGF is defined in a neighborhood of zero. Actually, the moments are Bell polynomials \footnote{Bell polynomials arise when we wish to write the exponential of a power series as a power series: $\exp \left(\sum_{j=1}^{\infty} x_j \frac{t^j}{j!}\right)=\sum_{n=0}^{\infty} B_n\left(x_1, \ldots, x_n\right) \frac{t^n}{n!}$. See \cite[P: 13]{comtetAdvancedCombinatorics2012}.} of the cumulants. Now the cumulants can be easily calculated using straightforward McLaurin expansions. The first cumulant is 
\begin{equation}
    \kappa_1=\mathbb{E}[\rs{Q}]=\sum_{n=1}^\rankFinal\lambda_n(1+h_n^2)+c''
    = \sum_{\ell=1}^{L}\omega_\ell(\nu_\ell+\delta_\ell^2)+c''.
\end{equation}
The second cumulant is 
\begin{equation}
    \kappa_2=\Var[\rs{Q}]=2\sum_{n=1}^\rankFinal\lambda_n^2(1+2h_n^2)+\sigma^2 
    =  2\sum_{\ell=1}^{L}\omega_\ell^2(\nu_\ell+2\delta_\ell^2)+\sigma^2.
\end{equation}
For $j \ge 3$, the cumulants are 
\begin{equation}
    \kappa_j=2^{j-1}(j-1) ! \sum_{n=1}^\rankFinal \lambda_n^j\left(1+j h_n^2\right) 
    = 2^{j-1}(j-1)!\sum_{\ell=1}^{L}\omega_\ell^{j}\big(\nu_\ell + j\delta_\ell^{2}\big). \label{eqn:cumulants}
\end{equation}
Note  that the independent Gaussian term contributes only to the first two cumulants. All higher-order cumulants are determined entirely by the chi-square part of the representation.

The additive constant $c^{\prime\prime}$ shifts the random variable. Particularly, it multiplies the MGF by $e^{c^{\prime\prime}t}$ adds $c^{\prime\prime}$ to the first cumulant, and shifts the PDF and CDF without changing their analytical complexity. Thus, for the PDF/CDF problem, it is often convenient to set $c^{\prime\prime} = 0$ and restore the shift afterwards if needed. 

For complex SQF, the same ideas extend directly. Indeed, the MGF, CFm CGF, moments and cumulants are obtained in the same way by applying the corresponding transform formulas to the complex chi-square representation \eqref{eqn:CQF_rep1} and \eqref{eqn:CQF_rep2}. In particular, the grouped complex representation naturally involves even degrees of freedom, which simplifies several inversion results, as will be seen later. 
\section{PDF/CDF: Analytical Inversion of the MGF/CF}
\label{sec:AnalyticalInversionMGF}
In this section,  we review exact analytical methods for deriving the PDF and CDF of SQF by inverting its MGF or CF. Throughout this section, we restrict attention to the chi-square only representation, i.e., $\sigma = c^{\prime\prime} = 0$. As discussed earlier, the additive constant $c^{\prime\prime}$ only shifts the random variable and therefore does not increase the complexity of the PDF/CDF problem. For this reason, it is convenient to suppress this constant here and restore it afterwards if needed.

Broadly speaking, there are two routes for obtaining the PDF/CDF analytically, 
 
\begin{enumerate}
    \item One may integrate, or marginalize, the underlying Gaussian density directly over the appropriate region.
    \begin{align}
        F_\rs{Q}(q)   &= \mathbb{P}[\rs{Q}<q] = \int_{\rs{Q}(\dv{x}) \le q} f_{\rv{x}} (\dv{x}) d \dv{x} \\
        f_{\rs{Q}}(q) &=  \int_{\rs{Q}(\dv{x}) =  q} f_{\rv{x}} (\dv{x}) d \dv{x} = \frac{d}{dq}\int_{\rs{Q}(\dv{x}) \le q} f_{\rv{x}} (\dv{x}) d \dv{x} 
    \end{align}
    \item One may invert the MGF or the CF through Fourier/Laplace inversion theorem:
    \begin{align}
        f_\rs{Q}(q) &= \frac{1}{2\pi i}\int_{\gamma -i\infty}^{\gamma + i\infty} M_\rs{Q}(t) e^{-qt} dt. \label{eq:invertMGF} \\
        & = \frac{1}{2\pi}\int_{-\infty}^{\infty} \phi_\rs{Q}(t) e^{-iqt} dt.  \label{eq:invertCF}
    \end{align}
\end{enumerate}
Depending on the structure of the problem, one route may be more tractable than the other. In general, the transform-inversion route is used more often, since it avoids direct integration over the Gaussian region and shifts the problem to the analysis of a scalar complex integral. It is also useful to note that inversion of the MGF is equivalent to inversion of the two-sided Laplace transform, rather than the unilateral one.

Using \Cref{eq:MGF_cmplete_2}, the PDF can be recovered from the MGF through
\begin{equation}
    f_\rs{Q}(q) = \frac{1}{2\pi i}\int_{\gamma -i\infty}^{\gamma + i\infty}\exp\left(-qt + t\sum_{\ell=1}^L \dfrac{\delta_\ell^2\omega_\ell}{1-2\omega_\ell t }\right)\prod_{\ell=1}^L (1-2\omega_\ell t)^{-\frac{\nu_\ell}{2}} dt,
\end{equation}
Alternatively, one may work with the CF, which yields the PDF representation
\begin{equation}
    f_\rs{Q}(q) = \frac{1}{2\pi}\int_{-\infty}^{\infty}\exp\left(-iq\beta + i\beta\sum_{\ell=1}^L \dfrac{\delta_\ell^2\omega_\ell}{1-i2\omega_\ell\beta  }\right)\prod_{\ell=1}^L (1-i2\omega_\ell \beta)^{-\frac{\nu_\ell}{2}} d\beta,
\end{equation}

Since the MGF is defined in a neighborhood of the origin, the CF-based inversion formula can be viewed as a boundary-line version of the MGF inversion formula. In particular, CF-based inversion corresponds to the boundary line $\gamma = 0$, whereas MGF-based inversion allows interior contours $\gamma \neq 0$, provided that $\gamma$ lies in the region of convergence. This additional flexibility is useful because it can improve numerical stability.  

To obtain the CDF analytically, one may either integrate the PDF, 
$$ 
F_\rs{Q}(q) = \int_{-\infty}^q f_\rs{Q}(\tau) d\tau
$$, 
or whenever ($M_\rs{Q}(t)$) is analytic on the vertical line ($\Re(t)=\gamma$), use the inversion formula:
\begin{equation}
\label{eq: CDF_from_MGF}
F_Q(q)=\frac{e^{-\gamma q}}{2\pi}\int_{-\infty}^{\infty}
e^{-i\beta q}\frac{M_Q(\gamma+i\beta)}{\gamma+i\beta}d\beta,
\end{equation}


Typically, in any mathematical problem, if the issue at hand is too difficult to tackle at once, researchers tend to study special cases, especially when these cases arise in a number of practical applications. In this section, we present two special cases of a single quadratic form: central with even multiplicity forms, and positive definite forms. At the end of this section, the case of indefinite forms with no restrictions on the degrees of freedom solves the QF  for arbitrary parameters.

\subsection{Central Forms with Even Degrees of Freedom}  \label{sec:EvenDegreesHermitianForms}
Mathai and Provost \cite[Chapter~3]{mathaiQuadraticFormsRandom1992} define a central form to be a form in central Gaussian random variables. However, since the practical representation is the final linear combination of chi-squares (as in \eqref{eq:lcOnlyChiQFGRV}), we define a central form here to account for any linear combination of (central) chi-squares, i.e., the non-centrality parameters $h_n$ in \eqref{eq:lcOnlyChiQFGRV} are null \footnote{Equivalently, the non-centrality parameters $\delta_\ell$ in \eqref{eq:lcOnlyChi} are zero.}. As an immediate result, the exponential term in the MGF \eqref{eq:MGF-QFGRV-onlyChi} disappears, and the transform reduces to $$M_{\rs{Q}}(t)=\prod_{n=1}^N\dfrac{1}{(1-2\lambda_n t)^{1/2}}=\prod_{\ell=1}^L\dfrac{1}{(1-2\omega_\ell t)^{\nu_\ell/2}},$$ where $t\in(t_L,t_R)$, $t_L$ and $t_R$ are defined in \eqref{eq:tL} and \eqref{eq:tR}, respectively. 
The MGF, as it stands in the expression above, is still not tractable. Had the square roots not been there, it would have been decomposed into partial fractions. To do so, all eigenvalues must occur in pairs, i.e., the multiplicities $\nu_\ell$ must be even for $\ell=1,2,\ldots, L$.  Define $m_\ell:= \frac{v_\ell}{2} \in \mathbb{N} $. Because each $m_\ell\in\mathbb{N}$, the poles of $M_\rs{Q}(t)$  at $ t_\ell=\frac{1}{2\omega_\ell} $ have integer order $m_\ell$, which enables a finite partial-fraction expansion of the MGF as follows: 
    $$
            M_{\rs{Q}}(t)= \prod_{\ell=1}^L\dfrac{1}{(1-2\omega_\ell t)^{m_\ell}}= \sum_{\ell=1}^L\sum_{k=1}^{m_\ell}\dfrac{A_{\ell k}}{(1-2\omega_\ell t)^k}, \quad t\in(t_L,t_R)\setminus\{t_\ell\}.
    $$
Since the inverse Laplace transform is linear, we can now invert the MGF term by term. 
Consider one term $(1-2\omega t)^{-k}$ . Using \Cref{eq:invertMGF} and residue calculus for a pole of order $k$, we obtain:
\begin{itemize}
    \item For $\omega>0$ (pole on the right, contributing when ($q>0$)):
      \begin{equation}
      \label{eq:right_pdf_CentEven}
      \mathcal{L}_{\text {bilat }}^{-1}\left[(1-2 \omega t)^{-k}\right](q)=\frac{1}{(2 \omega)^k \Gamma(k)} q^{k-1} e^{-q /(2 \omega)} u(q)
      \end{equation}
    \item      For $\omega<0$   (pole on the left, contributing when ($q<0$)):
      \begin{equation}
      \label{eq:left_pdf_CentEven}
      \mathcal{L}_{\text {bilat }}^{-1}\left[(1-2 \omega t)^{-k}\right](q)=\frac{1}{(2 |\omega|)^k \Gamma(k)}(-q)^{k-1} e^{q /(2 |\omega|)} u(-q)
      \end{equation}
    \end{itemize}
    where $u(q) = 1$ for $q>0$ and $u(-q)=1$ for $q<0$. Note that \Cref{eq:right_pdf_CentEven} and \Cref{eq:left_pdf_CentEven} are the PDFs of a scaled chi-square, $\omega \chi^2_{2k}$, when $\omega>0$ and $\omega<0$, respectively. A  compact formula is 
\begin{equation}
    f_{\omega\chi^2_{2k}}(q) = \frac{|q|^{k-1}}{(2|\omega|)^k \Gamma(k)}  \exp\left(-\frac{q}{2 \omega}\right) u(\operatorname{sgn}(\omega) q) .
\end{equation}
      Therefore, the inverse transform is immediate:
\begin{equation}
    \label{eq:PDF_CentralEven}
    f_\rs{Q}(q)=\sum_{\ell=1}^L\sum_{k=1}^{m_\ell}A_{\ell k}f_{\omega_\ell\chi^2_{2k}}(q) = \sum_{\ell=1}^L\sum_{k=1}^{m_\ell}\frac{A_{\ell k}}{|\omega_\ell|}f_{\chi^2_{2k}}\left(\frac{q}{\omega_\ell}\right),
\end{equation}    
Note that $\omega_\ell\chi^2_{2k}$ is equivalent to gamma distribution with shape parameter, $k$ and scale parameter, $2\omega_\ell$. The CDF is obtained by integrating \Cref{eq:PDF_CentralEven} term by term to yield
\begin{equation}
    \label{eq:CDF_CentEven}
    F_{\rs{Q}}(q)=\sum_{\ell=1}^L\sum_{k=1}^{m_\ell}A_{\ell k}F_{\chi_{2k}^2}\left(\dfrac{q}{\omega_\ell}\right),
\end{equation}
        
Note that the CDF of a chi-square is the normalised lower incomplete gamma function, which, in this case of even degrees of freedom, can be written as a linear combination of elementary functions, i.e., $$F_{\chi_{2k}^2}\left(\dfrac{q}{\omega_\ell}\right) = 1 - e^{-q/\omega_\ell} \sum_{j=0}^k \frac{q^j}{\omega_\ell^jj!} $$ 

the CDF can be rewritten as 
\begin{equation}
    \resizebox{\textwidth}{!}{$
    F_{\rs{Q}}(q)=\sum_{\ell=1}^L\sum_{k=1}^{m_\ell/2}A_{\ell k}\left\{1-u(\omega_\ell)+\sign(\omega_\ell)u(\omega_\ell u)\left[1-e^{y/\omega_\ell}\left(\sum_{s=0}^{k/2-1}\dfrac{y^s}{\omega_\ell^s s!}\right)\right]\right\}
    $}
\end{equation}
   
    
The authors in  \cite[Sec 4.3]{mathaiQuadraticFormsRandom1992} report the results for the positive definite case, and the indefinite case

                
Going back to \eqref{eq:lcOnlyChiC}, we can see that a central quadratic form in complex Gaussian random variables satisfies the conditions needed to obtain such a closed-form expression. In fact, Al-Naffouri et al. \cite{al-naffouriDistributionIndefiniteQuadratic2016} derive an equivalent closed-form expression using a different partial fraction decomposition. Specifically, they consider the central complex form \eqref{eqn:CQF_rep2} and write

\[
\dfrac{1}{z\prod_{i=1}^L(1+\tilde{\omega}_iz)^{\nu_i}}=\dfrac{1}{z}+\sum_{\ell=1}^L\sum_{k=1}^{\nu_\ell}\dfrac{\alpha_{\ell k}}{(1+\tilde{\omega}_\ell z)^k}
\]

The corresponding CDF is then  given by \cite{al-naffouriDistributionIndefiniteQuadratic2016} as
   \begin{equation}
        F_{Y_\rs{Q}}(q)=u(q)+\sum_{\ell=1}^L\sum_{k=1}^{\nu_\ell}\dfrac{\alpha_{\ell k}}{(k-1)!|\tilde{\omega}_\ell|^k}q^{k-1}e^{-\frac{q}{\tilde{\omega}_\ell}}u\left(\frac{q}{\tilde{\omega}_\ell}\right)\label{eq:Naffouri-Central-CDF}
   \end{equation}
            
Differentiating the one-dimensional integral expression of the CDF yields the corresponding PDF, which agrees with the central-case expression in \Cref{eq:PDF_CentralEven}. The work in \cite{al-naffouriDistributionIndefiniteQuadratic2016} uses a different partial-fraction decomposition from the one adopted earlier; consequently, that resulting residues differ from the coefficients $A_{\ell k}$. Nevertheless, the two expressions are equivalent and can be related recursively,
\begin{align}
    \alpha_{\ell k} &= - \omega_\ell A_{\ell k} + \alpha_{\ell k}, \quad k < m_\ell \\ 
    \alpha_{\ell m_\ell} &= -\omega_\ell A_{\ell m_\ell}
\end{align}

 Since both involve partial fraction decomposition, which is of complexity $O(L^3)$ together with diagonalization, if needed, ($O(N^3)$), both methods scale cubically with the size of the problem. Thus, even if they are closed-form, they may be computationally prohibitive at large sizes. In this case, approximate methods may be more appropriate, 
  
\subsection{What happens when we lose evenness or centrality?}
The tractability of the previous subsection relies on a very specific structural property, the transform is rational, with poles of finite integer order. Once even multiplicity is lost, or once noncentrality is introduced, this structure is no longer available in general, and the inversion problem becomes harder. This is why the literature contains a number of exact formulas only fore carefully chosen subclasses. In the following, we present some cases for $L=2$.

\subsubsection{Linear Combinations of 2 Chi-Squares} 

The handbook of Simon \cite{simonProbabilityDistributionsInvolving2006} collects a broad range of analytic expressions of the PDF and CDF for linear combinations of two independent chi-square random variables ($L=2$). \Cref{tab:SimonResults} lists a selected subset of these formulae together with their corresponding equation numbers in \cite{simonProbabilityDistributionsInvolving2006}. Although we do not report all of Simon's results here, a few examples are worth reporting because they clearly illustrate how the structure of the formulas changes as the assumptions on evenness and/or centrality are relaxed.  

Consider first the sum of two independent central chi-square random variables with positive coefficients, $\rs{Q} \sim \omega_1\chi_{\nu_1}^2 + \omega_2\chi_{\nu_2}^2$, where $\delta_1^2 = \delta_2^2 = 0$ and $\omega_1, \omega_2 >0$. An attractive feature of this setting is that there is no restriction on the evenness of the degrees of freedom. The PDF as reported by Simon \cite [Eq. (5.26)]{simonProbabilityDistributionsInvolving2006} is \footnote{Simon's notation in \cite{simonProbabilityDistributionsInvolving2006} is slightly different.}: 
\begin{equation}
    \begin{aligned}
           f_{\rs{Q}}(q)&=\dfrac{1}{2 \sqrt{\omega_1\omega_2}\Gamma\left(\dfrac{\nu_1+\nu_2}{2}\right)}\left(\dfrac{q}{2\omega_1}\right)^{\frac{\nu_1-1}{2}}\left(\dfrac{q}{2\omega_2}\right)^{\frac{\nu_2-1}{2}}\exp\left(-\dfrac{q}{2\omega_1}\right)  \\
                    &\times{}_1F_1\left(\dfrac{\nu_2}{2};\dfrac{\nu_1+\nu_2}{2};\dfrac{(\omega_2-\omega_1)^2}{2\omega_1\omega_2}q\right),\quad q\geq 0,
    \end{aligned} \label{eq:Simon-Sum-Central}
    \end{equation}
where $_1F_1\left(.;.;.\right)$ is the confluent hypergeometric function. No CDF formula is reported by Simon for this case.

Now, consider the sixth case in \Cref{tab:SimonResults}, $\rs{Q} \sim \omega_1\chi_{\nu_1}^2(\delta_1^2) + \omega_2\chi_{\nu_2}^2$, where $\delta_1^2 >0, \delta_2^2 = 0$ and  $\omega_1, \omega_2 >0$. In this setting, the evenness of the two chi-squares is preserved (that is, $\nu_1 = 2m_1$, $\nu_2=2m_2$, $m_1, m_2 \in \mathbb{N}$), but the first chi-square is noncentral. Thus, this example loses complete centrality while keeping the favorable evenness structure.  The PDF is given by  \cite [Eq. (5.48)]{simonProbabilityDistributionsInvolving2006}:
            \begin{equation}
                \label{eq:Simon_PDF_noncent}
                \begin{aligned}
                f_{\rs{Q}}(q) & =\frac{1}{2 \omega_1}\left(\frac{\omega_1}{\omega_2}\right)^{m_2}\left(\frac{q}{\delta_1^2}\right)^{\left(m_1+m_2-1\right) / 2} \exp \left(-\frac{q+\delta_1^2}{2 \omega_1}\right) \\
                & \times \sum_{i=0}^{\infty} \frac{\Gamma\left(m_2+i\right)}{i ! \Gamma\left(m_2\right)}\left(\frac{\sqrt{q}\left(\omega_2-\omega_1\right)}{\delta_1 \omega_2}\right)^i I_{m_1+m_2+i-1}\left(\frac{\delta_1 \sqrt{q}}{\omega_1}\right), q \geq 0,
                \end{aligned}
            \end{equation}
            where $I_{\nu}(.)$ denotes the modified Bessel function of the first kind. The corresponding CDF is given by  \cite [Eq. (5.49)]{simonProbabilityDistributionsInvolving2006}:
            \begin{equation}
            \label{eq:Simon_CDF_noncent}
            \begin{aligned}
                 F_{\rs{Q}}(q)&=\left(\frac{\omega_1}{\omega_2}\right)^{m_2} \sum_{i=0}^{\infty} \frac{\Gamma\left(m_2+i\right)}{i ! \Gamma\left(m_2\right)}\left(\frac{\omega_2-\omega_1}{\omega_2}\right)^i\\&\times\left[1-Q_{m_1+m_2+i}\left(\frac{\delta_1}{\sqrt{\omega_1}}, \sqrt{\frac{q}{\omega_1}}\right)\right],\; q \geq 0,
            \end{aligned}
            \end{equation}
            where $Q_{m_1+m_2+i}$ is the generalized Marcum $Q$-function. Compared with the central case above, the loss of centrality replaces a compact single-function expression by a single infinite-series representation involving modified Bessel function and generalized Marcum-Q functions.  

\begin{longtable}{|c|c|c|c|c|c|c|c|c|}
    \caption{Selected results from Simon \cite{simonProbabilityDistributionsInvolving2006} for the PDF/CDF of linear combinations of two independent chi-square random variables.}  \label{tab:SimonResults}\\
    \hline
    \multirow{2}{*}{SN} & \multicolumn{6}{c|}{Parameters} & \multirow{2}{*}{PDF} & \multirow{2}{*}{CDF} \\ \cline{2-7}
    &\(\omega_1\) &\(\omega_2\) &\(\nu_1\) &\(\nu_2\) &\(\delta^2_1\)&\(\delta^2_2\)  &    &            \\ \hline
1   & >0          & > 0          &  Any      &  Any                  & 0     & 0   & (5.26) & --        \\ \hline
2   & >0          & < 0          &  even     &  $\nu_2=\nu_1$        & 0     & 0   & (4.7)  & (4.8)     \\ \hline
3   & >0          & < 0          &  odd      &  $\nu_2 = \nu_1$      & 0     & 0   & (4.10) & --        \\ \hline
4   & >0          & < 0          &  even     &  even                 & 0     & 0   & (4.16) & (4.17)    \\ \hline
5   & >0          & > 0          &  Any      &  $\nu_2 = \nu_1$      & Any   & 0   & (5.45) & (5.46)    \\ \hline
6   & >0          & > 0          &  even     &  even                 & Any   & 0   & (5.48) & (5.49)    \\ \hline
7   & >0          & > 0          &  Any      &  $\nu_2 = \nu_1$      & Any   & Any & (5.60) & (5.61)    \\ \hline
8   & >0          & > 0          &  even     &  even                 & Any   & Any & (5.63) & (5.64)    \\ \hline
\end{longtable}

    Note that we only report the case $L=2$, that is, a linear combination of two chi-square random variables. Extending such analytic results to general $L$ is much more challenging, as already suggested by the previous examples, where even the $L=2$ case may require confluent hypergeometric functions, modified Bessel functions, generalized Marcum-(Q) functions, and infinite summations. Indeed, solving the PDF/CDF problem for arbitrary ($L$), ($\omega_\ell$), ($\nu_\ell$), and ($\delta_\ell^2$) is not easy, and the formulas become more complicated as one moves to greater generality. Next, we report two more interesting results that can be expressed as a linear combination of two independent chi-squares. 
            
    \subsubsection{Sum of correlated chi-squares}
        The sum of correlated chi-square random variables provides a useful example showing that dependence can be absorbed into the covariance matrix of an underlying Gaussian vector. As a result, the random variable can still be represented as a quadratic form and, in turn, as a linear combination of independent chi-square random variables. 
       
       Joarder and Omar \cite{joarderExactDistributionSum2013} study the distribution of the sum of two correlated chi-square variables having the same number of degrees of freedom. Consider two samples each of length $n+1$, $\rv{x}^{(1)} = [\rs{x}^{(1)}_1, \rs{x}^{(1)}_2, \cdots \rs{x}^{(1)}_{n+1} ]^T$ and  $\rv{x}^{(2)} = [\rs{x}^{(2)}_1, \rs{x}^{(2)}_2, \cdots \rs{x}^{(2)}_{n+1} ]^T$. The pairs $(\rs{x}^{(1)}_i, \rs{x}^{(2)}_i),$ for $ i=1,2,\cdots n+1$  are i.i.d. bivariate normal with mean $[\mu^{(1)}, \mu^{(2)} ]^T $ and covariance matrix $ \begin{bmatrix}
                \sigma_1^2 & \rho\sigma_1\sigma_2\\
                \rho\sigma_1\sigma_2 & \sigma_2^2
            \end{bmatrix}$. The unbiased sample variances of sample 1 and 2 can be written as $\rs{S}_1^2 = \dfrac{1}{n}\rv{x}^{(1)^T} \dmt{H} \rv{x}^{(1)} $ and  $\rs{S}_2^2 = \dfrac{1}{n}\rv{x}^{(2)^T} \dmt{H} \rv{x}^{(2)} $, respectively, where $ \dmt{H} = \dmt{I}_{n+1} - \frac{1}{n+1}\dv{1}_{n+1}\dv{1}_{n+1}^T$. Define the scaled sample variances $\rs{U} = n \rs{S}_1^2/\sigma_1^2 $ and $\rs{V} = n \rs{S}_2^2/\sigma_2^2 $ and let $\rs{Z} = \rs{U} + \rs{V}$. Joarder and Omar find the density and distribution of $\rs{Z}$. Interestingly, it can be shown that $\rs{Z} \sim \rs{Q}$ where $\rs{Q} = \rv{x}^T \dmt{A} \rv{x}$, where $\rv{x} =[\rv{x}^{(1)}, \rv{x}^{(2)}]^T\in \mathbb{R}^N $, $N = 2n+2$ with mean $\dv{\mu} = [\mu^{(1)}\dv{1}_{n+1}, \mu^{(2)}\dv{1}_{n+1}]^T$, covariaance matrix $ \begin{bmatrix}
                \sigma_1^2 & \rho\sigma_1\sigma_2\\
                \rho\sigma_1\sigma_2 & \sigma_2^2
            \end{bmatrix} \otimes \dmt{I}_{n+1}$, and  $ \dmt{A} = \begin{bmatrix}
                \dmt{H}/\sigma_1^2 & \dmt{O}\\
                \dmt{O} & \dmt{H}/\sigma_2^2
            \end{bmatrix}$. Since $\dmt{H}\dv{1}_{n+1} = \dv{0}$, we have $\dmt{A}\dv{\mu} = \dv{0}$. Hence, the quadratic form is central. Further, the matrix $\dmt{A}\dmt{\Sigma}$ has $2n$ nonzero eigenvalues, $n$ of them are equal to $1+\rho$ and the remaining $n$ are equal $(1-\rho)$. Therefore, $\rs{Z} \sim \rs{Q} \sim (1+\rho)\chi^2_n + (1-\rho) \chi^2_n$. 
            Thus, the sum of two correlated chi-square random variables is identically distributed as a particular linear combination of two independent central chi-square random variables.
            Joarder and Omar derive the density function as follows  \cite[Eq. 6]{joarderExactDistributionSum2013}, :
        \begin{equation}
            \begin{aligned}
                  f_{\rs{Q}}(q)&=\dfrac{(1-\rho^2)^{-n/2}}{2^n\Gamma(n)}q^{n-1}\exp\left(-\dfrac{q}{2-2\rho^2}\right)\\&\times{}_0F_1\left(\dfrac{n+1}{2};\dfrac{\rho^2q^2}{(4-4\rho^2)^2}\right),\quad q>0. \label{eq:Joarder-PDF}
            \end{aligned}
        \end{equation}
        Note that Joarder’s density is obtained from Simon’s positive-central $L=2$ formula by setting ($\omega_1=1+\rho$), ($\omega_2=1-\rho$), and ($\nu_1=\nu_2=n$), and then applying the identity $({}_1F_1(a;2a;z)=e^{z/2}{}_0F_1(a+\tfrac12;z^2/16))$; hence, the two expressions are equivalent representations of the same density.
        
        The CDF is expanded in an infinite series \cite[Eq. 8]{joarderExactDistributionSum2013}, 
        \begin{equation}
        \label{eq:JoarderOmarCDF}
        F_{\rs{Q}}(q)=\dfrac{(1-\rho^2)^{n/2}}{2^n\Gamma(n)}\sum_{k=0}^\infty \dfrac{\Gamma\left(\frac{n+1}{2}\right)}{\Gamma\left(k+\frac{n+1}{2}\right)k!}\left(\dfrac{\rho^2}{4}\right)^k\gamma\left(2k+2n,\dfrac{q}{2-2\rho^2}\right)    
        \end{equation}        
        
        \subsubsection{Product of two normal variables} 
        Another useful example is the product of two normal random variables. 
        Cui et al. \cite{cuiExactDistributionProduct2016} derive the exact density of the product of two correlated normal random variables. In the quadratic-form framework, this product can be written as \(\rs{Q}=\rv{x}^T\dmt{A}\rv{x}\), \(\rv{x}\sim \mathcal{N}(\dv{\mu},\dmt{\Sigma})\), \(\dv{\mu}=[\mu_1,\mu_2]^T\), \(\dmt{\Sigma}=\begin{bmatrix}
                \sigma_1^2 & \rho\sigma_1\sigma_2\\
                \rho\sigma_1\sigma_2 & \sigma_2^2
            \end{bmatrix}\), \(\dmt{A}=\begin{bmatrix}
                0 & \frac12\\
                \frac12 & 0
            \end{bmatrix}\).
            Indeed, $\rs{Q} = \rs{x}_1 \rs{x}_2$. Since the means \(\mu_1\) and \(\mu_2\) are arbitrary, this is in general noncentral quadratic form. Diagonlaizing $\dmt{A}\dmt{\Sigma}$ shows that $\rs{Q}$ is equivalent to a linear combination of two independent non-central chi-square random variables with coefficients \(\dfrac{1}{2}(1+\rho)\sigma_1\sigma_2\) and \(-\dfrac{1}{2}(1-\rho)\sigma_1\sigma_2\); hence it is indefinite. 
            The PDF is given by \cite[Eq.2]{cuiExactDistributionProduct2016} \begin{equation}
                \begin{aligned}
            & f_{\rs{Q}}(q)=\exp \left\{-\frac{1}{2\left(1-\rho^2\right)}\left(\frac{\mu_1^2}{\sigma_1^2}+\frac{\mu_2^2}{\sigma_2^2}-\frac{2 \rho\left(q+\mu_1 \mu_2\right)}{\sigma_1 \sigma_2}\right)\right\} \\
            & \times \sum_{n=0}^{\infty} \sum_{m=0}^{2 n} \frac{q^{2 n-m}|q|^{m-n} \sigma_1^{m-n-1}}{\pi(2 n) !\left(1-\rho^2\right)^{2 n+1 / 2} \sigma_2^{m-n+1}}\left(\begin{array}{c}
            2 n \\
            m
            \end{array}\right)\left(\frac{\mu_1}{\sigma_1^2}-\frac{\rho \mu_2}{\sigma_1 \sigma_2}\right)^m \\
            & \times\left(\frac{\mu_2}{\sigma_2^2}-\frac{\rho \mu_1}{\sigma_1 \sigma_2}\right)^{2 n-m} K_{m-n}\left(\frac{|q|}{\left(1-\rho^2\right) \sigma_1 \sigma_2}\right)
            \end{aligned} \label{eq:Cui}
            \end{equation} where $K_\nu(.)$ is the modified Bessel function of the second kind of order $\nu$.
            This example shows that even the product of two Gaussian random variables leads to an indefinite noncentral quadratic form whose exact density is considerably more complicated than the positive-definite central cases discussed earlier.
\subsection{Positive Definite Forms} \label{sec:SeriesExpansionsforPositiveDefiniteForms}
Positive-definite forms are especially important subclass, because their support is one-sided and their transforms admits systematic series expansions.

To see where definiteness helps, recall that the MGF is the Laplace transform of the PDF after the substitution $t =-s$. Consider now an expansion of the one-sided Laplace transform/MGF as an infinite sum of a sequence of functions $\hat{h}_k(s)$, 
        \begin{equation}
            \hat{f}_\rs{Q}(s) = \sum_k c_k \hat{h}_k(s).
        \end{equation}
The functions  $\hat{h}_k(s)$ should be chosen so that they are easily invertible, e.g., $(x^k \leftrightarrow \frac{1}{s^{k+1}}), f_{\chi^2_{m}}(x) \leftrightarrow \frac{1}{(1 + 2s)^{k/2}})$. Applying the inverse transform we find 
\begin{equation}
    \label{eqn:inversionOfPositiveDefiniteMGF}
    f_\rs{Q}(q) = \int f_\rs{Q}(s) e^{sq}ds = \int  \sum_k c_k \hat{h}_k(s) e^{sq} ds 
\end{equation}
What we would like to do is interchange the sum and integral, therby inverting the series term by term. This requires the  conditions of the dominated convergence theorem (DCT) must be satisfied. As $ \sum_k c_k \hat{h}_k(s) e^{sq}$ is not yet known, we apply DCT to first equality of \cref{eqn:inversionOfPositiveDefiniteMGF}, 
        \begin{equation}
            \hat{f}(s) = \exp\left( \sum_{n=1}^\rankFinal -\frac{h_n^2\lambda_ns}{(1+2\lambda_n s)^{1/2}}\right)\times\prod_{n=1}^\rankFinal (1+2\lambda_n s)^{-\frac{1}{2}}.
        \end{equation}
An easy choice for the dominating function is $e^{\alpha s}$. This will however not work if any of the $\lambda_i$'s are negative since the exponential would grow indefinitely as $s$  increases. A positive definite quadratic form allows us to dominate this function with $e^{\alpha s}$. Mathai \cite[Sec 4.2, page 91]{mathaiQuadraticFormsRandom1992} provides a similar argument for the PDF. These arguments are equivalent. Note that negative definite forms can also be dominated by pulling out a minus sign. It is indefinite forms that fail, and cannot be dominated on both sides by an exponential.  To summarize the method: 
    \begin{enumerate}
        \item Factor the Laplace transform: $ \hat{f}_\rs{Q}(s) = A(s)B(\theta(s)) $, where $A(s)$ chosen so that $\mathcal{L}^{-1}{A(s)}$ is known, and $\theta(s)$  is a change of variables that makes the remainder $B(\theta)$ analytic around $\theta=0$.
        \item  Expand $B(\theta)$  as a power series using the log-series trick: Compute a series for
        \[
        \log B(\theta)=d_0+\sum_{k\ge1}\frac{d_k}{k}\theta^k,
        \]
        then recover
        \[
        B(\theta)=\sum_{k\ge0}c_k\theta^k
        \]
        using the standard recursion
        \[
        c_k=\frac{1}{k}\sum_{r=0}^{k-1}d_{k-r}c_r,\qquad k\ge1,
        \]
        with $c_0=e^{d_0}$ (or $c_0=1$ depending on normalization).
        We obtain the coefficients $\{c_k\}_{k=0}^K$ up to truncation $K$. This coefficient pipeline is essentially common to all expansions.
        \item Form the series for the Laplace transform as follows:
            \[
            L(s)=A(s)\sum_{k\ge0}c_k\theta(s)^k
            =\sum_{k\ge0} c_k \widehat h_k(s),
            \]
            where $\widehat h_k(s)=A(s)\theta(s)^k$ .
        \item Use the dominated convergence argument to justify
            \[
            f_Q(q)=\mathcal{L}^{-1}{L(s)}
            =\sum_{k\ge0} c_k\mathcal{L}^{-1}{\widehat h_k(s)}
            =\sum_{k\ge0} c_kh_k(q).
            \] 
            Similarly, integrate term-by-term to obtain the CDF:
            \[
            F_Q(q)=\int_0^q f_Q(u)du
            =\sum_{k\ge0} c_kH_k(q).
            \]
    \end{enumerate}
    
\subsubsection{Definite Real Forms} 

Consider first the case of a positive definite form, i.e., \eqref{eq:lcOnlyChiQFGRV} with positive eigenvalues, as follows $$\rs{Q}\stackrel{d}{=}\sum_{n=1}^N \lambda_n(\rs{z}_n+h_n)^2,\quad \lambda_j>0,\;\;j=1,2,\ldots,N$$ where $\rs{z}_1,\rs{z}_2,\ldots,\rs{z}_N$ are independent standard normal variables.  

\underline{Power Series Expansion}:
After several earlier attempts for the positive-definite central forms, e.g.,\cite{shahDistributionDefiniteQuadratic1961,pacharesNoteDistributionDefinite1955}, Kotz \cite{kotzSeriesRepresentationsDistributions1967Central,kotzSeriesRepresentationsDistributions1967NonCentral} gives a widely used expansion that applies in the noncentral setting. Using $A(s) = s^{-N/2}, \theta(s) = \frac{1}{s}$, the PDF and the CDF of the quadratic form can be written as  \cite[Sec 4.2b] {mathaiQuadraticFormsRandom1992}:
\begin{equation}f_{\rs{Q}}(q)=u(q)\sum_{k=0}^\infty (-1)^kc_k\dfrac{q^{\frac{N}{2}+k-1}}{\Gamma\left(\frac{N}{2}+k\right)}\qquad \label{eq:Power Series_PDF}
\end{equation}
where
\begin{align*}
    c_0&=\exp\left(-\dfrac{1}{2}\sum_{j=1}^{N}h_j^2\right)\prod_{j=1}^{N}(2\lambda_j)^{-1/2}\\
    d_k&=\dfrac{1}{2}\sum_{j=1}^N(1-kh_j^2)(2\lambda_j)^{-k} &k\geq 1
\end{align*}
and the CDF is given by:
\begin{equation}
    F_{\rs{Q}}(q)=u(q)\sum_{k=0}^\infty (-1)^kc_k\dfrac{q^{\frac{N}{2}+k}}{\Gamma\left(\frac{N}{2}+k+1\right)}
    \label{eq:Power Series_CDF}
\end{equation}
            
\underline{Laguerre Series}: The PDF of a positive definite form can be written as the product of a chi-square density and an infinite series of Laguerre polynomials. Gurland \cite{gurlandDefiniteIndefinite1955,gurlandQuadraticFormsNormally1956} does so for central positive definite forms. His work is generalized by Kotz \cite{kotzSeriesRepresentationsDistributions1967NonCentral} to account for non-central positive definite forms. The generalized Laguerre polynomials are defined as follows:
\begin{equation}
    \label{eq:Laguerre_polynomial}
    L_k^{(\alpha)}(x)=\dfrac{1}{k!}e^xx^{-\alpha}\left[\dfrac{d^k}{dx^k}(e^{-x}x^{k+\alpha})\right]\;\;\;\;\;\;\; \alpha>-1,k=0,1,\ldots
\end{equation}
            
The PDF and CDF are obtained by choosing $A(s) = (1+2\beta s)^{-N/2}, \quad \theta(s) = \frac{2\beta s}{1+2\beta s}, \quad \gamma_j = 1 - \frac{\lambda_j}{\beta}$,  where $\beta$ is arbitrary positive parameter. These choices are made so that each factor $1 + 2\lambda_js$ can be written as $(1+2\beta s)(1-\gamma_j \theta)$. The PDF is given by  \cite[\S 4.2c]{mathaiQuadraticFormsRandom1992}:
\begin{align}
    f_{\rs{Q}}(q)&=u(q)\sum_{k=0}^\infty c_k\dfrac{k!}{2\beta\Gamma(\frac{N}{2}+k)}\left(\dfrac{q}{2\beta}\right)^{\frac{N}{2}-1}e^{-q/(2\beta)}L_k^{(\frac{N}{2}-1)}\left(\dfrac{q}{2\beta}\right)  \label{eq:Laguerre Series_PDF}
\end{align}
where:
\begin{align*}
    \beta&>\lambda_1/2\\
    c_0&=1\\
    d_k&=\dfrac{1}{2}\left\{-\dfrac{k}{\beta}\sum_{j=1}^{ N}\lambda_jh_j^2\left(1-\dfrac{\lambda_j}{\beta}\right)^{k-1}+\sum_{j=1}^{N}\left(1-\dfrac{\lambda_j}{\beta}\right)^k\right\} \quad k \geq 1
\end{align*}
and the CDF can be written as:
\begin{align}
    F_{\rs{Q}}(q)&=u(q)\left[\dfrac{\gamma(N/2,q/(2\beta))}{\Gamma(N/2)}\right.\nonumber\\&\left.+\sum_{k=1}^\infty c_k \dfrac{(k-1)!}{\Gamma(N/2+k)}\left(\dfrac{q}{2\beta}\right)^{\frac{N}{2}}e^{-\frac{q}{2\beta}}L_k^{(N/2)}\left(\frac{q}{2\beta}\right)\right]
            \label{eq:Laguerre Series_CDF}
\end{align}
        Kotz \cite{kotzSeriesRepresentationsDistributions1967Central,kotzSeriesRepresentationsDistributions1967NonCentral} recommends using \(\beta=\frac{1}{2}(\max \lambda_j+\min \lambda_j)\). 
        
        \underline{Chi-square Densities Expansion}: Ruben \cite{rubenProbabilityContentRegions1962} proposes a related expansion  in terms of chi-square densities. It also entails an arbitrary positive parameter $\beta$ and chooses $A(s) = (1+2\beta s)^{-N/2}, \quad \theta = \frac{1}{1+2\beta s}, \quad \eta_j = 1 - \frac{\beta}{\lambda_j}$. The choices are made so that each factor 
        $1 + 2\lambda_js=\frac{\lambda_j}{\beta}(1+2\beta s)(1-\eta_j \theta)$. The PDF and CDF are given by   \cite{mathaiQuadraticFormsRandom1992}:
        \begin{align}
            f_{\rs{Q}}(q)&=u(q)\sum_{k=0}^\infty c_k\dfrac{q^{N/2+k-1}e^{-q/2\beta}}{(2\beta)^{N/2+k}\Gamma(N/2+k)}=\sum_{k=0}^\infty c_k f_{\beta \chi_{N+2k}^2}(q) \label{eqn:Chi-square Expansion_pdf}\\ 
            F_{\rs{Q}}(q)&=u(q)\sum_{k=0}^\infty c_k \dfrac{\gamma(N+2k,q/\beta)}{N+2k}=\sum_{k=0}^\infty c_k F_{\beta \chi_{N+2k}^2}(q) \label{eqn:Chi-square Expansion_cdf}
        \end{align}
        where 
        \begin{align*}
            c_0&=\exp\left(-\dfrac{1}{2}\sum_{j=1}^Nh_j^2\right)\prod_{j=1}^N(\beta/\lambda_j)^{1/2}\\
            c_k&=(2k)^{-1}\sum_{r=0}^{k-1}d_{k-r}c_r\\
            d_k&=\sum_{j=1}^N(1-\beta/\lambda_j)^k+k\beta \sum_{j=1}^N(h_j^2/\lambda_j)(1-\beta/\lambda_j)^{k-1}
        \end{align*}
        Ruben \cite{rubenProbabilityContentRegions1962} recommends using the harmonic mean of the maximum and minimum eigenvalues for \(\beta\).
    
Farebrother \cite{farebrotherAlgorithm204Distribution1984} develops an algorithm to implement this series expansion. In a similar manner, Moschoupolos and Canada \cite{moschopoulosDistributionFunctionLinear1984} invert the MGF of a linear combination of a linear combination of chi-squares into a CDF written as an infinite series of incomplete gamma functions. The major differences are taking multiple eigenvalues into consideration and fixing $\beta$ to be one of the eigenvalues. (?)
            
For an even number of central variables, Ropokis et al. \cite{ropokisQuadraticFormsNormal2008} provide a bound for the truncation error. Indeed, defining this error by\[e(q)=\sum_{k=K+1}^\infty c_k\beta^{-1}g(N+2k;q/\beta)\] it is bounded by \[e_b(q)=c_0\sum_{i=1}^{N/2}\Delta_ih(q,\xi_{2i-1}2\beta,K+N/2-1)\]
where 
\begin{align*}
    h(q,a,b,m)&=\dfrac{\exp\left[-\dfrac{(1-a)q}{b}\right]}{b}-\sum_{i=0}^ma^k\dfrac{q^{k}\exp\left(-\dfrac{q}{b}\right)}{b^{k+1}\Gamma(k+1)}\\
    &\text{ are decreasing order of } \\
    \eta_i&=1-\dfrac{\beta}{\lambda_i}\\
    \Delta_i&=\prod_{\substack{\ell=1 \\ \ell\neq i}}^{N/2}\dfrac{1}{\xi_{2i-1}-\xi_{2\ell-1}}
\end{align*}
        and $\{\xi_i\}_{i=1}^N$ are the decreasing sorting of $|\eta_i|$, i.e., $$\xi_1\geq \xi_2 \geq \ldots \geq \xi_N$$ and for some permutation $(n_1,n_2,\ldots,n_N)$ of $(1,2,\ldots,N)$, we have $$\xi_1=|\eta_{n_1}|,\;\xi_2=|\eta_{n_2}|,\ldots,\;\xi_N=|\eta_{n_N}|.$$ 
Computational complexity of the bound is $O(K^2N)$. Also, the bound shows the series expansion performs better when the spacing between the eigenvalues is larger. 

The main benefit of the series-expansion method is that it avoids the direct inversion methods, where the inversion cannot be carried out easily. Indeed, series expansion methods make use of coefficients and evaluate well-known functions (Power, Laguerre, chi-squares).  Once the coefficients are computed, the same expansion can be reused efficiently over a grid of values compared to doing the inversion at each value.  However, infinite series must be truncated in practice, with accuracy then depending on how rapidly coefficients decay in the area of interest (near the origin vs moderate values vs tails).


\subsection{Indefinite Forms} \label{sec:Indefinite Forms}
We now turn to indefinite forms, In contrast to the positive-definite case, the support is no longer one-sided, and the transform typically loses the structure that makes the previous expansions straightforward. As a result, exact results still exist, but they are analytically more involved. We present results that solve the PDF/CDF of indefinite quadratic forms.  First, we consider the case of an indefinite complex quadratic form, then the indefinite real quadratic forms, which provide formulae that evaluate the PDF and CDF for arbitrary parameters. The indefinite complex forms are restricted to even multiplicity of the degrees of freedom.

\subsubsection{Indefinite Complex (Even multiplicity)} \label{ssec:IndefiniteComplex(Even multiplicity)}  
Raphaeli \cite{raphaeliDistributionNoncentralIndefinite1996} derives a series expansion for the PDF of an indefinite quadratic form. Raphaeli  derives the PDF and the CDF of a complex quadratic form \eqref{eq:QFCGRV}. Examining the form of the MGF provided in \cite{raphaeliDistributionNoncentralIndefinite1996}, we notice that in the noncentral case, there are essential singularities (unremovable poles) at every point where the exponential becomes infinite. In particular, these points are where $s=1/\lambda_j$. Raphaeli's derivation is based on recognizing that an essential singularity is equivalent to an infinite number of poles which can be obtained through Laurent expansion. Raphalei's derivation consists of successively relocating every pole to the origin, expanding the essential singularity part of the function as Laurent series and the analytic part as a Taylor series.
\begin{align}
    f(q)=&-\exp \left\{-\dfrac{1}{2}\sum_j\tilde{h}_k^2\right\}\sum_{\tilde{\lambda}_k >0} \left[\dfrac{1}{(-\tilde{\lambda}_k)^{\nu_k}}\exp(-\tilde{\lambda}_k^{-1}q)\right. \nonumber\\
    		&\left.\times\sum_{n=\nu_k-1}^\infty g_k^{(n)}(0,q)\dfrac{(-\beta_k)^{n-\nu_k+1}}{n!(n-\nu_k+1)!}\right] \label{eq:Raphaelli_pdf}
    		\end{align}
    where 
\begin{align*}
    g_k^{(n)}&=\sum_{l=0}^{n-1}\binom{n-1}{l}g_k^{(l)}(\ln g_k)^{(n-l)}\\
    [\ln g_k(s,q)]^{(n)}&=\dfrac{1}{2}\sum_{j \neq k}\dfrac{n!\tilde{\lambda}_j^n\tilde{h}_j^2}{(\alpha_{kj}-\tilde{\lambda}_js)^{n+1}}+\sum_{j \neq k}\dfrac{
    (n-1)!\tilde{\lambda}_j^n\nu_j}{(\alpha_{kj}-\tilde{\lambda}_js)^n}\\
    g_k(s,q)&=\exp\left(\frac{1}{2}\sum_{j\neq k}\dfrac{\tilde{h}_j^2}{\alpha_{kj}-\tilde{\lambda}_js}-sq\right)\prod_{j\neq k}\dfrac{1}{(\alpha_{kj}-\tilde{\lambda}_js)^{m_j}}\\
    		  \alpha_{kj} &= 1-\tilde{\lambda}_j/\tilde{\lambda}_k
\end{align*}
        Similarly, The CCDF is given by the following expression: 
         \begin{equation}
         \label{eq:Raphaeli_CCDF}
         \begin{aligned}
         \mathbb{P}[\rs{Q} \geq q] &= -\exp \left(-\frac{1}{2} \sum_{j=1}^K \mu_j^2\right) \sum_{\lambda_k>0} \frac{1}{\left(-\lambda_k\right)^{m_k}} e^{-q / \lambda_k} \\ &\sum_{n=m_k-1}^\infty \hat{g}_k^{(n)}(0, q) \frac{\left(-\beta_k\right)^{n-m_k+1}}{n!\left(n-m_k+1\right)!},
         \end{aligned}
         \end{equation}
It is worth noting that the inner infinite sum converges easily, especially when the eigenvalues are separated 'enough' from each other, allowing for good results using few terms (less than 20, as tested in Matlab).  
Note that \cite{al-naffouriDistributionIndefiniteQuadratic2016} reports that the tail behavior does not perform well. However, their observation is based on integrating the PDF numerically, and this will consequently affect the accuracy of the tail. A direct implementation of \Cref{eq:Raphaeli_CCDF} produces reasonable behavior.

\subsubsection{Indefinite Real Forms} \label{ssec:IndefiniteRealForms}
Consider an indefinite form as in \eqref{eq:lcOnlyChiQFGRV} \[\rs{Q}\stackrel{d}{=}\sum_{n=1}^N \lambda_n(\rs{z}_n+h_n)^2\]
Sort the eigenvalues $\lambda_j$ in non-increasing order. Without loss of generality, we can assume that all the eigenvalues are nonzero. Suppose that the form has precisely $\ell$ positive eigenvalues. Then it can be written as the difference of two positive definite forms:
\[
\rs{Q}\stackrel{d}{=}\overbrace{\sum_{j=1}^\ell \lambda_j (\rs{z}_j+h_j)^2}^{\rs{Q}_1} - \overbrace{\sum_{j=\ell+1}^N|\lambda_j|(\rs{z}_j+h_j)^2}^{\rs{Q}_2}
\]
Provost and Rudiuk  \cite{provostExactDistributionIndefinite1996} derive the exact distribution by representing the densities of ($\rs{Q}_1$) and ($\rs{Q_2}$) via Ruben infinite series (gamma/chi-square mixtures) with coefficients computed recursively, substituting these expansions into the convolution identity for the difference ($\rs{Q}_1- \rs{Q}_2$), and evaluating the resulting term-by-term integrals using a tabulated formula that yields Whittaker functions, producing a doubly-infinite sum series for the PDF.   Finally, they integrate the PDF to obtain the CDF (involving incomplete gamma functions) and treat a special half-integer parameter regime separately by switching to an alternative Whittaker expansion when the standard hypergeometric representation fails.  
        
If $\ell$ and $N-\ell$ are not both odd, the PDF can be written as:
\begin{equation}
    f_{\rs{Q}}(q)=
    \begin{cases}
    \displaystyle\sum_{k=0}^\infty \sum_{\nu=0}^\infty \dfrac{\theta_k \theta_\nu'}{\Gamma(\ell/2+k)}b^{-s}q^{s} e^{\tilde{b}q}W_{\lambda,1-s}(bq) \text{ if } q>0,\\
    \displaystyle\sum_{k=0}^\infty \sum_{\nu=0}^\infty \dfrac{\theta_k \theta_\nu'}{\Gamma\left(\frac{N-\ell}{2} +\nu\right)}b^{-s}(-q)^{s-1}e^{\tilde{b}q}W_{-\lambda,1-s}(-bq) \text{ if } q\leq 0. \label{eq:ProvostRudiuk1_pdf}
    \end{cases}
\end{equation}
         where:
           $\beta$ and $\beta'$ are arbitrarily chosen so that 
                $$
                \min_{j\in \{1,\ldots,\ell\}}\left|1-\dfrac{\beta}{\lambda_j}\right|<1,\qquad \min_{j\in \{\ell+1,\ldots,N\}}\left|1+\dfrac{\beta'}{\lambda_j}\right|<1,
                $$
                
                \UglyAlignS{
                    s&=\frac{N}{4}+\frac{k+\nu}{2} & \lambda&=\frac{2\ell-N}{4}+\frac{k-\nu}{2}\\
                    b&=\frac{\beta^{-1}+\beta'^{-1}}{2} & \tilde{b}&=\frac{\beta'^{-1}-\beta^{-1}}{4}\\
                    \theta_k&=a_k/(2\beta)^{\ell/2+k} & \theta'_k&=a'_k/(2\beta')^{(N-\ell)/2+k}\\
                    a_0&=e^{-d/2}\prod_{j=1}^\ell\left(\dfrac{\beta}{\lambda_j}\right)^{1/2} & a'_0&=e^{-\gamma/2}\prod_{j=\ell+1}^{N}\left(\dfrac{\beta'}{\lambda_j}\right)^{1/2}\\
                    d&=\sum_{j=1}^\ell h_j^2 & \gamma&=\sum_{j=\ell+1}^t h_j^2\\
                    a_k&=(2k)^{-1}\sum_{r=0}^{k-1}e_{k-r}a_r & a'_k&=(2k)^{-1}\sum_{r=0}^{k-1}e'_{k-r}a_r\\
                    e_k&=k\beta \sum_{j=1}^\ell \dfrac{h_j^2}{\lambda_j}\left(1-\dfrac{\beta}{\lambda_j}\right)^{k-1}+\sum_{j=1}^\ell \left(1-\dfrac{\beta}{\lambda_j}\right)^k\\
                    e'_k&=k\beta' \sum_{j=\ell+1}^{N} \dfrac{h_j^2}{\lambda_j}\left(1-\dfrac{\beta'}{\lambda_j}\right)^{k-1}+\sum_{j=\ell+1}^N \left(1-\dfrac{\beta'}{\lambda_j}\right)^k
                }
            
            and $W_{a,b}(z)$ is the Whitaker W-function.
            
            If $\ell$ and $N-\ell$ are both odd, the PDF can be written as:
            \UglyEquation{
                f_{\rs{Q}}(q)=\begin{cases}
                \displaystyle \sum_{k=0}^\infty \sum_{\nu=0}^\infty \dfrac{\theta_k \theta_\nu'}{\Gamma((N-\ell)/2+\nu)} \dfrac{(-1)^{2\mu}e^{q/2\beta'}}{\Gamma(\frac{1}{2}-\mu-l)\Gamma(\frac{1}{2}+\mu-l)}\\
                \times \left\{\displaystyle\sum_{i=0}^\infty \dfrac{\Gamma(-\mu+i-l+\frac{1}{2})}{i!(-2\mu+i)!} b^i\right.\\
                \times \left[\left(\psi(i+1)+\psi(-2\mu+i+1)-\psi(-\mu+i-l+\frac{1}{2})-\ln(b)\right)\right.\\
                \left.\times (-q)^{-2\mu+i}-\displaystyle \sum_{j=1}^\infty \sum_{n=0}^j \dfrac{(-1)^{n-1}}{n!(j-n)!}(-q)^{-2\mu+i+n}\right]\\
                \left.+\displaystyle\sum_{i=0}^{-2\mu-1}\dfrac{\Gamma(-2\mu-i)\Gamma(i+\mu+l+\frac{1}{2})}{i!}b^{2\mu+i}(-1)^{2\mu+i}(-z)^k\vphantom{\displaystyle\sum_{i=0}^\infty}\right\}\text{ if } q\leq 0,\\
                
                \displaystyle \sum_{k=0}^\infty \sum_{\nu=0}^\infty \dfrac{\theta_k \theta_\nu'}{\Gamma(\ell/2+k)} \dfrac{(-1)^{2\mu}e^{q/2\beta}}{\Gamma(\frac{1}{2}-\mu-l_2)\Gamma(\frac{1}{2}+\mu-l_2)}\\
                \times \left\{\displaystyle\sum_{i=0}^\infty \dfrac{\Gamma(-\mu+i-l_2+\frac{1}{2})}{i!(-2\mu+i)!}b^i\right.\\  \times\left[\left(\psi(i+1)+\psi(-2\mu+i+1)-\psi(-\mu+i-l_2+\frac{1}{2})-\ln(b)\right)\right.\\
                \left.\times q^{-2\mu+i}-\displaystyle \sum_{j=1}^\infty \sum_{n=0}^j \dfrac{(-1)^{n-1}}{n!(j-n)!}q^{-2\mu+i+n}\right]\\
                \left.\vphantom{\displaystyle\sum_{i=0}^\infty}+\displaystyle\sum_{i=0}^{-2\mu-1}\dfrac{\Gamma(-2\mu-i)\Gamma(i+\mu+l_2+\frac{1}{2})}{i!}b^{2\mu+i}(-1)^{2\mu+i}z^k\vphantom{\displaystyle\sum_{i=0}^\infty}\right\} \text{ if } q > 0
                \end{cases} \label{eq:ProvostRudiuk2_pdf}
            }
            where $\psi$ is the digamma function, $l=(N/2-\ell-k+\nu)/2$, $l_2=-l$ and $\mu=(1-N/2-k-\nu)/2$.
            Expressions for the CDF incorporating triple infinite summations are available \cite{provostExactDistributionIndefinite1996}.
\section{PDF/CDF: Numerical Inversion of the MGF/CF  }
\label{sec:NumericalInversionMGF}
In the previous section, we reviewed cases in which the PDF and CDF of a SQFs can be obtained through analytical inversion of the MGF or CF. We now turn to numerical inversion methods. These methods are especially useful when a closed-form analytical inversion is unavailable or when the exact representation is not convenient for computation. In practice, the main emphasis is usually on the CDF, since its inversion formulas are generally more numerically stable than those of the PDF.

\subsection{Gil–Pelaez inversion formula} 
\label{ssec:Gil–Pelaez inversion formula}
While the direct inversion formulas provide an exact analytical representation of the PDF and CDF, they are not the most convenient forms for numerical evaluation. Consider the direct CF-CDF inversion (see \Cref{eq: CDF_from_MGF}), 
\[
F_Q(q)=\frac{1}{2\pi}\int_{-\infty}^{\infty}
e^{-i\beta q}\frac{\phi_{\rs{Q}}(\beta)}{i\beta}d\beta,
\]

This inversion is not convenient to be carried out numerically since it involves a two-sided complex oscillatory integral with an apparent singularity at the origin through the factor $(iu)^{-1}$. Also, the positive and negative frequency parts undergo strong cancellation.  
Note that for the MGF-based inversion when $\gamma \neq 0$, the factor $e^{-\gamma q}$ may provide useful damping. Hence, MGF inversion with interior contour shift can sometimes be more stable than the CF inversion. However, this requires a careful contour selection.

Gil--Pelaez formula \footnote{Gurland \cite{gurlandDistributionQuadraticForms1953} derived it independently.} rewrites the CF inversion for the CDF as follows: 
\begin{equation}
\label{eq:gil-pelaez_CDF}
F_Q(q)=\frac12-\frac{1}{\pi}\int_0^\infty 
\Im\!\left\{\frac{e^{-i\beta q}\phi_\rs{Q}(\beta)}{\beta}\right\}d\beta,    
\end{equation}

which is a one-sided real integral exploiting the Hermitian symmetry of the CF. The corresponding PDF inversion is 
\begin{equation}
    \label{eq:gil-pelaezAlike_PDF}
f_Q(q)=\frac{1}{\pi}\int_0^\infty 
\Re\!\left\{e^{-i\beta q}\phi_\rs{Q}(\beta)\right\}d\beta.
\end{equation}

Thus, both the PDF and the CDF can be written as one-sided real integrals by exploiting the Hermitian symmetry of the CF. In this form, the cancellation between positive and negative frequency contributions is removed, which already makes both representations more suitable for numerical evaluation of the original two-sided inversion formulas. However, the CDF representation is usually more attractive numerically because it contains the additional $1/\beta$, which attenuates high-frequency oscillations. The PDF inversion does not contain this factor and is therefore typically more oscillatory and less well conditioned, especially in the tails.

When the Gil--Pelaez formula is evaluated numerically, the resulting CDF approximation is affected by several distinct error sources. To make these errors explicit, it is useful to write the numerical approximation associated with \Cref{eq:gil-pelaez_CDF}. In practice, the improper integral is first truncated to a finite interval $[0,U]$, 
\begin{equation}
\label{eq:gil-pelaez_CDF_truncated}
\tilde{F}_\rs{Q}(q)=\frac12-\frac{1}{\pi}\int_0^U
\Im\!\left\{\frac{e^{-i\beta q}\phi_\rs{Q}(\beta)}{\beta}\right\}d\beta,    
\end{equation}
and the truncated integral is then approximated by a numerical quadrature rule,
\begin{equation}
    \label{eq:gil-Pelaez_CDF_numerical}
    \widehat{F}_\rs{Q}(q) = \dfrac{1}{2} - \dfrac{1}{\pi} \sum_{k=1}^K w_k h_q(\beta_k), \quad \beta_k \in [0,U],
\end{equation}
where $h_q(\beta_k)\triangleq \Im\!\left\{\frac{e^{-i\beta_k q}\phi_\rs{Q}(\beta_k)}{\beta_k}\right\}$, $w_k$ is the quadrature weight associated with the quadrature node, $\beta_k$. Accordingly, the total numerical error can be decomposed as 
\begin{equation}
    F_\rs{Q}(q) - \widehat{F}_{\rs{Q}}(q) = e_\text{trunc} + e_\text{quad} +  e_0 +  e_\text{round},
\end{equation}
where each term has a clear origin as follows:
\begin{itemize}
    \item Truncation error ($e_\text{trunc}$): truncating the semi-infinite interval $[0,\infty)$ to $[0,U]$ introduces a tail error, $e_\text{trunc} = \frac{1}{\pi}\int_U^\infty 
    \Im\!\left\{\frac{e^{-i\beta q}\phi_\rs{Q}(\beta)}{\beta}\right\}d\beta, $
    \item Discretization/Quadrature error ($e_\text{quad}$): the truncated integral must be approximated numerically, and this discretization/quadrature error can be significant because the integrand is oscillatory.
    \item Near origin evaluation error ($e_0$): although the factor $1/\beta$ suggests a singularity, the integrand has a removable singular form at $\beta=0$; nevertheless, special numerical treatment is required near the origin and any approximation used there contributes to $E_0$.   
    \item Floating-point roundoff error ($ e_\text{round}$): finite-precision arithmetic introduces roundoff, which may be amplified by oscillatory cancellation and by the computation of small tail probabilities.
\end{itemize}    

For quadratic forms, where the CF is available in closed form, this structure leads directly to efficient numerical inversion schemes such as those of Imhof and Davies. 

Imhof \cite{imhofComputingDistributionQuadratic1961} applies Gil-Pelaez's formula to a quadratic form in Gaussian random variables written as \eqref{eq:lcOnlyChi}
\begin{equation*}
    \rs{Q}\sim \sum_{\ell=1}^L \omega_\ell \chi_{\nu_\ell}^2(\delta_\ell^2) 
\end{equation*}  
By expressing the complex integrand in modulus-phase form and introducing the change of variable  $u=2\beta$, he obtains the following real oscillatory representation:
\begin{equation}
    F_{\rs{Q}}(q)=\dfrac{1}{2}-\dfrac{1}{\pi}\int_0^{\infty}\dfrac{\sin \theta(u)}{u \rho(u)}du
    \label{eq:Imhof_CDF}
\end{equation}
where 
\begin{align*}
\theta(u)&=\dfrac{1}{2}\sum_{\ell=1}^L\left( \nu_\ell\tan^{-1}(\omega_\ell u)+\delta_\ell^2\omega_\ell u(1+\omega_\ell^2u^2)^{-1} \right)-\dfrac{uq}{2},\\
\rho(u)&=\prod_{\ell=1}^L(1+\omega_\ell^2u^2)^{\nu_\ell/4}\exp[\dfrac{1}{2}\sum_{\ell=1}^L\frac{(\delta_\ell\omega_\ell u)^2}{1+\omega_\ell^2u^2}].
\end{align*} 
The corresponding PDF is obtained from the analogous cosine integral, 
\begin{equation}
    f_{\rs{Q}}(q)=\dfrac{1}{2\pi}\int_0^\infty [\rho(u)]^{-1}\cos[\theta(u)]du
    \label{eq:Imhof_PDF}
\end{equation} 
Imhof then evaluates \eqref{eq:Imhof_CDF} numerically over a finite interval $[0,U]$, which can be written similar to \Cref{eq:gil-Pelaez_CDF_numerical}  
\begin{equation}
    \widehat{F}_\rs{Q}(q) = \dfrac{1}{2} - \dfrac{1}{\pi} \sum_{k=1}^K w_k \dfrac{\sin \theta (u_k)}{u_k \rho(u_k)}, \qquad u_k \in [0,U].
\end{equation}
Using a uniform grid, $u_k = k\Delta $, $k=0,\cdots, K $ and $\Delta = \frac{U}{K}$, the CDF using trapezoidal approximation becomes
\begin{equation}
    \label{eq:Imhof_Trapezoidal}
        \widehat{F}_\rs{Q}(q) = \dfrac{1}{2} - \dfrac{\Delta}{\pi} \left[ \frac{1}{2} \dfrac{\sin \theta (u_0)}{u_0\rho(u_0)} + \sum_{k=1}^{K-1}  \dfrac{\sin \theta (u_k)}{u_k \rho(u_k)} + \frac{1}{2} \dfrac{\sin \theta (u_K)}{u_K\rho(u_K)} \right].
\end{equation}
The term at $u_0 = 0$ is evaluated by its limit,
\begin{equation}
   \lim_{u \to 0} \dfrac{\sin \theta (u)}{u\rho(u)} = \frac{1}{2} \sum_{\ell=1}^L \omega_\ell(\nu_\ell + \delta^2_\ell) - \frac{1}{2}q
\end{equation}

For the omitted tail (truncation error),  $
e_{\text{trunc}}=\frac1\pi\int_U^\infty \frac{\sin \theta(u)}{u\rho(u)}du,$ 
Imhof derives an explicit bound $|e_{\text{trunc}}| < T_U$,
\begin{equation}
    T_U= \frac{1}{\pi kU^k \prod_{\ell=1}^L |\omega_\ell|^{\nu_\ell/2}} \exp\left\{-\frac12\sum_{\ell=1}^L \frac{\delta_\ell^2\omega_\ell^2U^2}{1+\omega_\ell^2U^2}\right\},
\end{equation}
where $ k =\frac{1}{2} \sum_{\ell=1}^L \nu_\ell$. This gives a practical way to choose $U$ so that the truncation error is below a prescribed tolerance.

Davies \cite{daviesNumericalInversionCharacteristic1973} shows that sampling any CF on a lattice does not recover $F_\rs{x}$ alone, but rather $F(x)$ together with additional contributions from translated tails, so the CDF is expressed as
\begin{equation}
\begin{aligned}
F(x)&=\frac{1}{2}-\Delta \sum_{k=-\infty}^{\infty} \Im\left\{\frac{\phi(u+k \Delta) e^{-i(u+k \Delta) x}}{2 \pi(u+k \Delta)}\right\} \\ 
& \underbrace{-\sum_{n=1}^{\infty}\left[F\left(x-\frac{2 \pi n}{\Delta}\right)-\operatorname{Pr}\left(X>x+\frac{2 \pi n}{\Delta}\right)\right] \cos \left(\frac{2 \pi n u}{\Delta}\right)}_{e_\Delta(x)}.
\end{aligned}
\end{equation}
where $e_\Delta (x) $ collects the contributions from translated tails. Davies shows that when sampling at midpoint $u = \Delta/2$, yeilds
$$
e_\Delta(x) = \sum_{n=1}^{\infty}(-1)^n\left[F_X\left(x-\frac{2 \pi n}{\Delta}\right)-\operatorname{Pr}\left(X>x+\frac{2 \pi n}{\Delta}\right)\right],
$$
which can be bounded as follows:
\begin{equation}
\label{eq:davies_eDelta}
\left|e_\Delta(x)\right| \leq \max \left\{F_X\left(x-\frac{2 \pi}{\Delta}\right), \operatorname{Pr}\left(X>x+\frac{2 \pi}{\Delta}\right)\right\}.
\end{equation}
Thus, the lattice spacing $\Delta$ controls the error that depends on translated left and right tails of the distribution. In practice, these tails may be bounded using exponential tail bounds, for example, Chernoff bounds when the MGF is available.

With midpoint sampling, the CDF is computed as 
\[
F_X(x)
=
\frac12 - 
\sum_{k=0}^{K}
\frac{ \Im \left[ \phi_X\bigl((k+\tfrac12)\Delta\bigr)e^{-i(k+\frac12)\Delta x }\right] }{\pi(k+\tfrac12)} + e_\text{trunc} + e_\Delta,
\]
where $$ 
e_\text{trunc} = 
\sum_{k=K+1}^{\infty}\frac{ \Im \left[ \phi_X\bigl((k+\tfrac12)\Delta\bigr)e^{-i(k+\frac12)\Delta x }\right] }{\pi(k+\tfrac12)}
$$
The use of the half-grid $(k+\frac12)\Delta$ is advantageous because it avoids direct evaluation at $u=0$.

Davies later \cite{daviesAlgorithm155Distribution1980} specializes this approach to quadratic forms with Gaussian terms of the form 
\[\rs{Q}\sim \sum_{\ell=1}^L \omega_\ell\chi_{\nu_\ell}^2(\delta_\ell^2)+\sigma\mathcal{N}(0,1)\],

and develops an error-controlled algorithm that exploits the explicit characteristic function of the quadratic form together with explicit truncation bounds. In this case, the infinite midpoint-lattice representation can be written as
\begin{equation}
\label{eq:Fc2_davies}
F_{\rs{Q}}(q)
=
\lim_{\substack{\Delta\to 0\\ K\to\infty\\ (K+\tfrac12)\Delta\to\infty}}
\left[
\frac12-
\sum_{k=0}^{K}
\frac{A(u_k)\sin\!\bigl(\Theta(u_k)-u_k q\bigr)}{\pi(k+\tfrac12)}
\right],
\quad
u_k=(k+\tfrac12)\Delta,
\end{equation}
where
\begin{equation*}
A(u)
=
\exp\!\left\{
-2u^2\sum_{\ell=1}^{L}\frac{\omega_\ell^2\delta_\ell^2}{1+4u^2\omega_\ell^2}
-\frac{u^2\sigma^2}{2}
\right\}
\prod_{\ell=1}^{L}\left(1+4u^2\omega_\ell^2\right)^{-\nu_\ell/4},
\end{equation*}
and
\begin{equation*}
\Theta(u)
=
\sum_{\ell=1}^{L}
\left[
\frac{\nu_\ell}{2}\arctan(2u\omega_\ell)
+
\frac{\delta_\ell^2u\omega_\ell}{1+4u^2\omega_\ell^2}
\right].
\end{equation*}
In practice, the series is truncated at \(k=K\), so the performance of the method depends critically on the choice of the truncation point
\[
U=(K+\tfrac12)\Delta
\]
and the lattice spacing \(\Delta\). Davies derives three computable upper bounds for the truncation tail and uses the smallest applicable one:
\begin{align}
B_1(U)
&=
\frac{2}{\pi s}\,
N(U)\exp\!\left(-\frac{U^2\sigma^2}{2}\right)
\prod_{(\mathrm{i})}\left(1+4U^2\omega_j^2\right)^{-\nu_j/4}
\prod_{(\mathrm{ii})}\left(4U^2\omega_j^2\right)^{-\nu_j/4},
\label{eq:Davies-B1}
\\
B_2(U)
&=
\frac{1}{\pi U^2\sigma^2}\,
N(U)\exp\!\left(-\frac{U^2\sigma^2}{2}\right)
\prod_{j=1}^{L}\left(1+4U^2\omega_j^2\right)^{-\nu_j/4},
\label{eq:Davies-B2}
\\
B_3(U)
&=
\frac{2.5}{\pi}\,
N(U)\exp\!\left(-\frac{U^2\sigma^2}{2}\right)
\prod_{j=1}^{L}\left(1+4U^2\omega_j^2\right)^{-\nu_j/4},
\label{eq:Davies-B3}
\end{align}
where
\begin{equation}
N(U)
=
\exp\!\left[
-2U^2\sum_{j=1}^{L}
\frac{\omega_j^2\delta_j^2}{1+4U^2\omega_j^2}
\right],
\qquad
s=\sum_{(\mathrm{ii})}\nu_j.
\end{equation}
Here, product \((\mathrm{i})\) is over all indices \(j\) such that \(|\omega_j|\le 1\), whereas product \((\mathrm{ii})\) is over all indices \(j\) such that \(|\omega_j|>1\). The bound \(B_3(U)\) is valid under Davies's condition (5), and the algorithm uses the smallest applicable upper bound. To reduce the required truncation point further, Davies introduces a Gaussian convergence factor \(e^{-\tau^2u^2/2}\); the resulting bias is then corrected by an auxiliary integration involving the complementary factor \(1-e^{-\tau^2u^2/2}\).

\subsubsection{Definite Forms}

                Consider a positive definite form $\rs{Q}$, written as in \eqref{eq:lcOnlyChi}
                \begin{equation*}
                     \rs{Q}\sim \sum_{\ell=1}^L \omega_\ell \chi_{\nu_\ell}^2(\delta_\ell^2) 
                \end{equation*} with $\omega_\ell>0$ for all $\ell$.
                
                \noindent Gardini \cite{gardiniMellinTransformManage2022} starts from Ruben's exoression of a positive quadratic form as an infinite mixture of  chi-square densities, and then computes its Mellin transform term-wise to invert it again into a PDF or CDF. Note that the Mellin transform of a function $f(q)$ is given by 
                $$
                \hat{f}(z)=\int_0^\infty q^{z-1}f(q)dq 
                $$ 
                The inverse Mellin transform is given by 
                $$
                f(q)=\frac{1}{2 \pi i} \int_{h-i \infty}^{h+i \infty} q^{-z} \hat{f}(z) \mathrm{d} z
                $$ where $h$ is an arbitrary parameter that belongs to the strip of analyticity of the Mellin transform.
                
                Gardini demonstrates that Mellin Transforms of the PDF and the CDF of a positive definite quadratic form are given by 
                \begin{align} 
                    \hat{f}_{\rs{Q}}(z)&=(2\beta)^{z-1}\sum_{k=0}^\infty c_k\dfrac{\Gamma(z+\alpha+k-1)}{\Gamma(\alpha+k)}, \;\;\Re(z)>1-\alpha \label{eq:Gardini-MellinPDF}\\
                    \hat{F}_{\rs{Q}}(z)&=-\dfrac{(2\beta)^{z}}{z}\sum_{k=0}^\infty c_k\dfrac{\Gamma(z+\alpha+k)}{\Gamma(\alpha+k)}, \;\;-\alpha<\Re(z)<0 \label{eq:Gardini-MellinCDF},
                \end{align}
                where $\alpha = \rankFinal/2$. Gardini goes on to invert the Mellin transform term-wise, truncating the improper integral and the series, and discretizing the resulting finite integral. as shown below
                \begin{align}
                    f_{\rs{Q}}(q) &\approx \dfrac{1}{2\pi i}\sum_{t=-T}^T \widehat f_{\rs{Q}}^{(K)}(z_t) q^{-z_t} \delta \label{eq:Gardini-PDF} \\    
                    F_{\rs{Q}}(q) &\approx -\dfrac{1}{2\pi i}\sum_{t=-T}^T \dfrac{\widehat f_{\rs{Q}}^{(K)}(z_t)}{z_t-1} q^{-(z_t-1)} \delta \label{eq:Gardini-CDF}
                \end{align}
                where $\beta$ is the arbitrary constant in the chi-square density expansion, $\Delta$ is the integration step, $K$ is the truncation index of the infinite series. So as $T$ goes to $\infty$ and $\Delta$ goes to zero, the approximate sign is replaced be an equal one. In addition, $z_t= h + t\delta$, $\delta = i\Delta$, $T=[h-\delta,h+\delta]$ is the truncation of the improper integral, $h$ is an arbitrary real value in the strip of analyticity of the Mellin transform.

                A convenient recursive form is obtained by writing 
\[
\widehat f_{\rs{Q}}^{(K)}(z)
=(2\beta)^{z-1}\sum_{k=0}^{K} c_k\,P_k(\alpha,z-1),
\]
where
\begin{align}
P_0(\alpha,z-1)
&=\frac{\Gamma(z+\alpha-1)}{\Gamma(\alpha)},\\
P_k(\alpha,z-1)
&=P_{k-1}(\alpha,z-1)\left(1+\frac{z-1}{\alpha+k-1}\right),
\qquad k\ge 1,
\end{align}
and the coefficients satisfy
\begin{align}
c_0
&=\exp\!\left(-\frac12\sum_{j=1}^{\rankFinal} h_j^2\right)
\prod_{j=1}^{\rankFinal}\left(\frac{\beta}{\lambda_j}\right)^{1/2},\\
c_k
&=\frac{1}{2k}\sum_{\ell=0}^{k-1} d_{k-\ell}c_\ell,
\qquad k\ge1,\\
d_k
&=\sum_{j=1}^{\rankFinal}\left(1-\frac{\beta}{\lambda_j}\right)^k
+k\beta\sum_{j=1}^{\rankFinal}\frac{h_j^2}{\lambda_j}
\left(1-\frac{\beta}{\lambda_j}\right)^{k-1},
\qquad k\ge1.
\end{align}
Thus, once the truncated Mellin transoform has been computed, it can be reused to evaluate both the PDF and the CDF. 

                For the PDF, \(K\) and \(T\) are determined so that the following error bounds satisfy the required precision criterion after dividing the eigenvalues by the leading one and taking \(\beta=1\):
                \begin{align}
                    \left|e_M\right| &\leq\left(1-\sum_{k=0}^K c_k\right) f((\alpha+K) ; \alpha+K+1)/2\\
                    \left|e_T\right| &\leq\left|\widehat{f}_{\rs{Q}}(h+i \Delta T)\right| \frac{h^2+(\Delta T)^2}{\pi h q^h} \times\left(\frac{\pi}{2}-\arctan \left(\frac{\Delta T}{h}\right)\right)
                \end{align}
                Similarly, for the CDF:
                \begin{align}
                    |e_M|&\leq 1-\sum_{k=0}^Kc_k \label{Gardini:e_M}\\
                    |e_T|&\leq\left|\widehat{f}_{\rs{Q}}(h+i \Delta T)\right| \frac{h^2+(\Delta T)^2}{\pi(h-1)^2} q^{1-h}\left(1-\frac{\Delta T}{\sqrt{(\Delta T)^2+(h-1)^2}}\right)  \label{Gardini:e_T}
                \end{align}

            Gardini provides an R package for his implementation. \footnote{\url{https://cran.r-project.org/web/packages/QF/index.html}}
            
\section{PDF/CDF: Approximate Methods} \label{sec:Approximate-Formulae}
    When analytical expressions for the PDF or CDF of SQFs are unavailable, or when exact inversion methods are too expensive for repeated numerical evaluation, approximate methods become essential. These methods replace the target distribution by a more tractable representation while preserving selected distribution features. In this section, we review three approximation methods. Moment-matching methods replace the quadratic form by a simpler random variable whose moments or cumulants agree with those of the target variable. Sequence of random variable methods construct auxiliary distributions that converge to the target random variable, typically in distribution or MGF. Saddlepoint approximation method works directly with the CGF and often provide highly accurate approximations for both densities and tail probabilities.
    \subsection{Moment-Matching Methods}

    \label{ssec:Moment-Matching}
            Moment-matching methods approximate the distribution of a quadratic form by a simpler parametric distribution whose first few moments or cumulants agree with those of the target random variable (quadratic form). In practice, the first two to four moments are used most often, since they describe the main features of the distribution, namely, location, dispersion, asymmetry and tail behavior. This idea is motivated by the Taylor expansion of the MGF around the origin, 
            \begin{equation}
                M_{\rs{Q}}(t) = 1 + \sum_{i=1}^n \dfrac{\mu_it^i}{i!},
            \end{equation}
            where $\mu_i = \mathbb{E}[\rs{Q}^i]$ is the $i-$th raw moment. Hence, for small values of $t$, the local behavoir of the MGF is mainly determined by the low-order moments. This explains why moment-mathching methods often provided good accuracy in the central region of the distribution, although their accuracy in the tails may worsen when only a few moments are matched. 

            Moments and cumulants carry equivalent information, since each can be computed from the other. Although moments are more intuitive because of their direct interpretation in terms of location, spread and shape, cumulants are usually more attractive analytically. Particularly, higher order vanish for normal distribution, cumulants add under independence and they transform simply under affine mappings.

            The different moment-matching methods share the same basic principle. We have a target quadratic form, $\rs{Q}$, with cumulants, $\kappa^{\rs{Q}}_j$, A moment- or cumulant- based approximation methods select a simpler approximating random variable $\rs{Z}(\dv{\theta})$, with parameter vector $\dv{\theta}$, and determines $\dv{\theta}$ so that the selected moments, cumulants, or standardized shape measures of $\rs{Z}(\dv{\theta})$ agree with those of $\rs{Q}$. The final approximation is then obtained by replacing, 
            \begin{equation}
            \label{eq:moment-matching_general}
            F_\rs{Q}(q) \approx F_{\rs{Z}(\dv{\theta})}(q).
            \end{equation}
            Moment-matching methods differ mainly in the choice of the approximating random variable and in the number and type of mathced quantities. 

            The two-moment approximation of Satterthwaite \cite{satterthwaiteApproximateDistributionEstimates1946} provides a simple moment-mathcing method. Consider  $\rs{Q} \sim \omega_1 \chi^2_{\nu_1} + \omega_2 \chi^2_{\nu_2} + \ldots + \omega_L \chi^2_{\nu_L} $, and approximate it by a scaled chi-square random variable, 
            \begin{equation}
            \label{eq:moment-matching_satterthweight_Z}
                \rs{Z} \sim a \chi_b^2.
            \end{equation} 
            The first two cumulants of $\rs{Q}$ are 
                $$
                    \kappa_1^{\rs{Q}} = \sum_{i=1}^L \omega_i\nu_i, \quad \kappa_2^{\rs{Q}} = 2\sum_{i=1}^L  \omega_i^2\nu_i,
                $$
            whereas those of $\rs{Z}$ are 
                $$
                    \kappa_1^{\rs{Z}} = ab, \quad \kappa_2^{\rs{Z}} = 2a^2b.
                $$
            Equating the two pairs yields,
                \begin{equation}
                \label{eq:moment-matching_satterthweight_parameters}
                    a = \dfrac{1}{2} \dfrac{\kappa_2^{\rs{Q}}}{\kappa_1^{\rs{Q}}}, \quad b = \dfrac{2({\kappa_1^{\rs{Q}}})^2}{\kappa_2^{\rs{Q}}},
                \end{equation}
            Hence, we can write 
            \begin{equation}
                \label{eq:moment-matching_satterthweight}
                F_\rs{Q}(q) \approx F_{\chi^2_b}(\dfrac{q}{a}).
            \end{equation}
            Thus, Satterthwaite's method matches the mean and variance of $\rs{Q}$ by a scaled chi-square random variable. It is a simple and a low computational cost method, but matching only two moments limits its flexibility.

            Pearson \cite{pearsonNoteApproximationDistribution1959} extends this idea by introducing a location shift and matching one additinoal cumulant, Specifically, $\rs{Q}$ is approximated by 
            \begin{equation}
                \label{eq:moment-matching_pearson_Z}
                \rs{Z} \sim a \chi^2_b + c, 
            \end{equation}
            where $a$, $b$ and $c$ are chosen so that the first three cumulants of $\rs{Z}$ and $\rs{Q}$ match. Since 
            $$
            \kappa^\rs{Z}_1 = ab + c, \quad \kappa^{\rs{Z}}_2 = 2a^2b, \quad \kappa^{\rs{Z}}_3 = 8a^3b,
            $$
            matching gives, 
            \begin{equation}
                \label{eq:moment-matching_pearson_parameters}
                 a = \dfrac{\kappa_3^{\rs{Q}}}{4\kappa_2^{\rs{Q}}}, \qquad b = \dfrac{8(\kappa_2^{\rs{Q}})^3}{(\kappa^{\rs{Q}}_3)^2},  \qquad c = \kappa_1^\rs{Q} - \dfrac{2(\kappa^{\rs{Q}}_2)^2}{\kappa_3^{\rs{Q}}}.
            \end{equation}
            Therefore,
            \begin{equation}
                \label{eq:moment-matching_pearson}
                F_\rs{Q}(q) \approx F_{a\chi^2_b+c}(q)
            \end{equation}
            Compared with Satterthwaite’s method, Pearson’s approximation accounts for skewness through the third cumulant and therefore provides greater flexibility.

            Hall–Buckley–Eagleson (HBE)\cite{buckleyApproximationDistributionQuadratic1988} applies moment-matching to a standardized version of the quadratic form rather than $\rs{Q}$, $\rs{Q}^\prime = \frac{\rs{Q} - \kappa^\rs{Q}_1}{\sqrt{\kappa_2^\rs{Q}}}$ so that $\rs{Q}^\prime$ has zero mean and uinit variance. The HBE method then approximates $\rs{Q}^\prime$ by a standardized chi-square variable, $\rs{Z}\sim \dfrac{\chi^2_b - b}{\sqrt{2b}}$, where the parameter $b$ is chosen so that the first three central moments, equivalently the skewness, agree. Since both $\rs{Q}^\prime$ and $\rs{Z}$ are already standardized, the approximation has only one free parameter, $b$. In terms of the original quadratic form, the approximation may be written as, 
            \begin{equation}
                F_{\rs{Q}}(q) \approx F_{\rs{Z}} \left(  \dfrac{q - \kappa_1^\rs{Q}}{\sqrt{\kappa_2^\rs{Q}}}  \right) = F_{\chi^2_b} \left( b + \sqrt{2b} \dfrac{q - \kappa_1^\rs{Q}}{\sqrt{\kappa_2^\rs{Q}}}  \right), \quad  b=\frac{8(\kappa_2^{\mathsf{Q}})^3}{(\kappa_3^{\mathsf{Q}})^2}.
            \end{equation}
            Note that Pearson’s method approximates $\rs{Q}$ directly by a shifted and scaled central chi-square variable, whereas HBE applies the same approximation after standardization. Consequently, HBE may be viewed as the standardized one-parameter form of Pearson’s three-moment central chi-square approximation.
            
            Wood's method \cite{woodApproximationDistributionLinear1989} uses the same three-moment principle, but replaces the shifted chi-square family by a more flexible corrected-$F$ family. The approximation is written as 
            \begin{equation}
                F_{\rs{Q}}(q) \approx F_{F(2\alpha_1,2\alpha_2)}\left( \dfrac{\alpha_2}{\alpha_1\beta}q \right), \label{eq:wood}
            \end{equation}
            where the parameters are determined from the first three cumulants of $\rs{Q}$. The parameters are 
            \begin{align}
                \alpha_1 &= \dfrac{2\kappa_1^{\rs{Q}} \left( \kappa_1^{\rs{Q}}\kappa_3^{\rs{Q}} + (\kappa_1^{\rs{Q}})^2 \kappa_2^{\rs{Q}} - (\kappa_2^{\rs{Q}})^2 \right)}{ 4\kappa_1^{\rs{Q}} (\kappa_2^{\rs{Q}})^2 + \kappa_3^{\rs{Q}}(\kappa_2^{\rs{Q}}-(\kappa_1^{\rs{Q}})^2)}, \\ 
                \alpha_2 &= 3 + \dfrac{2 \kappa_2^{\rs{Q}} \left( \kappa_2^{\rs{Q}} + (\kappa_1^{\rs{Q}})^2 \right)}{\kappa_1^{\rs{Q}}\kappa_3^{\rs{Q}} - 2(\kappa_2^{\rs{Q}})^2}, \\ 
                \beta &= \dfrac{4\kappa_1^{\rs{Q}} (\kappa_2^{\rs{Q}})^2 + \kappa_3^{\rs{Q}}(\kappa_2^{\rs{Q}}-(\kappa_1^{\rs{Q}})^2)}{\kappa_1^{\rs{Q}}\kappa_3^{\rs{Q}} - 2(\kappa_2^{\rs{Q}})^2}.
            \end{align}

            Liu et al. \cite{liuNewChisquareApproximation2009} approximate a nonnegative quadratic form by matching its standardized version to a standardized noncentral chi-square random variable. Let $\mu_{\rs{Q}} = \kappa_1^\rs{Q}$ and $\sigma_{\rs{Q}} = \sqrt{\kappa_2^\rs{Q}}$. Their approximation assumes 
            $$
            \dfrac{\rs{Q - \mu_\rs{Q}}}{\sigma_\rs{Q}} \sim \dfrac{\chi^2_\ell(\delta) - (\ell + \delta)}{\sqrt{2}a}, \quad a =\sqrt{\ell + 2\delta},
            $$
            where $\ell$ and $\delta$ are chosen so  that the skewness is matched exactly and the kurtosis discrepancy is minimized. Define 
            $$
            s_1 = \dfrac{\kappa_3^\rs{Q}}{2\sqrt{2}(\kappa_2^\rs{Q})^{3/2}}, \qquad s_2 = \dfrac{\kappa_4^\rs{Q}}{12(\kappa_2^\rs{Q})^2},
            $$
            Matching skewness gives $ \delta = s_1 a^3-a^2 $. Then, two cases arise. If $s_1^2 > s_2$, the kurtosis discrepancy can be made zero, and the parameters are 
             are given by 
            $$
            a = \dfrac{1}{s_1\sqrt{s_1^2 - s_2}} , \quad \delta = s_1 a^3-a^2, \quad \ell = a^2-2\delta
            $$
            If $s_1^2 \le s_2$, the minimum kurtosis discrepancy is attained at 
            $$
            a =\dfrac{1}{s_1}, \quad \delta=0, \quad\ell = \dfrac{1}{s_1^2}.
            $$
            Consequently, the distribution of the quadratic form is approximated by
            \begin{equation}
                \label{eq:moment-matching_Liu}
                F_{\rs{Q}} (q) \approx F_{\chi^2_\ell(\delta)} \left( \ell + \delta+ \sqrt{2}a \dfrac{q-\kappa^{\rs{Q}}}{\sqrt{\kappa_2^\rs{Q}}} \right)
            \end{equation}
            Thus, Liu's method handles noncentral quadratic forms by incorporating fourth-order shape information through kurtosis matching. When $s_1^2 \le s_2$, one obtains $\delta = 0$, so the approximation reduces to HBE (standardized Pearson's method) central chi-square approximation.

            A more general high-order approach is provided by Lindsay, Pilla and Basak \cite{lindsayMomentBasedApproximationsDistributions2000}, who approximate the distribution of $\rs{Q}$ by a finite mixture of gamma distributions, 
            \begin{equation}
                    F_{\rs{Q}} \approx \sum_{i=1}^n \pi_i F_{\rs{Z}_i}, \quad \rs{Z}_i \sim \Gamma(k,\theta_i), 
                    \label{eq:Lindsay}
            \end{equation}
            with $\pi_i \ge 0$ and $\sum_{i=1}^n \pi_i =1$. In this case, the number of matched moments is increased to $2n$, and the parameters are obtained through a constructive procedure involving pseduo-moment matrices and polynomial root finding. This method uses a much richer approximating family and a larger number of matched quantities. As a result, it can provide better global accuracy, particularly in the tails, at the expense of higher computational complexity. 

            A practical way to compare such methods is to evaluate them as quintiles corresponding to prescribed probability levels as described in \cite{2016-Bodenham-ComparisonofEfficientApproximationsForWeightedSum}. Let $F_\rs{Q}(x)$ denote a highly accurate reference method and let $G(x)$ be an approximation. For probability levels $\{p_1, p_2, \cdots, p_L\}$, suppose the corresponding quintiles $\{ x_1, x_2, \cdots, x_L\}$ satisfy
            $$
            |F_{\rs{Q}(x_j)} - p_j| < \zeta, \qquad \zeta \ll 1.
            $$
            Then the approximation error $x_j$ is 
            $$
            \epsilon_j = |G(x_j) - F_{\rs{Q}(x_j)}|,
            $$
            and the triangle inequality gives 
            $$
            |G(x_j) - p_j  | < \epsilon_j + \zeta
            $$
            Hence, whenever $\zeta$ is negligible relative to $\epsilon_j$, the quantity $|G(x_j)-p_j|$ provides a reliable proxy for the approximation error. Repeating this comparison over many problem instances gives a practical framework for assessing how different moment-based approximations perform across the support, including both the center and tail.

    \subsection{Sequence of Random Variables}
     \label{ssec:Seq-RVs}
                The quadratic form can be approximated by a sequence of random variables. Convergence in distribution, or weak convergence, simply means the convergence of the cumulative distribution functions to the CDF of the limit random variable. Weak convergence does not imply, in general, the convergence of the PDFs to the PDF of the limit. However, under the conditions of boundedness and equicontinuity of the sequence of the PDFs, and continuity of the PDF of the limit RV, it holds that the sequence of the PDFs converges to the PDF of the limit. A useful implication of weak convergence is the convergence of expectations of bounded continuous functions of the random variables \[\rs{Q}_n\stackrel{d}{\rightarrow}\rs{Q}\implies \expect{g(\rs{Q}_n)}\to \expect{g(\rs{Q})}\] given $g$ is a bounded continuous function. This may be particularly useful for wireless communication applications, in which one may be interested in functions of the SNR which may, under some assumptions, be modeled as a quadratic form in Gaussian random variables. Ramirez-Espinoza et al. \cite{ramirez-espinosaNewApproachStatistical2019} coin the term "confluence": the convergence in the MGF. Note that confluence implies weak convergence; however, in general, weak convergence does not imply confluence \cite{GengMGF}.
                
                \subsubsection{Indefinite Complex  Forms}
                Ramirez-Espinoza et al. \cite{ramirez-espinosaNewApproachStatistical2019} study the distribution of a Hermitian form in complex Gaussian random variables. Consider \eqref{eq:lcOnlyChiQFCGRV}, which can be rewritten as \(\rs{Q}=(\rv{z}+\dv{h})^H\dmt{\Lambda}(\rv{z}+\dv {h})\) with \(\dmt{\Lambda}=\diag(\lambda_1,\lambda_2,\ldots,\lambda_N)\). Let
	
                \[\rs{Q}_m=(\rv{z}+\dmt{D}_\xi\dv{h})^H\dmt{\Lambda} (\rv{z}+\dmt{D}_\xi\dv{h})\] 
                with $\dmt{D}_\xi=\diag(\xi_{m,1},\ldots,\xi_{m,N})$, $\xi^2_{m,i} \sim \Gamma(m,1/m)$.
                Then $\rs{Q}_m$ is confluent to $\rs{Q}$, i.e.,
                \[M_{\rs{Q}_m}(t)=\prod_{j=1}^N\dfrac{\left(1-\frac{\lambda_jh_j^2t}{m(1-\lambda_jt)}\right)^{-m}}{1-\lambda_jt}\xrightarrow[m \to \infty]{} M_{\rs{Q}}(t)\]
                The MGF of \(\rs{Q}_m\) can be rewritten as 
                \[M_{\rs{Q}_m}(t)=\prod_{k=1}^{N}\left[(-\lambda_k)\left(1+\dfrac{h_k^2}{m}\right)^m\right]^{-1}\prod_{i=1}^{N}\dfrac{(t-1/\lambda_i)^{m-1}}{(t-\beta_i)^m}\]
                where $\beta_i=[\lambda_i(1+h_i^2/m)]^{-1}$. Taking into consideration multiple poles and zeroes, we can write
                \[M_{\rs{Q}_m}(t)=\prod_{k=1}^{N}\left[(-\lambda_k)\left(1+\dfrac{h_k^2}{m}\right)^m\right]^{-1}\dfrac{\prod_{j=1}^{n_\lambda}(s-1/\tilde{\lambda_j})^{q_j}}{\prod_{i=1}^{n_\beta}(t-\tilde{\beta_i})^{p_i}}\]
                where \(n_\lambda\) and \(n_\beta\) are the numbers of distinct roots and poles. The PDF of $\rs{Q}_m$ is \begin{equation}
                    f_{\rs{Q}_m}(q)=\sum_{i=1}^{n_\beta}\sum_{j=1}^{p_i}\alpha_{i,j}e^{-\tilde{\beta}_iq}x^{j-1}u(\tilde{\beta}_iq)\sign(q)\label{eq:ramirez-complex}
                \end{equation}
                where $\alpha_{i,j}=B_jA_{i,j}$, with
                \[B_j=\dfrac{1}{(j-1)!}\prod_{k=1}^N\left[(-\lambda_k)(1+\dfrac{h_k^2}{m})^m\right]^{-1}\]
                and $A_{i,j}$ are the residues that arise after performing partial fraction decomposition on $R(s)=\dfrac{\prod_{j=1}^{n_\lambda}(s+1/\tilde{\lambda}_j)^{q_j}}{\prod_{i=1}^{n_\beta}(s+\tilde{\beta}_i))^{p_i}}$.\\
                The mean square error between the quadratic form and the confluent one is given by
                \[\mathbb{E}[(\rs{Q}_m-\rs{Q})^2]=\sum_{i=1}^N\lambda_i^2h_i^2\left[4\left(1-\dfrac{\Gamma(m+1/2)}{m^{1/2}\Gamma(m)}\right)+\dfrac{h_i^2}{m}\right]\] Since $$\dfrac{\Gamma(m+1/2)}{\Gamma(m)}\sim m^{1/2} \text{ as } m\to \infty$$ we can see that $\rs{Q}_m$ converges to $\rs{Q}$ in $L^2$.
                
                \subsubsection{Positive Definite Forms}
                Ramirez-Espinoza et al. \cite{ramirez-espinosaNewApproximationDistribution2019} approximate a positive definite quadratic form by a sequence of random variables as follows \(\rs{Q}_m=\rs{Q}/\xi_m\), where \(\xi_m\) is a gamma variate with shape \(m\) and rate \(m-1\). Note that \(\xi_m\to 1\) as \(m\to\infty\), thus \(\rs{Q}_m\) converges in MGF, hence in distribution, to \(\rs{Q}\). The PDF of \(\rs{Q}_m\) is given by \begin{equation}
                    f_{\rs{Q}_M}(q)=\dfrac{(m-1)^m}{q^{m+1}\Gamma(m)}M_\rs{Q}\left(\dfrac{1-m}{q}\right)D_m\left(\dfrac{1-m}{q}\right) \label{eq:Ramirez-PD}
                \end{equation}
                where \(M_{\rs{Q}}\) is the moment generating function of \(\rs{Q}\) and the function \(D_m\) is recursively defined by \begin{align*}
                    D_0(s)&=1\\
                    D_k(s)&=\sum_{j=0}^{k-1}\binom{k-1}{j}g_{k-1-j}(s)D(s) &k\geq 1\\
                    g_i(s)&=2^ii!\sum_{t=1}^n\dfrac{\lambda_t^{i+1}[(i+1)b_t^2+1-2\lambda_ts]}{(1-2\lambda_ts)^{i+2}}
                \end{align*}
                The CDF is given by:
                \[F_{\rs{Q}_m}(q)=M_\rs{Q}\left(\dfrac{1-m}{q}\right)\sum_{k=0}^{m-1}\dfrac{(m-1)^k}{q^kk!}D_k\left(\dfrac{1-m}{q}\right)\]
                Typically, \(m\) seems to be in the range of \(10^2\). In addition to weak convergence, the authors prove the convergence of the probability density functions of the terms to the PDF of the limit random variable.
                
                \subsubsection{Indefinite Forms} \label{sssec:IndefiniteForms_SeqRvs}

                Ha and Provost \cite{haAccurateApproximationDistribution2013} give an approximate PDF, which is, in fact, the PDF of an approximate random variable \(\rs{Q}_d\) that converges in MGF, thus in distribution, to $\rs{Q}$ as $d$ goes to infinity. Start by writing the indefinite form as a difference of two definite forms: $\rs{Q}=\rs{Q}_{(1)}-\rs{Q}_{(2)}$. The PDF of the approximate variable is given by: 
                \begin{equation}
                    f_{\rs{Q}_d}(q)=\begin{cases}
                        f_P(q) &\text{ for }q\geq 0\\
                        f_N(q) &\text{ for }q<0
                        \end{cases}\label{eq:HaProvost1}
                \end{equation}
                with \begin{align}
                    f_P(q)&=\sum_{i=0}^d\sum_{j=0}^d \xi_{\nu_1,i}\xi_{\nu_2,j}\int_0^\infty \left(\dfrac{q+t}{\beta_1}\right)^i\left(\dfrac{q}{\beta_2}\right)^j\gamma_{\nu_1,\beta_1}(q+t)\gamma_{\nu_2,\beta_2}(t)dt\\
                    f_N(q)&=\sum_{i=0}^d\sum_{j=0}^d \dfrac{\xi_{\nu_1,i}\xi_{\nu_2,j}}{\beta_1^i\beta_2^j}\int_0^\infty t^i(t-q)^j\gamma_{\nu_1,\beta_1}(t)\gamma_{\nu_2,\beta_2}(t-q)dt\\
                \end{align}
                where:
                \UglyEquation{
                    \begin{aligned}
                        \beta_i&=\dfrac{V(\rs{Q}_{(i)})}{\mathbb{E}(\rs{Q}_{(i)})}\;\;\nu_i=\dfrac{\mathbb{E}(\rs{Q}_{(i)})^2}{V(\rs{Q}_{(i)})}-1\;\;\;\;i=1,2\\
                        \xi_{\nu_j,k}&=\begin{cases}
                        1+\sum_{i=2}^d \eta_i^{\nu_j}d_{i,k}^{\nu_j}, &\text{ for }k=0,\\
                        \sum_{i=2}^d \eta_i^{\nu_j}d_{i,k}^{\nu_j}, &\text{ for }k=1,\\
                        \sum_{i=k}^d \eta_i^{\nu_j}d_{i,k}^{\nu_j}, &\text{ for }k=2,\ldots,d
                        \end{cases}\\
                        \eta_i^{\nu_j}&=\dfrac{i!}{\Gamma(\nu+i+1)}\sum_{k=0}^i d_{i,k}^{\nu_j} \dfrac{\mathbb{E}(\rs{Q}_{(j)}^k)}{\beta_j^k},&j=1,2\\
                        d_{i,k}^{\nu_j}&=\dfrac{(-1)^{i-k}\Gamma(i+\nu_j+1)}{(i-k)!k!\Gamma(\nu_j+k+1)}\\
                        \gamma_{\nu_\ell,\beta_\ell}(z)&=\dfrac{z^{\nu_\ell}e^{-z/\beta_\ell}}{\beta_\ell^{\nu_\ell+1}\Gamma(\nu_\ell+1)}&\ell=1,2\\
                    \end{aligned}
                }
                
                If \(\nu_1\) and \(\nu_2\) are not integers, then we have:
                \begin{align}
                    f_P(q)&=\sum_{i=0}^d\sum_{j=0}^d \dfrac{\xi_{\nu_1,i}\xi_{\nu_2,j}e^{-q/\beta_1}}{\beta_1^{\nu_1+i+1}\beta_2^{\nu_2+j+1}\Gamma(\nu_1+1)\Gamma(\nu_2+1)}\nonumber \\
                    &\times\left(\vphantom{\dfrac{\Gamma(-m_{ij})\Gamma(j+\nu_2+1)}{\Gamma(-i-\nu_1)}q^{m_{ij}}}\beta^{-m_{ij}}\Gamma(m){ij}) {}_1F_1(-i-\nu_1,-m_{ij}+1,q\beta)\right.\nonumber\\
                    &\left.+\dfrac{\Gamma(-m_{ij})\Gamma(j+\nu_2+1)}{\Gamma(-i-\nu_1)}q^{m_{ij}}{}_1F_1(j+\nu_2+1,m_{ij},q\beta)\right) \label{eq:HaProvost2}
                \end{align}
                and
                \begin{align}
                    f_N(q)&=\sum_{i=0}^d\sum_{j=0}^d \dfrac{\xi_{\nu_1,i}\xi_{\nu_2,j}e^{q/\beta_2}}{\beta_1^{\nu_1+i+1}\beta_2^{\nu_2+j+1}\Gamma(\nu_1+1)\Gamma(\nu_2+1)}\nonumber\\
                    &\times\left(\vphantom{\dfrac{\Gamma(-m_{ij})\Gamma(i+\nu_1+1)}{\Gamma(-j-\nu_2)}}\beta^{-m_{ij}}\Gamma(m_{ij}) {}_1F_1(-j-\nu_2,-m_{ij}+1,-q\beta)\right.\nonumber\\
                    &\left.+\dfrac{\Gamma(-m_{ij})\Gamma(i+\nu_1+1)}{\Gamma(-j-\nu_2)}(-q)^{m_{ij}}\times {}_1F_1(i+\nu_1+1,m_{ij}+1,-q\beta)\right) \label{eq:HaProvost3}
                \end{align}
                Ha and Provost \cite{haAccurateApproximationDistribution2013} provide formulae for the CDF as well.
        \subsection{Saddlepoint Approximation}
        \label{ssec:Saddlepoint-Approx}
            Saddlepoint approximation provides an accurate and computationally efficient way to approximate the density and distribution functions of quadratic forms by working directly with their CGF. We are interested in the incomplete form, e.g., \eqref{eqn:QF_rep2},
            \begin{equation*}
                \rs{Q}\sim \sum_{\ell=1}^L \omega_\ell \chi_{\nu_\ell}^2(\delta_\ell^2) + \sigma \mathcal{N}(0,1) + c^{\prime\prime}.
            \end{equation*}
        The saddlepoint method was originally introduced by Daniels in 1954 as an asymptotic approximation technique for the density of the sample mean \cite{1954-Daniels-SaddlepointApproximationsInStatistics}, but it can be applied more generally to random variables that admit a CGF \cite{2001-Kuonen-ComputerintensiveStatisticalMethodsSaddlepointApproximations,1986-Easton-GeneralSaddlepointApproximationsWithApplicationsToL}. For quadratic forms in Gaussian random variables, this approach has been studied in detail in \cite{2001-Kuonen-ComputerintensiveStatisticalMethodsSaddlepointApproximations,2007-Butler-saddlepointAapproximationsWithApplicationst}.
        Let $K_{\rs{Q}}(t)=\ln M_{\rs{Q}}(t)$ denote the CGF of $\rs{Q}$. Provided that $M_{\rs{Q}}(t)$ exists in a strip around the imaginary axis, the density of $\rs{Q}$ can be written through the inverse transform as
        \begin{equation}
        \label{eqn:pdf_saddle}
        f_{\rs{Q}}(q)=\frac{1}{2\pi i}\int_{-i\infty}^{i\infty} e^{K_{\rs{Q}}(t)-tq}\,dt, \qquad t\in\mathbb{C}.
        \end{equation}
        Using Cauchy's theorem, the contour can be deformed to a path of steepest descent passing through the point $t_o$ satisfying
        \begin{equation}
        \label{eqn:saddlepoint}
        K_{\rs{Q}}'(t_o)-q=0.
        \end{equation}
        The point $t_o$ is called the \emph{saddlepoint}. By expanding the exponent locally around $t_o$ and retaining the leading term, one obtains the saddlepoint approximation to the density
        \begin{equation}
        \label{eq:QF_SPA_PDF}
        f_{\rs{Q}}(q)\approx \left[\frac{1}{2\pi K_{\rs{Q}}''(t_o)}\right]^{\frac{1}{2}}
        \exp\!\bigl(K_{\rs{Q}}(t_o)-t_oq\bigr),
        \end{equation}        
    where $t_o$ is the root of \eqref{eqn:saddlepoint}. The terminology ``saddlepoint'' refers to the local behaviour of the real part of the exponent at $t_o$: it is a local minimum in one direction and a local maximum in the orthogonal direction \cite{2011-lipton-OxfordHandbookofCreditDerivatives}. The same approximation can also be derived from a tilted, or indirect Edgeworth expansion \cite{1989-Barndorff-AsymptoticTechniquesForUseInStatistics}.

    The CGF of $\rs{Q}$ is given in \cref{eq:CGF}, and its $m$th derivative is
\begin{equation}
\begin{aligned}
K_{\rs{Q}}^{(m)}(t)
&=2^{m-1}(m-1)!\sum_{\ell=1}^L \omega_\ell^m
\left[
\frac{\nu_\ell}{(1-2\omega_\ell t)^m}
+
\frac{m\delta_\ell^2}{(1-2\omega_\ell t)^{m+1}}
\right] \\
&\quad
+\left(\sigma^2 t+c^{\prime\prime}\right)\dv{1}_{\{m=1\}}
+\sigma^2 \dv{1}_{\{m=2\}}.
\end{aligned}
\end{equation}

As an example, consider the central case
\[
\rs{Q}\sim \sum_{\ell=1}^3 \omega_\ell \chi_{\nu_\ell}^2,
\]
with $\{\omega_\ell\}_{\ell=1}^3=\{0.6,0.3,0.1\}$ and $\{\nu_\ell\}_{\ell=1}^3=\{2,2,1\}$. To approximate the density at $q=1$, we first solve $K'_{\rs{Q}}(t)=1$, which gives $t_o\approx -1.0084$. Substituting into \eqref{eq:QF_SPA_PDF} yields $f_{\rs{Q}}(1)\approx 0.42$; see Fig.~\ref{subfig:pdf_saddle}.

\begin{figure}
    \centering
    \subfloat[Saddlepoint illustration]{\includegraphics[width=0.4\textwidth]{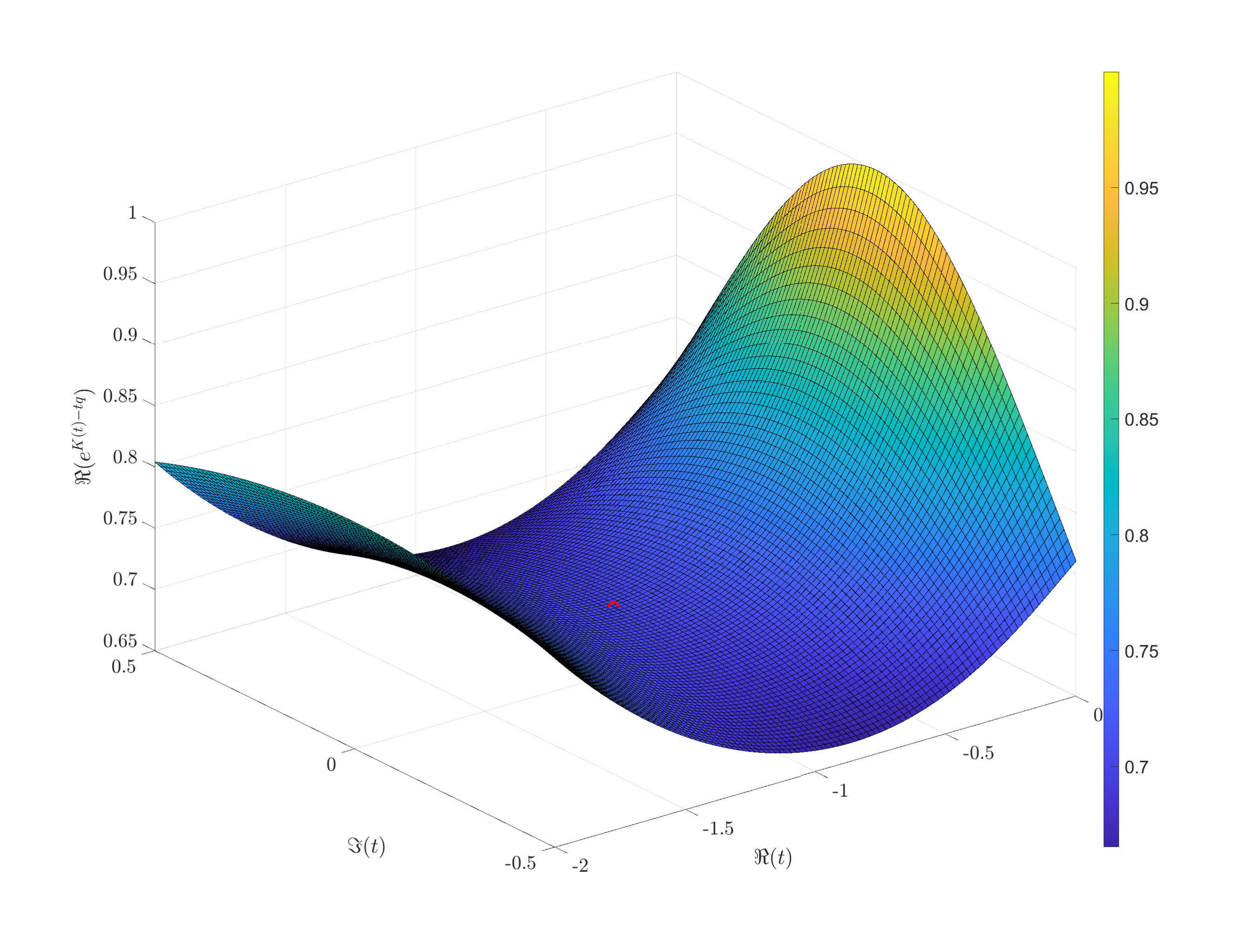}\label{subfig:saddlepoint}}
    \hfil
    \subfloat[Cross section along the real axis]{\includegraphics[width=0.4\textwidth]{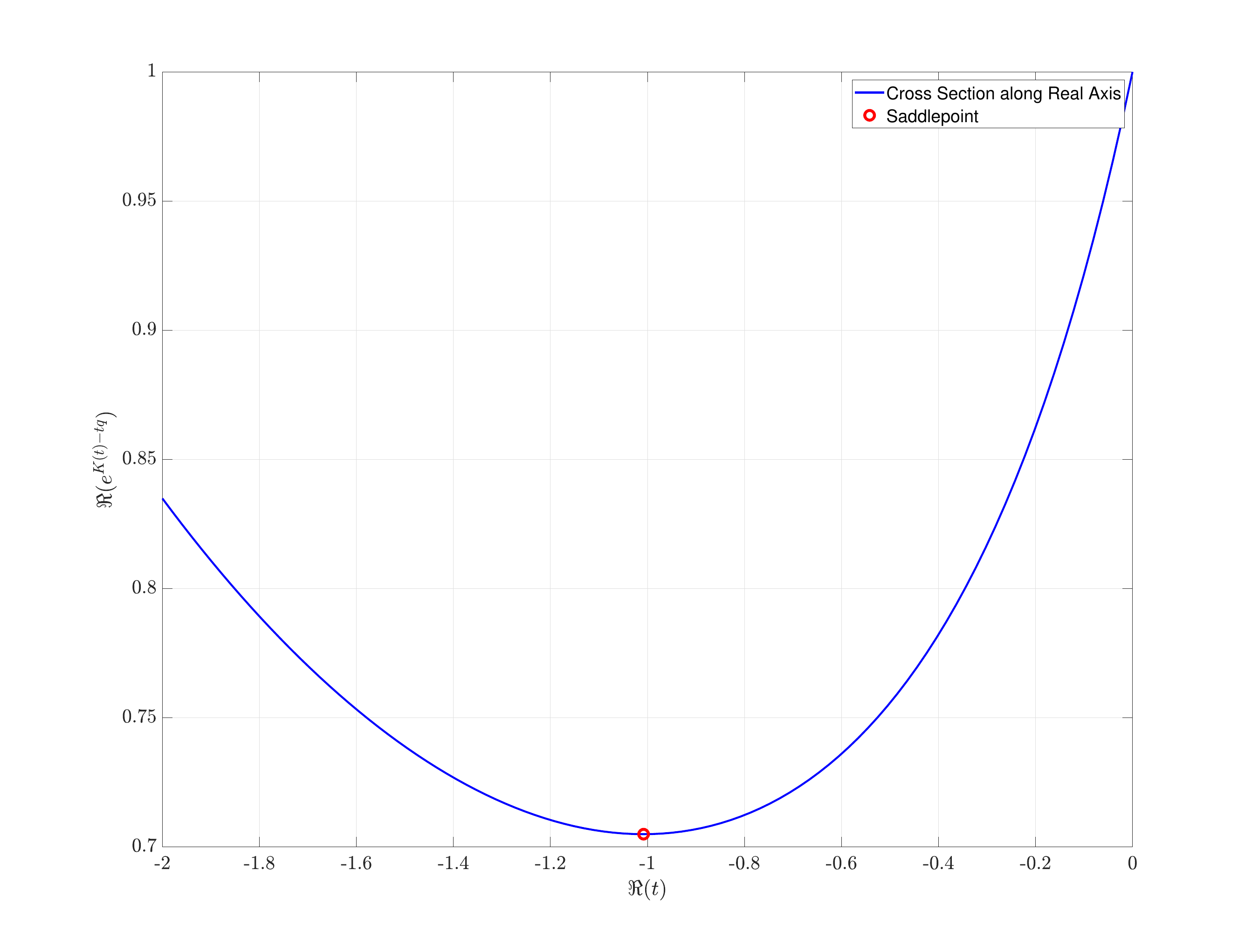}\label{subfig:csr}}
    \hfil
    \subfloat[Cross section along the imaginary axis]{\includegraphics[width=0.4\textwidth]{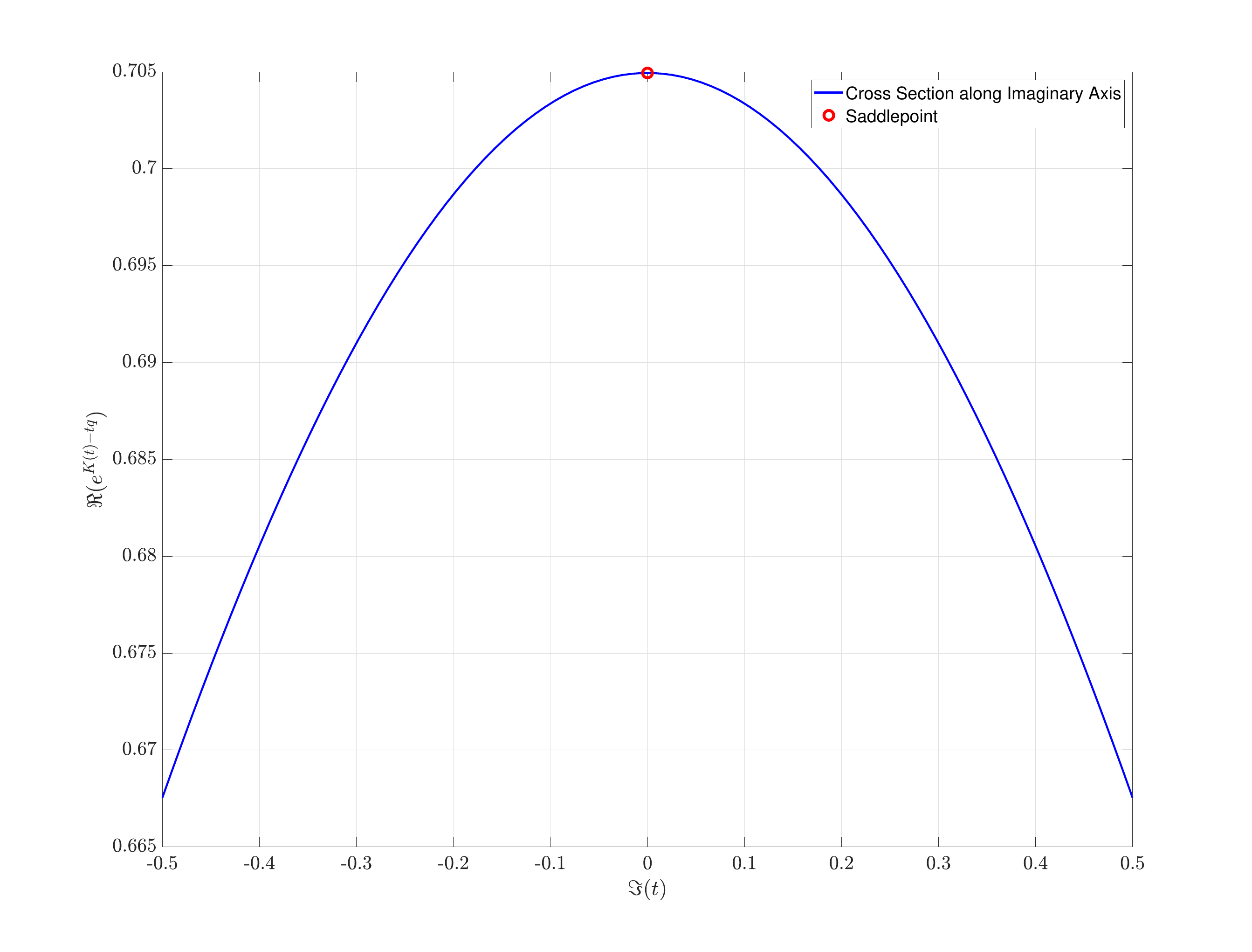}\label{subfig:csi}}
    \hfil
    \subfloat[PDF of $\rs{Q}$]{\includegraphics[width=0.4\textwidth]{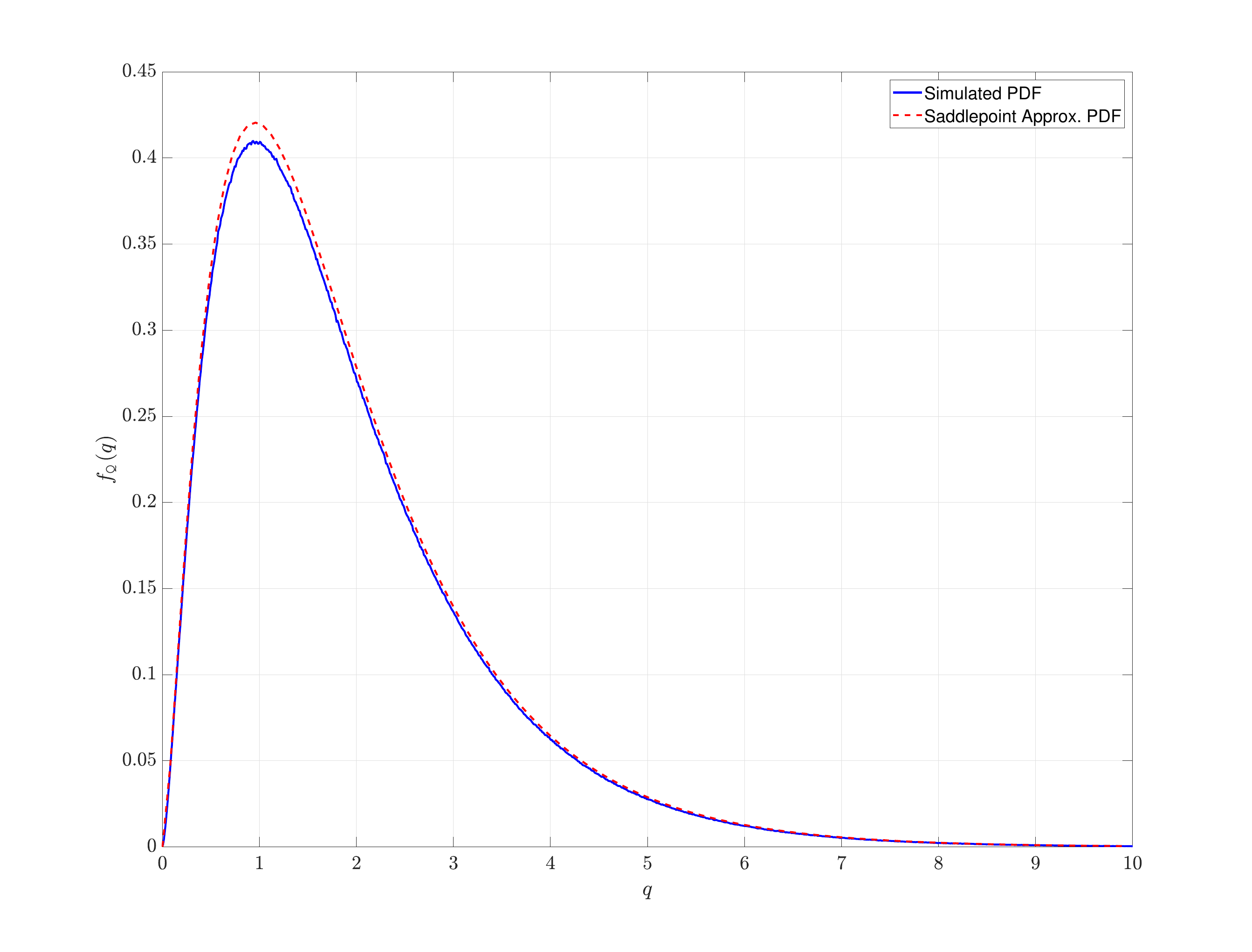}\label{subfig:pdf_saddle}}
    \caption{Saddlepoint approximation method. (a) illustrates the shape of the integrand in 3D. (b) shows that the saddlepoint is a local minimum along the real axis. (c) shows that the saddlepoint is a local maximum along the imaginary axis. (d) compares the PDF obtained using the saddlepoint approximation \eqref{eq:QF_SPA_PDF} with the simulated PDF.}
\end{figure}

For the CDF, the inversion formula becomes
\begin{equation}
\mathbb{P}[\rs{Q}<q]
=
\frac{1}{2\pi i}\int e^{K_{\rs{Q}}(t)-tq}\,\frac{dt}{t}.
\end{equation}
Unlike the density case, the factor $1/t$ introduces a singularity at $t=0$. Consequently, the steepest-descent and tilted-expansion derivations no longer lead to the same tail approximation \cite[p.~542]{2011-lipton-OxfordHandbookofCreditDerivatives}. The Lugannani--Rice formula \cite{1980-Lugannani-SaddlepointApproximationForTheDistributionOfTheSum} handles this singularity by separating the singular part before applying the saddlepoint expansion. For $\rs{Q}$, it gives
\begin{equation}
\label{eq:QF_SPA_CDF_LR}
F_{\rs{Q}}(q)\approx
\Phi(w)+\phi(w)\left(\frac{1}{w}-\frac{1}{v}\right),
\end{equation}
where
\[
w=\operatorname{sgn}(t_o)\sqrt{2\bigl(t_oq-K_{\rs{Q}}(t_o)\bigr)},
\qquad
v=t_o\sqrt{K_{\rs{Q}}''(t_o)}.
\]
At $t_o=0$, we have $w=v=0$, so the second term in \eqref{eq:QF_SPA_CDF_LR} is not defined. It can be shown \cite{1987-Daniels-TailProbabilityApproximations,2007-Butler-saddlepointAapproximationsWithApplicationst} that as $q\to\mu$, where $\mu=\mathbb{E}[\rs{Q}]$ and $t_o\to 0$, the limiting value is
\begin{equation}
F_{\rs{Q}}(\mu)\approx
\frac{1}{2}
+
\frac{K_{\rs{Q}}'''(0)}{6\sqrt{2\pi}\,K_{\rs{Q}}''(0)^{3/2}}.
\end{equation}

An alternative CDF approximation, proposed by Barndorff-Nielsen \cite{barndorff-nielsenApproximateIntervalProbabilities1990a} and used by Kuonen \cite{kuonenSaddlepointApproximationsDistributions1999} for quadratic forms in Gaussian random variables, is
\begin{equation}
\label{eq:QF_SPA_CDF_Bandroff}
F_{\rs{Q}}(q)\approx \Phi\!\left(
w+\frac{1}{w}\ln\frac{v}{w}
\right).
\end{equation}

For approximating the upper tail probability $1-F_{\rs{Q}}(q)$, the term $1-\Phi(\cdot)$ may lead to numerical instability when the tail probability is extremely small, as noted by Daniels \cite[p.~45]{1987-Daniels-TailProbabilityApproximations}. However, Kuonen reported that such instabilities were not encountered in his numerical experiments \cite{2001-Kuonen-ComputerintensiveStatisticalMethodsSaddlepointApproximations}. Chen and Lumley \cite{chenNumericalEvaluationMethods2019} further showed that the saddlepoint approximation reproduces the correct exponential tail behaviour of a linear combination of chi-square random variables in the sense of Berman's result \cite{1992-Berman-TheTailofTheConvolutionofDensitiesAndItsApplication}.

                Al-Naffouri et al. \cite{al-naffouriDistributionIndefiniteQuadratic2016} develop a saddlepoint approximation of complex forms and central real forms. Consider a complex quadratic form rewritten here for convenience \begin{equation*}
                    \rs{Q} = \rv{x}^H\dmt{A}\rv{x}+\real{\dv{b}^H\rv{x}}+c, \qquad \rv{x}\sim \mathcal{CN}_N(\dv{\mu},\dmt{\Sigma}) \tag{CQF}
                \end{equation*}
                Assume it is identically distributed to a linear combination of chi-squares as in \eqref{eq:lcOnlyChiQFCGRV} \begin{equation*}
                     \rs{Q}\stackrel{d}{=}\sum_{n=1}^N \lambda_n|\rs{z}_n+h_n|^2 
                \end{equation*} The authors implicitly invert the MGF of the form. At the head, the CDF is approximated by \begin{equation}
                    F_{\rs{Q}}(q)\approx \dfrac{e^{\tilde{s}(z_0)}}{\sqrt{2\pi|\tilde{s}''(z_0)|}} \label{eq:naffouri-complex-sp-head}
                \end{equation}
                where \[\tilde{s}(z)=qz-\ln(z)-\sum_{i=1}^N\left[\ln(1+\lambda_iz)+|h_i|^2-\dfrac{|h_i|^2}{1+\lambda_iz}\right]\]
                and \(z_0\) is the solution of the equation \(\tilde{s}'(z)=0\). Note that the first and second derivatives of \(\tilde{s}(z)\) are given by:
                \begin{align*}
                    \tilde{s}'(z)&=q-\dfrac{1}{z}-\sum_{i=1}^N\left[\dfrac{\lambda_i}{1+\lambda_iz}+\dfrac{|h_i|^2\lambda_i}{(1+\lambda_iz)^2}\right]\\
                    \tilde{s}''(z)&=\dfrac{1}{z^2}+\sum_{i=1}^N\left[\dfrac{\lambda_i^2}{(1+\lambda_iz)^2}+\dfrac{2|h_i|^2\lambda_i^2}{(1+\lambda_iz)^3}\right]
                \end{align*}
                At the tail, the approximation is given by \begin{equation}
                    F_{\rs{Q}}(q)\approx 1-\dfrac{e^{\tilde{S}(z_0)}}{\sqrt{2\pi|\tilde{S}''(z_0)|}} \label{eq:naffouri-complex-sp-tail}
                \end{equation} 
                where \[\tilde{S}(z)=qz-\ln(-z)-\sum_{i=1}^N\left[\ln(1+\lambda_iz)+|h_i|^2-\dfrac{|h_i|^2}{1+\lambda_iz}\right]\]
                and \(z_0\) is the solution of the equation \(\tilde{S}'(z)=0\). Note that the first and second derivatives of \(\tilde{s}(z)\) are given by
                \begin{align*}
                    \tilde{S}'(z)&=q-\dfrac{1}{z}-\sum_{i=1}^N\left[\dfrac{\lambda_i}{1+\lambda_iz}+\dfrac{|h_i|^2\lambda_i}{(1+\lambda_iz)^2}\right]\\
                    \tilde{S}''(z)&=\dfrac{1}{z^2}+\sum_{i=1}^N\left[\dfrac{\lambda_i^2}{(1+\lambda_iz)^2}+\dfrac{2|h_i|^2\lambda_i^2}{(1+\lambda_iz)^3}\right]
                \end{align*}

                 Consider now a central form
                \[\rs{Q}\stackrel{d}{=}\sum_{j=1}^N\lambda_j\rs{z}_j^2\]
                where \(\rs{z}_j\sim\mathcal{N}(0,1)\). The CDF of the form is approximated by \cite{al-naffouriDistributionIndefiniteQuadratic2016} \begin{equation}
                    F_{\rs{Q}}(q)\approx\dfrac{1}{\sqrt{2\pi}}\exp(\tilde{s}(z_0))\dfrac{1}{\sqrt{\tilde{s}''(z_0)}} \label{eq:naffouri-real-sp}
                \end{equation}
                where \[\tilde{s}(z)=qz-\ln(z)-\dfrac{1}{2}\sum_{j=1}^N\ln(1+2\lambda_jz)\] and \(z_0\) is the unique solution of the equation \(\tilde{s}'(z)=0\).

\section{Ratio of Quadratic Forms} \label{sec:ratioQF}
In this section, we cover the results related to the ratio of quadratic forms \eqref{eq:RQFGRV} and \eqref{eq:RQFCGRV}. First, we present the moments of the quadratic form ratio and their existence. Then, we show two algorithms that compute the moments. In the second part, we present some results concerning the PDF/CDF of the ratio of quadratic forms. 

\subsection{Moments of the Ratio of Quadratic Forms}
As we have seen before, moments of SQFs exist in closed-form expressions. This is not the case for ratios. Firstly, the very existence of moments is not guaranteed. Take, for example, the standard Cauchy distribution $$\rs{R}=\dfrac{\rs{x_1}}{\rs{x_2}}$$ where $\rs{x}_1$ and $\rs{x}_2$ are independent standard normal variables. Now this ratio can be rewritten as a ratio of quadratic forms in two variables $$\rs{R}=\dfrac{\rv{x}^T\dmt{A}\rv{x}}{\rv{x}^T\dmt{B}\rv{x}}, \qquad \rv{x}\sim \mathcal{N}_2(0,\dmt{I}_2)$$ with $$\dmt{A}=\begin{bmatrix}
            0 & \frac12\\
            \frac12 & 0
        \end{bmatrix},\quad \dmt{B}=\begin{bmatrix}
            0 & 0\\
            0 & 1
        \end{bmatrix}$$ It is well-known that this random variable does not admit any finite moment \footnote{In this case, $f_{\rs{R}}(r)=\dfrac{\pi}{1+r^2}$. Hence the functions $r^nf_{\rs{R}}(r)$ are not integrable over $\reals{}$ for any $n>1$.}. Regarding integral transforms, up to our best knowledge, the MGF (or any other integral transform) of a ``general'' ratio has not been studied yet.

         Bao and Kan \cite{baoMomentsRatiosQuadratic2013} provide the necessary and sufficient conditions for the existence of generalized fractional moments \footnote{Generalized fractional moments of a ratio $\rs{R}=\dfrac{\rs{Q}_1}{\rs{Q}_2}$ with $\rs{Q}_1=\rv{x}^T\dmt{A}\rv{x}$ and $\rs{Q}_2=\rv{x}^T\dmt{B}\rv{x}$ are $\expect{\dfrac{Q_1^p}{Q_2^q}}$ where $p$ and $q$ are real numbers.}, generalizing Roberts' work \cite{robertsExistenceMomentsRatios1995}. 
            
            The moments we are reporting here are positive integer moments, i.e., $\expect{\rs{R}^p}, p\in\mathbb{N}$. If the denominator is positive definite, then all the moments exist. If the denominator $\rv{x}^T\dmt{B}\rv{x}$ is positive semi-definite, then we have to eigendecompose the matrix $\dmt{B}$ sorting its eigenvalues $$\dmt{B}=\dmt{PD}\dmt{P}^T=\begin{bmatrix}
               \dmt{P}_1 & \dmt{P}_2 
            \end{bmatrix}\begin{bmatrix}
                \dmt{D_1} & \dmt{0}\\
                \dmt{0} & \dmt{0}
            \end{bmatrix}\begin{bmatrix}
                \dmt{P}_1^T\\
                \dmt{P}_2^T
            \end{bmatrix}=\dmt{P}_1\dmt{D_1}\dmt{P}_1^T$$ where $\dmt{P}\dmt{P}^T=\dmt{I}_N$, $\dmt{P}_1\in\reals{N\times\Rank{B}}$, $\dmt{D}_1\in\reals{\Rank{B}\times\Rank{B}}$, and $\Rank{B}=\rank(B)$. We consider then two cases. If $\dmt{P}_2^T\dmt{A}\dmt{P}_2\neq\dmt{0}$, then the moment $\expect{\rs{R}^p}$ exists if and only if $2p<\Rank{B}$. If $\dmt{P}_2^T\dmt{A}\dmt{P}_2\neq{0}$, then we should consider the matrix $\dmt{P}_1^T\dmt{A}\dmt{P}_2$. If $\dmt{P}_1^T\dmt{A}\dmt{P}_2\neq 0$, then $\expect{\rs{R}^p}$ exists if and only $p<\Rank{B}$. Ohterwise, the moment exists for every $p$. This is summarized in Figure \ref{fig:moment-existence-tree}.

            \begin{figure}[ht!]
                \centering
                \begin{tikzpicture}
    \draw (0,0) node[anchor=south] {$\dmt{B}\succeq \dmt{0}$} -- (0,-0.5);
    \draw (0,-0.5) -- (-2,-0.5);
    \draw[->] (-2,-0.5) -- (-2,-1) node[anchor=north,fill=white] {$\dmt{B}\succ \dmt{0}$} -- (-2,-2) node[anchor=north] {$\expect{\rs{R}^p}$ exists};
    \draw[->] (0,-0.5) -- (4,-0.5) -- (4,-1) node[anchor=north,fill=white] {$\det{B}= 0$} -- (4,-2) -- (2,-2) -- (2,-2.5) node[anchor=north,fill=white] {$\dmt{P}_2^T\dmt{A}\dmt{P}_2 \neq \dmt{0}$} -- (2,-3.5) node[anchor=north] {$\expect{\rs{R}^p}$ exists $\iff 2p<\Rank{B}$};
    \draw[->] (4,-2) -- (6,-2) -- (6,-2.5) node[anchor=north,fill=white] {$\dmt{P}_2^T\dmt{A}\dmt{P}_2= \dmt{0}$} -- (6,-4.5) -- (4,-4.5) -- (4,-5) node[anchor=north,fill=white] {$\dmt{P}_1^T\dmt{A}\dmt{P}_2 \neq \dmt{0}$} -- (4,-6) node[anchor=north] {$\expect{\rs{R}^p}$ exists $\iff p<\Rank{B}$};
    \draw[->] (6,-4.5) -- (8,-4.5) -- (8,-5) node[anchor=north,fill=white] {$\dmt{P}_1^T\dmt{A}\dmt{P}_2 = \dmt{0}$} -- (8,-6) node[anchor=north] {$\expect{\rs{R}^p}$ exists};
\end{tikzpicture}
                \caption{Conditions for the existence of moments of ratios with positive semi-definite denominators}
                \label{fig:moment-existence-tree}
            \end{figure}

            Suppose that support of the ratio is an interval of lower and upper (possibly infinite) limits $L$ and $U$, respectively. Hence, the moments can be linked to the CDF of an indefinite form via integration by parts \begin{align*}
                \expect{\rs{R}^p}&=\int_L^U r^p \left[\dfrac{d}{dr}F_{\rs{R}}(r)\right]dr\\
                &=U^p-p\int_L^Rr^{p-1}F_{\rs{R}}(r)dr
            \end{align*}
            Bao and Kan \cite{baoMomentsRatiosQuadratic2013} report and produce a number of formulae for the evaluation of the moments. However, in this section, we are only concerned with the relation with indefinite forms.

            Bao and Kan \cite{baoMomentsRatiosQuadratic2013} generalize the work of Hillier et al. \cite{hillierGeneratingFunctionsMoments2014} to provide an infinite series representation for the moment of a ratio of quadratic forms with a positive semi-definite denominator. 

            \noindent Consider a ratio of quadratic forms of Gaussian random variables \eqref{eq:RQFGRV} \begin{equation*}
                \rs{R}=\dfrac{\rv{x}^T\dmt{A}\rv{x}}{\rv{x}^T\dmt{B}\rv{x}}, \qquad \rv{x}\sim \mathcal{N}_N(\dv{\mu},\dmt{I}_N)\footnote{\text{If the covariance matrix $\dmt{\Sigma}$ is non-singular, replace $\dmt{A}$ and $\dmt{B}$ by $\dmt{\Sigma}^{-\frac12}\dmt{A}\dmt{\Sigma}^{-\frac12}$} \text{ and $\dmt{\Sigma}^{-\frac12}\dmt{B}\dmt{\Sigma}^{-\frac12}$ respectively.}} 
            \end{equation*}
            Assume that the denominator is positive semi-definite and that the $p^{\text{th}}$ moment exists (refer to Figure \ref{fig:moment-existence-tree}). Then the $p^{\text{th}}$ moment is given by \begin{equation}
                \expect{\rs{R}^p}=\dfrac{p!\Gamma(N/2)\beta^p}{\Gamma(N/2+p)}\sum_{j=0}^\infty \dfrac{(p)_j}{\Gamma(N/2+j)}h_{p,j}(\dmt{A},\hat{\dmt{B}})
            \end{equation}
            where $\beta$ is an arbitrary parameter in $\left]0,\dfrac{2}{b_{\text{max}}}\right[$, $b_{\text{max}}$ is the maximum eigenvalue of $\dmt{B}$, $\hat{\dmt{B}}=\dmt{I}_N-\beta\dmt{B}$, $(a)_k=a(a+1) \cdots(a+k-1)$ is the Pochammer symbol, and the coefficients $h_{i,j}(\dmt{A}_1,\dmt{A}_2)$ are calculated recursively according to Algorithm \eqref{alg:Bao-hij}. Note that the recursion is over the sum of the indices.

            \begin{algorithm}
                    \caption{Evaluation of $h_{i,j}(\dmt{A}_1,\dmt{A}_2)$ }\label{alg:Bao-hij}
                    \begin{algorithmic}[1]
                        \Require $\dmt{A}_1,\dmt{A}_2\in\reals{N\times N},\dv{\mu}\in\reals{N},p\in\mathbb{N}^*,j_{\text{max}}$
                        \Ensure $h_{i,j}(\dmt{A}_1,\dmt{A}_2)$
                        \State $h_{0,0} \gets 0$
                        \State $\dv{g}_{0,0} \gets \dv{0}_{N\times 1}$
                        \State $\dmt{G}_{0,0} \gets \dmt{0}_{N \times N}$
                        \For{$k=1,2,\ldots,p+j_{\text{max}}$}
                            \For{$i=1,2,\ldots,k$}
                                \State $\dmt{G}_{i,k-i} \gets \dmt{A}_1(h_{i-1,k-i}\dmt{I}_N+\dmt{G}_{i-1,k-i})+\dmt{A}_2(h_{i,k-1-i}\dmt{I}_N+\dmt{G}_{i,k-1-i})$
                                \State $\dv{g}_{i,k-i} \gets (\dmt{G}_{i,k-i}-\dmt{G}_{i,k-1-i})\dv{\mu}-h_{i,k-1-i}\dv{\mu}+\dmt{A}_1\dv{g}_{i-1,k-i}+\dmt{A}_2\dv{g}_{i,k-1-i}$
                                \State $h_{i,k-i} \gets \dfrac{\operatorname{tr}(\dmt{G}_{i,k-i})+\dv{\mu}^T\dv{g}_{i,k-i}}{2k}$
                            \EndFor
                        \EndFor
                    \end{algorithmic}
                \end{algorithm}

                Based on Magnus's work \cite{magnusExactMomentsRatio1986}, Bao and Kan \cite{baoMomentsRatiosQuadratic2013} provide an integral formula for a ratio of quadratic forms with a positive semi-definite denominator. 

                Consider a ratio of quadratic forms of Gaussian random variables \eqref{eq:RQFGRV} \begin{equation*}
                    \rs{R}=\dfrac{\rv{x}^T\dmt{A}\rv{x}}{\rv{x}^T\dmt{B}\rv{x}}, \qquad \rv{x}\sim \mathcal{N}_N(\dv{\mu},\dmt{\Sigma}) 
                \end{equation*}
                Assume that the denominator is positive semi-definite, and that the $p^{\text{th}}$ moment exists (refer to Figure \ref{fig:moment-existence-tree}). Furthermore, assume that $\operatorname{Cov}(\rv{x})=\dmt{I}_N$ \footnote{If the covariance matrix $\dmt{\Sigma}$ is non-singular, replace $\dmt{A}$ and $\dmt{B}$ by $\dmt{\Sigma}^{-\frac12}\dmt{A}\dmt{\Sigma}^{-\frac12}$ and $\dmt{\Sigma}^{-\frac12}\dmt{B}\dmt{\Sigma}^{-\frac12}$ respectively.}. Then the $p^{\text{th}}$ moment is given by \[\mathbb{E}[R^p]=\dfrac{1}{\Gamma(p)}\int_0^\infty t^{p-1}\phi(t)\mathbb{E}[(\rv{w}^T\dmt{C}\rv{w})^p]dt\]
                where $\phi(t)=|\dmt{I}_N+2t\dmt{B}|^{-1/2}\exp\left(\dfrac{1}{2}\dv{\mu}^T[(\dmt{I}_N+2t\dmt{B})^{-1}-\dmt{I}_N]\dv{\mu}\right)$, the  expectation in the right-hand side is over $ \rv{w}\sim \mathcal{N}(\dmt{\dmt{L}^T\dv{\mu}},\dmt{I}_N)$, where $\dmt{C}=\dmt{L}^T\dmt{AL}$, and where $\dmt{L}$ is defined by $\dmt{L}\dmt{L}^T=(\dmt{I}_N+2t\dmt{B})^{-1}$. One can choose $\dmt{L}$ to be for example the square root of $(\dmt{I}_N+2t\dmt{B})^{-1}$. Note that the matrices $\dmt{L}$ and $\dmt{C}$ are dependent on the integration variable $t$. To evaluate the moment, we can use the usual method, which incorporates diagonalizing $\dmt{C}$ to obtain its eigenvalues: $\lambda_1,\lambda_2,\ldots,\lambda_N$, and associated non-centrality parameters $h_1,h_2,\ldots,h_N$, both of which will be functions of $t$. Bao and Kan use Brown's \cite{brownDistributionPositiveDefinite1986} algorithm to accelerate the recursive algorithm of the moment of a single form.


                The moments $\expect{(\rv{w}^T\dmt{C}\rv{w})^p}=2^p p! d_p$, where the sequence $d_p$ is recursively calculated using Algorithm \eqref{alg:Bao-dp}.
                \begin{algorithm}
                    \caption{Evaluation of $d_p=\expect{(\rv{w}^T\dmt{C}\rv{w})^p}/(2^p p!)$ }\label{alg:Bao-dp}
                    \begin{algorithmic}[1]
                        \Require $\dmt{A}\in\reals{N\times N},\dmt{B}>\dmt{0}\in\reals{N\times N},\dv{\mu}\in\reals{N},p\in\mathbb{N}^*,t>0$
                        \Ensure $d_p(t)$
                        \State $\dmt{L} \gets (\dmt{I}_N+2t\dmt{B})^{-\frac12}$
                        \State $\dmt{C} \gets \dmt{L}\dmt{AL}$
                        \State $\tilde{\dv{\mu}}\gets \dmt{L}\dv{\mu}$
                        \State $[\dmt{Q}^T,\dmt{\Lambda}] \gets \operatorname{eig}(\dmt{C})$
                        \For{$n=1,2,\ldots,N$}
                            \State $\lambda_n \gets \Lambda_{nn}$
                        \EndFor
                        \State $[h_1,\ldots,h_N]^T \gets \dmt{Q}^T\tilde{\dv{\mu}}$
                        \State $d_0 \gets 1$
                        \State $u_{n,0} \gets 0$
                        \State $v_{n,0} \gets 0$
                        \For{$k=1,2,\ldots,p$}
                            \State $d_k\gets 0$
                            \For{$n=1,2,\ldots,N$}
                                \State $u_{n,k} \gets \lambda_n(d_{k-1}+u_{n,k-1})$
                                \State $v_{n,k} \gets \lambda_n v_{n,k-1} + h_n u_{n,k}$
                                \State $d_k \gets d_k + u_{n,k} + v_{n,k}$
                            \EndFor
                            \State $d_k\gets d_k/(2k)$
                        \EndFor
                    \end{algorithmic}
                \end{algorithm}        

\subsection{PDF/CDF of QF Ratio}            
In the remaining part of this section, we report some results from the literature that deal with the PDF/CDF problem for the ratio of quadratic forms. First, we show the relation between ratios and indefinite quadratic forms, which can be exploit to obtain the PDF/CDF of the ratio. Then, we proceed to present some results from the literature. 

\subsubsection{Relation Between Ratios and Indefinite Forms}
Consider a ratio of quadratic forms in Gaussian random variables \eqref{eq:RQFGRV} with a non-zero positive semi-definite denominator \[\rs{R}=\dfrac{\rv{x}^T\dmt{A}\rv{x}}{\rv{x}^T\dmt{B}\rv{x}}, \qquad \rv{x}\sim \mathcal{N}_N(\dv{\mu},\dmt{\Sigma}), \qquad \dmt{B} \succcurlyeq \dmt{0}\] Note that the following inequalities are equivalent $$\rs{R}\leq r \iff \rv{x}^T(\dmt{A}-r\dmt{B})\rv{x}\leq 0$$
This is possible because the denominator is positive semi-definite. Note that the set in which the denominator vanishes, i.e., in which the ratio is not well-defined, is a null set. Therefore, the CDF, which is $F_{\rs{R}}(r)=\mathbb{P}[\rs{R}\leq r]$ is the CDF of the indefinite form $\rv{x}^T(\dmt{A}-r\dmt{B})\rv{x}$ evaluated at zero. In addition, in many cases of practical interest, the denominator is positive definite (for instance, \cite{maoModelQuantifyingUncertainty2012}).

At the first glance, the relationship between ratios and indefinite forms seems to be straightforward. If a formula can be used to calculate the CDF of indefinite forms at zero, then it can be utilized to evaluate the CDF of a ratio. However, the situation is a bit more complicated.  Firstly, from a computational perspective, if we are evaluating the CDF of an indefinite form at multiple points, we ought to diagonalize the form first, and then apply the formula at hand multiple times. However, in the case of a ratio, every point produces a new indefinite form, hence the need to rediagonalize every single time. Secondly, this does not seem to be helpful in obtaining a meaningful PDF, i.e., after differentiation with respect to $r$. The PDF of a ratio is only available in a handful of simple cases \cite{nadarajahNoteRatioIndependent2018,nadimiRatioIndependentComplex2018,yanCircularlysymmetricComplexNormal2016,maoModelQuantifyingUncertainty2012}. 
            
\subsubsection{Ratio of Moduli of Complex Random Variables.}
Consider two complex random variables, $\rs{x}_1$ and $\rs{x}_2$. The modulus ratio $\rs{Z}=\frac{|\rs{x}_1|}{|\rs{x}_2|}$  has been extensively studied, and its PDF is available in the literature. Let ($\rv{x}=[\bar{\rs{x}}_1,\bar{\rs{x}}]^H \sim \mathcal{CN}(\dv{\mu},\dmt{\Sigma})$). Then the squared-modulus ratio
\[
    \rs{R}=\frac{|\rs{x}_1|^2}{|\rs{x}_2|^2}
\]
is a particular quadratic-form ratio of \eqref{eq:RQFGRV} with
\[
    \dmt{A}=\begin{bmatrix}1&0\\0&0\end{bmatrix},\qquad
    \dmt{B}=\begin{bmatrix}0&0\\0&1\end{bmatrix}.
\]
Moreover, the PDF of ($\rs{R}$) follows directly from that of ($\rs{Z}$) through a change of variables:
\[
    f_{\rs{R}}(r)=\frac{1}{2\sqrt{r}},f_{\rs{Z}}(\sqrt{r}).
\]
Existing works provide multiple closed-form and semi-closed-form expressions for ($f_{\rs{Z}}$) under various assumptions on ($\dv{\mu}$ ) and ($\dmt{\Sigma}$). In general, these derivations rely on suitable variable transformations, algebraic simplification, and marginalization.

Assume ($\dv{\mu}=[\bar{\mu}_1,\bar{\mu}_2]^H$) and
\[
    \dmt{\Sigma}=\begin{bmatrix}
    \sigma_1^2 & \rho\sigma_1\sigma_2\\
    \bar{\rho}\sigma_1\sigma_2 & \sigma_2^2
                    \end{bmatrix}.
\]
Nadimi \cite{nadimiRatioIndependentComplex2018} and Nadarajah–Kwong \cite{nadarajahNoteRatioIndependent2018} derive the PDF of the modulus ratio for the independent case ($\rho=0$). In particular, the PDF of the associated quadratic-form ratio implied by Nadarajah–Kwong’s result can be written as
\begin{equation}
    \begin{aligned}
    f_{\rs{R}}(r)
    &=\frac{\exp\left(-\frac{\mu_1^2}{\sigma_1^2}-\frac{\mu_2^2}{\sigma_2^2}\right)}
    {2\sigma_1^2\sigma_2^2\left(\dfrac{r}{\sigma_1^2}+\dfrac{1}{\sigma_2^2}\right)^2},
    \Psi_2\left(2,1,1,\frac{r\mu_1^2}{\sigma_1^2 r+\frac{\sigma_1^4}{\sigma_2^2}},\frac{\mu_2^2}{r\frac{\sigma_2^4}{\sigma_1^2}+\sigma_2^2}\right),
    \end{aligned}
    \label{eq:Nadarjah-Kwong_PDF}
    \end{equation}
    where ($\Psi_2$) denotes the Horn confluent hypergeometric series
    \[
        \Psi_2(a,b,c;x,y)=\sum_{n=0}^{\infty}\sum_{k=0}^{\infty}\frac{(a)_{n+k}}{(b)_n(c)_k,n!,k!},x^n y^k.
    \]
    Gu \cite{guQuotientCentralizedNoncentralized2020} obtains the modulus-ratio PDF for ($\mu_1=0$) and ($\rho=0$), which can be recovered as a special case of \eqref{eq:Nadarjah-Kwong_PDF}.
    For correlated complex Gaussian variables with ($\mu_1=\mu_2=0$), Yan \cite{yanCircularlysymmetricComplexNormal2016} derives 
    \begin{equation}
        f_{\rs{R}}(r)=\frac{\sigma_1^2\sigma_2^2(1-|\rho|^2)(\sigma_2^2+r\sigma_1^2)}
        {\left[(\sigma_2^2+r\sigma_1^2)-4r|\rho|^2\sigma_1^2\sigma_2^2\right]^{3/2}}.
        \label{eq:Yan_PDF}
    \end{equation}

 \subsection{Ratios of Central Complex Forms} Al-Naffouri et al. \cite{al-naffouriDistributionIndefiniteQuadratic2016} consider the case of the ratio of two central complex quadratic forms, and gave a formula for the CDF based on their one concerning the central complex case. Let \[\rs{R}=\dfrac{\epsilon_1+\rv{x}^H\dmt{B}_1\rv{x}}{\epsilon_2+\rv{x}^H\dmt{B}_2\rv{x}}\]
where $\dmt{B}_1$ is Hermitian and $\dmt{B}_2$ is positive semi-definite. Noting that the inequality $\rs{R} \leq r$ is equivalent to
\[\rv{x}^H\left(\dmt{B}_1-r\dmt{B}_2\right)\rv{x}\leq -\epsilon_1+r\epsilon_2\]
Then this problem is equivalent to a central problem for each $y$, i.e.,
\begin{equation}F_{\rs{R}}(r)=F_{\rs{Q}_r}(\epsilon_2 r-\epsilon_1)\label{Al-naffouri Ratios}\end{equation}
where $\rs{Q}_r=\rv{x}^T(\dmt{B}_1-r\dmt{B}_2)\rv{x}$.
        
However, it must be noticed that this transformation requires the calculation of the CDF of a different quadratic form for each \(y\), hence diagonalization at each value, i.e., the eigenvalues of the form are functions of \(y\).

\subsubsection{Ratios of Sums of Squared Moduli} \label{par:SimonAlouini}  

Simon and Alouini \cite{simonDifferenceTwoChisquare2001} provide three expressions of the CDF at zero of an indefinite form, and that can be considered the CDF of a ratio of sums of squared moduli.            
Let $\rs{R}=\dfrac{\sum_{k=1}^{L_1}|\rs{x}_k|^2}{\sum_{k=1}^{L_2}|\rs{y}_k|^2}$, $\{\rs{x}_k\}$, $\{\rs{y}_k\}$ be mutually independent Gaussian random variables with means \(\bar{x}_k,\bar{y}_k\) and variances $\sigma_1^2=\mathbb{E}\{|\rs{x}_k-\bar{x}_k|^2\}$, $\sigma_2^2=\mathbb{E}\{|\rs{y}_k-\bar{y}_k|^2\}$ independent of $k$. Using our notation, this accounts for the ratio of complex quadratic forms \(\rs{R}=\dfrac{\rv{x}^H\dmt{I}_{L_1}\rv{x}}{\rv{x}^H\dmt{I}_{L_2}\rv{x}}\), and $\rv{x}\sim\mathcal{N}(\dv{\mu},\dmt{\Sigma})$ with \(\dmt{\Sigma}=\left[\begin{array}{c|c}
\sigma_1^2\dmt{I}_{L_1}       &\dmt{0}  \\ \hline
\dmt{0}               &\sigma_2^2\dmt{I}_{L_2} 
\end{array}\right]\) (\(L_1+L_2=N\)) and \(\dv{\mu}=[\bar{x}_1,\ldots,\bar{x}_{L_1},\bar{y}_1,\ldots,\bar{y}_{L_2}]^T\). The authors provide three expressions for the corresponding indefinite form at 0 which can be used to derive the CDF of the ratio. Two formulae are closed-form expressions and the third one is an integral.
            
One of the expressions is given by
\begin{equation}
\resizebox{\textwidth}{!}{$
\begin{aligned}
F_{\rs{R}}(r)=&Q_1(a,b)-\exp\left[-\frac12(a^2+b^2)\right]I_0(ab)\nonumber\\ &+\dfrac{I_0(ab)\exp\left[-\frac12(a^2+b^2)\right]}{(1+\eta)^{L_1+L_2-1}}\sum_{k=0}^{L_2-1}\binom{L_1+L_2-1}{k}\eta^k\nonumber\\
&+\dfrac{\exp\left[-\frac12(a^2+b^2)\right]}{(1+\eta)^{L_1+L_2-1}}\left[\sum_{n=1}^{L_2-1}I_n(ab)\sum_{k=0}^{L_2-1-n}\binom{L_1+L_2-1}{k}\eta^k\left(\dfrac{b}{a}\right)^n\right.\nonumber\\
&\left.-\sum_{n=1}^{L_1-1}I_n(ab)\sum_{k=0}^{L_1-1-n}\binom{L_1+L_2-1}{k}\eta^{L_1+L_2-1-k}\left(\dfrac{a}{b}\right)^n\right], \quad r>0
\end{aligned}
$}
\label{eq:SimonAlouini_CDF}
\end{equation}
where \begin{align*}
    a&=\sqrt{\dfrac{2r\sum_{k=1}^{L_2}|\bar{y}_k|^2}{\sigma_1^2+r\sigma_2^2}}\\
    b&=\sqrt{\dfrac{2\sum_{k=1}^{L_1}|\bar{x}_k|^2}{\sigma_1^2+r\sigma_2^2}}\\
    \eta&=\dfrac{\sigma_1^2}{r\sigma_2^2}
\end{align*}
The derivation follows the proof provided in \cite[App. B]{proakisDigitalCommunications2008}. The following summarizes the proof in \cite[App.A]{simonDifferenceTwoChisquare2001} referring to the equation numbers.  The proof starts by defining the decision variable (D) in (1) and the probability of interest ($P=\Pr(D<0)$ ) in (3). Using the independence of the terms ($|X_k|^2$) and ($|Y_k|^2$), the characteristic function of (D) is written in (23). After introducing the auxiliary parameters ($\nu_1,\nu_2,\xi_1,\xi_2$) in (24)–(25) and ($\alpha_1,\alpha_2$) in (27), the characteristic function is rewritten in the structured form (28). Applying the conformal transformation (29) converts the inversion integral into the contour representation (31) with integrand (32) and constants defined in (33). Introducing (a) and (b) in (34) and expanding binomial term in (35) leads to the finite sum representation (36)–(37). The resulting contour integrals are evaluated in the three cases (38)–(40) using Bessel and Marcum-(Q) function identities. Substituting these results into (37) gives (41), and combining it with the prefactor in (31) and identity (42) yields the final closed-form expression, equation (4), with parameters (a), (b), and ($\eta$) defined in (5)–(7).

Further, Simon and Alouini \cite{simonDifferenceTwoChisquare2001} provide an integral expression of the CDF in which the integrand is an elementary function. The CDF is given by $$\begin{aligned}
    F_{\rs{R}}(r)&=\dfrac{\eta^{L_2}}{2\pi (1+\eta)^{L_1+L_2-1}}\int_{-\pi}^\pi \left[\dfrac{f(L_2,L_1;\zeta,\eta;\theta}{1+2\zeta\sin\theta+\zeta^2}\right]\\&\times\exp\left(-\dfrac{b^2}{2}[1+2\zeta\sin\theta+\zeta^2]\right)d\theta
    \end{aligned}$$
    where $\zeta:=\dfrac{a}{b}$ is assuemd to be less than $1$, and $$\begin{aligned}
    f(L_1,L_2;\zeta,\eta;\theta)&=\sum_{\ell=1}^{L_1}\binom{L_1+L_2-1}{L_1-\ell}\eta^{-\ell}\zeta^{-\ell+1}\\
    &\times\left[\cos\left((\ell-1)\left(\theta+\dfrac{\pi}{2}\right)\right)-\zeta\cos\left(\ell\left(\theta+\dfrac{\pi}{2}\right)\right)\right]\\
    &+\sum_{\ell=1}^{L_2}\binom{L_1+L_2-1}{L_1-\ell}\eta^{\ell-1}\zeta^\ell\\
    &\times\left[\cos(\ell\left(\theta+\dfrac{\pi}{2}\right)-\zeta\cos\left((\ell-1)\left(\theta+\dfrac{\pi}{2}\right)\right)\right]    \end{aligned}$$

Mao and Todd \cite{maoModelQuantifyingUncertainty2012} consider the central case of $\dmt{A}=\begin{bmatrix}
\dmt{I}_{N/2} &\dmt{0}\\
\dmt{0} &\dmt{0}
\end{bmatrix}$
and $\dmt{B}=\begin{bmatrix}
\dmt{0} &\dmt{0}\\
\dmt{0} &\dmt{I}_{N/2}
\end{bmatrix}$ with the covariance matrix \\
\(\dmt{\Sigma}=\begin{bmatrix}
\sigma_{11}^2\dmt{I}_{N/2} &\sigma_{12}\dmt{I}_{N/2}\\
\sigma_{12}\dmt{I}_{N/2} &\sigma_{22}^2\dmt{I}_{N/2}
\end{bmatrix}\). Equivalently, \[\rs{R}=\dfrac{\sum_{k=1}^{N}|\rs{x}_k|^2}{\sum_{k=1}^{N}|\rs{y}_k|^2}\] where \([\rs{x}_k,\rs{y}_k]^T\sim\mathcal{CN}\left(\dv{0},\begin{bmatrix}
                \sigma_{11} &\sigma_{12}\\
                \sigma_{12} &\sigma_{22}
            \end{bmatrix}\right)\).

\subsubsection{Ratios of Independent Real Quadratic Forms} A frequently studied case is the ratio of quadratic forms of independent variables. It can be perceived as a ratio \(\rs{R}=\dfrac{\rv{x}^T\dmt{A}\rv{x}}{\rv{x}^T\dmt{B}\rv{x}}\) with 
\[
    \dmt{A}=
\begin{bmatrix}
    \dmt{A}_1 &\dmt{0}\\
    \dmt{0} &\dmt{0}    
\end{bmatrix} \text{ and } 
\dmt{B}=
\begin{bmatrix}
\dmt{0} &\dmt{0}\\
\dmt{0} &\dmt{A}_2    
\end{bmatrix}
\]where $\dmt{A}$ and $\dmt{B}$ are conformally partitioned: \(\dmt{A}_1\in \mathbb{R}^{m\times m},\) and \(\dmt{A}_2\in \mathbb{R}^{(N-m)\times (N-m)}\).
            
Suppose \(\rv{x}\) is central. By simultaneously diagonalizing the numerator and denominator such that they are linear combinations of independent (different) chi-square variables, we can write:\[R=\dfrac{\sum_{i=1}^m \lambda_i\rs{Z}_i}{\sum_{i=m+1}^N \lambda_i\rs{Z}_i}\] where $\rs{Z}_i\sim \chi_1^2$

To derive the CDF, start from the ratio ($\rs{R}=\rs{Q}_1/\rs{Q}_2$) (with ($\rs{Q}_2>0$))  and rewrite the CDF as a sign probability:
\[
F_R(c)=\Pr(R<r)=\Pr(\rs{Q}_1-r\rs{Q}_2<0).
\]
Now collect the terms in ($\rs{Q}_1-r\rs{Q}_2$) with positive coefficients into a nonnegative random sum ($\rs{Y}^+$) and the terms with negative coefficients into another nonnegative sum ($\rs{Z}^-$), so that ($\rs{Q}_1-c\rs{Q}_2 = \rs{Y}^+-\rs{Z}^-$). After an optional common scaling by ($\theta$) (to stabilize the series), define ($\rs{Y}=\rs{Y}^+/\theta$) and ($\rs{Z}=\rs{Z}^-/\theta$); then ($F_\rs{R}(r)=\Pr(\rs{Y} < \rs{Z})$), which can be written as the double integral
\[
\Pr(\rs{Y}< \rs{Z})=\int_0^\infty\Big(\int_0^z h_1(y) dy\Big)h_2(z)dz,
\]
where ($h_1,h_2$) are the densities of ($\rs{Y}$) and ($\rs{Z}$).  The key step is that a finite sum of independent gamma variables has a density that can be expanded as a convergent series of gamma-kernel terms (polynomial in ($y$) or ($z$) times an exponential), so write
\[
h_1(y)=\sum_{\nu\ge0}K_v,y^{p_1+\nu-1}e^{-y},\qquad
h_2(z)=\sum_{w\ge0}K_w^{\prime\prime},z^{p_2+w-1}e^{-z},
\]
with coefficients ($K_\nu,K_w^{\prime\prime}$) determined by the gamma-sum parameters. Substituting these series into the double integral, the inner integral ($\int_0^z y^{p_1+v-1}e^{-y}dy$) is an incomplete-gamma term, which is expanded using a standard identity into a finite/infinite sum indexed by ($j$); then the remaining ($z$)-integral becomes a gamma integral and can be evaluated term-by-term. Collecting the resulting factors produces the triple-series in indices ($(v,w,j)$) with gamma-function terms—this collected expression is \cite{mathaiQuadraticFormsRandom1992} . 

\begin{equation}
F_{\rs{R}}(r)=\sum_{\nu=0}^\infty \sum_{w=0}^\infty \sum_{j=0}^\infty K_\nu K'_w\dfrac{\Gamma(\frac{N}{2}+\nu+w+j)2^{-(N/2+\nu+w+j)}}{(m/2+\nu)_{j+1}}
            \label{eq:MathaiRatios_CDF}
\end{equation}
where \begin{align*}
K_\nu&=\sum_{\nu_1+\ldots+\nu_m=\nu}\prod_{j=1}^m \mu_j^{-1/2}\left(\dfrac12\right)_{\nu_j}\dfrac{\gamma_j^{\nu_i-\nu_j}}{\nu_j!\Gamma(m/2+\nu)}\\
K_w&=\sum_{w_{m}+\ldots+w_N=w}\prod_{j=m+1}^N \mu_j^{-1/2}\left(\dfrac12\right)_{w_j}\dfrac{\gamma_j^{w_j}}{w_j!\Gamma((N-m)/2+w)}\\
\gamma_j&=\dfrac{\mu_j-1}{\mu_j}\\
\mu_j&=\begin{cases}\dfrac{\lambda_j}{\theta} &\text{for }j=1,\ldots,m\\
\dfrac{r\lambda_j}{\theta} &\text{for }j=m+1,\ldots,N
\end{cases}
\end{align*}

and \(\theta\) is an arbitrary constant. The Pochammer's symbol is defined as follows:

$$
\left(\frac{m}{2}+\nu\right)_{j+1}= \left(\frac{m}{2}+\nu\right) \left(\frac{m}{2}+\nu+1\right) \left(\frac{m}{2}+\nu+2\right)\cdots \left(\frac{m}{2}+\nu+j\right)
$$

Gardini \cite{gardiniMellinTransformManage2022} considers the ratio of two independent positive definite quadratic forms. After diagonalization, one ends up with \(\rs{R}\sim\dfrac{\sum_{i=1}^{N_1}\lambda_{1,i}\chi_1^2(h_{1,i}^2)}{\sum_{j=1}^{N_2}\lambda_{2,i}\chi_1^2(h_{2,i}^2)}\), \(\lambda_{1,i}>0,\lambda_{2,i}>0\), where all the chi-squares are independent.
The PDF is given by
 \[
    f_\rs{R}(r)=\sum_{k=0}^\infty \sum_{j=0}^\infty c_{1,k}c_{2,j}\left(\dfrac{\beta_1}{\beta_2}\right)^{\alpha_1+k}\dfrac{r^{\alpha_1+k-1}}{B(\alpha_1+k,\alpha_2+j)}\left(1+\dfrac{\beta_2 r}{\beta_1}\right)^{-(\alpha_1+\alpha_2+k+j)}
\]
where: 
\begin{align*}
\alpha_1&=N_1/2 \\\alpha_2&=N_2/2\\
c_{1,0}&=\exp\left(-\dfrac{1}{2}\sum_{j=1}^{N_1}h_{1,j}^2\right)\prod_{j=1}^{N_1}(\beta_1/\lambda_{1,j})^{1/2} \\c_{2,0}&=\exp\left(-\dfrac{1}{2}\sum_{j=1}^{N_2}h_{2,j}^2\right)\prod_{j=1}^{N_2}(\beta_2/\lambda_{2,j})^{1/2}\\
c_{1,k}&=(2k)^{-1}\sum_{r=0}^{k-1}d_{1,k-r}c_{1,r} \\c_{2,k}&=(2k)^{-1}\sum_{r=0}^{k-1}d_{2,k-r}c_{2,r}\\
d_{1,k}&=\sum_{j=1}^{N_1}(1-\beta_1/\lambda_{1,j})^k+k\beta_1\sum_{j=1}^{N_1}(h_{1,j}^2/\lambda_{1,j})(1-\beta_1/\lambda_{1,j})^{k-1} \\d_{2,k}&=\sum_{j=1}^{N_2}(1-\beta_2/\lambda_{2,j})^k+k\beta_2\sum_{j=1}^{N_2}(h_{2,j}^2/\lambda_{2,j})(1-\beta_2/\lambda_{2,j})^{k-1} \\
\end{align*}
                
As Gardini \cite{gardiniMellinTransformManage2022} points out, this equation is not practical, so instead, the inverse Mellin transform is calculated. The Mellin transform is given by \[\hat{f}_{\rs{R}}(z)=\hat{f}_{\rs{Q}_1}(z)\hat{f}_{\rs{Q}_2}(2-z)\]
So the PDF is given by: 
$$\begin{aligned} {f}_{\rs{R}}(r)\approx & \frac{\Delta}{2 \pi}\left(\frac{\beta_2}{\beta_1}\right) \sum_{t=-T}^T\left(\frac{\beta_1}{r \beta_2}\right)^{h+i \Delta t}\\
&\times \left(\sum_{k=0}^K  a_{k, 1}P_k\left(\alpha_1, h-1+i \Delta t\right)\right)\left(\sum_{j=0}^J a_{j, 2} P_j\left(\alpha_2, 1-h-i \Delta t\right)\right)\end{aligned}$$
where $$P_k(\alpha, z-1)= \begin{cases}\frac{\Gamma(z+\alpha-1)}{\Gamma(\alpha)}, & k=0 \\ P_{k-1}(\alpha, z-1)\left(1+\frac{z-1}{\alpha+k+1}\right), & k>0.\end{cases}$$

\subsubsection{Ratio of QF using Saddlepoint Approximation}
For the ratio of quadratic forms in Gaussian random variables, Butler \cite{2007-Butler-saddlepointAapproximationsWithApplicationst} developed saddlepoint approximations for both the density and the CDF. Let the ratio $\rs{R}$ be defined as in \eqref{eq:RQFGRV}. To evaluate its distribution at a given threshold $r$, note that
\[
\mathbb{P}[\rs{R}\le r]
=
\mathbb{P}[\rs{Q}_r\le 0],
\]
where
\[
\rs{Q}_r=\rv{x}^T(\dmt{A}-r\dmt{B})\rv{x}.
\]
Hence, the problem reduces to approximating the distribution of the quadratic form $\rs{Q}_r$ at zero.

Assume that the matrix $\dmt{A}-r\dmt{B}$ is diagonalizable, and write
\[
\dmt{A}-r\dmt{B}=\dmt{P}^T\dmt{\Lambda}\dmt{P},
\]
where
\[
\dmt{\Lambda}=\operatorname{diag}(\omega_1,\omega_2,\ldots,\omega_n).
\]
Then $\rs{Q}_r$ can be expressed as
\[
\rs{Q}_r \sim \sum_{i=1}^n \omega_i \chi_1^2(\delta_i^2),
\]
where the noncentrality parameters are determined by
\[
\dv{\delta}=[\delta_1,\delta_2,\ldots,\delta_n]^T=\dmt{P}\dv{\mu}.
\]
Both the coefficients $\omega_i$ and the noncentrality parameters $\delta_i^2$ depend on the threshold $r$. Using the saddlepoint approach, the density of $\rs{R}$ can be approximated by
\begin{equation}
\label{eq:RQF_SPA_pdf}
f_{\rs{R}}(r)
\approx
\frac{J_r(t_o)}{\sqrt{2\pi K_{\rs{Q}_r}^{\prime\prime}(t_o)}}\,
M_{\rs{Q}_r}(t_o),
\end{equation}
where $t_o$ is the saddlepoint associated with $\rs{Q}_r$, namely the solution of
\[
K_{\rs{Q}_r}'(t)=0.
\]
Here, $M_{\rs{Q}_r}(t)$ is the MGF of $\rs{Q}_r$ given in \eqref{eq:MGF_incmplete_1} with $\sigma=0$ and $c^{\prime\prime}=0$.

The correction factor $J_r(t)$ is given by
\begin{equation}
\label{eq:RQF_SPA_Jr}
J_r(t)
=
\operatorname{tr}\!\left[(\dmt{I}-2t\dmt{\Lambda})^{-1}\dmt{H}\right]
+
\dv{\delta}^T
(\dmt{I}-2t\dmt{\Lambda})^{-1}
\dmt{H}
(\dmt{I}-2t\dmt{\Lambda})^{-1}
\dv{\delta},
\end{equation}
where$\dmt{H}=\dmt{P}\dmt{B}\dmt{P}^T.$

\section{Computational Concerns}
An important question that arises in the context is: \textit{``given this particular quadratic form what method should I use to compute the CDF, PDF and  moments of the quadratic form?''}. This of course is difficult to answer exactly and concisely. 
A rough idea of the expected range of probabilities is required. Then, from our experimentation, the following two quick heuristics are useful: (1) For moderate values of probabilities, use the exact methods or moment matching (2) For small values of probabilities (tail probabilities $< 10^{-8}$), use the saddlepoint approximation. 


 Consider the implementations in \verb|CompQuadForm| of the Davies, Imhoff, Farebrother and Moment Matching methods. For each method proposed there is a ``failure'' region. Note the first three methods are parameterized by accuracy parameters that the user needs to supply. These parameters may lead to inaccurate results. We have collected examples of failures in the repository \href{https://github.com/moaazhaj/QuadFormsCompute/tree/main}{\texttt{QuadFormsCompute}}. Consider the simple case of $Q = 1\chi^2_1(1) + 0.6^4 \chi^2_1(7)$. When computed using imhoff's method, the CCDF at arguments of above 30 produce inaccurate vaules (exceeding 50\% error, and even negative sometimes!). Similar investigation for other methods has been carried out. Noteworthy is Davies's method, which typically produces accurate results (when it produces a result). The implementation in \texttt{CompQuadForm} requires parameters for the number of intervals of numerical integartion and required accuracy, and determining the correct combination of these is done by trial and error. Unintuitively, sometimes increasing accuracy value can lead  to the method failure.  
 
 The moment matching methods are known not to produce accurate results in the tail regions (as expected theoretically). The saddlepoint method can produce inaccurate results in the bulk close to the mean but performs very well in the tail. Demonstartions of accurate tail behaviour can be seen in \cite[Table 3.1]{chenNumericalEvaluationMethods2019}

Consider a scenario where we want to compute the CDF/CCDF for a particular QF at many values. In series based methods such as Farebrother, the series coefficients can be computed once then repeat CDF/CCDF computations can be carried out many times. Numerical integartion methods however require full recomputation when the evaluation point of the CDF/CCDF is changed. In this regard the saddlepoint method's root depends on CDF/CCDF evaluation point and hence will require root recomputation. In general however root computation for  moderate numbers of eigenvalues requires much less computation than numerical integration and series calculations.

A common cost associated with quadratic form computations is the calculation of the eigenvalues (diagonalization) and conversion to effective input. Then the QF can be diagonalized once and the computation repeated many times. Note eigendecomposition of matrices is cubic in the dimension of the matrix. As the dimension increases this can become computationally prohibitive. An example of this is genome wide association studies where the number of variables is the number of gene loci under study. For example, in \cite{chenNumericalEvaluationMethods2019}, Quadratic forms with up to 20000 variables are considered.

The appoarch taken in \cite{chenNumericalEvaluationMethods2019} is based on the huerisitc that when there are many eigenvalues only the largest eigenvaules contribute significantly to the overall computation. These are extracted using randomized methods, \cite{halkoTropp2011FindingStructurewithRandom}. For example in \cite{chenNumericalEvaluationMethods2019}, the authors extract the top 200 eigenvaules out of about 20000. These are then used to evaluate the quadratic form. It is observed that there is no loss in accuracy.  

There are a few packages that can be used to compute quadratic forms inculding: in the R language CompQuadForm, survey, bigQF, in Python: chi2comb and others. For ratio computation the package \verb|qfRatio| provides comptutional methods for CDF, PDF and moments. It employs dedicated methods from \cite{baoMomentsRatiosQuadratic2013,smithExpectationRatioQuadratic1989, hillierGeneratingFunctionsMoments2014} for moments calculations, whilst the CDF is computed by the methods of Davies \cite{daviesAlgorithm155Distribution1980}, Imhoff \cite{imhofComputingDistributionQuadratic1961} and the saddlepoint point method (\cite{2007-Butler-saddlepointAapproximationsWithApplicationst}) after conversion into indefinite quadratic forms. The PDF is computed through the saddlepoint approximation, through Broda's method \cite{broda2009evaluating} or hiller's method.

\begin{figure*}
    \centering
    \includegraphics[width=\linewidth]{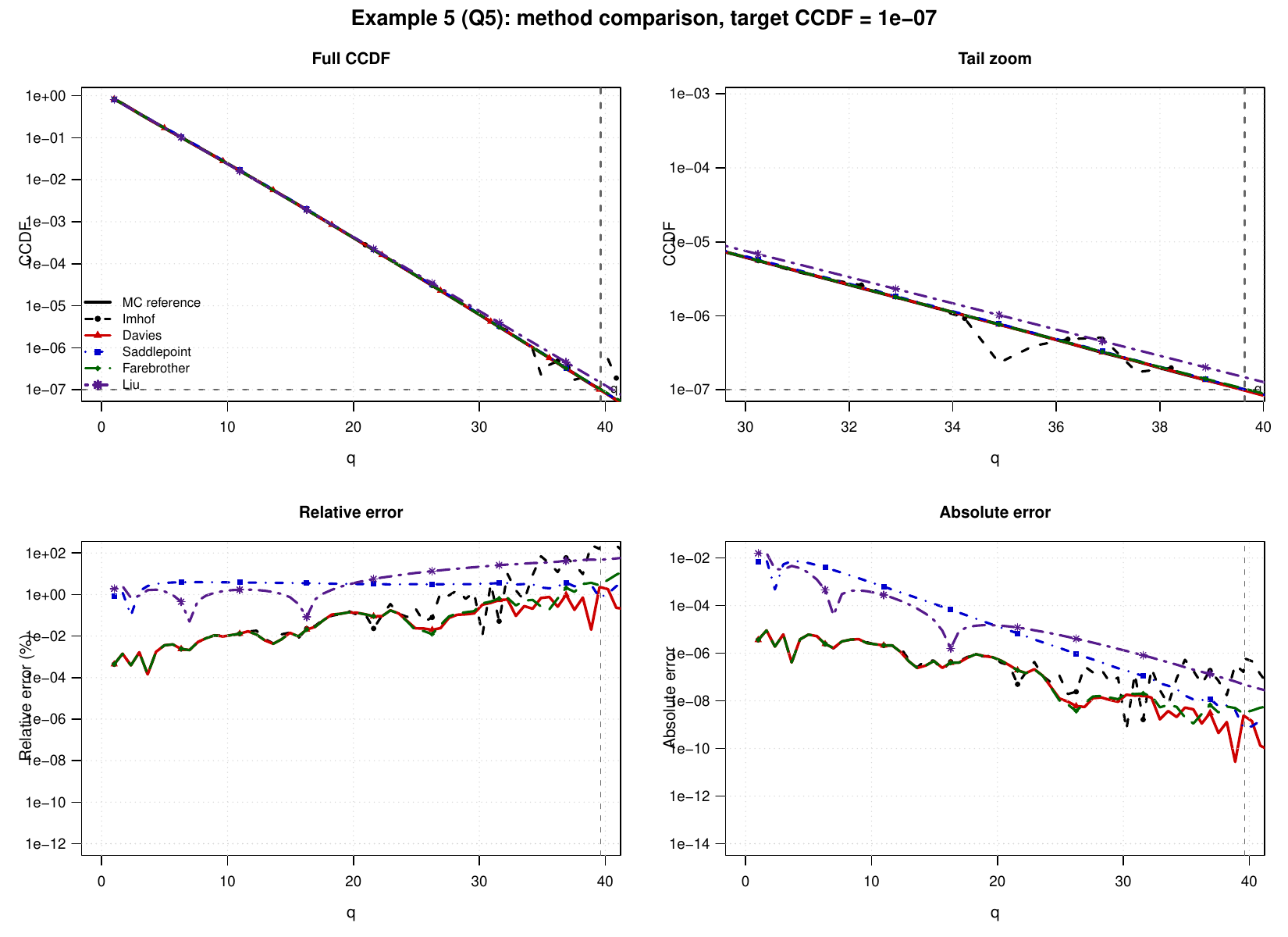}
    \caption{Caption}
    \label{fig:enter-label}
\end{figure*}

\section{Literature Summaries}
    We will summarize on single forms and ratios using a bunch of visual aids. We start by classifying the forms discussed in the literature.
    \begin{figure}
                \begin{center}




\begin{tikzpicture}[scale=1.35,x=1cm, y=1cm]
    \draw[color=black,thick] (0,0) rectangle (8.5,6);
    \Text[color=black,x=1.5,y=6, style={fill=white}]{Single Variate}
    \draw[color=black,thick] (0.25, 0.25) rectangle (8.25,3.5);
    \Text[color=black,x=1.0,y=3.5, style={fill=white}]{Definite}
    \draw[color=black,thick] (4, 0.25) rectangle (8.25,5.75);
    \Text[color=black,x=5,y=5.75, style={fill=white}]{Central}
    \input{Chapter3Modified/Chapter3Figures/Classifs/classifAandB_S}
\end{tikzpicture}
\begin{tikzpicture}
    \draw[color=black,thick] (0,0) rectangle (4,6);
    \Text[color=black,x=1.0,y=6, style={fill=white}]{Ratios}
    \draw[color=black,thick] (2.0,4.0) rectangle (3.75,5.75);
    \Text[color=black,x=2.15,y=5.75, anchor=west, style={fill=white}]{Moments}
    \draw[color=black,thick] (0.25,0.25) rectangle (3.75,3.75);
    \Text[color=black,x=0.5,y=3.75, style={fill=white}, anchor=west]{$\chi^2$-Ratios }
    \draw[color=black,thick] (2.25,0.5) rectangle (3.5,3.5);
    \Text[color=black,x=2.5,y=3.5, style={fill=white}, anchor=west]{Cent.}
    \input{Chapter3Modified/Chapter3Figures/Classifs/classifAandB_R}
\end{tikzpicture}
                \end{center}
                \caption{Classification according to forms}
                \label{fig:classForm}
            \end{figure}
    In Figure \ref{fig:classForm}, red-bracketed citations refer to complex forms, and italic citations refer to papers before 1990.
    Regarding the classification, single forms are classified according to centrality and definiteness. However, we can easily extend the definitions of centrality and definiteness to include certain incomplete forms: whenever the form is equivalently distributed as a linear combination of (central) chi-squares, it is called central, and whenever it is distributed as a linear combination of non-central chi-squares with positive linear coefficients, it is called definite. Ratios, on the other hand, were classified based on chi-squaredness. Since many works on ratios are only devoted to the moments of ratios, they are given a separate category.
    In Table \ref{tab:single-formula-type}, we classify the papers according to their formula type. Evidently, this follows the same organization of Chapter 3.
    \begin{table}[ht!]
        \centering
        \resizebox{\columnwidth}{!}{
        \newcommand{\Smhndcite}[2][]{\citeauthor*{#2} \textcolor{blue}{\cite[#1]{#2}},}
{\centering
\begin{tabular}{|p{8cm}|p{8cm}|}
    \hline
    Exact Formulae & Approximate Formulae \\
    \hline
    \textbf{Finite Expressions} \newline
    \Smhndcite{guQuotientCentralizedNoncentralized2020}  
    \Smhndcite{nadarajahNoteRatioIndependent2018}
    \Smhndcite{al-naffouriDistributionIndefiniteQuadratic2016}
    \Smhndcite{yanCircularlysymmetricComplexNormal2016}
    \Smhndcite{joarderExactDistributionSum2013}
    \Smhndcite{maoModelQuantifyingUncertainty2012}
    \newline \rule{8cm}{1pt}
    \newline
    \textbf{Infinite Series} \newline
    \Smhndcite{nadimiRatioIndependentComplex2018}
    \Smhndcite{cuiExactDistributionProduct2016}
    \Smhndcite{baoMomentsRatiosQuadratic2013}
    \Smhndcite{provostExactDistributionIndefinite1996}
    \Smhndcite{raphaeliDistributionNoncentralIndefinite1996}
    \newline \rule{8cm}{1pt} \newline
    \textbf{Numerical Integartion} \newline
    \Smhndcite{gardiniMellinTransformManage2022}
    \Smhndcite{baoMomentsRatiosQuadratic2013}
    \Smhndcite{royenIntegralRepresentationsConvolutions2007}
    \Smhndcite{simonDifferenceTwoChisquare2001}
    \newline \rule{8cm}{1pt} \newline
    \textbf{Sequences of RVs} \newline
    \Smhndcite{ramirez-espinosaNewApproachStatistical2019}
    \Smhndcite{ramirez-espinosaNewApproximationDistribution2019}
    \Smhndcite{haAccurateApproximationDistribution2013}
    & 
    \textbf{Moment Matching} \newline
    \Smhndcite{zhangEfficientAccurateApproximation2020} 
    \Smhndcite{chenNumericalEvaluationMethods2019}
    \Smhndcite{mohsenipourApproxDistIndef2011}
    \Smhndcite{mohsenipourApproxRatiosDiff2010}
    \Smhndcite{liuNewChisquareApproximation2009}
    \Smhndcite{lindsayMomentBasedApproximationsDistributions2000}
    \Smhndcite{woodApproximationDistributionLinear1989}
    \Smhndcite{buckleyApproximationDistributionQuadratic1988}
    \Smhndcite{hallChiSquaredApproximations1983}
    \Smhndcite{satterthwaiteApproximateDistributionEstimates1946}
    \Smhndcite{pearsonNoteApproximationDistribution1959}
    \newline \rule{8cm}{1pt} \newline
    \textbf{Saddle-point Approximation} \newline
    \Smhndcite{al-naffouriDistributionIndefiniteQuadratic2016}
    \Smhndcite{kuonenSaddlepointApproximationsDistributions1999}
    \\
    \hline
\end{tabular}}}
        \caption{Single Forms and Ratios: Formula Type}
        \label{tab:single-formula-type}
    \end{table}
\newpage

\begin{longtable}{|c|l|c|c|}
    \caption{Summary of all methods } \label{tab:chapter3}\\
    \hline
    SN.& Method/Paper & Equation No.& Code \\ \hline
     1 & Mathai/AlNaffouri PDF (Finite Expressions) & \eqref{eq:PDF_CentralEven}& $\checkmark$\\ \hline
     2 & Mathai CDF (Finite Expressions) & \eqref{eq:CDF_CentEven}& \\ \hline
     3 & AlNaffouri CDF (Finite Expressions) & \eqref{eq:Naffouri-Central-CDF}& $\checkmark$\\ \hline
     4 & Simon PDF & \eqref{eq:Simon-Sum-Central} & \\ \hline
     5 & Simon PDF & \eqref{eq:Simon_PDF_noncent} &  \\  \hline
     6 & Simon CDF & \eqref{eq:Simon_CDF_noncent} &  \\  \hline
     7 & Joarder \& Omar PDF & \eqref{eq:Joarder-PDF}   &  $\checkmark$   \\ \hline
     8 & Joarder \& Omar CDF & \eqref{eq:JoarderOmarCDF} & \\ \hline     
     9  & Cui PDF & \eqref{eq:Cui}    &  $\checkmark$ \\  \hline
     10 & Power SE PDF & \eqref{eq:Power Series_PDF}  &$\checkmark$   \\ \hline
     11 & Power SE CDF & \eqref{eq:Power Series_CDF} & $\checkmark$   \\ \hline
     12 & Laguerre SE PDF & \eqref{eq:Laguerre Series_PDF}  &  $\checkmark$ \\ \hline
     13 & Laguerre SE CDF & \eqref{eq:Laguerre Series_CDF} &  $\checkmark$  \\ \hline
     14 & Chi-square SE PDF & \eqref{eqn:Chi-square Expansion_pdf}  & $\checkmark$  \\ \hline
     15 & Chi-square SE CDF & \eqref{eqn:Chi-square Expansion_cdf} &  $\checkmark$  \\ \hline
     16 & Raphaeli PDF & \eqref{eq:Raphaelli_pdf}  & $\checkmark$  \\ \hline
     17 & Raphaeli CDF & \eqref{eq:Raphaeli_CCDF}  & $\checkmark$  \\ \hline
     18 & Provost Rudiuk PDF1 & \eqref{eq:ProvostRudiuk1_pdf}  & \\ \hline
     19 & Provost Rudiuk PDF2 & \eqref{eq:ProvostRudiuk2_pdf}  & \\ \hline
     20 & Imhof CDF & \eqref{eq:Imhof_CDF} & $\checkmark$ \\ \hline
     21 & Imhof PDF & \eqref{eq:Imhof_PDF} &  $\checkmark$ \\ \hline
     22 & Davies CDF & \eqref{eq:Fc2_davies} & $\checkmark$\\ \hline
     23 & Gardini PDF & \eqref{eq:Gardini-PDF} &  $\checkmark$\\ \hline
     24 & Gardini CDF & \eqref{eq:Gardini-CDF} & $\checkmark$ \\ \hline 
     25 & MM (SW,HBE,WF,LPB4) &  & $\checkmark$ \\  \hline
     26 &  Liu              &    &  $\checkmark$\\ \hline
     27 & Ramirez-Esponoza PD PDF  &  \eqref{eq:Ramirez-PD}  &  $\checkmark$ \\ \hline
     28 & Ha \& Provost     &     &    \\ \hline
     29 & SPA PDF           & \eqref{eq:QF_SPA_PDF} &  $\checkmark$ \\  \hline
     30 & SPA CDF           & \eqref{eq:QF_SPA_CDF_LR} & $\checkmark$  \\  \hline
     31 & Bao and Kan QFRatio Moments           & \eqref{eq:QF_SPA_CDF_LR} & $\checkmark$  \\  \hline
     32 & Nadarajah-Kwong PDF          & \eqref{eq:Nadarjah-Kwong_PDF} & $\checkmark$   \\  \hline
     33 & Yan PDF          & \eqref{eq:Yan_PDF} &   \\  \hline
     34 & Simon \& Alouini CDF          & \eqref{eq:SimonAlouini_CDF} &   \\  \hline
     35 & Mathai Ratio          & \eqref{eq:MathaiRatios_CDF} &   \\  \hline   
     36 & Nadimi PDF          &  & $\checkmark$   \\  \hline
     37 & Gu PDF          &  & $\checkmark$   \\  \hline
     38 & Mao \& Todd  PDF          &  & $\checkmark$   \\  \hline
     
\end{longtable}

\chapter{Classification and Literature Overview: Multi-Forms}
    Recall the definition of multiforms presented earlier and restated here for convenience. The quadratic multiform in real Gaussian random variables of dimension $N$, $\rv{x}\sim \mathcal{N}_N(\dv{\mu},\dmt{\Sigma})  $ is represented mathematically as a vector, $\rv{Q} = [\rs{Q}_1, \rs{Q}_2, \ldots \rs{Q}_M]^T$ 
        \begin{equation}
            \rs{Q}_m = \rv{x}^T\dmt{A}_m\rv{x}+\dv{b}_m^T\rv{x}+c_m,\qquad m=1,2,\ldots,M \tag{MultiQF} 
        \end{equation}
    Similarly, for the quadratic multiforms in complex Gaussian random vectors, $ \rv{x}\sim \mathcal{CN}_N(\dv{\mu},\dmt{\Sigma})$, 
        \begin{equation*}
            \rs{Q}_m = \rv{x}^H\dmt{A}_m\rv{x}+\Re(\dv{b}_m^H\rv{x})+c_m, \qquad m=1,2,\ldots,M \tag{MultiCQF}   
        \end{equation*}

    As before, we are interested in evaluating different quantities of interest that include  MGF, CF, moments, PDF, CDF, etc., of the multiforms. Deriving these quantities is even harder for multiforms than simple forms-aside from the integral transforms. They can be tackled in very limited cases. 
    
      Quadratic multiforms appear in applications in which their dependence cannot be neglected (Refer to Ch2). Independence is equivalent to rather restrictive conditions. They are also very common in performance analysis of communication systems.
      
     This section is organized as follows Subsection 4.1 gives the MGF of (MultiQF) \& (MultiCQF) discusses the evaluation of moments. Subsection 4.2 provides necessary and sufficient conditions for the independence of quadratic forms, illustrating the inaccuracy of an independent approximation. Subsection 4.3 compares the effects of centrality, definiteness and even multiplicities with the case of QFGRVs, presenting the notion of simultaneous diagonalization. Subsection 4.4 presents two methods to deal with simultaneously diagonalizable forms. Subsection 4.5 illustrates, using biforms, some simplifications that arise due to certain conditions on the input. Subsection 4.6 defines multi-chi-squares, which are the bulk of the literature on multiforms. Subsection 4.7 summarizes the recurring proof approaches in the relevant literature. Subsection 4.8 presents a number of special cases of this distribution. Subsection 4.9 gives general methods to evaluate the distribution of MV $\chi^2$. Subsection 4.10 discusses the convolution of MV $\chi^2$. Subsection 4.11 gives techniques for general multiform. Finally, Subsection 4.12 summarizes the pertinent literature using some visual aids.

    \section{MGF and Moments}
        The MGF  of a random vector, $\rv{y}= [\rs{y}_1, \rs{y}_2, \ldots, \rs{y}_M]^T$ is 
        \begin{equation}
            M_{\rv{y}}(\rv{s}) = \mathbb{E}\left[e^{\dv{s}^T\rv{y}}\right] = \mathbb{E}\left[\exp\left(\sum_{m=1}^M s_m\rs{y}_m\right)\right],
        \end{equation}
        where $\dv{s} = [s_1,s_2, \ldots, s_M] \in \mathbb{R}^M$. 
        We can reuse the MGF expression of single quadratic forms to calculate the MGF of quadratic multiforms. For a single form, $\rs{Q} = \rv{x}^T\dmt{A}\rv{x}+\dv{b}^T\rv{x}+c, \quad \rv{x}\sim \mathcal{N}_N(\dv{\mu},\dmt{\Sigma})$, the MGF is given by
        $$
        M_{\rs{Q}}(s) = \mathbb{E}[e^{s\rs{Q}}] = \mathbb{E}\left[e^{\rv{x}^T(s\dmt{A})\rv{x}+(s\dv{b})^T\rv{x}+sc}\right] = h \left(s\dmt{A}, s\dv{b},sc\right),
        $$
        for any deterministic $s\dmt{A} \in \mathbb{R}^{N \times N}$, $s\dv{b}\in\mathbb{R}^N$ and $sc \in \mathbb{R}$, then, we can write the MGF of a multiform,  $\rv{Q} = [\rs{Q}_1, \rs{Q}_2, \ldots \rs{Q}_M]^T$,  $\rs{Q}_m = \rv{x}^T\dmt{A}_m\rv{x}+\dv{b}_m^T\rv{x}+c_m, \quad \rv{x}\sim \mathcal{N}_N(\dv{\mu},\dmt{\Sigma})$, as follows:
        \begin{equation}
        \begin{aligned}
            M_{\rv{Q}}(\dv{s}) &= \mathbb{E}[e^{s\rv{Q}}] = \mathbb{E}\left[e^{\rv{x}^T(\sum_{m=1}^Ms_m\dmt{A}_m)\rv{x}+(\sum_{m=1}^Ms_m\dv{b}_m)^T\rv{x}+\sum_{m=1}^Ms_mc_m}\right]\\ &= h \left(\sum_{m=1}^Ms_m\dmt{A}_m, \sum_{m=1}^Ms_m\dv{b}_m,\sum_{m=1}^Ms_mc_m\right),
        \end{aligned}
        \end{equation}
        but we have seen for single quadratic forms,
        \begin{equation*}
            \begin{aligned}
                h \left(s\dmt{A}, s\dv{b},sc\right) &=  \left|\left(\dmt{I}_N - 2s\dmt{A}\dmt{\Sigma} \right) \right|^{-\frac{1}{2}}\exp \left\{ -\frac{1}{2}(\dv{\mu}^T\dmt{\Sigma}^{-1}\dv{\mu} - 2sc)\right.\\ &\left. + \frac{1}{2} \left[ \dv{\mu}+\dmt{\Sigma}(s\dv{b}) \right]^T \left( \dmt{I}_N - 2s\dmt{A}\dmt{\Sigma} \right)^{-1} \dmt{\Sigma}^{-1} \left[ \dv{\mu}+\dmt{\Sigma}(s\dv{b}) \right] \right\}
            \end{aligned}
        \end{equation*}
        Hence, it is straightforward to extend the result to quadratic multiforms, 
             
        \begin{equation}
            \begin{aligned}
                &M_{\rv{Q}}(\dv{s}) = \left|\dmt{I}_N - 2\sum_{m=1}^Ms_m\dmt{A}_m\dmt{\Sigma} \right|^{-\frac{1}{2}}\exp \Bigg\{ -\frac{1}{2}\left(\dv{\mu}^T\dmt{\Sigma}^{-1}\dv{\mu} - 2\sum_{m}s_mc_m \right) \\
                &+ \frac{1}{2} \left[ \dv{\mu}+\dmt{\Sigma}\left(\sum_{m}s_m\dv{b}_m \right) \right]^T \left[ \dmt{I}_N - 2\left(\sum_{m}s_m\dmt{A}_m\right)\dmt{\Sigma} \right]^{-1}\\
                &\times\dmt{\Sigma}^{-1} \left[ \dv{\mu}+\dmt{\Sigma}\left(\sum_{m}s_m\dv{b}_m \right) \right] \Bigg\}
            \end{aligned} \label{eq:MGF-MultiQF}
        \end{equation}
        Similarly, for (MultiCQF), 
        \begin{equation}
        \resizebox{\textwidth}{!}{$
            \begin{aligned}
            M_{\rv{Q}}(\dv{s}) &= \left|\dmt{I}_N - \left(\sum_{m}s_m\dmt{A}_m\right)\dmt{\Sigma} \right|^{-1}\exp \Bigg\{ -\left(\dv{\mu}^H\dmt{\Sigma}^{-1}\dv{\mu} - \sum_{m}s_mc_m \right) \\
            & +  \left[ \dv{\mu}+\frac{1}{2}\dmt{\Sigma}\left(\sum_{m}s_m\dv{b}_m \right) \right]^H \left[ \dmt{I}_N - \left(\sum_{m}s_m\dmt{A}_m\right)\dmt{\Sigma} \right]^{-1}\\
            &\times \dmt{\Sigma}^{-1} \left[ \dv{\mu}+\frac{1}{2}\dmt{\Sigma}\left(\sum_{m=1}^Ms_m\dv{b}_m \right) \right] \Bigg\}
            \end{aligned}
            $}
        \end{equation}
        Evaluation of the $k$th moment of the $m$th component, i.e., $\mathbb{E}[\rs{Q}_m^k]$, follows the same procedure presented earlier, see Chapter 3, closed-form expression. However, the evaluation of product moments is trickier. In general, 
        \begin{equation}
            \mathbb{E}\left[\prod_{m=1}^M \rs{Q}_m^{k_m}\right] = \dfrac{\partial^{\sum_{m=1}^Mk_m}}{\partial s_1^{k_1} \partial s_2^{k_2} \ldots \partial s_m^{k_m}} M_{\rv{Q}}(\dv{s})\Bigg|_{\dv{s}=\boldsymbol{0}}
        \end{equation}
        In particular, for two complete real forms, $\rs{Q}_1 = \rv{x}^T\dmt{A}_1\rv{x}$, $\rs{Q}_2 = \rv{x}^T\dmt{A}_2\rv{x}$, $\rv{x} \sim \mathcal{N}_N(\dv{\mu}, \dmt{\Sigma})$, the product moments $\mathbb{E}[\rs{Q}_1\rs{Q}_2]$ is        
        \begin{equation}
        \begin{aligned}
             \mathbb{E}[\rs{Q}_1\rs{Q}_2] &= 2 \text{tr} (\dmt{A}_1\dmt{\Sigma}\dmt{A}_2\dmt{\Sigma}) + 4 \dv{\mu}^T \dmt{A}_1 \dmt{\Sigma} \dmt{A}_2 \dv{\mu} \\
             &+ \left[ \dv{\mu}^T \dmt{A}_1\dv{\mu}+\text{tr} \left(\dmt{A}_1\dmt{\Sigma} \right) \right] \left[ \dv{\mu}^T \dmt{A}_2\dv{\mu}+\text{tr} \left(\dmt{A}_2\dmt{\Sigma}\right)\right],
        \end{aligned}
        \end{equation}
        and 
        \begin{equation}
            \text{Cov}(\rs{Q}_1, \rs{Q}_2) = 2 \text{tr}(\dmt{A}_1 \dmt{\Sigma} \dmt{A}_2\dmt{\Sigma})  + 4 \dv{\mu}^T \dmt{A}_1 \dmt{\Sigma} \dmt{A}_2 \dv{\mu}.
        \end{equation}
        For two complete complex forms, 
        \begin{equation}
        \begin{aligned}
            \mathbb{E}[\rs{Q}_1\rs{Q}_2] &=  \text{tr} (\dmt{A}_1\dmt{\Sigma}\dmt{A}_2\dmt{\Sigma}) + \dv{\mu}^H \dmt{A}_1 \dmt{\Sigma} \dmt{A}_2 \dv{\mu} +\dv{\mu}^H \dmt{A}_2 \dmt{\Sigma} \dmt{A}_1 \dv{\mu}\\ 
            & + \left[ \dv{\mu}^H \dmt{A}_1\dv{\mu}+\text{tr} \left(\dmt{A}_1\dmt{\Sigma} \right) \right] \left[ \dv{\mu}^H \dmt{A}_2\dv{\mu}+\text{tr} \left(\dmt{A}_2\dmt{\Sigma}\right)\right], 
        \end{aligned}           
        \end{equation}
        and 
        \begin{equation}
            \text{Cov}(\rs{Q}_1, \rs{Q}_2) =\text{tr} (\dmt{A}_1\dmt{\Sigma}\dmt{A}_2\dmt{\Sigma}) + \dv{\mu}^H \dmt{A}_1 \dmt{\Sigma} \dmt{A}_2 \dv{\mu} +\dv{\mu}^H \dmt{A}_2 \dmt{\Sigma} \dmt{A}_1 \dv{\mu}.
        \end{equation}        
        
        Ghazal \cite{1996-Ghazal-RecurrenceFormulaForExpectationsOfProductsOfQuadratic} derives a recurrence formula that evaluates the expectation of products of arbitrary number of quadratic forms in the form $\rv{x}^T \dmt{A}_i \rv{x}, i=1,2,\ldots M $, where $\rv{x} \sim \mathcal{N}(\dv{0}, \dmt{\Sigma}) $.

    
    \section{Independence}
         An obvious approach toward simplifying multiforms is the independence assumption. Independence reduces product moments, cdf and pdf of multiforms to a set of single-form problems, respectively, as follows
        \begin{align*}
            F_{\rv{Q}}(\dv{q}) &=  \prod_{m=1}^M F_{\rs{Q}_m} (q_m)   \quad \text{and} \quad
            f_{\rv{Q}}(\dv{q}) =  \prod_{m=1}^M f_{\rs{Q}_m} (q_m)\\
            \mathbb{E}\left[\prod_{m=1}^M \rs{Q}_m^{k_m}\right] & = \prod_{m=1}^M \mathbb{E} \left[\rs{Q}_m^{k_m}\right] 
        \end{align*}
        Consider the quadratic forms \(\rv{x}^T\dmt{A}\rv{x}\) and \(\rv{x}^T\dmt{B}\rv{x}\) with \(\rv{x}\sim \mathcal{N}(\dv{\mu},\dmt{\Sigma})\), \(\dmt{\Sigma}>0\), and $\dmt{A}$ and \(\dmt{B}\) are symmetric. Craig \cite{mathaiQuadraticFormsRandom1992} proposes that the forms are independent if and only if $\dmt{A}\dmt{\Sigma}\dmt{B}= \dmt{o}$ by using a theorem of Cochran \cite{cochranDistributionQuadraticForms1934}. He provides an incorrect proof \cite[Sec 5.2]{mathaiQuadraticFormsRandom1992}. His theorem was later proved by a number of authors, see for example \cite{provostOnCraig1996}.

        An incomplete quadratic form $\rs{Q}=\rv{x}^T\dmt{A}\rv{x}+\dv{b}^T\rv{x}+c$ is said to be \emph{degenerate} if $\dmt{A}$ is zero, i.e., if it is algebraically a linear form. Otherwise, it said to be \emph{non-degenerate}. Consider now two non-degenerate incomplete forms $\rs{Q}_m=\rv{x}^T\dmt{A}_m\rv{x}+\dv{b}_m^T\rv{x}+c_m$, $m=1,2$, with $\rv{x}\sim \mathcal{N}(\dv{\mu},\dmt{\Sigma})$, ($\dmt{A}_m\neq\dmt{0}$), and $\dmt{\Sigma} > 0$. The necessary and sufficient conditions for the independence of $\rs{Q}_1$ and  $\rs{Q}_2$  are as follows:
        \begin{enumerate}
            \item $\dmt{\Sigma} \dmt{A}_1\dmt{\Sigma} \dmt{A}_2 = \dmt{0}$
            \item $\dmt{A}_2 \dmt{\Sigma} \dv{b}_1 = \dmt{A}_1 \dmt{\Sigma} \dv{b}_2 = \dv{0}$
            \item $(\dv{b}_1 + 2\dmt{A}_1\dv{\mu})^T \dmt{\Sigma}(\dv{b}_2 + 2\dmt{A}_2\dv{\mu}) = 0$.
        \end{enumerate}
        
        
        For complete Hermitian forms $\rs{Q}_m=\rv{x}^H\dmt{A}_m\rv{x}$, $m=1,2$, and $\rv{x}\sim\mathcal{CN}(\dv{\mu},\dmt{\Sigma})$ with $\dmt{\Sigma}>0$, $\rs{Q}_1$ and $\rs{Q}_2$ are independent if and only if $\dmt{A}_1\dmt{\Sigma}\dmt{A}_2=\dmt{0}$. Indeed, the forms can be reexpressed as $\rs{Q}_m=\rv{y}^T\tilde{\dmt{A}}_m\rv{y}$, where \[\tilde{\dmt{A}}_m=\begin{bmatrix}
            \Re(\dmt{A}_m) & -\Im(\dmt{A}_m)\\
            \Im(\dmt{A}_m) &  \Re(\dmt{A}_m)
        \end{bmatrix}\] and, without loss of generality, $\rv{y}\sim\mathcal{N}(\tilde{\dv{\mu}},\dmt{I}_M)$, with $\tilde{\dv{\mu}}^T=[\Re(\dv{\mu})^T,\Im(\dv{\mu})^T]$. With simple algebraic manipulations, we can see that $\tilde{\dmt{A}_1}\tilde{\dmt{A}_2}=\dmt{0}$ if and only if $\dmt{A}_1\dmt{A}_2=\dmt{0}$.

        \subsection{Maximum Number of Independent Forms}

        How many quadratic forms in $N$ Gaussian random variables can be independent? To answer this question, we need the notion of simultaneous diagonalizability, which will also then find other uses in the coming sections. The $n\times n$ matrices $\{\dmt{A}_\ell\}_{\ell=1}^L$ are said to be \emph{{}simultaneously diagonalizable} if there exists an invertible matrix $\dmt{P}$ such that the matrices $\dmt{P}^{-1}\dmt{A}_\ell\dmt{P}$ are all diagonal, i.e., $\dmt{P}$ simultaneously diagonalize $\{\dmt{A}_\ell\}_{\ell=1}^L$. If the matrices are real symmetric, simultaneous diagonalizability is equivalent to simultaneous orthogonal diagonalizability, i.e., $\dmt{P}$ can be chosen to be orthogonal: $\dmt{P}^T\dmt{P}=\dmt{I}_N$. For Hermitian matrices, we can choose $\dmt{P}$ to be unitary: $\dmt{P}^H\dmt{P}=\dmt{I}_N$. Orthogonality and unitarity are needed to preserve the distribution of the standard normal and complex normal distributions, respectively.

        Consider now a complete quadratic multiform \(\rv{x}^T\dmt{A}_i\rv{x},i=1,\ldots,M\), such that the covariance matrix of \(\rv{x}\) is non-singular. Then we can suppose, without loss of generality, that \(\operatorname{cov}(\rv{x})=\dmt{I}_N\). Applying Craig's theorem on every pair of forms, we know that these components are mutually independent if and only if \(\dmt{A}_i\dmt{A}_j=\dmt{0}\) for all $i\neq j$. Hence the matrices \(\{\dmt{A}_i\}_{i=1}^M\) commute. Referring to theorem 2.5.15 of \cite{hornMatrixAnalysis2012} and noting that the eigenvalues of symmetric matrices are real, hence forbidding similarity with matrices of unreal eigenvalues, we deduce that the matrices \(\dmt{A}_i\) are simultaneously diagonalizable by the same real orthogonal matrix, say \(\dmt{P}\).

        Using this theorem, we can prove that the maximum number of independent quadratic forms \(\rv{x}^T\dmt{A}_i\rv{x}\) is equal to the dimension $N$. Indeed, suppose that we have M quadratic forms \(\rv{x}^T\dmt{A}_i\rv{x}\), where \(\dmt{A}_i\) are considered to be (real) symmetric. Let $\dmt{P}^T\dmt{A}_i\dmt{P}=\dmt{D}_i$ be the diagonal matrices. Hence \(\dmt{D}_i\dmt{D}_j=\dmt{0}\), so the matrices \(\dmt{D}_i\) can only have non-zero eigenvalues in different positions. Therefore, the maximum possible number of independent quadratic forms is $N$. In this case, all the matrices are of rank $1$; hence, the forms are independent scaled chi-square variables, each with one degree of freedom. Similarly, we can verify that at most $N$ incomplete forms can be independent.

        For the Hermitian case, we can use Theorem 2.5.5 of \cite{hornMatrixAnalysis2012} to prove that at most $N$ complete complex forms can be independent.

    \section{Centrality, Definiteness, and Even Degrees}
        In Section 3, we have seen that centrality, definiteness and even degrees can simplify the inversion of the MGF of single forms. Can we generalize to multiforms? In general, we are far less capable of making use of these structures. For simplicity, we consider here complete forms with a non-singular covariance matrix $\dmt{\Sigma}$. Referring to the MGF of central complete multiforms, we have
        $$
             M_{\rv{Q}}(\dv{s}) = \left|\dmt{I}_N - 2\left(\sum_{m=1}^Ms_m\dmt{A}_m\right)\dmt{\Sigma} \right|^{-\frac{1}{2}},
        $$
        and for the central Hermitian forms, 
        $$
            M_{\rv{Q}}(\dv{s}) = \left|\dmt{I}_N - \left(\sum_{m=1}^Ms_m\dmt{A}_m\right)\dmt{\Sigma}\right|^{-1} = \dfrac{1}{\text{polynomial} (s_1, s_2, \ldots, s_M)}.
        $$
        
        One can use some results from multivariate partial fraction decomposition to easily handle the inversion \cite{2022-Heller-multivariateapart}. For a general polynomial, to the best of our knowledge, we do not know if it can be decomposed efficiently into invertible MGFs. One case where we know the path to inversion is clear is \emph{simultaneous diagonalizability}, as defined in the previous section. We will show that by utilizing simultaneous diagonalization one can obtain closed-form expression for the CDF (see: Sec \ref{sec:SimDiag}).
        Simultaneous helps simplify the inversion of the MGF. When the assumption of simultaneous diagonalization is satisfied, the (MultiQF) is written as follows
        \begin{equation}
            \rv{Q} = 
            \begin{bmatrix} 
                \lambda_{11} & \lambda_{12} &  \ldots \lambda_{1N} \\
                \lambda_{21} & \lambda_{22} &  \ldots \lambda_{2N} \\
                \vdots \\
                \lambda_{M1} & \lambda_{M2} &  \ldots \lambda_{MN} \\
            \end{bmatrix} 
            \begin{bmatrix} 
                \rs{z}_1^2  \\
                \rs{z}_2^2\\
                \vdots \\
                \rs{z}_N^2\\
            \end{bmatrix}
            , \label{eq:SimDiagReal}
        \end{equation}
        where $ \rs{z}_m \sim \mathcal{N}(h_m,1)$. For (MultiCQF), simultaneous diagonalizability yields
        \begin{equation}
            \rv{Q} = 
            \begin{bmatrix} 
                \lambda_{11} & \lambda_{12} &  \ldots \lambda_{1N} \\
                \lambda_{21} & \lambda_{22} &  \ldots \lambda_{2N} \\
                \vdots \\
                \lambda_{M1} & \lambda_{M2} &  \ldots \lambda_{MN} \\
            \end{bmatrix} 
            \begin{bmatrix} 
                |\rs{z}_1|^2  \\
                |\rs{z}_2|^2\\
                \vdots \\
                |\rs{z}_N|^2\\
            \end{bmatrix},
            \label{eq:SimDiagComplex}
        \end{equation}
        where $\rs{z}_m \sim \mathcal{CN}(h_m,1)$. In other words, all forms are linear combinations of the same set of chi-squares.
        Simultaneous diagonalizability simplifies the MGF. Indeed, consider the central quadratic forms $ \rv{Q} = [\rs{Q}_1, \rs{Q}_2, \ldots \rs{Q}_M]^T$ with the assumption of simultaneous diagonalization property  (as in \eqref{eq:SimDiagReal} with $h_m=0$), then the MGF is
        \begin{align*}
            M_{\rv{Q}}(\dv{s}) &= \left|\dmt{I}_N - 2\sum_{m}s_m\dmt{\Sigma}^\frac{1}{2}\dmt{A}_m\dmt{\Sigma}^\frac{1}{2}\right|^{-\frac{1}{2}}= \left|\dmt{I}_N - 2\sum_{m}s_m\dmt{P}\dmt{\Lambda}_m\dmt{P}^T\right|^{-\frac{1}{2}}\\
             & = \left|\dmt{I}_N - 2\sum_{m}s_m\dmt{\Lambda}_m \right|^{-\frac{1}{2}}=  \dfrac{1}{\prod_{n=1}^N \sqrt{1-2\sum_{m}s_m\lambda_{mn}} }
        \end{align*}
        If the forms are non-central, the MGF is multiplies by
        \begin{align*}
            &\exp \Bigg\{ -\frac{1}{2} \tilde{\dv{\mu}}^T \left[ \dmt{I}_N - \left( \dmt{I}_N - 2\sum_{m=1}^Ms_m\dmt{\Lambda}_m\right) \right]^{-1}\tilde{\dv{\mu}}^T  \Bigg\},
        \end{align*} 
        where $\tilde{\dv{\mu}} = \dmt{P}^T \dmt{\Sigma}^{-\frac{1}{2}} \dv{\mu}$.
        For a simultaneously diagonalizable central complex multiform (as in \eqref{eq:SimDiagComplex} with $h_m=0$), the MGF is given by
        \begin{align*}
            M_{\rv{Q}}(\dv{s}) &=\left|\dmt{I}_N - 2\sum_{m=1}^Ms_m\dmt{\Sigma}^\frac{1}{2}\dmt{A}_m\dmt{\Sigma}^\frac{1}{2}\right|^{-1} \\ 
             & = \left|\dmt{I}_N - 2\sum_{m=1}^Ms_m\dmt{\Lambda}_m\right|^{-1} \\
             & =  \dfrac{1}{\prod_{n=1}^N (1-2\sum_{m=1}^Ms_m\lambda_{mn})} 
        \end{align*} 
        The latter MGF is a rational fraction, hence it can be decomposed into a sum of invertible MGFs as we shall see later (\ref{sec:SimDiag}). As a conclusion, to obtain a linear combination of invertible MGFs, we need centrality, evenness, and simultaneous diagonalizability. In later sections, we will see cases where we lose one of these properties: \Cref{par:Laverny} \cite{lavernyEstimationMultivariateGeneralized2021} studies simultaneously diagonalizable, positive definite, central real forms; the resulting formula is an infinite series (\cref{eq:Laverny}), and when simultaneous diagonalizability is dropped, say as in \Cref{par:Morales} \cite{morales-jimenezDiagonalDistributionComplex2011}, the closed-form formula is lost as well \cref{eq:Morales}.

    \section{Simultaneously Diagonal Forms} \label{sec:SimDiag}

        In this section, we present two methods from the literature. The first one is a general framework on the calculation of the PDF/CDF of simultaneously diagonalizable central complex forms. It allows indefiniteness. The second is a series expansion of simultaneously diagonalizable positive definite central real forms.

        Al-Naffouri et al. \cite{al-naffouriDistributionIndefiniteQuadratic2016} present a procedure to calculate the CDF of two simultaneously diagonalizable central complex quadratic forms. Suppose the quadratic forms in complex Gaussian random variables $\rs{Q}_1=\rv{x}^H\dmt{A}\rv{x}$ and $\rs{Q}_2=\rv{x}^H\dmt{B}\rv{x}$, where $\rv{x}\sim \mathcal{CN}_N(\dv{0},\dmt{I}_N)$ are simultaneously diagonalizable, i.e., there exits a unitary matrix $\dmt{P}$ such that $\dmt{P}^H\dmt{A}\dmt{P}=\diag(a_1,\ldots,a_N)$ and $\dmt{P}^H\dmt{A}\dmt{P}=\diag(b_1,\ldots,b_N)$. Then the joint CDF is given by 
        \begin{equation}
            \resizebox{\textwidth}{!}{$ 
            F_{\rs{Q}_1,\rs{Q}_2}(q_1,q_2)=-\dfrac{1}{4\pi^2} \displaystyle\int_{\beta_2-i\infty}^{\beta_2+i\infty}\dfrac{e^{q_2 z_2}}{z_2} \displaystyle\int_{\beta_1-i\infty}^{\beta_1+i\infty}\dfrac{e^{q_1 z_1}}{z_1\prod_{n=1}^N(1+a_n z_1+b_n z_2)}dz_1dz_2
            $}
        \end{equation} 
        where $\beta_1$ and $\beta_2$ are arbitrary positive numbers. The internal integral is calculated after performing a partial fraction decomposition with respect to $z_1$. Then the outer one is calculated after performing a partial fraction decomposition with respect to $z_2$. To illustrate the method, a simple example is given.

        Consider the quadratic forms in complex Gaussian random variables $\rs{Q}_1=\rv{x}^H\dmt{A}\rv{x}$ and $\rs{Q}_2=\rv{x}^H\dmt{B}\rv{x}$, where $\rv{x}\sim \mathcal{CN}(\dv{0},\dmt{I}_2)$, $\dmt{A}=\dmt{I}_2$, and $\dmt{B}=\diag(1,-1)$. The matrices are readily diagonal. Then the CDF can be written as 
        \begin{equation*}
        \resizebox{0.95\textwidth}{!}{$ 
            F_{\rv{Q}}(q_1,q_2)=-\dfrac{1}{4\pi^2} \int_{\beta_2-i\infty}^{\beta_2+i\infty}\dfrac{e^{q_2 z_2}}{z_2} \underbrace{\int_{\beta_1-i\infty}^{\beta_1+i\infty}\dfrac{e^{q_1 z_1}}{z_1(1+z_1+z_2)(1+z_1-z_2)}dz_1}_{I(z_2)}dz_2
            $}
        \end{equation*}
        Decomposing the inner integrand $I(z_2)$ with respect to $z_1$, we write 
        $$
        \resizebox{0.95\textwidth}{!}{$ 
        \begin{aligned}
            &\dfrac{1}{z_1(1+z_1+z_2)(1+z_1-z_2)}\\ &=\dfrac{1}{z_1(1+z_2)(1-z_2)}+\dfrac{1}{2z_2(1+z_2)(1+z_1+z_2)}+\dfrac{1}{2z_2(z_2-1)(1+z_1-z_2)}
        \end{aligned}
        $}
        $$
        So the integral is split into a sum of three integrals, each of which is calculated following this complex integration formula \cite{al-naffouriDistributionIndefiniteQuadratic2016} \begin{equation*}
            \frac{1}{2 \pi} \int_{-\infty}^{\infty} \frac{e^{i x p}}{(a+i x)^\nu} d x= \begin{cases}\frac{p^{v-1}}{\Gamma(\nu)} e^{-a p} u(p) & \text { for } a>0 \\ \frac{(-p)^{\nu-1}}{\Gamma(\nu)} e^{-a p} u(-p) & \text { for } a<0\end{cases}
        \end{equation*} and \begin{equation*}
            \int_{-\infty}^{\infty}\dfrac{e^{ix}}{a+ix}dx=2\pi e^{-ap}u(\Re(ap))
        \end{equation*} In particular, we have $$I(z_2)=\pi i u(q_1)\left[\dfrac{2}{(1-z_2)(1+z_2)}+\dfrac{e^{-q_1}e^{-q_1z_2}}{z_2(1+z_2)}+\dfrac{e^{-q_1}e^{q_1z_2}}{z_2(z_2-1)}\right]$$ Note that $\beta_2$ is chosen to be less than one. Now each one of the three fractions is decomposed with respect to $z_2$, and the final result is given by \begin{equation}
            \begin{aligned}
                F_{\rv{Q}}(q_1,q_2)&=u(q_1)\left\{\left(1-\frac12 e^{-q_2}\right)u(q_2)+\frac12 e^{q_2}u(-q_2)\right.\\ &+\frac12e^{-q_1}q_2\left[u(q_2-q_1)-u(q_1+q_2)\right]\\&-\frac12 e^{-q_1}(1+q_1)\left[u(q_2-q_1)+u(q_1+q_2)\right]\\ & \left.+\frac12e^{-q_2}u(q_2-q_1)-\frac12e^{q_2}u(-q_1-q_2) \right\}
            \end{aligned}
        \end{equation} As $q_2$ goes to infinity, the joint CDF converges to $$F_{\rs{Q}_1}(q_1)=u(q_1)[1-e^{-q_1}(1+q_1)]=F_{\frac12\chi_4^2}(q_1)$$ as expected. Moreover, as $q_1$ goes to infinity, we obtain $$F_{\rs{Q}_2}(q_2)=\begin{cases}
            \dfrac{1}{2}e^{y_2} & \text {if }q_2<0\\
            1-\dfrac{1}{2}e^{-y_2} & \text {if }q_2>0
        \end{cases}$$ which agrees with Simon's \cite{simonProbabilityDistributionsInvolving2006} result for the differences of two chi-squares.

        Note that the number of integrals increases exponentially with the number of forms $M$, hence the method becomes less useful as $M$ increases.
        
        \paragraph{Laverny} \label{par:Laverny}
        Laverny et al. provide in \cite{lavernyEstimationMultivariateGeneralized2021} an expansion of the PDF of a multivariate gamma convolution in terms of Laguerre polynomials.

        The gamma distribution of shape parameter $\alpha$ and scale parameter $s$ will be denoted by $\mathcal{G}_{1,1}(\alpha,s)$. The distribution of the sum of $N$ variables, each following $\mathcal{G}_{1,1}(\alpha_k,s_k)$, $k=1,\ldots,N$, is denoted by $\mathcal{G}_{1,N}(\dv{\alpha},\dmt{s})$, where $\dv{\alpha}=(\alpha_1,\ldots,\alpha_N)$ and $\dmt{s}=(s_1,\ldots,s_N)$.
    
        Let $\rv{x}$ be a $M$-variate random vector with support $\mathbb{R}_+^M$. $\rv{x}$ is said to follow a multivariate gamma distribution with shapes $\dv{\alpha}\in \mathbb{R}_+^N$ and scale matrix $\dmt{S}=(s_{ij})\in \mathcal{M}_{N,M}(\mathbb{R_+})$, denoted by $\mathcal{G}_{M,N}(\dv{\alpha},\dmt{S})$, if and only if there exists a vector $\rv{Z}=[Z_1,\cdots,Z_N]^T$, with $Z_i \sim \mathcal{G}_{1,1}(\alpha_i,1)$ mutually independent, such that \[\rv{x}=\dmt{S}^T\rv{Z}\] Then, the density function is given as a linear combination of Laguerre functions.

        Taking $s_{ij}=2$ and $\alpha_i=1/2$ reproduces Royen's original central case. Moreover, for the general case of $M$ central quadratic forms, if the matrices are positive semi-definite and simultaneously orthogonally diagonalizable, then the form follows $\mathcal{G}_{M,N}(\dv{\alpha},\dmt{S})$ with $s_{ij}=2\lambda_{ji}$, $\alpha_i=1/2$ where $\lambda_{ij}$ is the $i^{\text{th}}$ eigenvalue of the $j^{\text{th}}$ matrix. Then, the probability density function is given by \begin{equation}
            f_{\rv{q}}(q_1,\ldots,q_M)=\sum_{\dv{k}\in \mathbb{N}^M}a_{\dv{k}}(f)\varphi_{\dv{k}}(\dv{q}) \label{eq:Laverny}
        \end{equation}
        where 
            \allowdisplaybreaks
            \begin{align*}
                \varphi_{\dv{k}}(\dv{q})&=\prod_{i=1}^M\varphi_{k_i}(q_i)\\
                \varphi_{k_i}(x)&=\sqrt{2}e^{-x}\sum_{\ell=0}^{k_i} \binom{k_i}{l} \dfrac{(-2x)^\ell}{\ell!}\\
                a_{\dv{k}}(f)&=\sqrt{2}^M\sum_{\dv{\ell}\leq \dv{k}} \binom{\dv{k}}{\dv{\ell}}\dfrac{(-2)^{|\dv{\ell}|}}{\dv{\ell}!}\mu_{\dv{\ell},-1}\\
                \mu_{\dv{k},-1}&=\sum_{\dv{\ell}\leq \dv{p}}\mu_{\dv{\ell},-1}\kappa_{\dv{k}-\dv{\ell},-1}\binom{\dv{p}}{\dv{\ell}}\\
            p=(p_1,\ldots,p_M);p_i&=\begin{cases}
                k_i &\text{ if } i\neq d\\
                k_i-1 &\text{ if } i=d
            \end{cases}\\
            d&=\arg\min_i\{k_i\neq 0\}\\
            \kappa_{\dv{0},-1}&=\sum_{i=1}^N \frac{1}{2}\ln\left(\dfrac{1}{1+2\sum_{j=1}^M \lambda_{ji}}\right)\\
            \mu_{\dv{0},-1}&=\exp(\kappa_{\dv{0},-1})\\
            \kappa_{\dv{k},-1}&=\frac12(|\dv{k}|-1)!\sum_{i=1}^N\tilde{\dv{s}}_i^{\dv{k}}\\
            \tilde{\dv{s}}_i&=\dfrac{\dv{\lambda}_i}{1+|\dv{\lambda}_i|}\\
            \dmt{\Lambda}&=[\dv{\lambda}_1,\ldots,\dv{\lambda}_N]\\
            \binom{\dv{a}}{\dv{b}}&=\prod_i\binom{a_i}{b_i},\;\;|\dv{a}|=\sum_i a_i,\;\;\dv{a}^\dv{b}=\prod_ia_i^{b_i}
            \end{align*}

    \section{Special Biforms}   
        We have seen that a quadratic multiform is defined via a set of parameters, one of which is the number of forms, $M$. As complicated problems are studied under simplification assumptions, an obvious one here is taking $M=2$, i.e., studying \emph{quadratic biforms}. In this case, inverting an integral transform would yield a double integral, compared to an $M$-fold integral for the general case.

        \paragraph{Joint Distribution of Sample Mean and Sample Variance}
        
          A linear form, $\dv{b}^T \rv{x}$, can be viewed as a ``degenerate'' incomplete quadratic form, i.e., $\dmt{A} = \boldsymbol{O}_{N \times N}$. Evidently, it is normally distributed. However, one may be interested in the joint distribution of a complete quadratic and a linear form. A good application for this case appears in studying the joint distribution of the sample mean and sample variance. For some applications, relying on the assumption of independence between the sample mean and the sample variance may be too restrictive. The work of Schone \& Schmidt \cite{schoneJointDistributionQuadratic2000} studies the joint PDF and CDF of quadratic and linear forms. Specifically, \cite{schoneJointDistributionQuadratic2000} considers $\rs{Q}_1 = \rv{x}^T \dmt{A} \rv{x}$ and $ \rs{Q}_2 = \dv{b}^T\rv{x}$  where $ \dmt{A} \ge 0,$ $ \dv{b} \in \mathbb{R}^N$, $ \rv{x}\sim \mathcal{N}_N(\boldsymbol{0}, \dmt{\Sigma}), \dmt{\Sigma} > 0 $. The joint MGF is 
        \begin{equation}
            \label{eqn:MGF_ShoneAndScmid}
            M_{\rv{Q}}(\dv{s}) = \prod_{j=1}^{\rankFinal} (1-2s_1 \lambda_j)^{-\frac{1}{2}} \exp \left[  {\frac{1}{2}} s_2^2 \left( \sum_{j=1}^{\rankFinal} \dfrac{h_j^2}{1-2s_1\lambda_j} + \sum_{j=\rankFinal+1}^N h_j^2  \right) \right], 
        \end{equation}
        where $ \rankFinal = \text{rank} (\dmt{\Sigma}^{\frac{1}{2}} \dmt{A}\dmt{\Sigma}^{\frac{1}{2}})=\text{rank} (\dmt{A})$, $\dv{h} = [h_1, h_2, \ldots h_N]^T = \dmt{P}^T \dmt{\Sigma}^{\frac{1}{2}} \dv{b}$, and $\{\lambda_j\}_{j=1}^{\rankFinal}$ are the nonzero eigenvalues of $\dmt{\Sigma}^{\frac{1}{2}} \dmt{A}\dmt{\Sigma}^{\frac{1}{2}}$ (arranged in non-increasing order). To derive the joint density, the MGF in \eqref{eqn:MGF_ShoneAndScmid} is re-expressed in integral form using variable transformation. The joint density is derived from the new MGF as a series of generalized Laguerre functions with coefficients that can be computed recursively. The joint density is 
        \begin{equation}
            f_{\rv{Q}} (\dv{q}) = \sum_{i=0}^\infty \sum_{j=0}^\infty \frac{f_{i,j}}{\sqrt{2\pi}} g_i^{(\rankFinal)}(q_1)q_2^{2j},
        \end{equation}

        where $g_i^{(j)}(q)$ is a type of generalized Laguerre function weighted by a gamma distribution, and the coefficient $f_{i,j}$ is given by 
            
        $$ 
        f_{i, j}=\sum_{k=0}^i \sum_{l=0}^k a_l c_{k-l} e_{i-k, j}, \quad i, j \ge 0
        $$
        
        where
        
        $$
        \begin{array}{llrl}
        a_0=1, & a_i=\frac{1}{2 i} \sum_{k=0}^{i-1} a_k \sum_{l=1}^\rankFinal (1-\frac{\lambda_l}{\beta})^{i-k}, & i \ge 1, \\
        b_0=\sum_{j=1}^N h_j^{ 2}, & b_i=\sum_{k=1}^\rankFinal h_k^{ 2} (1-\frac{\lambda_k}{\beta})^{i-1}\left((1-\frac{\lambda_k}{\beta})-1\right), & i \ge 1, \\
        c_0=\frac{1}{\sqrt{b_0}}, & c_i=-\frac{1}{2 i b_0} \sum_{k=0}^{i-1}(i+k) c_k b_{i-k}, & i \ge 1, \\
        d_0=-\frac{1}{2 b_0}, & d_i=-\frac{1}{b_0} \sum_{k=0}^{i-1} d_k b_{i-k}, & i \ge 1
        \end{array}
        $$
        
        and
        
        $$
        \begin{aligned}
        & e_{i, 0}=0, \quad i \ge 1, \quad e_{0, j}=\frac{d_0^j}{j!}, \quad j \ge 0 \\
        & e_{i, j}=\frac{1}{i} \sum_{k=0}^{i-1}(i-k) e_{k, j-1} d_{i-k}, \quad i \ge 1, \quad j \ge 1.
        \end{aligned}
        $$
       The joint distribution is obtained from the joint density as follows: 
        \begin{equation}
        \begin{aligned}
            F_{\rs{Q}_1, \rs{|Q_2|}}(q_1,q_2) &= \sum_{i=0}^\infty \sum_{j=0}^\infty \frac{f_{i+1,j}}{\sqrt{2\pi}} \frac{4\beta}{2j+1}g_i^{(\rankFinal+2)}(q_1)q_2^{2j+1}\\ &+ F_{\chi^2_{\rankFinal}}\left(\frac{q_1}{\beta}\right) \left[ \Phi \left( \frac{q_2}{\beta}  \right) - \Phi \left( -\frac{q_2}{\beta}  \right) \right],
        \end{aligned}
        \end{equation}
        where $\beta > \frac{\lambda_1}{2} $ and $ F_{\chi^2_{\rankFinal}}(u)$ is the CDF of a central chi-square random variable with $\rankFinal$ degrees of freedom evaluated at $u$.

        \paragraph{Sums of Squares of Complex Normal Random Variables}
        
        Tavares \& Tavares \cite{tavaresStatisticsSumSquared2007} study the distribution of a sum of squares of modulo of complex normal variables. The sum itself cannot be written in the form (QF) or (CQF); however, the real and imaginary parts of the sum can be written as  (MultiQF). To illustrate, let $\{\rs{x}_j\}_{j=1}^n$ be a set of mutually IID complex Gaussian random variables, where $\rs{x}_j \sim \mathcal{CN}(\mu_j, 2\sigma^2)$. The real and imaginary parts are independent. Denote $\rs{x}_{j}^I := \Re({\rs{x}_j})$, $\rs{x}_{j}^Q := \Im({\rs{x}_j})$, $\mu_{j}^I := \mathbb{E}[\rs{x}_{j}^I] $ and $\mu_{j}^Q := \mathbb{E}[\rs{x}_{j}^Q] $. For each $j \in \mathbb{N}$,  $\rs{x}_j^2 = (\rs{x}_{j}^I)^2 - (\rs{x}_{j}^Q)^2 + i 2\rs{x}_{j}^I \rs{x}_{j}^Q$. We are interested in finding the distribution of $\sum_{j=1}^n \rs{x}_j^2$. Define $\rs{Q}_1 := \sum_{j=1}^n (\rs{x}_{j}^I)^2 - (\rs{x}_{j}^Q)^2$ and $\rs{Q}_2 := \sum_{j=1}^n 2\rs{x}_{j}^I \rs{x}_{j}^Q$. Then, $\rs{Q}_1$ and $\rs{Q}_2$ can be written as
                \begin{align*}
        \rs{Q}_1 &= \rv{x}^T \dmt{A}_1\rv{x}, \quad \dmt{A}_{1}=\begin{bmatrix}
                                                                        \dmt{I}_n & \dmt{0}\\
                                                                        \dmt{0}   & -\dmt{I}_n
                                                                    \end{bmatrix}, \\
        \rs{Q}_2 &= \rv{x}^T \dmt{A}_2 \rv{x}, \quad \dmt{A}_{2}=\begin{bmatrix}
                                                                        \dmt{0}   & \dmt{I}_n\\
                                                                        \dmt{I}_n & \dmt{0}
                                                                    \end{bmatrix}, \\            
        \end{align*}   
         where $\rv{x}=[\rs{x}_{1}^I, \rs{x}_{2}^I, \ldots \rs{x}_{n}^I, \rs{x}_{1}^Q, \rs{x}_{2}^Q, \ldots \rs{x}_{n}^Q ]^T$  $ \sim \mathcal{N}_N([\dv{\mu}_I^T, \dv{\mu}_Q^T]^T, \sigma^2 \dmt{I}_N)$, $N =2n$, $\dv{\mu}_I=[\mu_{1}^I, \mu_{2}^I, \ldots \mu_{n}^I]^T$, and $\dv{\mu}_Q=[\mu_{1}^Q, \mu_{2}^Q, \ldots \mu_{n}^Q ]^T$. The MGF of $\rv{Q} = [\rs{Q}_1, \rs{Q}_2]^T$ 
        \begin{equation}
            \label{eqn:MGF_Tavares}
            \begin{aligned}
                 M_{\rv{Q}}(\dv{s}) &= \dfrac{1}{1-4\sigma^4(s_1^2 + s_2^2)^{N/4}}\\ &\times \exp \left[ {\dfrac{2\sigma^2(s_1^2 + s_2^2) \sum_{j=1}^n|\mu_j|^2 + s_1\sum_{j=1}^n \Re(\mu_j^2) +  s_2\sum_{j=1}^n \Im(\mu_j^2) }{1-4\sigma^4(s_1^2 + s_2^2)}} \right]. 
            \end{aligned}
        \end{equation}
        The joint PDF is obtained from a direct Fourier transform of the characteristic function, which 
        \begin{equation}
            \label{eqn:PDF_Tavares}
            \begin{aligned}
                f_{\rv{Q}} (\dv{q}) &= \frac{1}{8\pi\sigma^4} \int_0^\infty f(u)\\
                &\times J_0 \left( \frac{u}{2\sigma^2} \sqrt{\left( q_1 - \frac{\sum_{j=1}^n \Re{(\mu_j^2)}}{1+u^2}\right)^2 + \left( q_2 - \frac{\sum_{j=1}^n \Im{(\mu_j^2)}}{1+u^2}\right)^2} \right) du, 
            \end{aligned}
        \end{equation}
        where $J_\nu(x)$ is the Bessel function of the first kind and order $\nu$ and 
        $$f(u) \triangleq \dfrac{u}{(1+u^2)^{\frac{n}{2}}} \exp{\left( \frac{-1}{2\sigma^2}  \frac{u^2\sum_{j=1}^n|\mu_j|^2}{1+u^2} \right)}$$. 

        \paragraph{Bivariate Rician Distribution}
        
        The work presented by Beaulieu \& Hemachandra in \cite{beaulieuNovelRepresentationsBivariate2011} studies the joint distribution of the PDF and CDF of two correlated Rician random variables. To demonstrate the proposed method, consider the quadratic biform, $\rv{Q} = [\rs{Q}_1, \rs{Q}_2]^T \triangleq [|\rs{x}_1|^2, |\rs{x}_2|^2]^T$, where  $|\rs{x}_k|$ is Rician distributed which can be generated from a complex random variable, $\rs{x}_k = \Re{(\rs{x}_k)} + i \Im{(\rs{x}_k)}$, with $
        \Re{(\rs{x}_k)} \sim \mathcal{N}(\sigma_k \lambda_k m_1, \frac{\sigma^2_k}{2})$ and $
        \Im{(\rs{x}_k)} \sim \mathcal{N}(\sigma_k \lambda_k m_2,\frac{\sigma^2_k}{2})$; hence, $
        \rs{x}_k \sim \mathcal{CN}(\sigma_k \lambda_k(m_1+im_2), \sigma_k^2)$, where $k \in \{1,2\}$, $\lambda_k \in (-1,1) \backslash \{0\}$, $m_1, m_2 \in \mathbb{R}$. For a given $k$, the real and imaginary parts are independent. The quadratic biform components, $\rs{Q}_1$ and $\rs{Q}_2$, can be represented as follows: 
        \begin{align*}
        \rs{Q}_1 &= \rv{x}^H \dmt{E}_{11} \rv{x}, \quad \dmt{E}_{11}=\begin{bmatrix}
                                                                        1 & 0\\
                                                                        0 & 0
                                                                    \end{bmatrix} \\
        \rs{Q}_2 &= \rv{x}^H \dmt{E}_{22} \rv{x}, \quad \dmt{E}_{22}=\begin{bmatrix}
                                                                        0 & 0\\
                                                                        0 & 1
                                                                    \end{bmatrix}, \\            
        \end{align*}
        where $\rv{x} = [\rs{x}_1,\rs{x}_2]^T  \sim \mathcal{CN}_2(\dv{\mu},\dmt{\Sigma})$, $\dv{\mu} = [\sigma_1\lambda_1(m_1 + im_2), \sigma_2\lambda_2(m_1 + im_2)]^T$, $\dmt{\Sigma}=\begin{bmatrix}
                    \sigma_1^2                          & \sigma_1\sigma_2\lambda_1\lambda_2\\
                     \sigma_1\sigma_2\lambda_1\lambda_2 & \sigma_2^2
                    \end{bmatrix}.$
        The assumed model of $\rs{x}_k$ in \cite{beaulieuNovelRepresentationsBivariate2011} is $\rs{x}_k = \sigma_k \sqrt{1-\lambda_k^2}\rs{v}_k + \sigma_k\lambda_k\rs{v}_0  + i \left( \sigma_k \sqrt{1-\lambda_k^2}\rs{w}_k + \sigma_k\lambda_k\rs{w}_0  \right) $ where $\rs{v}_k$, $\rs{w}_k \sim \mathcal{N}(0, \frac{1}{2})$ are independent; and $\rs{v}_0 \sim \mathcal{N}(m_1, \frac{1}{2})$ and $\rs{w}_0 \sim \mathcal{N}(m_2, \frac{1}{2})$ are also independent.            
        The joint PDF of $|\rs{x}_k|$ is derived from conditioning on $\rs{v}_0$ and $\rs{w}_0$. Conditioning on $\rs{v}_0$ and $\rs{w}_0$, $|\rs{x}_k|$ is still Rician distributed because the real and imaginary parts have equal variance and non-zero means.
        Also, the random variables $\{\rs{x}_k\}_{k=1}^2$ are independent after the conditioning operation. Defining $\rv{y} \triangleq  [|\rs{x}_1|, |\rs{x}_2|]^T $, the joint conditional PDF can be written as follows: 
        \begin{equation}
            f_{\rv{y}|{\rs{v}_0, \rs{w}_0}}(\dv{y}|v_0,w_0) = f_{\rs{y}_1|{\rs{v}_0, \rs{w}_0}}(y_1|v_0,w_0) f_{\rs{y}_2|{\rs{v}_0, \rs{w}_0}}(y_2|v_0,w_0).
        \end{equation}
        The unconditional joint PDF is
        \begin{equation}
            f_{\rv{y}}(\dv{y}) = \int_{w_0}\int_{v_0} f_{\rv{y}|{\rs{v}_0, \rs{w}_0}}(\dv{y}|v_0,w_0) f_{{\rs{v}_0, \rs{w}_0}}(v_0,w_0) dv_0 dw_0.
        \end{equation}
        Because the $\rs{v}_0$ and $\rs{w}_0$ are independent Gaussian random variables, their joint PDF is at hand; hence, one can ditrectly substitute to get the joint unconditional PDF of two correlated Rician random variables. After variable transformation, the joint PDF is expressed as
        \begin{equation}
            \label{eqn:PDF_BiformBeaulieu}
            \begin{aligned}
                f_{\rv{y}}(\dv{y}) &= \int_0^\infty \prod_{k=1}^2 \exp{(-ts)} \exp{[-(m_1^2 + m_2^2)]} I_0\left( 2\sqrt{t} \sqrt{m_1^2 + m_2^2} \right)\\ 
                &\times\frac{y_k}{\Omega_k^2} \exp{\left( -\frac{y_k^2}{2\Omega_k^2}\right)} I_0\left( \frac{y_k\sqrt{\sigma_k^2\lambda_k^2}\sqrt{t}}{2\Omega_k^2}\right) dt
            \end{aligned}
        \end{equation}
        at the value of Laplace argument, 
        \begin{equation}
            s = 1 + \sum_{k=1}^2 \frac{\sigma_k^2\lambda_k^2}{2\Omega_k^2},
        \end{equation}
        where $\Omega_k \triangleq \frac{\sigma_k^2(1 - \lambda_k^2)}{2}$, $I_0(.)$ is the modified Bessel function of the first kind and zero order.
        Note that there is one single integration is needed in Equation \eqref{eqn:PDF_BiformBeaulieu} to calculate the joint PDF. 

        A similar approach is used to obtain the result for the CDF,
        \begin{equation}
            \label{eqn:CDF_BiformBeaulieu}
            \begin{aligned}
                F_{\rv{y}}(\dv{y}) &= \int_0^\infty \exp{-t} \exp{\left(-(m_1^2 + m_2^2) \right)} I_0\left( 2\sqrt{t} \sqrt{m_1^2 + m_2^2} \right)\\
                &\times\prod_{k=1}^2 \left[ 1 - Q\left( \frac{\sqrt{t}\sqrt{\sigma_k^2\lambda_k^2}}{\Omega_k}, \frac{y_k}{\Omega_k}\right)  \right]dt.
            \end{aligned}
        \end{equation} These formulas are the bivariate special case of what we shall see later in \eqref{Beaulieu-Hemachandra:Rician}.

        \paragraph{Others} There are other examples of bivariate forms including: Simon \cite[Ch 3]{simonProbabilityDistributionsInvolving2006}, Hagedorn's bivariate \eqref{eq:Hagedorn-bivariate}.

    \section{Definition of Multi-Chi-Square Distribution}


        Multi-chi-squares\footnote{We are using multi-chi-squares as a shorthand for what is more commonly called multivariate chi squares} are generalizations of the univariate chi-squares such that each component (marginal) of the multivariate vector is a chi-square. There are many ways to define such variables. Kotz in \cite[Ch. 48]{kotzContinuousMultivariateDistributions2000} surveys more than 14 multivariate generalizations of the univariate chi-squares! The definition we will use in this section is perhaps one of the most general. Many of \cite[Ch. 48]{kotzContinuousMultivariateDistributions2000} definitions are special cases of the definition we will use. The earliest such definition is perhaps that of Krishnamoorthy and Parasarathy \cite{krishnamoorthyMultivariateGammaTypeDistribution1951}, under the name of multivariate gamma. Although multi-variate gamma is intimately related to the multi-chi-square they are not identical.

        \subsection{Multi-Chi-Square}

            The random vector can be generated starting from a bunch of Gaussian random variables. This method will be shown in three straightforwardly equivalent ways. We will start with the formulation that appears in Royen's 1991 paper \cite{royenExpansionsMultivariateChisquare1991}. The second one appears in Royen's 1995 paper \cite{royenCENTRALNONCENTRALMULTIVARIATE1995}. The third one is the equivalent formulation in terms of quadratic forms. 
    
            Consider a set of
            independent random vectors $\rv{y}_k \sim \mathcal{N}_M (\dv{\mu}_k, \dmt{R})$, $k=1,2,\ldots, K$. Stack them as columns in a matrix $\rmt{Y} \in \mathbb{R}^{M \times K}$, 
            $$
                \rmt{Y} = [\rv{y}_1, \rv{y}_2, \ldots, \rv{y}_K ]
            $$
            The sought distribution is that of the squared Euclidean norms of the rows of the
            random matrix $\rmt{Y}$, as illustrated below
            \[ 
            \resizebox{\textwidth}{!}{$
            \rmt{Y} = 
            \begin{bmatrix} 
                (\rs{y}_1)_1 & (\rs{y}_2)_1 &  \ldots & (\rs{y}_K)_1 \\
                (\rs{y}_1)_2 & (\rs{y}_2)_2 &  \ldots & (\rs{y}_K)_2 \\
                \vdots \\
                (\rs{y}_1)_M & (\rs{y}_2)_M &  \ldots & (\rs{y}_K)_M \\
            \end{bmatrix}
            \mathbf{\longrightarrow}
            \begin{bmatrix} 
                (\rs{y}_1)^2_1 + (\rs{y}_2)^2_1 +  \ldots + (\rs{y}_K)^2_1 \\
                (\rs{y}_1)^2_2 + (\rs{y}_2)^2_2 +  \ldots + (\rs{y}_K)^2_2 \\
                \vdots \\
                (\rs{y}_1)^2_M + (\rs{y}_2)^2_M +  \ldots + (\rs{y}_K)^2_M \\
            \end{bmatrix}
             \triangleq \begin{bmatrix}
                \rs{Q}_1\\
                \rs{Q}_2\\
                \vdots\\
                \rs{Q}_M
            \end{bmatrix} = \rv{Q}
            $}
             \]
            Clearly, each component of $\rv{Q}\in\mathbb{R}^M$ is a scaled non-central chi-square random variables of $K$ degrees of freedom each. We say $\rv{Q} $ follows a multi-chi-square distribution parameterized by $K$, $\dmt{R}$ and $\dmt{\Delta} = \mathbb{E}[\rmt{Y}]\mathbb{E}[\rmt{Y}]^T = \sum_{k=1}^K \dv{\mu}_k \dv{\mu}_k^T$. The MGF, Equation \eqref{eq:Royen-MGF}, depends only on $M,K,\dmt{R},\text{ and } \dmt{\Delta}$. Hence the distribution can be parametrized by the aforementioned parameters. Symbolically, we write 
            $$
            \rv{Q} \sim \chi^2_M(K,\dmt{R}, \dmt{\Delta})
            $$
            For $M=1$, the (scaled) univariate non-central chi-square distribution is recovered \footnote{Beware of the notational difference: the subscript of $\chi$ does not refer to the degrees of freedom here.}.
    
            The second formulation starts from a non-central Wishart matrix. Let $\dmt{R}$ be a positive definite matrix in $\mathbb{R}^{M\times M}$. Let $\rv{y}_k\sim \mathcal{N}_M(\dv{\mu}_k,\dmt{R})$ for $k=1,2,\ldots,K$, and $\rmt{y}=[\rv{y}_1,\rv{y}_2,\ldots,\rv{y}_K]\in\mathbb{R}^{M\times K}$. Hence, we say that $\rmt{W}=\rv{Y}\rv{Y}^T\in\mathbb{R}^{M\times M}$ follows a non-central Wishart distribution of $K$ degrees of freedom, covariance matrix $\dmt{R}$, and non-centrality matrix $\dmt{\Delta}=\mathbb{E}[\rmt{Y}]\mathbb{E}[\rmt{Y}]^T$. Symbolically, we write $$
                \rmt{W} \sim \mathcal{W}^2_M(\dmt{R}, \dmt{\Delta})
            $$ Its main diagonal then follows the multi-chi-square distribution $$\diag(\dmt{W}) \sim \chi^2_M(K,\dmt{R}, \dmt{\Delta})$$
    
            As this monograph is concerned with quadratic multiforms, we would like to use the latter notion to define the multi-chi-square distribution. Indeed, let $\rv{x}\sim \mathcal{N}_{N}(\dv{\mu},\dmt{\Sigma})$. We assume that the parameters $N$ and $\dmt{\Sigma}$ satisfy the following (1) $N=KM$, and (2) $\dmt{\Sigma}=\dmt{I}_K\otimes\dmt{R}=\diag(\dmt{R},\ldots,\dmt{R})$ for some positive semi-definite matrix $\dmt{R}\in\mathbb{R}^{M\times M}$. The quadratic forms are then defined by \begin{equation}
                \rs{Q}_m=\rv{x}^T(\dmt{I}_K\otimes \dmt{E}_{mm})\rv{x},\quad m=1,\ldots,M \label{eq:MVChi2}
            \end{equation} where $\dmt{E}_{mm}$ denotes the $M\times M$ matrix with zero entries except for the $m^{\text{th}}$ diagonal position which is one, i.e., $\dmt{E}_{mm}=(\delta_{im}\delta_{j m})_{1\leq i,j \leq M}$, where $\delta_{ij}$ is the Kronecker delta. Again, the three formulations are equivalent.
    
            \paragraph{MGF}
            
                As we have expressed the distribution as a quadratic form in Gaussian random variables, we can use the general MGF expression \eqref{eq:MGF-MultiQF} to evaluate its MGF. We start with deriving the MGF of a central multi-chi-square using that of a general quadratic multiform. The MGF can be derived from the general MGF as \begin{align*}
                    M_{\chi^2_M(K,\dmt{R}, \dmt{0})}&=\left|\dmt{I}_{KM}-2\left(\sum_{m=1}^Ms_m \dmt{I}_K\otimes \dmt{E}_{mm}\right)(\dmt{I}_K\otimes \dmt{R})\right|^{-1/2}\\
                    =&\left|\dmt{I}_K\otimes\dmt{I}_M-2\sum_{m=1}^Ms_m \dmt{I}_K\otimes (\dmt{E}_{mm}\dmt{R})\right|^{-1/2}\\
                    =&\left|\dmt{I}_M-2\sum_{m=1}^Ms_m\dmt{E}_{mm}\dmt{R}\right|^{-p}\\
                \end{align*}
                Let $\dmt{S}=\diag(s_1,s_2,\ldots,s_M)$. Hence $\sum_{m=1}^Ms_m\dmt{E}_{mm}=\dmt{S}$. Therefore the central part of the MGF can be written as 
                $$M_{\chi^2_M(K,\dmt{R}, \dmt{0})}=|\dmt{I}_M-2\dmt{SR}|^{-p}$$
                Note that for the general non-central case, we can write the MGF as that of the central multiform, multiplied by the exponential term, as follows: \[M_{\chi^2_M(K,\dmt{R}, \dmt{\Delta})}=M_{\chi^2_M(K,\dmt{R}, \dmt{0})}\times\tilde{M}\] where \begin{equation*}
                    \begin{aligned}
                        \tilde{M}&=\exp \Bigg\{ -\frac{1}{2}\left(\dv{\mu}^T\tilde{\dmt{\Sigma}}^{-1}\dv{\mu} \right)+ \frac{1}{2} \dv{\mu}^T \left[ \dmt{I}_N - 2\left(\sum_{m}s_m\dmt{A}_m\right)\tilde{\dmt{\Sigma}} \right]^{-1}\tilde{\dmt{\Sigma}}^{-1} \dv{\mu} \Bigg\}
                    \end{aligned}
                \end{equation*} with $\dmt{A}_m=\dmt{I}_K\otimes\dmt{E}_{mm}$, $\tilde{\dmt{\Sigma}}=\dmt{I}_K\otimes\dmt{R}$ and $\dv{\mu}=[\dv{\mu}_1^T,\ldots,\dmt{\mu}_K^T]^T$. Recall that the vectors $\dv{\mu}_k$ are the means of the underlying Gaussians. Then simplify the matrix in the quadratic term of the argument of the exponential function \begin{align*}
                    &\left[\dmt{I}_{KM}-\left(\dmt{I}_{KM}-2\left(\sum_{m=1}^Ms_m \dmt{I}_K\otimes \dmt{E}_{mm}\right)(\dmt{I}_K\otimes \dmt{R})\right)^{-1}\right](\dmt{I}_K\otimes \dmt{R})^{-1}\\
                    =&\left\{\dmt{I}_K\otimes\left[\dmt{I}_M-(\dmt{I}_M-2\dmt{SR})^{-1}\right]\right\}\dmt{I}_K\otimes\dmt{R^{-1}}
                \end{align*} Using the fact that $(\dmt{I}_M-2\dmt{SR})^{-1}(\dmt{I}_M-2\dmt{SR})=\dmt{I}_M$, we can write $$\dmt{I}_M-(\dmt{I}_M-2\dmt{SR})^{-1}=-2\dmt{SR}(\dmt{I}_M-2\dmt{SR})^{-1}$$ Hence the inner matrix is equal to \begin{align*}
                    &\dmt{I}_K\otimes \left(-2\dmt{SR}(\dmt{I}_M-2\dmt{SR})^{-1}\dmt{R}^{-1}\right)
                    = \dmt{I}_K\otimes \left(-2\dmt{S}(\dmt{I}_M-2\dmt{RS})^{-1}\right)
                \end{align*} Then the non-central part of the MGF can be written as \begin{align*}
                    \tilde{M}&=\exp\left\{-\dfrac12\dv{\mu}^T\left[\dmt{I}_K\otimes\left(-2\dmt{S}(\dmt{I}_M-2\dmt{RS})^{-1}\right)\right]\dv{\mu}\right\}\\
                    =&\exp\left\{\sum_{k=1}^K\dv{\mu}_k^T\dmt{S}(\dmt{I}_M-2\dmt{RS})^{-1}\dv{\mu}_k\right\}\\
                    =&\exp\left\{\sum_{k=1}^K\tr\left[\dmt{S}(\dmt{I}_M-2\dmt{RS})^{-1}\dv{\mu}_k\dv{\mu}_k^T\right]\right\}\\
                    =&\exp\left\{\tr\left[\dmt{S}(\dmt{I}_M-2\dmt{RS})^{-1}\sum_{k=1}^K\dv{\mu}_k\dv{\mu}_k^T\right]\right\}\\
                    =&\etr\left\{\dmt{S}(\dmt{I}_M-2\dmt{RS})^{-1}\dmt{\Delta}\right\}\\
                \end{align*}
                Therefore the MGF of the multi-chi-square distribution $\rv{Q}\sim\chi_M^2(K,\dmt{R},\dmt{\Delta})$ is given by \begin{equation}
                    M_{\rv{Q}}(\dv{s})=|\dmt{I}_M-2\dmt{SR}|^{-p}\etr\left\{\dmt{S}(\dmt{I}_M-2\dmt{RS})^{-1}\dmt{\Delta}\right\} \label{eq:Royen-MGF}
                \end{equation} with $\dmt{S}=\diag(s_1,s_2,\ldots,s_M)$. As the MGF is defined in a neighborhood of zero, it uniquely determines the distribution. As a result, the multi-chi-square is parameterized by $\dmt{\Delta}$ instead of the set of means $\{\dv{\mu}_k\}_{k=1}^K$. Note that the matrix $\dmt{\Delta}$ is the multivariate analogue of the non-centrality parameter of a univariate non-central chi-square. In particular, if $\{\rs{u}_n\}_{n=1}^N$ are independent standard Gaussians, the random variable $Q=\sum_{n=1}^N(\rs{u}_n+\mu_n)^2$ is a non-central chi-square with a non-centrality parameter $\delta=\sum_{n=1}^N \mu_n^2$.

            \paragraph{Correlation vs Covariance}
    
                Before proceeding further with the distribution, we will assume, without loss of generality, that the covariance matrix $\dmt{R}\in\mathbb{R}^{M\times M}$ \footnote{$\dmt{R}$ is necessarily a correlation matrix, since, otherwise, the components would not be unscaled chi-squares.} is a correlation matrix, i.e., its diagonal entries are units. In this case, the components follow (unscaled) non-central chi-square distribution. In fact, suppose that $\dmt{R}=(r_{ij})_{1\leq i,j \leq M}$ with $$r_{ij}=\begin{cases}
                    \sigma_{ii}^2 &\text{ if } i=j\\
                    \rho_{ij}\sigma_i\sigma_j &\text{ if } i\neq j
                \end{cases}$$ Hence we can write $\rs{Q}_m=\sigma_m^2\tilde{\rv{x}}^T(\dmt{I}_K\otimes\dmt{E}_{mm})\tilde{\rv{x}}$, where $\tilde{\rv{x}}\sim\mathcal{N}(\tilde{\dv{\mu}},\tilde{\dmt{R}})$, with $\tilde{\dmt{R}}=(\tilde{r}_{ij})_{1\leq i,j \leq M}$ a correlation matrix $$\tilde{r}_{ij}=\begin{cases}
                    1 &\text{ if } i=j\\
                    \rho_{ij} &\text{ if } i\neq j
                \end{cases}$$ Note that $\tilde{\dv{\mu}}$ can be obtained as follows $$\tilde{\mu}_{\alpha M+m}=\mu_{\alpha M+m}/\sigma_m \text{ for }\alpha=0,1,\ldots,K-1 \text{ and } m=1,\ldots,M$$ Therefore, we can obtain the general distribution (with an arbitrary underlying covariance matrix) by scaling the components of that with an underlying correlation matrix.  

                Note that it is common in the literature to study the distribution with an underlying correlation matrix instead of a general covariance matrix (see \cite{royenMultivariateGammaDistributions1991,royenExpansionsMultivariateChisquare1991,royenCENTRALNONCENTRALMULTIVARIATE1995}).

            \paragraph{A Counterexample}
    
                Apparently, a necessary condition for a multi-chi-square in this sense is that all the marginals have the same number of degrees of freedom. Is this condition sufficient? We construct a counterexample to prove it isn't. Indeed, consider the seemingly bivariate chi-square given by \(\rv{y}=[\rs{x}_1^2+\rs{x}_3^2,\rs{x}_2^2+\rs{x}_4^2]^T\), where \(\rv{x}=[\rs{x}_1,\rs{x}_2,\rs{x}_3,\rs{x}_4]^T\sim\mathcal{N}(\dv{0},\dmt{\Sigma})\), where \begin{align*}
                    \dmt{\Sigma}=\begin{bmatrix}
                    1 &0 &0 &\alpha\\
                    0 &1 &0 &0\\
                    0 &0 &1 &0\\
                    \alpha &0 &0 &1
                \end{bmatrix}
                \end{align*}
                with $0<\alpha<1$. The moment generating function is then \[M_{\rv{y}}(s_1,s_2)=(1-2s_1)^{-\frac12}(1-2s_2)^{-\frac12}[(1-2s_1)(1-2s_2)-4\alpha^2s_1s_2]^{-\frac12}\] which cannot be put into the form of the Laplace transform of a multi-chi-square. Indeed, a central bivariate chi-square random vector would have an MGF of the form $$ M_{\rv{Q}}(\dv{s})=|\dmt{I}_M-2\dmt{SR}|^{-1}$$
                with correlation matrix $\dmt{R}$ given by $$\dmt{R}=\begin{bmatrix}
                    1 &\rho\\
                    \rho &1
                \end{bmatrix}.$$
                Hence, the MGF can be written as $$M_{\rv{Q}}(s_1,s_2)=[(1-2s_1)(1-2s_2)-4\rho s_1s_2]^{-1}$$
                Searching for some $\rho\in(0,1)$ so that $\rv{y}$ and $\rv{Q}$ are identically distributed gives the equation $$\begin{aligned}
                    \left[(1-2s_1)(1-2s_2)-4\rho s_1s_2\right]^2=(1-2s_1)(1-2s_2)[(1-2s_1)(1-2s_2)-4\alpha^2s_1s_2]
                \end{aligned}$$
                that holds for all $s_1,s_2$ in some neighborhood of zero. It is equivalent to $$(4\rho^4-8\rho^2+16\alpha^2)s_1^2s_2^2+(16\rho^2-8\alpha^2)s_1^2s_2+(16\rho^2-8\alpha^2)s_1s_2^2+(4\alpha^2-8\rho^2)s_1s_2=0$$
                which holds for every $s_1,s_2$ if and only if $\rho=\alpha=0$. Therefore, $\rv{y}$ does not follow a multi-chi-square distribution.

        \subsection{Complex Multi-Chi-Square}
    
            The complex counterpart of multi-chi-square can be similarly defined. Consider the independent vectors $\rv{y}_k \sim \mathcal{CN}_M (\dv{\mu}_k, \dmt{R})$, $k=1,2,\ldots, K$. Stack them as columns in a matrix $\rmt{Y} \in \mathbb{C}^{M \times K}$, 
            $$
                \rmt{Y} = [\rv{y}_1, \rv{y}_2, \ldots, \rv{y}_K ]
            $$
            The sought distribution is that of the squared Euclidean norms of the rows of $\rmt{Y}$, as illustrated below
            $$
            \resizebox{\textwidth}{!}{$
            \rmt{Y} = 
            \begin{bmatrix} 
                (\rs{y}_1)_1 & (\rs{y}_2)_1 &  \ldots & (\rs{y}_K)_1 \\
                (\rs{y}_1)_2 & (\rs{y}_2)_2 &  \ldots & (\rs{y}_K)_2 \\
                \vdots \\
                (\rs{y}_1)_M & (\rs{y}_2)_M &  \ldots & (\rs{y}_K)_M \\
            \end{bmatrix}
            \rightarrow
            \begin{bmatrix} 
                |\rs{y}_1|^2_1 + |\rs{y}_2|^2_1 +  \ldots + |\rs{y}_K|^2_1 \\
                |\rs{y}_1|^2_2 + |\rs{y}_2|^2_2 +  \ldots + |\rs{y}_K|^2_2 \\
                \vdots \\
                |\rs{y}_1|^2_M + |\rs{y}_2|^2_M +  \ldots + |\rs{y}_K|^2_M \\
            \end{bmatrix}
            \triangleq \rv{Q}
            $}
            $$
            Clearly, the components of $\rv{Q}\in\mathbb{R}^M$ are scaled non-central chi-square random variables each of $K$ degrees of freedom. We say $\rv{Q} $ follows a complex multi-chi-square distribution parameterized by $K$, $\dmt{R}$ and $\dmt{\Delta}=\mathbb{E}[\rmt{Y}]\mathbb{E}[\rmt{Y}]^H = \sum_{k=1}^K \dv{\mu}_k \dv{\mu}_k^H$.  Symbolically, we write 
            $$
                \rv{Q} \sim \mathcal{C}\chi^2_M(K,\dmt{R}, \dmt{\Delta})
            $$ 
    
            Again, the second formulation starts from a non-central complex Wishart matrix. Let $\dmt{R}$ be a positive definite matrix in $\mathbb{C}^{M\times M}$. Let $\rv{y}_k\sim \mathcal{CN}_M(\dv{\mu}_k,\dmt{R})$ for $k=1,2,\ldots,K$, and $\rmt{Y}=[\rv{y}_1,\rv{y}_2,\ldots,\rv{y}_K]\in\mathbb{C}^{M\times K}$. Hence, we say that $\rmt{W}=\rmt{Y}\rmt{Y}^H\in\mathbb{C}^{M\times M}$ follows a non-central Wishart distribution of $K$ degrees of freedom, covariance matrix $\dmt{R}$, and non-centrality matrix $\dmt{\Delta}=\mathbb{E}[\rmt{Y}]\mathbb{E}[\rmt{Y}]^H$. Symbolically, we write $$
                \rmt{W} \sim \mathcal{CW}^2_M(\dmt{R}, \dmt{\Delta})
            $$ Its main diagonal then follows the complex multi-chi-square distribution $$\diag(\dmt{W}) \sim \mathcal{C}\chi^2_M(K,\dmt{R}, \dmt{\Delta})$$
    
            We would also like to define the multi-chi-square distribution using quadratic forms. Indeed, let $\rv{x}\sim \mathcal{CN}_{N}(\dv{\mu},\dmt{\Sigma})$. We assume that the parameters $N$ and $\dmt{\Sigma}$ satisfy the following (1) $N=KM$, and (2) $\dmt{\Sigma}=\dmt{I}_K\otimes\dmt{R}=\diag(\dmt{R},\ldots,\dmt{R})$ for some positive semi-definite matrix $\dmt{R}\in\mathbb{C}^{M\times M}$. The quadratic forms are then defined by 
            \begin{equation}
                \rs{Q}_m=\rv{x}^H(\dmt{I}_K\otimes \dmt{E}_{mm})\rv{x},\quad m=1,\ldots,M \label{eq:ComplexMVChi2}
            \end{equation}
    
            Similarly, the MGF is given by \begin{equation}
                M_{\rv{Q}}(\dv{s})=|\dmt{I}_M-\dmt{SR}|^{-K}\etr\left\{\dmt{S}(\dmt{I}_M-\dmt{RS})^{-1}\dmt{\Delta}\right\} \label{eq:Royen-Complex-MGF}
            \end{equation} with $\dmt{S}=\diag(s_1,s_2,\ldots,s_M)$. 

            \paragraph{Relation between Real and Complex Multi-Chi-Squares}
    
            The definition of the complex multi-chi-square distribution is quite straightforward. However, the relation with its real counterpart isn't. One could immediately verify that a complex chi-square with a real covariance matrix $\dmt{R}$ can be written as a (real) multi-chi-square. Starting with the definition as in Equation \ref{eq:ComplexMVChi2}, let $\rv{\tilde{x}}\sim\mathcal{N}([\Re(\dv{\mu})^T,\Im(\dv{\mu})^T]^T,\dmt{I}_{2K}\otimes\dmt{R})$. Hence we can rewrite the forms as $$\rv{Q}_m=\rv{\tilde{x}}^T(\dmt{I}_{2K}\otimes\dmt{E}_{mm})\rv{\tilde{x}}$$ The case of a non-real covariance is not as easy. Firstly, we will shift our focus to the central case, studying the MGFs of the real and complex distributions. Recall that, for the real case, the MGF is given by $$M_{\rs{Q}_R}(\dv{s})=|\dmt{I}_M-2\dmt{S}\dmt{R}_R|^{-K_R/2}$$ and for the complex case, it is given by $$M_{\rs{Q}_C}(\dv{s})=|\dmt{I}_M-\dmt{S}\dmt{R}_C|^{-K_C}$$ The questions boil up to the search for $K_R$ and $\dmt{R}_R$ so that $$M_{\rs{Q}_R}(\dv{s})=M_{\rs{Q}_C}(\dv{s})$$ for all $\dv{s}$ in a neighborhood of zero. Since the inverses of the left- and right-hand sides are polynomials, the equality holds for all $\dv{s}\in\mathbb{R}^M$ \cite{rongMultivariatePolynomials}, and it's attained if, and only if, the polynomials are identical: $$|\dmt{I}_M-2\dmt{S}\dmt{R}_R|^{K_R/2}=|\dmt{I}_M-\dmt{S}\dmt{R}_C|^{K_C}$$ Since the degrees of the left- and right-hand sides are $\frac{MK_R}{2}$ and $MK_C$ respectively, we deduce that $K_R=2K_C$. There remains to search for the covariance matrix $\dmt{R}_R$. Write $$\dmt{R}_C=\begin{bmatrix}
                \sigma_1^2 & \rho_{12}^*\sigma_1\sigma_2 &\cdots & \rho_{1M}^*\sigma_1\sigma_M\\
                \rho_{12}\sigma_1\sigma_2 & \sigma_2^2 &\cdots & \rho_{2M}^*\sigma_1\sigma_M\\
                \vdots &\vdots &\ddots &\vdots\\
                \rho_{1M}\sigma_1\sigma_M &\rho_{2M}\sigma_1\sigma_M &\cdots & \sigma_M^2
            \end{bmatrix}$$ Using an approach similar to \cite{StackQuestionMVChi2}, we can prove that the correlations of the real and complex matrices, in the case of equivalence, verify $\rho^R_{ij}=\pm \frac12|\rho^C_{ij}|$. In particular, substituting for $\dmt{S}$ the matrix $\frac{1}{x}\dmt{D}$, where $\dmt{D}$ is any real diagonal matrix, it can be shown that $2\dmt{\Sigma}_R \dmt{D}$ and $\dmt{\Sigma}_C \dmt{D}$ have the same eigenvalues. Substituting particular diagonal matrices can be used to establish the equalities. For $M=2$, we can find an equivalent real distribution. Indeed, take $$\dmt{R}_R=\dfrac{1}{2}\begin{bmatrix}
                \sigma_1^2 & |\rho_{12}|\sigma_1\sigma_2\\
                |\rho_{12}|\sigma_1\sigma_2 &\sigma_2^2\\
            \end{bmatrix}$$ For $M=3$, such a matrix exists if and only if \begin{equation}
                \rho_{12}\rho_{13}^*\rho_{23}\in \mathbb{R}
            \end{equation}
            For $M=4$,  the condition is more complicated. An obvious condition is the equivalence of every combination of $3$ chi-squares. The full set of conditions entails an additional condition: \begin{equation}
            \rho_{12}\rho_{13}^*\rho_{23}\in\mathbb{R}, \; 
            \rho_{12}\rho_{14}^*\rho_{24}\in\mathbb{R}, \;
            \rho_{23}\rho_{24}^*\rho_{34}\in\mathbb{R}, \;
            \rho_{13}\rho_{14}^*\rho_{34}\in\mathbb{R},\label{eq:M4Cond1}
            \end{equation}
            and \begin{align}
            &\rho_{13}^*\rho_{24}^*\rho_{14}\rho_{23}+\rho_{13}^*\rho_{34}^*\rho_{12}\rho_{24}+\rho_{14}^*\rho_{12}\rho_{23}\rho_{34}\in\mathbb{R}.\label{eq:M4Cond5}
            \end{align} Note that conditions \mref{eq:M4Cond1} ensure that each combination of three chi-squares form a complex multi-chi-square random vector that is equivalent to a real multi-chi-square. It is worth noting that in case that no more than one correlation $\rho_{jk}$ vanishes, condition \eqref{eq:M4Cond5} becomes superfluous.

    \section{Overview of Derivation Approaches of the CDF and PDF of Multi-Chi-Squares}

        \paragraph{Phase Marginalization}There are many examples of this approach in the literature, such as \cite{tekinayMomentsQuadrivariateRayleigh2020,hagedornTrivariateChisquaredDistribution2006,chenInfiniteSeriesRepresentations2005,dharmawansaDiagonalDistributionComplex2009,dharmawansaNewSeriesRepresentation2009,peppasTrivariateNakagamimDistribution2009,wiegandSeriesApproximationsRayleigh2019,wiegandSeriesRepresentationMultidimensional2018}. By definition a chi-squared variable is a sum of squared Gaussian magnitudes. The marginalization appoarch starts by:
        \begin{enumerate}
            \item  Replacing the Gaussians or the sum squared of Gaussians by a magnitude and direction (phase) quantity. \cite[Eq. 40]{tekinayMomentsQuadrivariateRayleigh2020}, \cite[Eq. 10]{hagedornTrivariateChisquaredDistribution2006}, \cite[Eq. 1]{chenInfiniteSeriesRepresentations2005}, \cite[Eq. 5]{dharmawansaDiagonalDistributionComplex2009} and \cite[Pg. 2]{wiegandSeriesRepresentationMultidimensional2018}.
            \item The Gaussian density is expressed in terms of these transformed variables and we then attempt to marginalize out the direction/phase quantities.
        \end{enumerate}
        The approach to marginalization differs now depending on the function and assumptions of the problem. For example, for chi-squares from single complex Gaussians, and with the assumption that the covariance matrix is real (no cross-correlation between real and imaginary variables) leads the density function that has exponentials of cosine functions (an even function). In this case an exponential of a cosine can be replaced by a fourier series which contains only cosines: $e^{A \cos \theta} = I_0(A) + \sum_{k=1}^\infty I_k(A) \cos k\theta$. The marginalization over phases ($\theta$) can now be carried out by integrating term by term. 
        
        The result of this process, for example, in the trivariate case is a density $f(q_1,q_2,q_3)$ which is usually an infinite sum of Bessel function products i.e. $I_k(c_1\sqrt{q_1q_2})I_m(c_2\sqrt{q_2q_3})I_n(c_3\sqrt{q_1q_3})$. In the quadrivariate case, there are six bessel products (every pair of $q_iq_j$). To obtain the CDF the appoarch taken in \cite{tekinayMomentsQuadrivariateRayleigh2020,peppasTrivariateNakagamimDistribution2009,chenInfiniteSeriesRepresentations2005} is to expand the bessel functions in power series form \cite[Eq. 8.445]{gradshteinTableIntegralsSeries2007}. We can then integrate with respect to $q_1, q_2, q_3$ to obtain the CDF. Note each bessel function expanded adds one more infinite series. For example the quadrivariate CDF series in \cite[Eq. 5]{tekinayMomentsQuadrivariateRayleigh2020} has nine infinite sums. Discussions of the truncation error can be found in \cite[Sec II.B]{chenInfiniteSeriesRepresentations2005}, \cite[Eq. 12]{peppasTrivariateNakagamimDistribution2009}. Finally discussions for the moments can be found in \cite[Sec II.C]{chenInfiniteSeriesRepresentations2005}\cite{tekinayMomentsQuadrivariateRayleigh2020}.
        
        The integrals and series become more and more difficult as the number of variables increases. This approach has been applied up to four variables ($M=4$) with various conditions on the covariance/inverse covariance matrices to make the process tractable. To the best of the authors knowledge this approach dates back to Miller 1958. In fact, Miller \cite{millerGeneralizedRayleighProcesses1958} proposes a more general procedure where marginalization is carried out over a surface of (all directions) rather than just a phase. This approach is used exactly in \cite{hagedornTrivariateChisquaredDistribution2006,chenInfiniteSeriesRepresentations2005,peppasTrivariateNakagamimDistribution2009}. It is worth mentioning that the result proposed in \cite{wiegandSeriesApproximationsRayleigh2019}, although in principle applicable to $M > 4$, the authors have only implemented it up to $M = 4$. The numbers of coefficients numbers expand combinatorially as the dimensions increase. Finally, we note that all the above methods are for central multi-chi-squares (generalized  multivariate rayleigh). In \cite{dharmawansaDiagonalDistributionComplex2009,dharmawansaNewSeriesRepresentation2009} extensions to the non-central case but have been reported with the restriction of $(\Sigma^{-1})_{13} = 0$
        
        \paragraph{Deconditioning} \label{para:DeconditioningMethod} If the underlying covariance structure admits an $m$-factorial representation ($\Sigma = D + VV^T, D: \text{ diagonal}, V \in \reals{N\times m}$)with $m$ a small number, the multi-chi-squares can be seen to depend on two types of Gaussian variables: $m$ ``shared'' variables associated with the factor $V$, and each chi-square depends separately on a Gaussian associated with the diagnonal $D$. Conditioning on the ``shared'' variables makes each chi-square independent. We then decondition for each of the shared variables; with each deconditioning step costing us one integration. Examples of works using this approach are: \cite{beaulieuNovelRepresentationsBivariate2011,beaulieuNovelSimpleRepresentations2011,hemachandraNovelRepresentationsEquicorrelated2011,bithasNovelResultsMultivariate2019,khammassiNewAnalyticalApproximation2022}.
        
        The main difficulty to applying this method when the structure of $\Sigma$ is not predetermined to be $m$-factorial is finding the smallest $m$ and the factors $D$ and $V$. A pragmatic approach is taken in \cite{khammassiNewAnalyticalApproximation2022}, where an approximate m-factorial representation is obtained by using the eigendecomposition of the covariance model and then selecting  a few dominant eigenvalues and their corresponding eigenvectors. Thus if there are $m$ eigenvalues above $\epsilon$ we obtain an $m$-factorial representation.
        
        \paragraph{Inversion} The connection between the MGF/CF/Laplace Transform and the PDF/CDF is straightforward, in principle. However other than the simplest cases, the inversion is intractable in direct form. One must identify a suitable class of functions in which we are able to expand the MGF. Preferably, the class includes a function and its powers/derivatives. The most common examples are the Laguerre functions, gamma density and its derivatives, chi-square densities. The MGF is expanded in terms of this function and the coefficients are identified. We then utilize the linearity of the transforms to invert term by term. Note that exchanging the inversion (integral) with an infinite sum requires showing the convergence of the infinite sum.
        This technique has been used in: \cite{krishnamoorthyMultivariateGammaTypeDistribution1951,royenExpansionsMultivariateChisquare1991,royenMultivariateGammaDistributions1991,royenCENTRALNONCENTRALMULTIVARIATE1995,royenIntegralRepresentationsApproximations2007,morales-jimenezDiagonalDistributionComplex2011}. Note this technique is not unique to multivariate chi-squares but rather has been used for single quadratic forms \cite[Sec 4.2abc]{mathaiQuadraticFormsRandom1992} and other multi-quadratic forms \cite{lavernyEstimationMultivariateGeneralized2021,schoneJointDistributionQuadratic2000,tavaresStatisticsSumSquared2007}. 
        
        Unlike the methods based on marginalization, inversion methods have produced analytical formulae for any number of chi-squares, whether they are central or non-central.

    \section{Multi-Chi-Square PDF/CDF: Special Cases}
        As usual, we will handle a few more special cases before approaching Royen's general solution.

            \subsection{Rayleigh and Rician Distributions}

                \paragraph{Bivariate Rayleigh Distribution}

                The modulus of a complex random variable following \(\mathcal{CN}(0,\sigma^2)\) is Rayleigh distributed and has the PDF \[f_R(r)=\frac{2r}{\sigma^2}e^{-\frac{x^2}{\sigma^2}}.\] The square of the modulus evidently follows a scaled chi-square distribution: \(\frac{\sigma^2}{2}\chi_2^2\).
    
                Let $\rs{R}_1=|\rs{x}_1|$ and $\rs{R}_2=|\rs{x}_2|$, where \[\begin{bmatrix}
                    \rs{x}_1\\
                    \rs{x}_2
                \end{bmatrix}\sim\mathcal{CN}\left(\begin{bmatrix}
                    0\\
                    0
                \end{bmatrix},\begin{bmatrix}
                    \sigma_1^2 & \rho \sigma_1 \sigma_2\\
                    \rho \sigma_1 \sigma_2 & \sigma_2^2
                \end{bmatrix}\right),\] the vector of moduli \([\rs{R}_1,\rs{R}_2]^T\) follows a bivariate Rayleigh distribution. The squared moduli \(\rs{R}_1^2,\rs{R}_2^2\) follow then a (scaled) bivariate chi-square distribution. The PDF of the moduli is given by \cite{simonProbabilityDistributionsInvolving2006} is given by \[\begin{aligned}
                f_{\rs{R}_1, \rs{R}_2}\left(r_1, r_2\right) & =\frac{4r_1 r_2}{\left(1-\rho^2\right)\left(\sigma_1 \sigma_2\right)^{2}} \exp \left[-\frac{1}{\left(1-\rho^2\right)}\left(\frac{r_1^2}{\sigma_1^2}+\frac{r_2^2}{\sigma_2^2}\right)\right] \\
                & \times I_{0}\left(\frac{2r_1 r_2|\rho|}{\sigma_1 \sigma_2\left(1-\rho^2\right)}\right), r_1 \geq 0, r_2 \geq 0
                \end{aligned}\]
                Then the PDF of the  squared moduli, which follow a bivariate chi-square distribution \[\begin{bmatrix}
                    \rs{R}_1^2\\
                    \rs{R}_2^2
                \end{bmatrix}\sim\mathcal{C}\chi_1^2\left(2,\begin{bmatrix}
                    \sigma_1^2 & \rho \sigma_1 \sigma_2\\
                    \rho \sigma_1 \sigma_2 & \sigma_2^2
                \end{bmatrix},\begin{bmatrix}
                    0 & 0\\
                    0 & 0
                \end{bmatrix}\right),\] is given by
                \[\begin{aligned}
                f_{\rs{R}_1^2, \rs{R}_2^2}\left(q_1, q_2\right) & =\frac{1}{\left(1-\rho^2\right)\left(\sigma_1 \sigma_2\right)^{2}} \exp \left[-\frac{1}{\left(1-\rho^2\right)}\left(\frac{q_1}{\sigma_1^2}+\frac{q_2}{\sigma_2^2}\right)\right] \\
                & \times I_{0}\left(\frac{2\sqrt{q_1 q_2}|\rho|}{\sigma_1 \sigma_2\left(1-\rho^2\right)}\right), q_1 \geq 0, q_2 \geq 0
                \end{aligned}\]
    

                \paragraph{Bivariate Rician Distribution}
            
                Many forms appear in the literature under the title of Rician distribution. A Rician variable is the square root of a scaled non-central chi-square variable of two degrees of freedom. Indeed, let \(\rs{x}\sim\mathcal{CN}(\mu,\sigma^2)\), where \(\mu\in\mathbb{C}\). The modulus \(\rs{T}=|\rs{x}|\) follows the Rician distribution, and its PDF is given by \[f_{\rs{T}}(t)=\dfrac{2t}{\sigma^2}\exp\left(-\dfrac{t^2+|\mu|^2}{\sigma^2}\right)I_0\left(2\dfrac{|\mu|}{\sigma^2}t\right)\] 
    
                If  \(\begin{bmatrix}
                    \rs{x}_1\\
                    \rs{x}_2
                \end{bmatrix}\sim\mathcal{CN}\left(\begin{bmatrix}
                    \mu_1\\
                    \mu_2
                \end{bmatrix},\left[\begin{smallmatrix}
                    \sigma_1^2 & \rho \sigma_1 \sigma_2\\
                    \rho \sigma_1 \sigma_2 & \sigma_2^2
                \end{smallmatrix}\right]\right)\), the vector of moduli \(\begin{bmatrix}
                    \rs{R}_1\\
                    \rs{R}_2
                \end{bmatrix}\) follows a bivariate Rician distribution. The squared moduli \(\rs{R}_1^2,\rs{R}_2^2\) follow then a bivariate chi-square distribution. The PDF of the moduli, in the case of $|\mu_1|=|\mu_2|=:a$, is given by \cite{millerGeneralizedRayleighProcesses1958,simonProbabilityDistributionsInvolving2006} \[\begin{aligned}
                &f_{R_1, R_2}\left(r_1, r_2\right)  =\frac{4r_1 r_2}{\sigma_1^2 \sigma_2^2\left(1-\rho^2\right)}\\
                &\times\exp \left[-\frac{1}{2\left(1-\rho^2\right)}\left(\frac{2r_1^2}{\sigma_1^2}+\frac{2r_2^2}{\sigma_2^2}+\left(\frac{\sigma_1^2+\sigma_2^2-2 \rho \sigma_1 \sigma_2}{\sigma_1^2 \sigma_2^2}\right) a^2\right)\right] \\
                & \times \sum_{i=0}^{\infty} \varepsilon_i I_i\left(\frac{2r_1 r_2 \rho}{\left(1-\rho^2\right) \sigma_1 \sigma_2}\right) I_i\left(\frac{2a r_1\left(1-\rho \sigma_1 / \sigma_2\right)}{\left(1-\rho^2\right) \sigma_1^2}\right) \\
                & \times I_i\left(\frac{2a r_2\left(1-\rho \sigma_2 / \sigma_1\right)}{\left(1-\rho^2\right) \sigma_2^2}\right),\; r_1 \geq 0,\;r_2 \geq 0
                \end{aligned}\] where $\epsilon_0=1$, and $\epsilon_i=2$ for $i>0$. 
                Hence the PDF of the  squared moduli which follow a bivariate chi-square distribution  \[\begin{bmatrix}
                    \rs{R}_1^2\\
                    \rs{R}_2^2
                \end{bmatrix}\sim\mathcal{C}\chi_1^2\left(2,\begin{bmatrix}
                    \sigma_1^2 & \rho \sigma_1 \sigma_2\\
                    \rho \sigma_1 \sigma_2 & \sigma_2^2
                \end{bmatrix},\begin{bmatrix}
                    a^2 & *\\
                    * & a^2
                \end{bmatrix}\right) \footnote{Note that the off-diagonal terms are irrelevant.},\] is given by
                \[\begin{aligned}
                f_{\rs{Q}_1, \rs{Q}_2}\left(q_1, q_2\right) & =\frac{1}{\sigma_1^2 \sigma_2^2\left(1-\rho^2\right)}\\
                &\times\exp \left[-\frac{1}{2\left(1-\rho^2\right)}\left(\frac{2q_1}{\sigma_1^2}+\frac{2q_2}{\sigma_2^2}+\left(\frac{\sigma_1^2+\sigma_2^2-2 \rho \sigma_1 \sigma_2}{\sigma_1^2 \sigma_2^2}\right) a^2\right)\right] \\
                & \times \sum_{i=0}^{\infty} \varepsilon_i I_i\left(\frac{2\sqrt{q_1 q_2} \rho}{\left(1-\rho^2\right) \sigma_1 \sigma_2}\right) I_i\left(\frac{2a \sqrt{q_1}\left(1-\rho \sigma_1 / \sigma_2\right)}{\left(1-\rho^2\right) \sigma_1^2}\right) \\
                & \times I_i\left(\frac{2a \sqrt{r_2}\left(1-\rho \sigma_2 / \sigma_1\right)}{\left(1-\rho^2\right) \sigma_2^2}\right), q_1 \geq 0, q_2 \geq 0
                \end{aligned}\]

                \paragraph{Quadrivariate Rayleigh Distribution: Real Toeplitz Correlation Matrix}

                Tekinay and Beard (2020) study the quadrivariate Rayleigh distribution \cite{tekinayMomentsQuadrivariateRayleigh2020} with a Toeplitz underlying correlation matrix. They basically marginalize over the phases. They provide formulae for the PDF, CDF, CF, and moments. Let \(\rv{Q}\sim\mathcal{C}\chi_1^2(4,\dmt{\Sigma},\dmt{0})\), and assume, withoput loss of generality, that all diagonal elements of $\dmt{\Sigma}$ are equal: \((\dmt{\Sigma})_{k,k}=2\zeta\) for all $k$. The inverse covariance matrix will have the form: $$\dmt{\Sigma}^{-1}=\dfrac{1}{2\zeta}\begin{bmatrix}
                    \phi_1 &\phi_3 &\phi_4 &\phi_5\\
                    \phi_3 &\phi_2 &\phi_6 &\phi_4\\
                    \phi_4 &\phi_6 &\phi_2 &\phi_3\\
                    \phi_5 &\phi_4 &\phi_3 &\phi_1
                \end{bmatrix} $$
                
                Then, the joint PDF of $\rs{Q}_i$ is given by
                \UglyAlign{
                    &f_{\rs{Q}_1,\rs{Q}_2,\rs{Q}_3,\rs{Q}_4}(q_1,q_2,q_3,q_4)=|\dmt{\Sigma}|^{-1}\exp\left(-\dfrac{\phi_1}{2\zeta}(q_1+q_4)+\dfrac{\phi_2}{2\zeta}(q_2+q_3)\right)\nonumber\\
                    &\times \sum_{\ell=-\infty}^\infty \sum_{j=-\infty}^\infty \sum_{m=-\infty}^\infty I_\ell\left(-\dfrac{\phi_3}{\zeta}\sqrt{q_1q_2}\right)I_j\left(-\dfrac{\phi_4}{\zeta}\sqrt{q_1q_3}\right)I_{\ell+j}\left(-\dfrac{\phi_5}{\zeta}\sqrt{q_1q_4}\right)\nonumber\\
                    &\times I_{\ell+m}\left(-\dfrac{\phi_6}{\zeta}\sqrt{q_2q_3}\right)I_{\ell+j+m}\left(-\dfrac{\phi_3}{\zeta}\sqrt{q_3q_4}\right)I_m\left(-\dfrac{\phi_4}{\zeta}\sqrt{q_2q_4}\right) \label{tekinay}
                }

                and the CDF is given by
                \UglyAlign{
                    & F_{\rv{Q}}\left(q_1, q_2, q_3, q_4\right) \nonumber\\
                    & =\eta^3 \sum_{l, m, j=-\infty}^{\infty} \sum_{b, q, b^{\prime}, h, f, h^{\prime}=0}^{\infty} \frac{\left(-2 \phi_3 \eta\right)^{|l|+|l+j+m|+2 b^{\prime}+2 b}}{2^{\frac{\left(v_1+v_2+v_3+v_4\right)}{2}}}\nonumber\\
                    &\times\frac{\left(-2 \phi_4 \eta\right)^{|j|+|m|+2 h^{\prime}+2 h}\left(-2 \phi_5 \eta\right)^{|l+j|+2 f}}{\left(\phi_1 \eta\right)^{\frac{v_1+v_4+4}{2}}\left(\phi_2 \eta\right)^{\frac{v_2+v_3+4}{2}}} \nonumber\\\
                    & \times \frac{\left(-2 \phi_6 \eta\right)^{|l+m|+2 q}}{b ! f ! h !(b+|l|) !(h+|j|) !} \frac{\gamma\left(\frac{\nu_1+2}{2}, \frac{\phi_1}{2 \zeta} q_1\right)}{(f+|j+l|) ! q !} \frac{\gamma\left(\frac{\nu_2+2}{2}, \frac{\phi_2}{2 \zeta} q_2\right)}{b^{\prime} !(q+|l+m|) !}\nonumber\\
                    &\times\frac{\gamma\left(\frac{\nu_3+2}{2}, \frac{\phi_2}{2 \zeta} q_3\right)}{\left(b^{\prime}+|l+m+j|\right) !} \frac{\gamma\left(\frac{v_4+2}{2}, \frac{\phi_1}{2 \zeta} q_4\right)}{\left(h^{\prime}+|m|\right) ! h^{\prime} !} 
                }
                where
                \UglyAlignS{
                    \eta&=\dfrac{|\dmt{\Sigma}|}{(2\zeta)^4}\\
                    \nu_1&=|l|+|j|+|j+l|+2b+2h+2f\\
                    \nu_2&=|l|+|m|+|l+m|+2b+2q+2h'\\
                    \nu_3&=|j|+|l+m|+|j+l+m|+2h+2q+2h'\\
                    \nu_4&=|m|+|j+l|+|j+l+m|+2f+2b'+2h'
                }
                
                The product moments are given by
                \UglyEquation{
                    \begin{aligned}
                        &\mathbb{E}\left[\rs{q}_1^{\beta_1} \rs{q}_2^{\beta_2} \rs{q}_3^{\beta_3} \rs{q}_4^{\beta_4}\right] 
                        &\\
                        &=\eta^3(\eta \zeta)^{\beta} \sum_{l, m, j=-\infty}^{\infty} \sum_{b, q, b^{\prime}, h, f, h^{\prime}=0}^{\infty} \frac{\left(-2 \phi_5 \eta\right)^{|l+j|+2 f}}{2^{\frac{v_1+v_2+v_3+v_4-\beta}{2}}}\\
                        &\times \frac{\left(-2 \phi_4 \eta\right)^{|j|+|m|+2 h^{\prime}+2 h}\left(-2 \phi_3 \eta\right)^{|l|+|l+j+m|+2 b^{\prime}+2 b}}{\left(\phi_1 \eta\right)^{\frac{v_1+v_4+4+2\beta_1+2\beta_4}{2}}\left(\phi_2 \eta\right)^{\frac{v_2+v_3+4+2\beta_2+2\beta_3}{2}}} \\
                        & \times \frac{\left(-2 \phi_6 \eta\right)^{|l+m|+2 q}}{b ! h ! h^{\prime}(b+|l|) !(h+|j|) ! f !} \frac{\Gamma\left(\frac{v_1+2\beta_1+2}{2}\right)}{q ! b^{\prime} !(f+|j+l|) !} \\
                        &\times \frac{\Gamma\left(\frac{v_2+2\beta_2+2}{2}\right)}{(q+|l+m|) !} \frac{\Gamma\left(\frac{v_3+2\beta_3+2}{2}\right)}{\left(b^{\prime}+|l+m+j|\right) !} \frac{\Gamma\left(\frac{v_4+2\beta_4+2}{2}\right)}{\left(h^{\prime}+|m|\right) !},\\ &\beta_1, \beta_2, \beta_3, \beta_4 \geq-1/2
                    \end{aligned}
                }

                \paragraph{Multivariate Rayleigh Distribution}

                Wiegand and Nadarjah \cite{wiegandSeriesApproximationsRayleigh2019} study the multivariate Rayleigh distribution. Let $\rv{Q}\sim\mathcal{C}\chi_1^2(M,\dmt{\Sigma},\dmt{0})$. The joint PDF is given by the following
                \[f_{\rs{Q}_1,\ldots,\rs{Q}_M}(q_1,\ldots,q_M)=\pi^p\gamma_{\dv{q},\dmt{K}}\sum_{j_1=0}^\infty \sum_{j_2=0}^{j_1} \cdots \sum_{j_p=0}^{j_{p-1}}\prod_{t=1}^p b_{t,j_t^*}\sum_{\rho\in \{-1,1\}^p}\prod_{w=1}^M \delta_{\{\Sigma_{x_W}=0\}}\label{wiegand}\]
                where each $t$ corresponds to a couple $(k,\ell)$ with $\ell=1,\ldots,M$, $k=\ell+1,\ldots,M$ as illustrated in the next example, the number of these couples is $p=M(M-1)/2$, \(\delta_A=1\) if the proposition \(A\) is true, otherwise it is zero, and 
                \UglyAlignS{
                    \gamma_{\dv{q},K}&=\dfrac{1}{2^{2M}\pi^M|\dmt{K}|^{1/2}}\exp\left(-\dfrac{1}{2|\dmt{K}|}\sum_{i=1}^M q_ic_{ii}\right)\\
                    \dmt{K}&=\dfrac{1}{2}\dmt{\Sigma}^{-1}\\
                    \dmt{C}&=|\dmt{\Sigma}|\dmt{K}^T\\
                    j_t^*&=j_{p-t+1}-j_{p-t+2}\\
                    b_{t,j_t}&=\begin{cases}
                        I_0(|a_t|), &\text{ if } j_t=0\\
                        2(-1)^{j_t\mathbb{I}_{(a_t<0)}}I_{j_t}(|a_t|), &\text{ if } j_t>0
                    \end{cases}\\
                    \Sigma_{x_i}&=A_{i,1}+\sum_{\omega=1}^{M-i}A_{w,i}-\sum_{\omega=1}^{i-1}A_{i-w,1+w}\\
                    A_{s,t}&=\alpha_{\sum_{i=0}^{p-1}(M-i+1)+s}\\
                    \alpha_t&=j_{t^*}\rho_t
                } 

                \paragraph{Trivariate Rayleigh Distribution: Series Expression}

                Chen and Tellambura \cite{chenInfiniteSeriesRepresentations2005} study the distribution of three arbitrarily correlated Rayleigh variables. Evidently, the squares of the variable are then scaled chi-squares with two degrees of freedom each. In our terms, these scaled chi-squares correspond to a trivariate quadratic form \(\rv{Q} \sim \mathcal{C}\chi_1(3,\dmt{\Sigma}, \dmt{0})\). It is worth mentioning that \(\dmt{\Sigma}\) is not necessarily real, i.e., there can be cross-correlation between the real part of a component of \(\rv{x}\), and the imaginary part of another component. This is the actual meaning of arbitrary correlation. In order to derive the PDF, Chen and Tellambura start with an expression of Miller \[\begin{aligned}
                f_{\rv{R}}\left(r_1, r_2, r_3\right)= & 8 \operatorname{det}(\dmt{\Phi}) r_1 r_2 r_3 e^{-\left(r_1^2 \phi_{11}+r_2^2 \phi_{22}+r_3^2 \phi_{33}\right)}\\
                &\times\sum_{k=0}^{\infty} \varepsilon_k(-1)^k \cos (k \chi) I_k\left(2 r_1 r_2\left|\phi_{12}\right|\right) \\
                & \times I_k\left(2 r_2 r_3\left|\phi_{23}\right|\right) I_k\left(2 r_3 r_1\left|\phi_{31}\right|\right)
                \end{aligned}\] where \(\rv{R}=[\sqrt{\rs{Q}_1},\sqrt{\rs{Q}_2},\sqrt{\rs{Q}_3}]^T\), \(\dmt{\Phi}=(\phi_{ij})=\dmt{\Sigma}^{-1}\), \(\varepsilon_k\) is the Neumann factor, which is equal to \(1\) if \(k=0\), and equal to \(2\) otherwise, and \(\chi=\chi_{12}+\chi_{23}+\chi_{31}\), with \(\chi_{ij}\) being a phase of \(\phi_{ij}\). After expressing the modified Bessel functions as a power series, they express the PDF as an infinite multiple series given by \[\begin{aligned}
                & f_{\rv{R}}\left(r_1, r_2, r_3\right) =8 \operatorname{det}(\dmt{\Phi}) e^{-\left(r_1^2 \phi_{11}+r_2^2 \phi_{22}+r_3^2 \phi_{33}\right)} \sum_{k=0}^{\infty} \varepsilon_k(-1)^k \cos (k \chi) \\
                & \times \sum_{l, m, n=0}^{\infty} \frac{\left|\phi_{12}\right|^{2 l+k}}{l !(l+k) !} \frac{\left|\phi_{23}\right|^{2 m+k}}{m !(m+k) !} \frac{\left|\phi_{31}\right|^{2 n+k}}{n !(n+k) !} r_1^{2(l+n+k)+1} r_2^{2(l+m+k)+1} r_3^{2(m+n+k)+1}
                \end{aligned}\]
                This makes the variables separable so that they can be integrated to obtain the CDF \[\begin{aligned}
                &F_{\rv{R}}\left(r_1, r_2, r_3\right)=  \frac{\operatorname{det}(\dmt{\Phi})}{\phi_{11} \phi_{22} \phi_{33}} \sum_{k=0}^{\infty} \varepsilon_k(-1)^k \cos (k \chi) \\
                & \times \sum_{l, m, n=0}^{\infty} C \nu_{12}^{l+\frac{k}{2}} \nu_{23}^{m+\frac{k}{2}} \nu_{31}^{n+\frac{k}{2}} \gamma\left(\delta_1, r_1^2 \phi_{11}\right) \gamma\left(\delta_2, r_2^2 \phi_{22}\right) \gamma\left(\delta_3, r_3^2 \phi_{33}\right)
                \end{aligned}\] where
                \UglyAlignS{
                    C&=\dfrac{1}{\ell!(\ell+k)!m!(m+k)!n!(n+k)!}\nonumber\\
                    \nu_{jk}&=\dfrac{|\phi_{jk}|^2}{\phi_{jj}\phi_{kk}}\nonumber\\
                    \delta_1&=\ell+n+k+1\\
                    \delta_2&=m+\ell+k+1\\
                    \delta_3&=n+m+k+1\\
                }
                Regarding the scaled chi-squares, i.e., the trivariate quadratic form \((\rs{Q}_1,\rs{Q}_2,\rs{Q}_3)=(R_1^2,R_2^2,R_3^2)\), the PDF can be simply calculated by \[f_{\rv{Q}}(q_1,\ldots,q_N)=\dfrac{1}{\prod_{i=1}^N\sqrt{q_i}}f_{\rv{R}}\left(\sqrt{q_1},\ldots,\sqrt{q_N}\right)\]
                and the CDF is obtained using \[F_{\rv{Q}}(q_1,\ldots,q_N)=f_{\rv{R}}\left(\sqrt{q_1},\ldots,\sqrt{q_N}\right)\]

                Hence, the PDF of the quadratic form, that is, the squares of Chen and Tellambura's is given by 
                \begin{align}
                    f_{\rv{Q}}(q_1,q_2,q_2)&=8|\dmt{\Phi}|\exp(-(\phi_{11}q_1+\phi_{22}q_2+\phi_{33}))\nonumber \sum_{k=0}^\infty \epsilon_k(-1)^k\cos(k\chi)\nonumber\\
                    &\times \sum_{\ell,m,n=0}^\infty \dfrac{|\phi_{12}|^{2\ell+k}}{\ell!(\ell+k)!}\dfrac{|\phi_{23}|^{2m+k}}{m!(m+k)!}\dfrac{|\phi_{31}|^{2n+k}}{n!(n+k)!} q_1^{\ell+n+k}q_2^{\ell+m+k}q_3^{m+n+k}
                \end{align}
                and the CDF can be written as
                \UglyAlign{
                    F_{\rv{Q}}(q_1,q_2,q_3)&=\dfrac{|\dmt{\Phi}|}{\phi_{11}\phi_{22}\phi_{33}}\sum_{k=0}^\infty \epsilon_k(-1)^k\cos(k\chi)\nonumber\\
                    &\times \sum_{\ell,m,n=0}^\infty C\nu_{12}^{\ell+\frac{k}{2}}\nu_{23}^{m+\frac{k}{2}}\nu_{31}^{n+\frac{k}{2}}\gamma(\delta_1,q_1\phi_{11})\nonumber\\
                    &\times \gamma(\delta_2,q_2\phi_{22})\gamma(\delta_3,q_3\phi_{33})
                }

                \paragraph{Trivariate and Quadrivariate Rayleigh: Integral Expressions}

                Beaulieu and Zhang (2017) \cite{beaulieuNewSimplestExact2017} study a trivariate Rayleigh distribution with a real underlying covariance matrix. Evidently, the formulation is identical to that in Wiegand and Nadarjah \cite{wiegandSeriesApproximationsRayleigh2019} we showed before, with the additional assumption of $M=N=3$. The authors generalize their formulae to the quadrivariate case ($M=N=4$).  They start with the joint distribution of the real and imaginary parts of the components of a complex random vector distributed as \(\mathcal{CN}(\dv{0},\dmt{\Sigma})\), where \(\dmt{\Sigma}\) is real. Let \(\operatorname{Var}(\rs{x}_i)=\sigma_i^2\), and \(\operatorname{Cov}(\rs{x}_i,\rs{x}_j)=\rho_{ij}\sigma_i\sigma_j\), where \(\rho\in\mathbb{R}\). The joint distribution of the real and imaginary parts of an RV can be simply converted to the joint distribution of its modulus and an argument. Employing this change for the three complex variables, they obtain the joint distribution of the three sought Rayleigh variables with the phases. Hence, the sought PDF can be written as a triple integral. After some manipulations, the triple integral is transformed into a single integral; this transformation is the novel aspect of this method. The PDF is then written as \[f_{\rv{R}}(r_1,r_2,r_3)=g_1(r_1)g_2(r_2)g_3(r_3)\texttt{INT}(r_1,r_2,r_3)\]
                where \(\texttt{INT}(r_1,r_2,r_3)\) is a single integral. Employing the fact that \((\rs{Q}_1,\rs{Q}_2,\rs{Q}_3)=(R_1^2,R_2^2,R_3^2)\), we can obtain the formula of the PDF of the quadratic form. It is given by
                    \[f_{\rs{Q}_1,\rs{Q}_2,\rs{Q}_3}(q_1,q_2,q_3)=g_1(q_1)g_2(q_2)g_3(q_3)\texttt{INT}(q_1,q_2,q_3)\label{Beaulieu-Zhang}\]
                    where
                \begin{align*}
                    g_i(q_i)&=\dfrac{1}{2\sigma_i^2}\exp\left\{-\dfrac{1}{|\dmt{\Omega}|}\left[\dfrac{q_i(1-\rho_{jk}^2)}{\sigma_i^2}\right]\right\}\\
                    \texttt{INT}(q_1,q_2,q_3)&=\int_0^\pi \dfrac{\exp\{L_2\cos(t)\}}{\pi|\dmt{\Omega}|}I_0(\sqrt{L_1^2+L_3^2+2L_1L_3\cos(t)})dt\\
                    L_i&=\dfrac{2\sqrt{q_j}\sqrt{q_k}(\rho_{jk}-\rho_{ij}\rho_{ik})}{\sigma_j\sigma_k|\dmt{\Omega}|}\\
                    \dmt{\Omega}&=\begin{bmatrix}
                        1 &\rho_{12} &\rho_{13}\\
                        \rho_{12} &1 &\rho_{23}\\
                        \rho_{13} &\rho_{23} &1
                    \end{bmatrix} \text{ (correlation matrix)}
                \end{align*} where $(i,j,k)$ is a permutation of $(1,2,3)$. Note that \(\rv{Q}\sim\mathcal{C}\chi_1^2(3,\dmt{\Sigma},\dmt{0})\).
                
                For the quadrivariate counterpart, a double integral representation is possible. The PDF is given by
                {\small\UglyAlign{
                    &f_{\rs{Q}_1,\rs{Q}_2,\rs{Q}_3,\rs{Q}_4}(q_1,q_2,q_3,q_4)=\dfrac{1}{64\pi^2\sigma_1^2\sigma_2^2\sigma_3^2\sigma_4^2|\dmt{\Omega}|}\exp\left(-\dfrac{1}{2|\dmt{\Omega}|}\sum_{i=1}^4\dfrac{r_i|\dmt{\Phi_i}|}{\sigma_i^2}\right)\nonumber\\
                    &\times \int_0^{2\pi}\int_0^{2\pi} \exp\left[L_{13}\cos(t_1)+L_{14}\cos(t_2)+L_{34}\cos(t_2-t_1)\right]\nonumber\\
                    &\times I_0\left\{\sqrt{L_{12}^2+L_{23}^2+L_{24}^2+2L_{12}L_{23}\cos(t_1)+2L_{12}L_{24}\cos(t_2)+2L_{23}L_{24}\cos(t_1-t_2)}\right\}dt_1dt_2
                }}
                where
                \UglyAlignS{
                    \dmt{\Phi_i}&=\begin{bmatrix}
                        1 &\rho_{jk} &\rho_{j\ell}\\
                        \rho_{ik} &1 &\rho_{k\ell}\\
                        \rho_{j\ell} &\rho_{k\ell} &1
                    \end{bmatrix}\\
                    L_{ij}&=\dfrac{\sqrt{r_ir_j}}{4\sigma_i\sigma_j|\dmt{\Omega}|}\begin{vmatrix}
                        \rho_{ij} &\rho_{jk} &\rho_{j\ell}\\
                        \rho_{ik} &1 &\rho_{k\ell}\\
                        \rho_{i\ell} &\rho_{k\ell} &1 
                    \end{vmatrix}\\
                    \dmt{\Omega}&=\begin{bmatrix}
                        1 &\rho_{12} &\rho_{13} &\rho_{14}\\
                        \rho_{12} &1 &\rho_{23} &\rho_{24}\\
                        \rho_{13} &\rho_{23} &1 &\rho_{34}\\
                        \rho_{14} &\rho_{24} &\rho_{34} &1
                    \end{bmatrix}
                } where $(i,j,k,\ell)$ is a permutation of $(1,2,3,4)$. Note that with the exception of this case, all formulae for multivariate Rayleigh and Rician distributions incorporated infinite series. Again, \(\rv{Q}\sim\mathcal{C}\chi_1^2(4,\dmt{\Sigma},\dmt{0})\).

            \subsection{Tri-Chi-Square}

                \paragraph{General Real Case}
            
                For $M=3$, Royen \cite{royenExpansionsMultivariateChisquare1991} gives three expressions for the CDF, of which we provide one. The CDF of the random vector $\rv{Q}\sim\chi_3^2(2p,\dmt{R})$ (i.e., $p$ is half the number of degrees of freedom) is given by \begin{equation*}
                    F_{\rv{Q}}(q_1,q_2,q_2)=|\dmt{Q}|^p\left(\prod_{j=1}^3 \dfrac{\gamma(p,d_j q_j)}{\Gamma(p)}+\sum_{n=2}^\infty P_n(q_1,q_2,q_3)\right)
                \end{equation*}
                where \[ 
                \resizebox{\textwidth}{!}{$
                    P_n(q_1,q_2,q_3)=\begin{cases}
                        \begin{aligned}
                            &\dfrac{1}{\Gamma(p)}\displaystyle\sum_{n_1+n_2+n_3=n/2}\left(\sum_{m=0}^{\min n_j}\dfrac{2^{2m}}{(2m)!}\dfrac{\Gamma(p+\frac{n}{2}-m)}{\prod_{j=1}^3(n_j-m)!}\right)\\
                            &\times \prod_{j=1}^3c_j^{2n_j}\dfrac{\gamma(p+n/2-m_j,d_j q_j)}{\Gamma(p+N/2-m_j)},\text{ if $n$ is even,}
                        \end{aligned} \\
                        \begin{aligned}
                            &\dfrac{1}{\Gamma(p)}\displaystyle\sum_{n_1+n_2+n_3=(n-1)/2}\left(\sum_{m=0}^{\min n_j}\dfrac{2^{2m+1}}{(2m+1)!}\dfrac{\Gamma(p+\frac{n-1}{2}-m)}{\prod_{j=1}^3(n_j-m)!}\right)\\
                            &\times \prod_{j=1}^3c_j^{2n_j+1}\dfrac{\gamma(p+(n-1)/2-m_j,d_j q_j)}{\Gamma(p+N/2-m_j)}, \text{ if $n$ is odd.}
                        \end{aligned}     
                    \end{cases} 
                    $} 
                \] and \begin{align*}
                    d_j&=\dfrac{r^{jj}}{2}\\
                    (r^{ij})&=\dmt{R}^{-1}\\
                    c_j&=-q_{i\ell} \text{ where $(j,i,\ell)$ is a permutation of $(1,2,3)$}\\
                    q_{ij}&=\dfrac{r^{ij}}{\sqrt{r^{ii}r^{jj}}}
                \end{align*}

                \paragraph{Complex Tri-Chi-Square with Tridiagonal Precision Matrix}

                Dharmawansa and McKay \cite{dharmawansaDiagonalDistributionComplex2009} study the trivariate complex chi-square distribution distribution with a tridiagonal precision matrix and a special case of non-centrality. Let \[\rv{Q}\sim \mathcal{C}\chi_K^2(3,\dmt{\Sigma},\delta^2\dmt{J}_3)\] where $\dmt{\Sigma}^{-1}=\begin{bmatrix}
                    \varphi_{11} &\varphi_{12} &0\\
                    \bar{\varphi}_{12} &\varphi_{22} &\varphi_{23}\\
                    0 &\bar{\varphi}_{23} &\varphi_{33}
                \end{bmatrix}$ and $\dmt{J}_3$ is the $3\times 3$ matrix of ones.
                
                The PDF, for $K>1$, is given by
                \UglyAlign{
                    &f_{\rs{Q}_1,\rs{Q}_2,\rs{Q}_3}(q_1,q_2,q_3)=\dfrac{\exp\left\{-(\varphi_{11}q_1+\varphi_{22}q_2+\varphi_{33}q_3+d\delta^2)\right\}}{|\dmt{\Sigma}|^n(|a||b||c||\varphi_{12}|\varphi_{23}|\delta^3\sqrt{q_2})^{K-1}}\nonumber\\
                    &\times \sum_{k,\ell=0}^\infty \sum_{m=0}^\ell \sum_{p=0}^{\min(m,\lfloor\frac{k}{2}\rfloor)}\sum_{q=0}^{\min(k-2p,\ell-m)}(-1)^{k+\ell} K_{k,l,m,p,q}C_{\nu_{k,\ell}^{p,q}}^{\nu_m}(\cos\alpha)\nonumber\\
                    &\times C_{\ell-m}^{\nu_m}(\cos\beta)\sin^m\alpha\sin^m\beta I_{\nu_k}(2|a|\delta\sqrt{q_1})I_{\nu_k}(2|\varphi_{12}|\sqrt{q_1q_2})\nonumber\\
                    &\times I_{\nu_\ell}(2|c|\delta\sqrt{q_3})I_{\nu_\ell}(2|\varphi_{23}|\sqrt{q_2q_3})I_{\nu'^{p,q}_{k,l}}(2|b|\delta\sqrt{q_2}) \label{Dharmawansa-McKay}
                }
                where\\ \UglyTabular{{ll}
                        \(a=\varphi_{11}+\bar{\varphi}_{12}\) & \(A=\Re(a)\Re(\varphi_{12})-\Im(a)\Im(\varphi_{12})\) \\
                        \(b=\varphi_{22}+\varphi_{12}+\bar{\varphi}_{23}\) & \(B=\Im(a)\Re(\varphi_{12})+\Re(a)\Im(\varphi_{12})\)\\
                        \(c=\varphi_{33}+\varphi_{23}\)  & \(C=\Re(c)\Re(\varphi_{23})+\Im(c)\Im(\varphi_{23})\)\\
                        \(d=\varphi_{11}+\varphi_{22}+\varphi_{33}+2\Re(\varphi_{12}+\varphi_{23})\) & \(D=\Im(c)\Re(\varphi_{23})-\Re(c)\Im(\varphi_{23})\) \\
                        \(\alpha=\operatorname{Arg}(Z_S)-\operatorname{Arg}(Z_Q)\) & \(\beta=\operatorname{Arg}(Z_R)-\operatorname{Arg}(Z_Q)\)       
                }\\
                and
                \UglyAlignS{
                \nu_\lambda&=K+\lambda-1\\
                \nu_{k,\ell}^{p,q}&=k+\ell+m-2p-2q\\
                \nu_{k,\ell}^{'p,q}&=k+\ell+K-2p-2q-1\\
                K_{\ell,m,p,q}&=\dfrac{2^{2m}(\ell-m)!\Gamma(2K+m-3)\Gamma^2(\nu_m)\nu_k\nu_\ell(2\nu_m-1)}{m!\Gamma(\nu_m+\nu_\ell)}B_{p,q}A_{q,k-2p,\ell-m}\\
                B_{\xi,\eta}&=\dfrac{(-m)_\xi(K-1)_{K-\xi}(K-2\xi+\nu_m)}{(K+m)_{\eta-\xi}\nu_m\xi!}\\
                A_{\xi,\eta,\kappa}&=\dfrac{(\eta+\kappa-2\xi+\nu_m)(\nu_m)_\xi(\nu_m)_{\eta-2\xi}(\nu_m)_{\kappa-\xi}(2\nu_m)_{\eta+\kappa-\xi}(\eta+\kappa-2\xi)!}{(\eta+\kappa-\xi+\nu_m)\xi!(\eta-\xi)!(\kappa-\xi)!(\nu_m)_{\eta+\kappa-\xi}(2\nu_m)_{\eta+\kappa-2\xi}}\\
                (a)_n&=\begin{cases}
                \dfrac{\Gamma(a+n)}{\Gamma(a)} &\text{ for } a\notin \mathbb{Z}^-\cup \{0\}\\
                \dfrac{(-1)^n(-a)!}{(-a-n)!} &\text{ for } a\notin \mathbb{Z}^-\cup \{0\} \text{ and } 0\leq n\leq -a\\
                0 &\text{ for } a\in \mathbb{Z}^- \text{ and } n>-a.
                \end{cases}
                }

                \paragraph{Real Tri-Chi-Square with Tridiagonal Precision Matrix}

                Dharmawansa et. al. \cite{dharmawansaNewSeriesRepresentation2009} study the same distribution, but for the real Wishart matrix. Dharmawansa and McKay \cite{dharmawansaDiagonalDistributionComplex2009} reprove the result exploiting the complex case. Let \[\rv{Q}\sim\chi_K^2(3,\dmt{\Sigma},\delta^2\dmt{J}_3)\] where $\dmt{\Sigma}^{-1}=\begin{bmatrix}
                    \varphi_{11} &\varphi_{12} &0\\
                    \bar{\varphi}_{12} &\varphi_{22} &\varphi_{23}\\
                    0 &\bar{\varphi}_{23} &\varphi_{33}
                \end{bmatrix}$ and $\dmt{J}_3$ is the $3\times 3$ matrix of ones.
                The joint PDF is given, for \(K>2\), by
                \UglyAlign{
                    &f_{\rs{Q}_1,\rs{Q}_2,\rs{Q}_3}(q_1,q_2,q_3)=\dfrac{2^{K-5}\Gamma^2(p-1)\exp\{-\frac{1}{2}(\varphi_{11}q_1+\varphi_{22}q_2+\varphi_{33}r_3+d\|\dv{\mu}\|^2)\}}{|\dmt{\psi}|^p(abc\varphi_{12}\varphi_{23}\|\mu\|^3\sqrt{q_2})^{p-1}}\nonumber\\
                    &\times \sum_{k=0}^\infty \sum_{\ell=0}^\infty \sum_{q=0}^{\min(k,\ell)}(-1)^{k+\ell}\binom{K+k+\ell-2q-3}{K-3}A_{q,k,\ell}\nonumber\\
                    &\times (p+k-1)(p+\ell-1)I_{p+\ell-1}(a\|\dmt{\mu}\|\sqrt{q_1})I_{p+k-1}(\varphi_{12}\sqrt{q_1q_2})\nonumber\\
                    &\times I_{p+\ell-1}(c\|\mu\|\sqrt{q_3})I_{p+\ell-1}(\varphi_{23}\sqrt{q_2q_3})I_{p+k+\ell-2q-1}(b\|\mu\|\sqrt{q_2})
                }
                where
                \UglyEquation{
                    \begin{aligned}
                        a&=\varphi_{11}\\
                        b&=\varphi_{22}+\varphi_{12}+\varphi_{23}\\
                        c&=\varphi_{33}+\varphi_{23}\\            d&=\varphi_{11}+\varphi_{22}+\varphi_{33}+2\varphi_{12}+2\varphi_{23}\\
                        A_{q,k,\ell}&=\dfrac{(k+\ell+\lambda-2q)(\lambda)_q(\lambda)_{k-q}(\lambda)_{\ell-q}(2\lambda)_{k+\ell-q}(k+\ell-2q)!}{(k+\ell+\lambda-q)q!(k-q)!(\ell-q)!(\lambda)_{k+\ell-q}(2\lambda)_{k+\ell-2q}}\\
                        \lambda&=p-1
                    \end{aligned}\nonumber
                }
                The joint CDF is given by
                {\footnotesize
                \UglyAlign{
                    &F_{\rs{Q}_1,\rs{Q}_2,\rs{Q}_3}(q_1,q_2,q_3)=|\dmt{\Psi}|^{p}\Gamma^2(p-1)\exp(-d\|\dv{\mu}\|^2/2)\nonumber\\
                    &\times \sum_{k,\ell=0}^\infty \sum_{m=0}^{\min(k,\ell)}(-1)^{k+\ell}A_{m,k,\ell}(p+k-1)(p+\ell-1)\binom{K+k+\ell-2m-3}{K-3}\nonumber\\
                    &\times \sum_{i_1,i_2,i_3,i_4,i_5=0}^\infty \dfrac{\gamma(\delta_1,\frac{\varphi_{11}}{2}q_1)\gamma(\delta_2,\frac{\phi_22}{2}q_2)\gamma(\delta_3,\frac{\phi_33}{2}q_3)}{2^{\frac12(\lambda_1+\lambda_2+\lambda_3)}1_1!i_2!i_3!i_4!i_5!\Gamma(i_1+p+k)\Gamma(i_2+p+k)\Gamma(i_3+p+\ell)}\nonumber\\
                    &\times\dfrac{\|\dv{\mu}\|^{\lambda_1+\lambda_2+\lambda_3}a^{\lambda_1}b^{\lambda_2}c^{\lambda_3}\varphi_{12}^{\lambda_4}\varphi_{23}^{\lambda_5}}{\Gamma(i_4+p+\ell)\Gamma(i_5+p+k+\ell-2m)\varphi_{11}^{\delta_1}\varphi_{22}^{\delta_2}\varphi_{33}^{\delta_3}} \label{Dharmawansa et al} 
                }}
                where
                \UglyEquation{
                    \begin{aligned}
                        \lambda_1&=2i_1+k & \lambda_5&=2i_4+\ell  \\
                        \lambda_2&=2i_5+k+\ell-2m & \delta_1&=i_1+i_2+k+p \\
                        \lambda_3&=2i_3+\ell & \delta_2&=i_2+i_4+i_5+k+p-m\\
                        \lambda_4&=2i_2+k & \delta_3&=i_3+i_4+\ell+p \\
                    \end{aligned}\nonumber
                }

                \paragraph{Complex Tri-Chi-Square}
                
                Hagedorn et. al. \cite{hagedornTrivariateChisquaredDistribution2006} study the distribution of \(\rv{Q}\sim\mathcal{C}\chi_K^2(3,\dmt{\Sigma},\dmt{0})\). Write the covariance matrix $\dmt{\Sigma}$ as 
                \[2\begin{bmatrix}
                    \sigma_1^2 &\sigma_{12A}+i\sigma_{12C} &\sigma_{13A}+i\sigma_{13C}\\
                    \sigma_{12A}-i\sigma_{12C} &\sigma_2^2 &\sigma_{23A}+i\sigma_{23C}\\
                    \sigma_{13A}-i\sigma_{13C} &\sigma_{23A}-i\sigma_{23C} &\sigma_3^2
                \end{bmatrix}\]

                The joint probability distribution, for $K>1$, is given by
                \[\begin{aligned}
                    &f_{\rs{Q}_1,\rs{Q}_2,\rs{Q}_3}(q_1,q_2,q_3)=\dfrac{\exp(-\frac{1}{2}(a_1q_1+b_1q_2+c_1q_3))}{8(K-1)D^{p}(d_1d_2d_3)^{K-1}}\\
                    &\times\sum_{k=K-1}^\infty k(-1)^{k-K+1}C^{K-1}_{k-K+1}(\cos(\gamma))I_k\left(d_1\sqrt{q_1q_2}\right)I_k\left(d_2\sqrt{q_2q_3}\right)I_k\left(d_3\sqrt{q_2q_3}\right)
                \end{aligned}\]
                where
                \UglyAlignS{
                    d_1^2&=a_2^2+a_3^2,\;\;
                    d_2^2=b_2^2+b_3^2,\;\;
                    d_3^2=c_2^2+c_3^2,\\
                    d_4&=(a_2c_2+a_3c_3)b_2+(a_2c_3-a_3c_2)b_3,\;\;
                    \cos(\gamma)=\dfrac{d_4}{d_1d_2d_3}\\    D&=\left|\sigma_1^2(\sigma_{23A}^2+\sigma_{23C}^2)+\sigma_2^2(\sigma_{13A}^2+\sigma_{13C}^2)+\sigma_3^2(\sigma_{12A}^2+\sigma_{23C}^2)-\sigma_1^2\sigma_2^2\sigma_3^2\right.\\ &\left.-2(\sigma_{12A}\sigma_{13A}\sigma_{23A}+\sigma_{12A}\sigma_{13C}\sigma_{23C}+\sigma_{12C}\sigma_{13C}\sigma_{23A}-\sigma_{12C}\sigma_{13A}\sigma_{23C})\right|
                }
                and the coefficients \(a_i,b_i,c_i\) account to \UglyEquation{\dmt{\Sigma}^{-1}=\dfrac{1}{2}\begin{bmatrix}
                    a_1 &a_2-ia_3 &c_2-ic_3\\
                    a_2+ia_3 &b_1 &b_2-ib_3\\
                    c_2+ic_3 &b_2+ib_3 &c_1
                \end{bmatrix}\nonumber}

                The joint probability distribution, for $K=1$, is given by
                \UglyAlign{
                f_{\rs{Q}_1,\rs{Q}_2,\rs{Q}_3}(q_1,q_2,q_3)&=\dfrac{\exp(-\frac{1}{2}(a_1q_1+b_1q_2+c_1q_3))}{8D} \nonumber\\ &\times \left[\vphantom{\sum_{k=1}^\infty}I_0\left(d_1\sqrt{q_1q_2}\right)I_0\left(d_2\sqrt{q_2q_3}\right)I_0\left(d_3\sqrt{q_1q_3}\right)\right.\nonumber\\
                &\left.+ 2\sum_{k=1}^\infty (\cos(k\gamma))I_k\left(d_1\sqrt{q_1q_2}\right)I_k\left(d_2\sqrt{q_2q_3}\right)I_k\left(d_3\sqrt{q_1q_3}\right) \right]
                }

                Moreover, the joint PDF of the bi-variate distribution is given by
                \begin{align}
                    f_{\rs{Q}_2,\rs{Q}_2}(q_1,q_2)&=\dfrac{\exp\left(-\dfrac{\sigma_2^2q_1+\sigma_1^2q_2}{2(\sigma_1^2\sigma_2^2-\sigma_{12A}^2-\sigma_{12C}^2)}\right)(q_1q_2)^{(K-1)/2}}{(\sigma_1^2\sigma_2^2-\sigma_{12A}^2-\sigma_{12C}^2)\Gamma(K)2^{K+1}(\sigma_{12A}^2+\sigma_{12C}^2)^{(K-1)/2}}\nonumber\\
                    &\times I_{K-1}\left(\dfrac{\sqrt{\sigma_{12A}^2+\sigma_{12C}^2}\sqrt{q_1q_2}}{\sigma_1^2\sigma_2^2-\sigma_{12A}^2-\sigma_{12C}^2}\right) \label{eq:Hagedorn-bivariate}
                \end{align}

                Compared to Dharamawansa and McKay's work \cite{dharmawansaDiagonalDistributionComplex2009}, we can see that Hagedorn formula allows only central variables with no restriction whatsoever on the covariance structure, compared to a trivariate precision matrix restriction in the former case. Formula-wise, we can see that both are infinite series in modified bessel functions of the first kind, however, Dharmawansa's series is a double sum and Hagedorn's is a single sum.

                
                Peppas and Sagias \cite{peppasTrivariateNakagamimDistribution2009} study Hagedorn's distribution with a cross-covariance of zero. So the covariance matrix $\dmt{\Sigma}$ is real and has the following form: \[2\begin{bmatrix}
                    \sigma_1^2 &c_{12} &c_{13}\\
                    c_{12} &\sigma_2^2 &c_{23}\\
                    c_{13} &c_{23} &\sigma_3^2
                \end{bmatrix}\]  The PDF is directly derived from Hagedorn. The CDF, $F_{\rs{Q}_1,\rs{Q}_2,\rs{Q}_3}(q_1,q_2,q_3)$, for \(K>1\), is given by
                \UglyAlign{
                &\dfrac{T^{2K}}{2(K-1)q_1q_2q_3}\sum_{k=K-1}^\infty \sum_{l_1,l_2,l_3=0}^\infty k(-1)^{k-K+1}\binom{K+k-2}{2K-3}\nonumber\\
                &\times \prod_{\ell=1}^3\dfrac{\omega_\ell^{2l_\ell+k+1-K}n_\ell !}{\psi_\ell^{n_\ell+1}l_\ell ! (l_\ell+k)!}\left[1-\exp\left(-\dfrac{n\psi_\ell}{T\Omega_\ell}q_\ell\right)\sum_{p=0}^{n_\ell}\dfrac{1}{p!}\left(\dfrac{n\psi_\ell}{T\Omega_\ell}q_\ell\right)^p\right] \label{PeppasSagias}
                }
                where    
                \UglyTabular{{lll}
                    \multicolumn{3}{l}{$T=1-(\rho_{12}+\rho_{23}+\rho_{13})+2\sqrt{\rho_{12}\rho_{22}\rho_{13}}$}\\
                    $n_1=l_1+l_3+k$, &$\psi_1=1-\rho_{23}$, &$\omega_1=-\sqrt{\rho_{12}}+\sqrt{\rho_{23}\rho_{13}}$,\\
                    $n_2=l_1+l_2+k$, &$\psi_2=1-\rho_{13}$, &$\omega_2=-\sqrt{\rho_{23}}+\sqrt{\rho_{12}\rho_{13}}$,\\
                    $n_3=l_2+l_3+k$, &$\psi_3=1-\rho_{12}$, &$\omega_3=-\sqrt{\rho_{13}}+\sqrt{\rho_{12}\rho_{23}}$\\
                    $\rho_{\ell\ell'}=\dfrac{4n^2c_{\ell\ell'}^2}{\Omega_\ell^2\Omega_{\ell'}^2}$, &$\Omega_\ell=2n\sigma_\ell^2$, &\(\psi_i=1-\rho_{jk}\).
                }


            \subsection{Tree Structure}

                A matrix $\dmt{R}$ can be represented as a graph with nodes $i$ and $j$ connected if $\dmt{R}_{ij}\neq 0$. A graph is a tree if it has no cycles. For example tridiagonal matrices represent a chain of nodes. The correlation matrix or its inverse having such a structure offers some simplification. The general case has been studied in Royen \cite{royenMultivariateGammadistributionsConnected1994}. Special tridiagonal structures where studied in \cite{dharmawansaDiagonalDistributionComplex2009,dharmawansaNewSeriesRepresentation2009} and a penta-diagonal inverse in \cite{chenInfiniteSeriesRepresentations2005}. Note m-factorial matrices are in general not tree like structures.

                The CDF of a central real-multi-chi-squares with covariance matrices that have a tree-type take the form: \cite[Eq. 3.8]{royenMultivariateGammadistributionsConnected1994}: 
                
                \begin{align}
                F_{\rv{Q}}&(\dv{q}) =\sum_{n=0}^\infty \sum_{(\dv{n})} c(\dv{n}) \prod_{i\in A}G_{p+n_i}^{(n_i)}(q_i) \prod_{i\in B}\int_0^\infty g_{p+n_i}(t_i)\nonumber\\
                & \prod_{k\in L_i}H_{p}(q_k,-r_{ki}^2t_i)
                (\sqrt{q_i/y_i})^{p} I_{p}(2\sqrt{q_it_i}) dt_i \label{eqn:RoyenTreeEquation}
                \end{align}
                The $M$ variables $\dv{q}$ are split into three disjoint sets: $A, B, L$. The variables in $L$ are nodes in the graph connected to only one other node, i.e., they are leaves. Set $B$ are nodes connected to a node in $L$, i.e., predecessors of leaves. Set $A$ are nodes not connected to any in nodes in $L$.  Taking the $i^{th}$ node in $B$, $L_i$ are the nodes in $L$ connected to $i$. Hence, $L = \cup_{i\in B} L_i$. Let the indices $(i,j)$ of the non-zero elements in $\dmt{R}$ be $\mathcal{R}$. Then the partition $(\dv{n})$ is $\sum_{(i,j)\in\mathcal{R}} n_{ij} = n$ is over the non-zero elements of the correlation matrix. The constant $c(\dv{n})$ depends only on $n_{ij}$ and $\dmt{R}$. Finally, $n_i = \sum_{j} n_{ij}$
                
                The second line in \Cref{eqn:RoyenTreeEquation}, represents the conditional distribution of $L_i$ variables conditioned on the $i \in B$ variable. The integral in the first line ``deconditions'' this distribution, and finally the product over $A$ accounts for the ``unconnected'' variables. Utilizing the tree correlation structure of the covariance matrix represents another variant of the deconditioning approach. Note \Cref{eqn:RoyenTreeEquation} has simpler equivalents \cite[Eq. 3.7, 3.9]{royenMultivariateGammadistributionsConnected1994}, but these hide the conditioning/deconditioning action.
                
            \subsection{$m$-Factorial Matrices}

                \paragraph{Definitions}
                An $M\times M$ correlation matrix $\dmt{R}$ is said to be $m$-factorial, if $m$ is the smallest integer allowing the representation \[\dmt{R}=\dmt{D}+\dmt{A}\dmt{A}^T\]
                where $\dmt{D}>0$ \footnote{In \cite{royenIntegralRepresentationsApproximations2007}, the term \(m\)-factorial is relaxed to allow for a complex matrix \(\dmt{D}\); however, we adopt the notation of other Royen's papers, as in \cite{royenCENTRALNONCENTRALMULTIVARIATE1995}, for example.} is a diagonal matrix, and $\dmt{A}$ is an $M\times m$ matrix of rank $m$. Note that any $M\times M$ covariance matrix is at most $(M-1)$-factorial \footnote{Let $\lambda_{\text{min}}$ be the smallest eigenvalue of $\dmt{\Sigma}$. Eigendecomposing with decreasing eigenvalues, we can write $\dmt{\Sigma}=\dmt{U}^T\dmt{D U}=\dmt{U}^T(\dmt{D}-\lambda_{\text{min}}\dmt{I}_N+\lambda_{\text{min}}\dmt{I}_N)\dmt{U}=\lambda_{\text{min}}\dmt{I}_N+\dmt{A}\dmt{A}^T$ where $\dmt{A}=\dmt{U}^T\diag(\sqrt{\lambda_1},\ldots,\sqrt{\lambda_{N-1}})$.}. It can be shown, for a real matrix $\dmt{R}$, that the columns of $\dmt{A}$ in the above representation are either real or pure imaginary. In the case of a real matrix \(\dmt{A}\), the components of \(\dv{x}\) can be written as \(x_j=d_j^{1/2}U_j+\sum_{k=1}^ma_{jk}Z_k\), with i.i.d \(\mathcal{N}(0,1)\) \(U_j,Z_k\). So we have \(m\) common ``factors'', the fact which justifies the wording.
                
                The one-factorial case is well-studied in the literature (see, for example, \cite{royenMultivariateGammaDistributions1991}, \cite{beaulieuNovelSimpleRepresentations2011}, \cite{bithasNovelResultsMultivariate2019}), as it shows up in some applications and as it simplifies the calculations considerably. It is easy to verify that an $M\times M$ matrix $\dmt{R}$ is one-factorial if and only if one of the two conditions below holds:
                \begin{align}
                    \forall i\; &\exists a_i \;\; -1<a_i<1 \text{ and } \forall i\neq j \;\; r_{ij}=a_i a_j \label{3.10}\\
                    \forall i \; &\exists a_i\in\mathbb{R} \; \forall i\neq j\;\; r_{ij}=-a_i a_j \text{ and }R\geq 0. \label{3.11}
                \end{align} Note that \eqref{3.10} corresponds to a real column vector $\dv{a}$, and \eqref{3.11} corresponds to a pure imaginary one (with scaling out common imaginary unit).
    
                An $M\times M$ correlation matrix $\dmt{R}$ is said to be 'real $m$-factorial', if $m$ is the smallest integer allowing the representation \[\dmt{R}=\dmt{D}+\dmt{A}\dmt{A}^T\]
                where $\dmt{D}$ is any real diagonal matrix (can have negative entries), and $\dmt{A}$ is an $M\times m$ matrix of rank $m$. 
        
                \paragraph{One-Factorial Matrices}
        
                Assume $\dmt{R}$ is one-factorial \[\dmt{R}=\dmt{D}+\dv{a}\dv{a}^T.\] Royen \cite{royenCENTRALNONCENTRALMULTIVARIATE1995} proves that for $\rv{Q}\sim\chi_K^2(M,\dmt{R},\dmt{\Delta})$, the CDF $F_{\rv{Q}}(q_1,\ldots,q_M)$ is given by
                \UglyEquation{\sum_{n=0}^\infty \int_0^\infty \left(\sum_{(2n)}d(n_1,\ldots,n_M)\prod_{j=1}^MG_{r,n_j}(\frac{1}{2}w_j^2q_j,d_{jj}+c_j^2t)c_j^{n_j}\right)g_{p+n}(t)dt \label{one-factorial non-central}}
                where
                \[d(n_1,\ldots,n_M)=\alpha^n\sum_{\substack{n_{.j}=n_j-n_{jj} \\ j=1,\ldots,M }}\prod_{1\leq i < j \leq M}\dfrac{{d'}_{ij}^{n_{ij}}}{n_{ij}!}\]
                and \begin{align*}
                    d'_{ij}&=(2-\delta_{ij})d_{ij}\\
                    \dmt{D}&=(d_{ij})=\dfrac{1}{2}\dmt{W}\dmt{\Delta}\dmt{W}\\
                    \dv{c}&=\dmt{D}^{1/2}\dv{a}
                \end{align*}
                Note that \(\displaystyle\sum_{(2n)}\) means summation over all $M$-partitions of $2n$, \(n_1+n_2+\ldots+n_M=2n\).
                

                Suppose that $\dmt{R}$ is one-factorial and $\rank(\dmt{\Delta})=1$. Hence we can write $\dmt{\Delta}=(\delta_i\delta_j)$. Then the distribution function is given by
                \[\int_0^\infty \int_0^\pi \prod_{j=1}^M G_{\frac{K}{2}}\left(\frac{1}{2}w_j^2q_j,\frac{1}{2}w_j^2q_j^2+b_j^2t+2b_j\delta_j\sqrt{\frac{1}{2}w_j^2t}\cos(\phi)\right)f_{\frac{K}{2}}(\phi)d\phi g_{\frac{K}{2}}(t)dt \label{one-factorial non-central rank 1}\]
                where \[f_p(\phi)=\dfrac{(\sin^2\phi)^{p-1}}{B(\frac{1}{2},p-\frac{1}{2})} \label{3.13}\]
        
                The CDF $F(q_1,\ldots,q_M)$, in the case of a central distribution \cite{royenMultivariateGammaDistributions1991}, is given by
                \[F(q_1,\ldots,q_M)=\int_0^\infty g_p(t) \prod_{j=1}^M G_p\left(\dfrac{q_j/2}{1+a_j^2},\dfrac{a_j^2t}{1-a_j^2}\right)dt \label{onefactorialold}\]
                for the case \eqref{3.10} 
                    and 
                \[F(q_1,\ldots,q_M)=\int_0^\infty g_p(t) \prod_{j=1}^M G_p\left(\dfrac{q_j/2}{1+a_j^2},\dfrac{-a_j^2t}{1-a_j^2}\right)dt\]
                for the case \eqref{3.11}  \cite{royenMultivariateGammaDistributions1991} with $K=2p$.

                Nakagami-$m$ distribution can be perceived as the square root of the sum of square moduli of $m$ independent central complex normal vectors, say following $\mathcal{CN}(0,\sigma^2)$. Thus, Rayleigh distribution corresponds to $m=1$. Hence, the Nakagami-$m$ distribution is equivalent to \(\sqrt{\dfrac{\sigma^2}{2}\chi^2_{2m}}\). Starting from the PDF of a chi-square variable of $2m$ degrees of freedom, and after square-rooting and scaling, we obtain the following PDF of a Nakagami-$m$ variable $T$: \[f_T(t)=\dfrac{2}{\Gamma(m)\sigma^{2m}}t^{2m-1}\exp\left(-\dfrac{t^2}{\sigma^2}\right)\] The Nakagami-$m$ distribution is equivalently parametrized by $\Omega$ and $m$ so that $\Omega=m\sigma^2$. 

                Clearly, the square of a Nakagami-$m$ variable corresponds to a single quadratic form $\rs{Q}=\rv{x}^H\dmt{I}_m\rv{x}$, where \(\rv{x}\sim\mathcal{CN}(\dv{0},\sigma^2\dmt{I}_m)\). Note that, in our paper, $m$ is usually replaced by $N$ to unify 
                the notation. 
                
                Beaulieu and Hemachandra \cite{beaulieuNovelSimpleRepresentations2011} study the correlated Rayleigh distribution with a one-factorial underlying covariance matrix, the square roots of a multi-chi-square. Let \(\rv{Q}\sim\mathcal{C}\chi_1^2(M,\dmt{\Sigma},\dmt{0})\) where  \(\dmt{\Sigma}=(\rho_{ij}\sigma_i\sigma_j)_{1\leq i,j \leq n}\), and \(\rho_{ij}= 1 \) if $i=j$ and $\lambda_i\lambda_j$ else. Equivalently, we can write \[\Sigma=\dmt{D}+\dv{a}\dv{a}^T\] where $\dmt{D}=\diag(\sigma_1^2-\lambda_1^2,\ldots,\sigma_N^2-\lambda_N^2)$ and $\dv{a}^T=[\lambda_1,\ldots,\lambda_N]$. The joint PDF is given by
                \[
                \resizebox{\textwidth}{!}{$
                f_{\rv{Q}}(q_1,\ldots,q_N)=\dfrac{1}{2^N}\displaystyle\int_0^\infty e^{-t}\prod_{k=1}^N\dfrac{1}{\Omega_k^2}\exp\left(-\dfrac{q_k+\sigma_k^2\lambda_k^2t}{2\Omega_k^2}\right)I_0\left(\dfrac{\sqrt{q_k}\sqrt{t\sigma_k^2\lambda_k^2}}{\Omega_k^2}\right)dt\label{Beaulieu-Hemachandra:Rayleigh}
                $}
                \]
                where \(\Omega_k^2=\sigma_k^2(1-\lambda_k^2)/2\) and the CDF is given by
                \[
                F_{\rv{Q}}(q_1,\ldots,q_N)=\displaystyle\int_0^\infty e^{-t}\prod_{k=1}^N\left[1-Q_1\left(\dfrac{\sqrt{t}\sqrt{\sigma_k^2\lambda_k^2}}{\Omega_k},\dfrac{\sqrt{q_k}}{\Omega_k}\right)\right]dt
                \]

                Beaulieu and Hemachandra \cite{beaulieuNovelSimpleRepresentations2011} study correlated Rician distribution with a one-factorial underlying covariance structure.
                Let \(\rv{x}\sim \mathcal{CN}(\dv{\mu},\dmt{\Sigma})\) with \(\dmt{\Sigma}=(\rho_{ij}\sigma_i\sigma_j)_{1\leq i,j \leq n}\), \(\rho_{ij}=\begin{cases}
                    1 &\text{if } i=j\\
                    \lambda_i\lambda_j &\text{else}
                \end{cases}\) and \(\dv{\mu}=(m_1+im_2)[\sigma_1\lambda_1,\ldots,\sigma_n\lambda_n]^T\).
                Consider the $N$ quadratic forms \(\rs{Q}_j=\rv{x}^T\dmt{A}_j\rv{x}\) (hence $M=N$) where \(\dmt{A}_j\) is the diagonal matrix whose only non-zero entry is a unit at the \(j^{\text{th}}\) position. Equivalently, \(\rv{Q}\sim\mathcal{C}\chi_1^2(M,\dmt{\Sigma},\dmt{\Delta})\), with $\dmt{\Sigma}$ defined as in the previous case, and \(\dmt{\Delta}=(m_1^2+m_2^2)\dv{v}\dv{v}^T\), with \(\dv{v}^T=[\lambda_1\sigma_1,\ldots,\lambda_N\sigma_N]\). The joint PDF is given by
                \begin{align}
                    f_{\rv{Q}}(q_1,\ldots,q_N)&=\dfrac{1}{2^N}\int_0^\infty e^{-t-m_1^2-m_2^2}I_0\left(2\sqrt{t}\sqrt{m_1^2+m_2^2}\right)\nonumber\\
                    &\times\prod_{k=1}^N\dfrac{1}{\Omega_k^2}\exp\left(-\dfrac{q_k+\sigma_k^2\lambda_k^2t}{2\Omega_k^2}\right)I_0\left(\dfrac{\sqrt{q_k}\sqrt{t\sigma_k^2\lambda_k^2}}{\Omega_k^2}\right)dt\label{Beaulieu-Hemachandra:Rician}
                \end{align}
                where \(\Omega_k^2=\sigma_k^2(1-\lambda_k^2)/2\) and the CDF is given by
                \[\begin{aligned}
                    F_{\rv{Q}}(q_1,\ldots,q_N)&=\int_0^\infty e^{-t-m_1^2-m_2^2}I_0\left(2\sqrt{t}\sqrt{m_1^2+m_2^2}\right)\\
                    &\times\prod_{k=1}^N\left[1-Q_1\left(\dfrac{\sqrt{t}\sqrt{\sigma_k^2\lambda_k^2}}{\Omega_k},\dfrac{\sqrt{q_k}}{\Omega_k}\right)\right]dt
                \end{aligned}\] As expected, the joint distribution depends only on the sum of squares of $m_1$ and $m_2$, not on their distinct values.
                
                The same authors \cite{beaulieuNovelRepresentationsBivariate2011} derive the same formula for the particular bivariate case.
                
                Bithas \cite{bithasNovelResultsMultivariate2019} generalizes this to account for \[\dv{\mu}=[\sigma_1\lambda_1(m_{1,1}+im_{2,1}),\ldots,\sigma_N\lambda_N(m_{1,N}+im_{2,N})]^T\] which accounts for any mean. The PDF and CDF are given by \begin{align}
                f_{\rv{Q}}\left(q_1, q_2, \ldots, q_N\right)&=\frac{1}{\pi} \int_{-\infty}^{\infty} \int_{-\infty}^{\infty} \exp \left(-x_0^2-q_0^2\right)\\
                &\times\prod_{k=1}^N \frac{r_k}{b_k} \exp \left(-\frac{q_k+s_k^2}{2 b_k}\right) I_0\left(\frac{\sqrt{q_k} s_k}{b_k}\right) d x_0 d q_0 \\
                F_{\rv{Q}}\left(q_1, q_2, \ldots, q_N\right)&=\frac{1}{\pi} \int_{-\infty}^{\infty} \int_{-\infty}^{\infty} \exp \left(-x_0^2-q_0^2\right)\\
                &\times\prod_{k=1}^N\left[1-Q_1\left(\frac{s_k}{\sqrt{b_k}}, \frac{\sqrt{q_k}}{\sqrt{b_k}}\right)\right] d x_0 d q_0 .
                \end{align}
                where \(s_k=\sigma_k \lambda_k\left[\left(m_{1, k}+x_0\right)^2+\left(m_{2, k}+q_0\right)^2\right]^{1 / 2}, b_k=\sigma_k^2 \frac{1-\lambda_k^2}{2}\).
                
                
                Beaulieu and Hemachandra \cite{beaulieuNovelSimpleRepresentations2011} study the correlated Nakagami-m distribution with a one-factorial underlying covariance matrix. The vector of squares follow a central complex multi-chi-square distribution: $\rv{Q}\sim\mathcal{C}\chi_K^2(M,\dmt{\Sigma},\dmt{0})$, where \(\dmt{\Sigma}=(\rho_{ij}\sigma_i\sigma_j)_{1\leq i,j \leq n}\), and \(\rho_{ij}= 1 \) if $i=j$ and $\lambda_i\lambda_j$ else. The joint PDF is given by:
                \begin{align}
                    &f_{\rv{Q}}(q_1,\ldots,q_M)=\dfrac{1}{2^M}\int_0^\infty \dfrac{t^{K-1}}{\Gamma(K)}e^{-t}\nonumber\\
                    &\times\prod_{\ell=1}^M\dfrac{1}{(\sigma_\ell^2\lambda_\ell^2t)^{\frac{K-1}{2}}}\dfrac{q_\ell^{\frac{M-1}{2}}}{\Omega_\ell^2}\exp\left(-\dfrac{q_\ell+\sigma_\ell^2\lambda_\ell^2t}{2\Omega_\ell^2}\right)I_{K-1}\left(\dfrac{\sqrt{q_\ell}\sqrt{t\sigma_\ell^2\lambda_\ell^2}}{\Omega_\ell^2}\right)dt\label{Beaulieu-Hemachandra:Nakagami-m}
                \end{align}
                where \(\Omega_\ell^2=\sigma_\ell^2(1-\lambda_\ell^2)/2\). The CDF is given by:
                \[F_{\rv{Q}}(q_1,\ldots,q_M)=\int_0^\infty \dfrac{t^{K-1}}{\Gamma(K)}e^{-t}\prod_{\ell=1}^M\left[1-Q_K\left(\dfrac{\sqrt{t}\sqrt{\sigma_\ell^2\lambda_\ell^2}}{\Omega_\ell},\dfrac{\sqrt{q_\ell}}{\Omega_\ell}\right)\right]dt\]
                    
                Hemachandra and Beaulieu \cite{hemachandraNovelRepresentationsEquicorrelated2011} study a particular type (constant correlation) of 1-factorial multi-chi-square. Using our notation, it corresponds to \(\dv{\mu}=\dv{e}\otimes\dv{\mu}_0\), \(\dv{e}\in\mathbb{R}^M\) is a vector of ones, \(\dv{\mu}_0\in\mathbb{R}^K\), \(\dmt{\Sigma}=\dmt{\tilde{\Sigma}}\otimes \dmt{I}_K\), \((\dmt{\tilde{\Sigma}})_{ii}=\sigma_i^2\), \((\dmt{\tilde{\Sigma}})_{ij}=\lambda^2\sigma_i^2\) for \(i\neq j\), and \(\dmt{A}_j=\dmt{E}_{jj}\otimes \dmt{I}_K\) where \(j=1,\ldots,M\). Equivalently, $\rv{Q}\sim \chi_K^2(M,\dmt{\tilde{\Sigma}},\|\dv{\mu}_0\|^2\dv{v}\dv{v}^T)$, where $\dv{v}=[\sigma_1,\ldots,\sigma_M]$.
            
                The PDF is given by \begin{align}
                    f_{\rv{Q}}(q_1,q_2,\ldots,q_M)&=\int_0^\infty \left(\dfrac{t}{s^2}\right)^{\frac{K-1}{4}}\exp(-(s^2+t^2))I_{\frac{K}{2}-1}(2s\sqrt{t})\nonumber\\
                    &\times\prod_{\ell=1}^M\dfrac{1}{2\Omega_\ell^2}\left(\dfrac{q_\ell}{\sigma_\ell^2\lambda^2t}\right)^{\dfrac{K-2}{4}}\nonumber\\
                    &\times \exp\left(-\dfrac{\sigma_\ell^2\lambda^2t+q_\ell}{2\Omega_\ell^2}\right)I_{\frac{K}{2}-1}\left(\dfrac{\sqrt{q_\ell}\sqrt{\sigma_\ell^2\lambda^2t}}{\Omega_\ell^2}\right)dt
                \end{align}
                
                where \(
                    s^2=\sum_{\ell=1}^K\left(\dfrac{{\dv{\mu}_0}_\ell}{\lambda}\right)^2,\;
                    \Omega_\ell^2=\dfrac{\sigma_\ell^2(1-\lambda_\ell^2)}{2},\;
                    \Delta_\ell^2=\dfrac{2\sigma_\ell^2\lambda^2t}{1-\lambda^2}
                \)
                The CDF is given by \UglyAlign{
                    F_{\rv{Q}}(q_1,q_2,\ldots,q_M)&=\int_0^\infty \left(\dfrac{t}{s^2}\right)^{\frac{K-1}{4}}\exp(-(s^2+t^2))I_{\frac{K}{2}-1}(2s\sqrt{t})\nonumber\\
                    &\times\prod_{\ell=1}^M \sum_{\nu=0}^\infty \exp\left(\dfrac{\Delta^2}{2}\right)\dfrac{(\Delta^2/2)^\nu}{\nu!}\nonumber\\
                    &\times \dfrac{\gamma(\nu+\frac{k}{2},(1-\lambda^2)q_\ell/2)}{\Gamma(\nu+p)}dt\\
                    &=\int_0^\infty \left(\dfrac{t}{s^2}\right)^{\frac{K-1}{4}}\exp(-(s^2+t^2))I_{\frac{K}{2}-1}(2s\sqrt{t})\nonumber\\
                    &\times\prod_{\ell=1}^M\left[1-Q_{\frac{K}{2}}\left(\dfrac{\sqrt{\sigma_\ell^2\lambda^2t}}{\Omega^\ell},\dfrac{\sqrt{r_\ell}}{\Omega_\ell}\right)\right]dt \label{HemachandraEvenK} 
                }
                where Equation \ref{HemachandraEvenK} holds for even \(K\).

                Beaulieu and Hemachandra \cite{beaulieuNovelSimpleRepresentations2011} study also the ``generalized Rician distribution'', with a one-factorial underlying covariance matrix. Let \(\rv{x}\sim \mathcal{CN}(\dv{\mu},\dmt{\Sigma})\) with \(\dmt{\Sigma}=\dmt{I}_K\otimes\dmt{\tilde{\Sigma}}\), \(\dmt{\tilde{\Sigma}}=(\rho_{ij}\sigma_i\sigma_j)_{1\leq i,j \leq n}\), \(\rho_{ij}= 1 \) if $i=j$ and $\lambda_i\lambda_j$ else, \(\dv{\mu}=[\dv{\tilde{\mu}_1}^T,\ldots,\dv{\tilde{\mu}}_K^T]^T\), and \(\dv{\tilde{\mu}}_\ell^T=(m_{1\ell}+m_{2\ell})[\sigma_1\lambda_1,\ldots,\sigma_n\lambda_n]^T\).
                Consider the $M$ quadratic forms \(\rs{Q}_j=\rv{x}^T\dmt{A}_j\rv{x}\) (hence $M=KN$) where \(\dmt{A}_j=\dmt{I}_K\otimes\dmt{\tilde{A}}_j\) and \(\dmt{\tilde{A}}_j\) is the diagonal matrix whose only non-zero entry is a unit at the \(j^{\text{th}}\) position. Equivalently, \(\rv{Q}\sim\mathcal{C}\chi_K^2(M,\dmt{\Sigma},\dmt{\Delta})\), with $\dmt{\Sigma}$ defined as in the previous case, and \(\dmt{\Delta}=(m_1^2+m_2^2)\dv{v}\dv{v}^T\), with \(\dv{v}^T=[\lambda_1\sigma_1,\ldots,\lambda_N\sigma_N]\). The joint PDF is given by
                \begin{align}
                   & f_{\rv{Q}}(q_1,\ldots,q_M)=\dfrac{1}{2^M}\int_0^\infty \dfrac{t^{\frac{K-1}{2}}}{S^{K-1}}e^{-t-S^2}I_{K-1}(2S\sqrt{t})\nonumber\\
                    &\times\prod_{\ell=1}^M\dfrac{1}{(\sigma_\ell^2\lambda_\ell^2t)^{\frac{K-1}{2}}}\dfrac{q_\ell^{\frac{M-1}{2}}}{\Omega_\ell^2}\exp\left(-\dfrac{q_\ell+\sigma_\ell^2\lambda_\ell^2t}{2\Omega_\ell^2}\right)I_{K-1}\left(\dfrac{\sqrt{q_\ell}\sqrt{t\sigma_\ell^2\lambda_\ell^2}}{\Omega_\ell^2}\right)dt\label{Beaulieu-Hemachandra:Generalized Rician}
                \end{align}
                
                where \(\Omega_\ell^2=\sigma_\ell^2(1-\lambda_\ell^2)/2\), \(S=\sum_{\ell=1}^K(m_{1\ell}^2+m_{2\ell}^2)\sum_{j=1}^M\sigma_j^2\lambda_j^2\) and the CDF is given by
                \[\begin{aligned}
                    F_{\rv{Q}}(q_1,\ldots,q_M)&=\int_0^\infty \dfrac{t^{\frac{K-1}{2}}}{S^{K-1}}e^{-t-S^2}I_{K-1}(2S\sqrt{t})\\
                    &\times\prod_{\ell=1}^M\left[1-Q_K\left(\dfrac{\sqrt{t}\sqrt{\sigma_\ell^2\lambda_\ell^2}}{\Omega_\ell},\dfrac{\sqrt{q_\ell}}{\Omega_\ell}\right)\right]dt
                \end{aligned}\]

                \paragraph{2-Factorial Matrices}
                Royen \cite{royenCENTRALNONCENTRALMULTIVARIATE1995} provides three formulae for the CDF of the central two-factorial multi-chi-square distribution. Let $\rv{Q}\sim\chi_M^2(2p,\dmt{R})$, with $\dmt{R}=\dmt{D}+\dmt{A}\dmt{S}^T$, where $\dmt{D}>0$ is diagonal and $\dmt{A}\in\mathbb{C}^{M\times 2}$. Define $\dmt{W}=\diag(w_1,\ldots,w_M)=\dmt{D}^{-1/2}$. Assume, without loss of generality, that the columns of $\dmt{B}=\dmt{W}\dmt{A}$ are orthogonal \footnote{Apply SVD to $\dmt{B}$.}. One of them is given by
                \begin{equation}
                    \begin{aligned}
                        &F_{\rv{Q}}(q_1,\ldots,q_M)=
                        \sum_{n=0}^\infty \sum_{(n)} c(n_1,\ldots,n_M)\prod_{j=1}^MF_{r_jn_j}(\frac{1}{2}w_j^2q_j)
                    \end{aligned}
                \end{equation}
                where \[c(n_1,\ldots,n_M)=\dfrac{1}{\Gamma(p)}\sum_{n_{.j}=n_j}\Gamma(p+\dfrac{1}{2}(n+\sum_{j=1}^M n_{jj}))\prod_{1\leq i \leq j \leq M} \dfrac{\gamma_{ij}^{n_{ij}}}{n_{ij}!} \] with \[\gamma_{ij}=\begin{cases}
                    c_{ij}^2-c_{ii}c_{jj} &\text{if } i\neq j\\
                    -c_jj & \text{else}
                \end{cases}\]

                \paragraph{Real 1-Factorial Matrices}
                
                Suppose that \(\dmt{R}\) is real one-factorial, but not one-factorial:\[\dmt{R}=\dmt{D}+\dv{a}\dv{a}^T,\] where $\dmt{D}$ is real diagonal but not positive. Then the CDF \cite{dickhausSurveyMultivariateChisquare2015} of \(\rv{Q}\sim\chi_K^2(M,\dmt{R},\dmt{0})\) is given by
                \[F(q_1,\ldots,q_M)=\lambda^{-p}\int_0^\infty \left[\prod_{j=1}^MG_{M}^*\left(\dfrac{q_j}{1-a_j^2},\dfrac{a_j^2}{1-a_j^2}\dfrac{t}{\lambda}\right)\right]g_{p}(t)dt\]
                with \(\lambda=1+\sum_{j=1}^M\dfrac{a_j^2}{1-a_j^2}\).

                \paragraph{General $m$-Factorial}
                Royen \cite{royenCENTRALNONCENTRALMULTIVARIATE1995} gives formulae for general $m$-factorial multi-chi-square distributions. However, these formulae entail Monte-Carlo simulations, which are beyond the scope of our monograph. 

                Khammassi et al. \cite{khammassiNewAnalyticalApproximation2022} study the multivariate Rayleigh distribution with an underlying real covariance matrix. Let \(\rv{x}=[\rs{x}_1,\ldots,\rs{x}_N]^T\sim \mathcal{CN}(\dv{0},\dmt{\Sigma})\) with \(\dmt{\Sigma}\in \mathbb{R}^{N\times N}\), and, without loss of generality, \((\Sigma)_{k,k}=\sigma^2\) for all $k$. Let \(\dmt{\Sigma}=\dmt{U}\dmt{\Lambda}\dmt{U}^T\) be the eigendecomposition of \(\dmt{\Sigma}\), with \(\dmt{\Lambda}=\diag(\lambda_1,\ldots,\lambda_N)\) and \(\lambda_1\geq \lambda_2\geq \lambda_N\). Write \(\dmt{U}=[\dv{u}_1,\ldots,\dv{u}_N]\) and \(\dv{u}_\ell=[u_{1,\ell},\ldots,u_{N,\ell}]^T\). Then the random variable \(\rs{x}_k\) is distributed as \(\sum_{\ell=1}^N\sqrt{\lambda_\ell}u_{k,\ell}(\rs{a}_\ell+i\rs{b}_\ell)\), \(k=1,\ldots,N\), where the variables \(\rs{a}_\ell+i\rs{b}_\ell\) are independent and follow \(\mathcal{CN}(0,1)\). Consider the quadratic forms \(\rs{Q}_i=\rv{x}^H\dmt{A}_j\rv{x}\) for $j=1,\ldots,N$ $(M=N)$ where \(\dmt{A}_j\) is the diagonal matrix whose only nonzero entry is a unit on the $j^{\text{th}}$ position, i.e., \(\rs{Q}_j=|\rs{x}_j|^2\). 

                The authors' approach is to approximate the vector \(\rv{x}\) by a vector \(\rv{\hat{x}}=\rv{\hat{x}}(\epsilon)\) which converges in distribution to \(\rv{x}\) as \(\epsilon\) tends top zero. Let \begin{equation}
                    \hat{\rs{x}}_k=\sqrt{\sigma^2-\sum_{\ell=1}^{\epsilon\text{-rank}}\lambda_\ell u_{k,\ell}^2}(\rs{x}_k+i\rs{y}_k)+\sum_{\ell=1}^{\epsilon\text{-rank}}\sqrt{\lambda_\ell}u_{k,\ell}(\rs{a}_\ell+i\rs{b}_\ell) \label{eq:Khammassi-Approx}
                \end{equation}
                with \(k=1,\ldots,N\), \((\rs{a}_\ell+i\rs{b}_\ell)\), \((\rs{x}_\ell+i\rs{y}_\ell)\) are independent and follow \(\mathcal{CN}(0,1)\), and \(\epsilon\text{-rank}=\min\{n:\lambda_n\leq \epsilon\}\). Let $\hat{\rs{Q}}_j=\rv{\hat{x}}^H\dmt{A}_j\rv{\hat{x}}$. Then its CDF is given by \eqref{eq:Khammassi-Formula}
                \begin{equation} 
                    \begin{aligned}\scriptscriptstyle &F_{\hat{\rs{Q}}_1,\ldots,\hat{\rs{Q}}_N}(\hat{q}_1,\ldots,\hat{q}_N)=\int\ldots\int \prod_{\ell=1}^{\epsilon\text{-rank}}\dfrac{1}{\pi}\exp(-(a_\ell^2+b_\ell^2))\nonumber\\
                        &\times \prod_{k=1}^N\left(1-Q_1\left(\dfrac{\sqrt{2\left(\displaystyle\sum_{\ell=1}^{\epsilon\text{-rank}}\sqrt{\lambda_\ell u_{l,\ell}a_\ell}\right)^2+2\left(\displaystyle\sum_{\ell=1}^{\epsilon\text{-rank}}\sqrt{\lambda_\ell u_{l,\ell}b_\ell}\right)^2}}{\sigma_{\epsilon,k}},\dfrac{\sqrt{2\hat{q}_k}}{\sigma_{\epsilon,k}}\right)\right)\nonumber\\
                        &da_1\ldots da_{{\epsilon\text{-rank}}}db_1\ldots db_{{\epsilon\text{-rank}}}
                    \end{aligned} 
                \label{eq:Khammassi-Formula}
                \end{equation}
                where \begin{align*}
                    \sigma_{\epsilon,k}&=\sqrt{\sigma^2-\sum_{\ell=1}^{\epsilon-\rank}s_\ell u_{k,\ell}^2}.
                \end{align*}
                The approximate random vector $\rv{\hat{Q}}$ has an $\epsilon-\text{rank}$-factorial covariance matrix. In fact, it is given by \[\dfrac{1}{\sigma^2}\operatorname{Cov}(\rv{\hat{Q}})=\dmt{D}+\dmt{A}\dmt{A}^T\] where $\dmt{A}=[\sqrt{\lambda_1}\dmt{u}_1,\ldots,\sqrt{\lambda_{\epsilon-\text{rank}}}\dmt{u}_{\epsilon-\text{rank}}]$. So the authors effectively approximate the original covariance matrix by an approximate $\epsilon-\text{rank}$-factorial one.


    \section{Multi-Chi-Squares PDF/CDF: General Methods}
        \paragraph{Useful Functions}
        Here is a list of typically used functions by Royen. This is not meant to cover the relations between these functions nor to provide the best computational methods. For further information, refer to the cited Royen's paper.
        
        \begin{itemize}
            \item \(g_\alpha(x)=\dfrac{x^{\alpha-1}e^{-x}}{\Gamma(\alpha)}\)
            \item \(\begin{aligned}[t]
                G_\alpha(x)&=\int_0^xg_\alpha(t)dt=\dfrac{\gamma(\alpha,x)}{\Gamma(\alpha)}\\
                &=\begin{cases}
            1-e^{-x}\sum_{j=0}^{\alpha-1}\dfrac{x^j}{j!}&2\alpha \text{ is even}\\
            \erf(x^{1/2})-e^{-x}\sum_{j=1}^{\alpha-1/2}\dfrac{x^{j-1/2}}{\Gamma(j+\frac{1}{2})}&2\alpha \text{ is odd}
        \end{cases}
            \end{aligned}\)
            \item \(G_{\alpha+n}^{(n)}(x)=\dfrac{d^n}{dx^n}G_{\alpha+n}(x)=\binom{\alpha+n-1}{n-1}^{-1}L^{(\alpha)}_{n-1}(x)g_{\alpha+1}(x)\)
            \item \(\begin{aligned}[t]
                G_\alpha(x,y)&=e^{-y}\displaystyle\sum_{n\geq 0}G_{\alpha+n}(x)\dfrac{y^n}{n!}
            \end{aligned}\)
            \item \(G_\alpha^*(x,y)=e^yG_\alpha(x,y)\)
            \item \(\begin{aligned}[t]
                \mathcal{G}_\alpha(x,y)&=\displaystyle\sum_{n\geq0}G_{\alpha+n}(x)y^n\\
                &=\begin{cases}
                \dfrac{1}{y-1}(G_\alpha(x)-y^{1-\alpha}e^{x(y-1)}G_\alpha(xy)), &y\neq 1,\alpha >0\\
                xg_\alpha(x)+(1-\alpha+x)G_\alpha(x) &y=1.
            \end{cases}
            \end{aligned}\) \footnote{Note that the efficient computation of this function for a nonreal second argument is an open problem \ref{op:gamma-complex}}
            
            \item \(H_{\alpha,n}(x)=\displaystyle\sum_{m\geq 0} (-1)^{n-m}\binom{n}{m}2^mG_{\alpha+m}(x)\)
            \item \(H_{\alpha,n}(x,y)=e^{-y}\displaystyle\sum_{m\geq 0}H_{\alpha+m,n}(x)\dfrac{y^m}{m!}\)
        \end{itemize}


        \subsection{Royen's General Method}
        Royen \cite{royenIntegralRepresentationsConvolutions2007} gives 3 $(M-1)$-variate integral representations of the CDF. Note that in this paper, he generalizes the distribution to account for scaled chi-squares. We'll report the first one. The CDF \(F_{\rv{Q}}(q_1,\ldots,q_M)\) is given by the following \[\dfrac{1}{2\pi^{M-1}}\int_{\mathcal{C}_{M-1}}\mathcal{G}_{p}\left(\dfrac{w_M^2q_M}{1-z_0},\dfrac{y}{1-z_0}\right)K_{p}\prod_{j=1}^{M-1}\dfrac{1}{1-z_j}G_{p}\left(w_j^2q_j,\dfrac{z_j}{z_j-1}\right)d\phi_j\] where \(\mathcal{C}_{M-1}=(-\pi,\pi]^{M-1}\), \(\dmt{W}=\diag(w_1,\ldots,w_M)\) is an arbitrary scale matrix, \(z_j=r\mathrm{e}^{\phi_j}\) with \(r\) being an arbitrary positive number, and \UglyAlignS{\dmt{B}&=\begin{bmatrix}
            \dmt{B}_{MM} &\dv{b}_M\\
            \dv{b}_M^T &b_{MM}
        \end{bmatrix}=2\dmt{W\Sigma W}-\dmt{I}_M\\
        \dmt{D}&=\begin{bmatrix}
            \dmt{D}_{MM} &\dv{d}_M\\
            \dv{d}^M &d_{MM}
        \end{bmatrix}=4\dmt{W\Delta\Sigma W}\\
        z_0=z_0(z_1,\ldots,z_{M-1})&=\dv{b}_M^T(\dmt{Z}_{MM}+\dmt{B}_{MM}^{-1}\dv{b}_M)-b_{MM}\\
        \dmt{Z}&=\diag(z_1,\ldots,z_M)=\begin{bmatrix}
            \dmt{Z}_{MM} &\dv{z}_M\\
            \dv{z}_M^T &z_{MM}
        \end{bmatrix}\\
        y=y(z_1,\ldots,z_{M-1})&=\\
        [\dv{b}_M^T(\dmt{Z}_{MM}+\dmt{B}_{MM})^{-1}&,-1]\dmt{D}[\dv{b}_M^T(\dmt{Z}_{MM}+\dmt{B}_{MM})^{-1},-1]^T\\
        K_{p}=K_{p}(z_1,\ldots,z_{M-1})&=\etr\left(-(\dmt{Z}_{MM}+\dmt{B}_{MM})^{-1}\right)|\dmt{I}+\dmt{B}_{MM}\dmt{Z}_{MM}^{-1}|^{-p}
        }
        Note that \(\dmt{W}\) and \(r\) must verify \(\|\dmt{B}\|<r<1\).
        
        Royen (2016) gives \cite{royenNonCentralMultivariateChiSquare2016} $(M)$- and $(M-1)$-variate integral representation of the CDF over a bounded region. A covariance matrix \(\dmt{\Sigma}\) is used instead of a correlation matrix \(\dmt{R}\). The first one is given by what follows
        \begin{align}
            F_{\rv{Q}}(q_1\ldots q_M)&=|v\dmt{\Sigma}|^{-2K}\etr(-\dmt{\Delta}\dmt{\Sigma}^{-1})\nonumber\\
            &\times(2\pi)^{-M}\int_{\mathcal{C}_M}\dfrac{\etr(\dmt{Z}^{-1}(\dmt{I}_M+\dmt{B}\dmt{Z}^{-1}\dmt{D})}{|\dmt{I}_M+\dmt{B}\dmt{Z}^{-1}|^{M}}\prod_{j=1}^{2K}\mathcal{G}_{2K}(vq_j,z_j)d\varphi_j \label{Royen Non-Central Chi-Square}
        \end{align}
        where \(\dmt{B}=2(v\dmt{\Sigma})^{-1}-\dmt{I}_M\), \(\dmt{D}=2v^{-1}\dmt{\Sigma}^{-1}\dmt{\Delta}\dmt{\Sigma}^{-1}\), $v$ is any scale factor satisfying \(|\dmt{B}\|\leq 1\), \(\dmt{Z}=\diag(z_1,\ldots,z_M)\), \(z_j=re^{i\varphi_j},r>\|\dmt{B}\|\), and \(\mathcal{C}_M=(-\pi,\pi]^M\). The formula in \eqref{Royen Non-Central Chi-Square} is derived by manipulating \eqref{Royen Non-Central Chi-Square} into series and manipulating the Laplace transform of multi-chi-square into a similar series and showing that they form a Laplace pair. Of course, coming up with a formula like \eqref{Royen Non-Central Chi-Square} is not an intuitive task. {At this point, Royen had been working on multi-chi-squares for 25 years.}
        
            Royen further attempts to reduce the number of integrals from M to M-1. Actually, this is only useful for small M, so the reduction by one dimension is helpful. The \((M-1)\)-variate integral representation is given by
        \begin{align}
            &F_{\rv{Q}}(q_1,\ldots,q_M)=\dfrac{\etr(-\dmt{\Delta}\dmt{\Sigma}^{-1})}{|\dmt{W}\dmt{\Sigma}\dmt{W}|^{2K}}\dfrac{1}{(2\pi)^{M-1}}\nonumber\\
            &\times\int_{\mathcal{C}_{M-1}}\dfrac{L_{2K}(z_1,\ldots,z_{M-1})}{(1-z)^{2K}}G_{2K}^*\left((1-z_0)w_Mq_M,(1-z_0)^{-1}y\right)\nonumber\\
            &\times\prod_{j=1}^{M-1}\mathcal{G}_{2K}(w_j^2q_j,z_j)d\varphi_j
            \end{align}
        where
        \begin{equation*}
            \begin{aligned}
                L_\alpha(z_1,\ldots,z_{M-1})&=\etr\left(\dmt{Z}_{MM}^{-1}(\dmt{I}_{M-1}+\dmt{B}_{MM}\dmt{Z}_{MM}^{-1})^{-1}\dmt{D}_{MM}\right)\\&\times|\dmt{I}_{M-1}+\dmt{B}_{MM}\dmt{Y}_{MM}^{-1}|^\alpha\\
                \dmt{B}&=2(\dmt{W}\dmt{\Sigma}\dmt{W})^{-1}-\dmt{I}_M=\begin{pmatrix}
            \dmt{B}_{MM} &\dv{b}_M\\
            \dv{b}_M^T &b_{MM}
            \end{pmatrix} \\
            &\text{with } \dmt{B}_{MM}\in\mathbb{R}^{(M-1)\times(M-1)}\\        \dmt{D}&=2\dmt{W}^{-1}\dmt{\Sigma}^{-1}\dmt{\Delta}\dmt{\Sigma}^{-1}\dmt{W}^{-1}=\begin{pmatrix}
            \dmt{D}_{MM} &\dv{d}_M\\
            \dv{d}_M^T &d_{MM}
            \end{pmatrix}\\
            &\text{with } \dmt{D}_{MM}\in\mathbb{R}^{(M-1)\times(M-1)}\\
            \dmt{W} &\text{ is a scale diagonal matrix.}\\
            \dmt{Z}_{MM}&=\diag(z_1,\ldots,z_{M-1}),z_j=re^{i\varphi_j},r>\|\dmt{B}\|\\
            y=y(z_1,\ldots,z_{M-1})&=\\
            [\dv{b}_M^T(\dmt{Z}_{MM}+\dmt{B}_{MM})^{-1}&,-1]\dmt{D}[\dv{b}_M^T(\dmt{Z}_{MM}+\dmt{B}_{MM})^{-1},-1]^T\\
            z_0&=z_0(z_1,\ldots,z_{M-1})=\dv{b}_M^T(\dmt{Z}_{MM}+\dmt{B}_{MM})^{-1}\dv{b}_M-b_{MM}\\
            \end{aligned}
        \end{equation*}
        
        Note that Royen's 2007 result \cite{royenIntegralRepresentationsConvolutions2007} is slightly different. The matrix $\dmt{D}$ is different and not symmetric. Moreover, Royen gives in 1995 \cite{royenCENTRALNONCENTRALMULTIVARIATE1995} a general expression for the non-central multi-chi-square. This is based on the m-factorial factorization of a matrix, but it depends on a Monte Carlo simulation. As this is outside the scope of our survey, it is not covered. However, corollaries that work for specific cases will be reported.

        \subsection{Complex Multi-Chi-Square} 
        
        \paragraph{Central Complex Multi-Chi-Square} \label{par:Morales}
        Morales and Jimenez \cite{morales-jimenezDiagonalDistributionComplex2011} derive the following CDF in a similar manner to Royen's \cite{royenExpansionsMultivariateChisquare1991}. For $\rv{Q}\sim\mathcal{C}\chi_K^2(M,\dmt{\Sigma},\dmt{0})$, we have \begin{equation}
            F_{\rs{Q}_1,\ldots,\rs{Q}_M}(q_1,\ldots,q_M)=\sum_{n=0}^\infty \sum_{(n)} c(n_1,\ldots,n_M)\prod_{j=1}^M\Delta_{n_j}^K (w_j^2 q_j) \label{eq:Morales}
        \end{equation}
        where $w_j$ are scale factors chosen so that \(\|\dmt{I}-\dmt{WRW}\|_2<1\), with \(\dmt{W}=\diag(w_1,\ldots,w_M)\), the coefficients $c(n_1,\ldots,n_M)$ are given by \[\sum_{(n)}c(n_1,\ldots,n_M)\prod_{j=1}^M u_j^{n_j}=\sum_{\ell_1+2\ell_2+\ldots+k\ell_M\leq n}\dfrac{\Gamma(p+\ell_1+\ldots+\ell_M)}{\Gamma(p)}\prod_{j=1}^M\dfrac{(-D_j)^{\ell_j}}{\ell_j!}\]
        where \(u_j\) are indeterminates, \(D_j=(-1)^j\sum_{|S|=j}|\dmt{A}_S|\prod_{m\in S}u_m\),  \(S\) is a non-empty subset of $\{1,2,\ldots,M\}$, \(|S|\) is its cardinal, and $\dmt{A}_S$ is the principal submatrix of \(\dmt{A}=\dmt{I}-\dmt{WRW}\) with rows and columns from the set $S$, and
        \begin{equation}
            \begin{aligned}
                \Delta_n^K(x)&=\begin{cases}
            G_K(x), &\text{ if }n=0\\
            g_{K+n}^{(n-1)}(x)=\dfrac{(n-1)!}{(K+n-1)!}e^{-x}x^K L_{n-1}^K(x), &\text{ if }n>0
            \end{cases}
            \end{aligned}
        \end{equation}

        \paragraph{Non-Central Complex Multi-Chi-Squares}

        Royen \cite{royenNonCentralMultivariateChiSquare2016} extends his CDF formulas to the complex case. The $M$-fold integral for \[Q\sim\mathcal{C}\chi_K(M,\dmt{\Sigma},\dmt{\Delta})\] becomes \begin{align}
            F_{\rv{Q}}(q_1\ldots q_M)&=|v\dmt{\Sigma}|^{-K}\etr(-\dmt{\Delta}\dmt{\Sigma}^{-1})\nonumber\\
            &\times(2\pi)^{-M}\int_{\mathcal{C}_M}\dfrac{\etr(\dmt{Z}^{-1}(\dmt{I}_M+\dmt{B}\dmt{Z}^{-1}\dmt{D})}{|\dmt{I}_M+\dmt{B}\dmt{Z}^{-1}|^{K}}\prod_{j=1}^{M}\mathcal{G}_{2K}(vq_j,z_j)d\varphi_j 
        \end{align}
        where \(\dmt{B}=2(v\dmt{\Sigma})^{-1}-\dmt{I}_M\), \(\dmt{D}=2v^{-1}\dmt{\Sigma}^{-1}\dmt{\Delta}\dmt{\Sigma}^{-1}\), $v$ is any scale factor satisfying \(\|\dmt{B}\|\leq 1\), \(\dmt{Z}=\diag(z_1,\ldots,z_M)\), \(z_j=re^{i\varphi_j},r>\|\dmt{B}\|\), and \(\mathcal{C}_M=(-\pi,\pi]^M\).

        The $(M-1)$-fold integral can be extended to \begin{align}
            &F_{\rv{Q}}(q_1,\ldots,q_M)=\dfrac{\etr(-\dmt{\Delta}\dmt{\Sigma}^{-1})}{|\dmt{W}\dmt{\Sigma}\dmt{W}|^{K}}\dfrac{1}{(2\pi)^{M-1}}\nonumber\\
            &\times\int_{\mathcal{C}_{M-1}}\dfrac{L_{2K}(z_1,\ldots,z_{M-1})}{(1-z)^{K}}G_{K}^*\left((1-z_0)w_Mq_M,(1-z_0)^{-1}y\right)\nonumber\\
            &\times\prod_{j=1}^{M-1}\mathcal{G}_{K}(w_j^2q_j,z_j)d\varphi_j
            \end{align}
        where
        \begin{equation*}
            \begin{aligned}
                L_\alpha(z_1,\ldots,z_{M-1})&=\etr\left(\dmt{Z}_{MM}^{-1}(\dmt{I}_{M-1}+\dmt{B}_{MM}\dmt{Z}_{MM}^{-1})^{-1}\dmt{D}_{MM}\right)\\&\times|\dmt{I}_{M-1}+\dmt{B}_{MM}\dmt{Y}_{MM}^{-1}|^K\\
                \dmt{B}&=2(\dmt{W}\dmt{\Sigma}\dmt{W})^{-1}-\dmt{I}_M=\begin{pmatrix}
            \dmt{B}_{MM} &\dv{b}_M\\
            \dv{b}_M^H &b_{MM}
            \end{pmatrix} \\
            &\text{with } \dmt{B}_{MM}\in\mathbb{R}^{(M-1)\times(M-1)}\\        \dmt{D}&=2\dmt{W}^{-1}\dmt{\Sigma}^{-1}\dmt{\Delta}\dmt{\Sigma}^{-1}\dmt{W}^{-1}=\begin{pmatrix}
            \dmt{D}_{MM} &\dv{d}_M\\
            \dv{d}_M^H &d_{MM}
            \end{pmatrix}\\
            &\text{with } \dmt{D}_{MM}\in\mathbb{R}^{(M-1)\times(M-1)}\\
            \dmt{W} &\text{ is a scale diagonal matrix.}\\
            \dmt{Z}_{MM}&=\diag(z_1,\ldots,z_{M-1}),z_j=re^{i\varphi_j},r>\|\dmt{B}\|\\
            y=y(z_1,\ldots,z_{M-1})&=\\
            [\dv{b}_M^H(\dmt{Z}_{MM}+\dmt{B}_{MM})^{-1}&,-1]\dmt{D}[\dv{b}_M^T(\dmt{Z}_{MM}+\dmt{B}_{MM})^{-1},-1]^T\\
            z_0&=z_0(z_1,\ldots,z_{M-1})=\dv{b}_M^H(\dmt{Z}_{MM}+\dmt{B}_{MM})^{-1}\dv{b}_M-b_{MM}\\
            \end{aligned}
        \end{equation*}
                

    \section{Royen's Convolution}

    Royen gives three types of integral representation for the CDF of the convolution of the original distributions whose Laplace transform is given by
    \[\hat{f}(t_1,\ldots,t_M)=|\dmt{I}_M+2\dmt{\Sigma} \dmt{T}|^{-p}\etr(-4\dmt{\Sigma} \dmt{T}(\dmt{I}+2\dmt{\Sigma} \dmt{T})^{-1}\dmt{\Delta})\]
    where $\dmt{T}=\diag(t_1,\ldots,t_M)$,$t_1,\ldots,t_M\geq 0$. 
    
    The Laplace transform of the convolution is given by
    \[\hat{f}(t_1,\ldots,t_M)=\prod_{k=1}^n|\dmt{I}_M+2\dmt{\Sigma}_k \dmt{T}|^{-K_k/2}\etr(-4\dmt{\Sigma}_k \dmt{T}(\dmt{I}+2\dmt{\Sigma}_k \dmt{T})^{-1}\dmt{\Delta}_k)\]

    In terms of quadratic forms, the studied distribution is that of \(\rs{Q}_j=\rv{x}^T\dmt{A}_j\rv{x}\), where \(\rv{x}\sim\mathcal{N}_{MK}(\dv{\mu},\dmt{\Sigma})\), with \(\dv{\mu}=\begin{bmatrix}
        \dv{\tilde{\mu}}^{(1)}\\
        \vdots\\
        \dv{\tilde{\mu}}^{(n)}
    \end{bmatrix}\), \(\dv{\tilde{\mu}^{(k)}}=\begin{bmatrix}
        \dv{\mu}_1^{(k)}\\
        \vdots\\
        \dv{\mu}_{K_k}^{(k)}
    \end{bmatrix}\), \(K=\sum_{k=1}^n K_k\), \(\dmt{\Sigma}=\diag(\dv{\tilde{\Sigma}}^{(1)},\ldots,\dmt{\tilde{\Sigma}}^{(n)})\), \(\dmt{\tilde{\Sigma}}^{(k)}=\dmt{I}_{K_k}\otimes\dmt{\Sigma}^{(k)}\), \(\dmt{A}_j=\diag(\dmt{A}_j^{(1)},\ldots,\dmt{A}_j^{(n)})\), \(\dmt{A}_j^{(k)}=\dmt{I}_{K_k}\otimes\dmt{E}_{jj}\). Evidently, the convolution is identically distributed to a multi-chi-square if and only if all covariance matrices $\{\dmt{\Sigma}_k\}_{k=1}^n$ are identical.
        
    
    
    Royen \cite{royenIntegralRepresentationsConvolutions2007} provides three expressions for the the CDF, one of which is given by
    \begin{equation}
        \begin{aligned}
            F_{\rv{Q}}(q_1,\ldots,q_M)&=(2\pi)^{-M}\int_{C_M}\prod_{k=1}^n \etr(- (\dmt{Z}+\dmt{B}_k)^{-1}\dmt{D}_k)|\dmt{I}+\dmt{B}_k\dmt{Z}^{-1}|^{-\alpha_k} \\
            &\times\prod_{j=1}^M F_{p}(\nu q_j,z_j)d\varphi_j
        \end{aligned}
    \end{equation} where \begin{gather*}
        \mathcal{C}_M=[-\pi,\pi]^M\\
        F_\alpha(x,y)=\dfrac{1}{1-y}G_\alpha(x,\frac{y}{y-1})\\
        \dmt{B}_k=2\nu \dmt{\Sigma}_k-\dmt{I},\; \nu \text{ is arbitrarily chosen so that } \nu<1/\max \|\dmt{\Sigma}_k\|\\
        \dmt{D}_k=2\dmt{\Delta}_k(\dmt{I}+\dmt{B}_k)\\
        \dmt{Z}=\diag(z_1,\ldots,z_M),\;z_j=re^{i\varphi_j},\; r \text{ is chosen in } (\max \|\dmt{B}_k\|,1)\\
    \end{gather*}

    \paragraph{Royen's Formulae} were derived in different cases of multi-chi-squares.  A common element in the formulae is some form of the gamma function. With the exception of the one-factorial cases, the formulae are computationally inefficient, even for reasonably small $M$ values. There are two main reasons: the complexity of the formulae, and the complexity of computing Royen's gamma functions. Royen uses conflicting notations to describe the variety of gamma function he has used. 

        From \cite[Eq. 8]{morales-jimenezDiagonalDistributionComplex2011} (Gamma CDF and PDF):
    \[
    G_{k}(x) = \gamma(k, x) / \Gamma(k), \quad g_k(x) = \frac{d}{dx}G_{k}(x)
    \]
    and
    \[
    G^{(n)}_{k + n}(x) = g^{(n-1)}_{k + n}(x) = \frac{d^n}{dx^n}G_{k + n}(x)
    \]
    From \cite[Pg. 2, Para. 3]{royenExpansionsMultivariateChisquare1991} a similar formula is defined as $G_{p+n}^{(n)}(x)$
    From \cite[Eq. 3.1 as H extra factor $e^{-x}$]{royenMultivariateGammaDistributions1991} and \cite[Eq. 2.4, G not H]{royenMultivariateGammadistributionsConnected1994}:
    \[
    H_{k}(x, y) = e^{-y}\sum_{n = 0}^\infty G_{k + n}(x) y^n/n!
    \]
    Finally, in \cite[Eq. 2.14]{royenNonCentralMultivariateChiSquare2016} the caligraphic G $\mathcal{G}$ functions is the similar to $H$ but has differences: 
    \[
    \mathcal{G}_p(x, y) = \sum_{n = 0}^\infty G_{k + n}(x) y^n
    \]
    
    

    \section{General Multiforms}
        Shephard \cite{shephardCharacteristicFunctionDistribution1991} gives a method for the inversion of a multivariate characteristic function generalizing Gil-Palaez's \cite{gil-pelaezNoteInversionTheorem1951} method that is used by Imhof \cite{imhofComputingDistributionQuadratic1961}. 

        As a quadratic multiform satisfies the conditions of his theorem, i.e., it has a mean and its density function is Lebesgue integrable, the method can be used to recursively compute the CDF as follows 
       \[
       \resizebox{0.95\textwidth}{!}{$ 
       \frac{(-1)^M}{(\pi)^M} \lim _{n \rightarrow \infty} \int_0^n \cdots \int_0^n \prod_{j=1}^M\left[1-\frac{t_j}{n}\right]_{t_1} \Delta_{t_2} \cdots \Delta_{t_M}\left[\frac{\varphi(t) e^{-\rv{q}^T \dv{t}}}{i t_1 i t_2 \cdots i t_M}\right] d \dv{t}=u^{\dagger}(\rv{q})
       $}
       \]
        where $\phi(\dv{t})=\phi(t_1,\ldots,t_M)=M_{\rv{q}}(i\dv{t})$ is the characteristic function, \(\Delta \eta_t(t)=\eta(t)+\eta(-t)\) for any function \(\eta\) of \(t\), and \[\begin{aligned}
        u^{\dagger}(\rv{q})= & 2^M F\left(q_1, \ldots, q_M\right) \\
        & -2^{M-1}\left[F\left(q_2, q_3, \ldots, q_M\right)+\cdots+F\left(q_1, \ldots, q_{M-2}, q_{M-1}\right)\right] \\
        & +2^{M-2}\left[F\left(q_3, q_4, \ldots, q_M\right)+\cdots+F\left(q_1, \ldots, q_{M-3}, q_{M-2}\right)\right] \\
        & +\cdots+(-1)^M.
        \end{aligned}\]
        Hence the CDF of $M$ quadratic forms is computed recursively using all the CDFs of $(m-1)$ combinations of $m$ forms for every $m<N$, computing an $m$-multiple integral at each step.  
        Lasserre \cite{lasserreComputingGaussianExponential2017} provides a computational method that calculates the Gaussian measure of semi-algebraic sets; it can be consequently used to compute the CDF of any quadratic multiform. However, it comes with computational drawbacks.
    
        A semi-algebraic set is a finite intersection of sets defined by polynomial inequalities; a set \(B\) is semi-algebraic if it can be written in the form \[B=\{\dv{x}\in\mathbb{R}^N:g_j(\dv{x})\leq 0,j=1,2,\ldots, M\}\]
        for (multivariate) polynomials \(g_j(\dv{x})\). To estimate its probability content when \(\rv{x}\sim \mathcal{N}_N(\dv{0},\dmt{I}_N)\), Lasserre provides two pairs of sequences that sandwich and converge to the CDF. The first pair \(\underline{o}_d,\bar{o}_d\) which satisfies, as mentioned, \(\underline{o}_d\leq F_{\rv{q}}(q_1,\ldots,q_M) \leq \bar{o}_d\) and \(\displaystyle\lim_{d\to\infty}\underline{o}_d=\displaystyle\lim_{d\to\infty}\bar{o}_d=F_{\rv{q}}(q_1,\ldots,q_M)\), is given by 
        \[
        \resizebox{\textwidth}{!}{$ 
        \bar{o}_d=\sup_{\dv{u},\dv{v}\in\mathbb{R}[\dv{x}]^*_{2d}}\left\{u_{\dv{0}} \text{ s.t. } u_\alpha+v_\alpha=g_\alpha;\forall \alpha\in\mathbb{N}_{2d}^N;\right.
        \left.\dmt{M}_d(\dv{u}),\dmt{M}_d(\dv{v})\geq 0;\dmt{M}_{d-d_j}(g_j\dv{u})\geq 0,j=1,\ldots,M\right\}
        $}
        \] 
        where
        \[
        \resizebox{0.95\textwidth}{!}{$ 
        \begin{aligned}
            \mathbb{R}[\dv{x}]_{2d} &= \text{ space of polynomials in } N \text{ variables with total degree at most } 2d\\
            \mathbb{N}_{2d}^N&=\{(m_1,\ldots,m_N)\in\mathbb{N}^N:m_1+\ldots+m_N\leq 2d\}\\
            \dmt{M}_d(\dv{z})&=(z_{\dv{\alpha}+\dv{\beta}})_{\dv{\alpha},\dv{\beta}\in\mathbb{N}_{d}^N}\\
            \dmt{M}_{d-d_j}(g_j\dv{z})&=\left(\sum_{\dv{\gamma}} (g_j)_{\dv{\gamma}} z_{\dv{\alpha}+\dv{\beta}+\dv{\gamma}}\right)_{\alpha,\beta\in\mathbb{N}_{d}^N}\\
        \end{aligned}
        $}
        \]
        where \((g_j)_\gamma\) are the coefficients of \(g_j\), i.e., \(g_j(\dv{x})=\sum_{\dv{\gamma}}(g_j)_\gamma \dv{x}^{\dv{\gamma}}\). Now \(\underline{o}_d\) can be calculated from the upper bound on the complement set, which is a disjoint union of semi-algebraic sets, \(\mathbb{R}^N- B=\bigcup_{i=1}^MB_i\), where \[B_i=\left[\bigcap_{j=1}^{i-1}\{\dv{x}:g_j(\dv{x})\leq 0\}\right]\cap\{g_i(\dv{x})>0\}\] with the convention of the empty intersection being the whole \(\mathbb{R}^N\). Then \(\underline(o)_d=1-\bar{o}_d(B_i)\)
    
        For a multiform \(\rs{Q}_j=\rv{x}^T\dmt{A}_j\rv{x}+\dv{b}_j^T\rv{x}+c_j\), \(j=1,\ldots, M\), \(\rv{x}\sim \mathcal{N}_N(\dv{\mu},\dmt{\Sigma})\), computing the CDF \(F_{\rv{q}}(q_1,\ldots,q_M)\) is equivalent to computing the (standard) Gaussian measure of the set \(B=\{\rv{x}\in\mathbb{R}^N:g_j(\rv{x})\leq 0,j=1,2,\ldots, M\}\) for \(-\rv{x}^T\dmt{\Lambda}_j\rv{x}-\dv{\tilde{b}}_j^T\rv{x}+\tilde{c}_j\), where \(\dmt{\Sigma}^{1/2}\dmt{A}_j\dmt{\Sigma}^{1/2}=\dmt{P}_j^T\dmt{\Lambda}_j\dmt{P}_j\) is an orthogonal diagonalization, \(\dv{\tilde{b}}_j=\dmt{P}_j\dmt{\Sigma}^{1/2}\dv{b}_j+2\dmt{P}_j\dmt{\Sigma}^{1/2}\dmt{A}_j\dv{\mu}\) and \(\tilde{c}_j=q_j-\dv{\mu}^T\dmt{A}_j\dv{\mu}-\dv{b}_j^T\dv{\mu}-c_j\).
    
        A pair of faster sequences \(\underline{\omega}_d,\bar{\omega}_d\) is obtained after strengthening \(\bar{o}_d\) by adding the constraints \(L_{\dv{u}}(p_{i,\dv{\alpha}})=0,\dv{\alpha}\in\mathbb{N}^N_{r(d)},i=1,\ldots,N\), where \(L_{\dv{u}}(g)=\sum_{\dv{\gamma}}g_{\dv{\gamma}}\dv{u}^{\gamma}\) and \(r(d)=2d-1\).

        \section{Literature Summary and Statistics}

            We will summarize on single forms and ratios using a bunch of visual aids. We start by classifying the forms discussed in the literature.
            \begin{figure}[ht!]
                \begin{center}
\begin{tikzpicture}[scale=1.35,x=1cm, y=1cm]
    \draw[color=black,thick] (0,0) rectangle (7.5,6);
    \Text[color=black,x=1.5,y=6, style={fill=white}]{Multi-variate}

    \draw[color=black,thick] (5,5.75) rectangle (7.0, 3.15);    
    \Text[color=black,x=5.15, y=5.75, style={fill=white}, anchor=west]{Bivariate}

    \draw[color=black,thick] (3.0, 5.75) rectangle (4.75, 4.75);
    \Text[color=black,x=3.15, y=5.75, style={fill=white}, anchor=west]{SimDiag}
    
    \draw[color=black,thick] (0.25,0.25) rectangle (7.375,4.5);
    \Text[color=black,x=0.5, y=4.5, style={fill=white}, anchor=west]{Multivariate $\chi^2$}

    \draw[color=black,thick] (0.5, 3.0) rectangle (7.25, 4.25);
    \Text[color=black,x=1.0,y=2.9, style={fill=white}, anchor=west]{1-Factorial}

    \draw[color=black,thick] (4.0, 1.75) rectangle (7.25, 0.5);
    \Text[color=black,x=4.5,y=1.75, style={fill=white}, anchor=west]{Trivariate}

    \draw[color=black,thick] (4.0, 2.75) rectangle (7.25, 2.0);
    \Text[color=black,x=4.5,y=2.75, style={fill=white}, anchor=west]{Quadrivariate}

    \input{Chapter3Modified/Chapter3Figures/Classifs/classifAandB_M}
\end{tikzpicture}
                \end{center}
                \caption{Classification according to forms}
                \label{fig:classFormMulti}
            \end{figure}
            In Figure \ref{fig:classFormMulti}, red-bracketed citations refer to complex forms, and italic citations refer to papers before 1990.

            In the following graph \ref{fig:subMulti1}, we illustrate the relations between the papers. We draw an arrow from Paper 1 to Paper 2 if the quadratic form studied in Paper 2 is a special case of the studied in Paper 1. This implies that the formula derived in Paper 1 applies to the form studied in Paper 2.
            \begin{figure}[ht!]
                \begin{center}
                    \scalebox{.3}{

  \begin{tikzpicture}[scale=4,every node/.append style={draw,rectangle,fill=blue!10,text width=50},]
 
      \draw
        (0.63, 0.75) node (Tekinay){\citeF{tekinayMomentsQuadrivariateRayleigh2020}}
        (0.5, 2.34) node[style={text width = 130, fill=green!10}] (Wiegand2019){Rayleigh, Real $\mathbf{\Sigma}$ \\ \citeF{khammassiNewAnalyticalApproximation2022},\\ \citeF{wiegandSeriesApproximationsRayleigh2019}}
        (0.27, 1.5) node (Wiegand2018){\citeF{wiegandSeriesRepresentationMultidimensional2018}}
        (2.47, 0.7) node (Beaulieu2017Tri){\citeF[Eq8]{beaulieuNewSimplestExact2017}}
        (0.99, 1.5) node (Beaulieu2017Quadri){\citeF[Eq13]{beaulieuNewSimplestExact2017}}
        (5.17, 0.18) node (Beaulieu2011Rayleigh){\citeF[Eq14]{beaulieuNovelSimpleRepresentations2011}}
        (7.09, 1.62) node[style={text width = 80, fill=green!10}] (Bithas){Rician, Real $\mathbf{\Sigma}$\\ \textcolor{gray}{[Empty]}}
        (7.09, 1) node[style={text width = 50}] (BithasAct){ \citeF{bithasNovelResultsMultivariate2019}}
        (8, 1.62) node[style={text width = 0, fill=white, draw=white}] (Bithas2){}
        (-0.5, -0.5) node[style={text width = 0, fill=white, draw=white}] (Rayleigh2){}
        (6.05, 0.18) node (BeaulieuBivariate){\citeF{beaulieuNovelRepresentationsBivariate2011}}
        (6.05, 0.9) node (Beaulieu2011Rician){\citeF[Eq18]{beaulieuNovelSimpleRepresentations2011}}
        (4.09, 5.22) node[style={text width = 90, fill=green!10}] (General){Multivariate Quadratic Forms \citeF{lasserreComputingGaussianExponential2017}, \citeF{shephardCharacteristicFunctionDistribution1991}}
        (2.63, 5.7) node (Laverny){\citeF{lavernyEstimationMultivariateGeneralized2021}}
        (5.2, 3.78) node[style={text width = 90,fill=green!10}] (Royen2016){Multivariate Non-central Chi-square \citeF{royenNonCentralMultivariateChiSquare2016}}
        (5.17, 0.9) node (Beaulieu2011Nakagami){\citeF[Eq20]{beaulieuNovelSimpleRepresentations2011}}
        (6.05, 1.62) node (Beaulieu2011GenRician){\citeF[Eq22]{beaulieuNovelSimpleRepresentations2011}}
        (2.7, 4.06) node[style={text width = 202, fill=green!10}] (GeneralChiSqComplex){Multivaraite Non-central Chi-square Complex\\ \cite{royenNonCentralMultivariateChiSquare2016}}
        (2.7, 3.06) node[style={text width = 150, fill=green!10}] (Morales){Multivariate Complex Chi-square \\ \citeF{morales-jimenezDiagonalDistributionComplex2011}}
        (1, 5) node (DhamrmaComplex){\citeF{dharmawansaDiagonalDistributionComplex2009}}
        (6.5, 4.5) node (DhamrmaReal){\citeF{dharmawansaNewSeriesRepresentation2009}}
        (3.19, 1.62) node (Peppas){\citeF{peppasTrivariateNakagamimDistribution2009}}
        (4.9, 4.5) node (Royen2007Conv){\citeF{royenIntegralRepresentationsConvolutions2007}}
        (5.17, 5.89) node (Tavares){\citeF{tavaresStatisticsSumSquared2007}}
        (2.0, 2.34) node (HagedornTri){\citeF[Eq17]{hagedornTrivariateChisquaredDistribution2006}}
        (2.72, 2.34) node (HagedornBi){\citeF[Eq26]{hagedornTrivariateChisquaredDistribution2006}}
        (2.47, 1.62) node (ChenTri){\citeF[Eq5]{chenInfiniteSeriesRepresentations2005}}
        (1.5, 3) node (ChenQuadri){\citeF[Eq16]{chenInfiniteSeriesRepresentations2005}}
        (5.5, 5.2) node (Schone){\citeF{schoneJointDistributionQuadratic2000}}
        (6.7, 3.06) node (Royen1995OneFact){\citeF[Eq3.9]{royenCENTRALNONCENTRALMULTIVARIATE1995}}
        (6.7, 2.34) node (Royen1995OneFactCond){\citeF[Eq3.13]{royenCENTRALNONCENTRALMULTIVARIATE1995}}
        (4.54, 2.34) node (Royen1995TwoFact){\citeF[Eq3.6]{royenCENTRALNONCENTRALMULTIVARIATE1995}}
        (5.6, 2.34) node (Royen1994){\citeF{royenMultivariateGammadistributionsConnected1994}}
        (5.26, 3.06) node[style={text width = 110, fill=green!10}] (Royen1991Gen){Multivariate Chi-square \citeF{royenExpansionsMultivariateChisquare1991},\\ \citeF{royenIntegralRepresentationsApproximations2007}}
        (5.17, 1.62) node (Royen1991OneFact){\citeF[Eq2.1.13]{royenMultivariateGammaDistributions1991}};
      \begin{scope}[-{Latex[length=5mm]}]
      
        \draw (Wiegand2019) to (Wiegand2018);
        \draw (Wiegand2019) to (Beaulieu2017Tri);
        \draw (Wiegand2019) to (Beaulieu2017Quadri);
        \draw (Wiegand2019.south west) to (Wiegand2019.south west|-Beaulieu2011Rayleigh.west) to (Beaulieu2011Rayleigh.west);
        \draw (Wiegand2018) to (Tekinay);
        \draw (Beaulieu2017Quadri) to (Tekinay);
        \draw (Bithas) to (BithasAct);
        \draw (BithasAct) to (Beaulieu2011Rician);
        \draw (Beaulieu2011Rician) to (BeaulieuBivariate);
        \draw (Beaulieu2011Rician) to (Beaulieu2011Rayleigh);
        \draw (General) to (Laverny);
        \draw[green] (General) to (GeneralChiSqComplex);
        \draw[green] (GeneralChiSqComplex) to (Morales);
        \draw (GeneralChiSqComplex) to (DhamrmaComplex);
        \draw (General) to (Royen2007Conv);
        \draw (General) to (Tavares);
        \draw (General) to (Schone);
        \draw (Royen2016) to (DhamrmaReal);
        \draw (Royen2016) to (Royen1995OneFact);
        \draw[green] (Royen2016) to (Royen1991Gen);
        \draw[green,dashed] (Royen2016) to (GeneralChiSqComplex);
        \draw (Beaulieu2011Nakagami) to (Beaulieu2011Rayleigh);
        \draw (Beaulieu2011GenRician) to (Beaulieu2011Rician);
        \draw (Beaulieu2011GenRician) to (Beaulieu2011Nakagami);
        \draw[green] (Morales) to (Wiegand2019);
        \draw (Morales) to (Beaulieu2011Nakagami);
        \draw (Morales) to (HagedornTri);
        \draw (Morales) to (HagedornBi);
        \draw (Morales) to (ChenQuadri);
        \draw (Peppas) to (Beaulieu2017Tri);
        \draw (Royen2007Conv) to (Royen2016);
        \draw (HagedornTri) to (Peppas);
        \draw (HagedornTri) to (ChenTri);
        \draw (ChenTri) to (Beaulieu2017Tri);
        \draw (Royen1995OneFact) to (Royen1995OneFact-|Bithas2)  to (Bithas2|-Bithas) to (Bithas);
        \draw (Royen1995OneFact) to (Royen1995OneFactCond);
        \draw (Royen1995OneFactCond) to (Beaulieu2011GenRician);
        \draw (Royen1995OneFactCond) to (Royen1991OneFact);
        \draw (Royen1995TwoFact) to (Peppas);
        \draw (Royen1995TwoFact) to (Royen1991OneFact);
        \draw (Royen1991Gen) to (Wiegand2019);
        \draw (Royen1991Gen) to (Royen1995TwoFact);
        \draw (Royen1991Gen) to (Royen1994);
        \draw (Royen1991OneFact) to (Beaulieu2011Nakagami);
        \draw[green] (GeneralChiSqComplex) to (GeneralChiSqComplex-|Bithas.north east) to (Bithas.north east) ;
        \draw (GeneralChiSqComplex) to (Beaulieu2011GenRician);
        \draw (GeneralChiSqComplex.north east) to (GeneralChiSqComplex.north east-|Beaulieu2011GenRician.north) to (Beaulieu2011GenRician.north);
        
        \draw[green] (Bithas.south east) to (Bithas.south east|-Rayleigh2) to (Rayleigh2) to (Rayleigh2|-Wiegand2019) to (Wiegand2019);
        \draw[green] (Royen2016) to (Royen2016-|Bithas.north) to (Bithas.north);
      \end{scope}
    \end{tikzpicture}}
                \end{center}
                \caption{Subsumption Graph}
                \label{fig:subMulti1}
            \end{figure}
            
            Next, we classify formulae according to their types in Table \ref{tab:FourmulaTypeMulti}.
            \begin{table}[ht!]
                \centering
                \newcommand{\Smhndcite}[2][]{\citeauthor*{#2} \textcolor{blue}{\cite[#1]{#2}},}
\centering
\begin{tabular}{|p{8cm}|}
    \hline
    Exact Formulae  \\
    \hline
    \textbf{Finite Expressions} \newline
    \Smhndcite{al-naffouriDistributionIndefiniteQuadratic2016}
    
    \rule{8cm}{1pt} \newline
    \textbf{Infinite Series} \newline
    \Smhndcite{lavernyEstimationMultivariateGeneralized2021}
    \Smhndcite{tekinayMomentsQuadrivariateRayleigh2020}
    \Smhndcite{bithasNovelResultsMultivariate2019}
    \Smhndcite{wiegandSeriesApproximationsRayleigh2019}
    \Smhndcite{wiegandSeriesRepresentationMultidimensional2018}    
    \Smhndcite{morales-jimenezDiagonalDistributionComplex2011}
    \Smhndcite{dharmawansaDiagonalDistributionComplex2009}
    \Smhndcite{dharmawansaNewSeriesRepresentation2009}
    \Smhndcite{peppasTrivariateNakagamimDistribution2009}
    \Smhndcite{hagedornTrivariateChisquaredDistribution2006}
    \Smhndcite{chenInfiniteSeriesRepresentations2005}
    \Smhndcite{schoneJointDistributionQuadratic2000}    
    \Smhndcite{royenCENTRALNONCENTRALMULTIVARIATE1995}
    \Smhndcite{royenMultivariateGammadistributionsConnected1994}
    \Smhndcite{royenExpansionsMultivariateChisquare1991}
    \Smhndcite{royenMultivariateGammaDistributions1991}

    \rule{8cm}{1pt} \newline
    \textbf{Numerical Integartion} \newline
    \Smhndcite{khammassiNewAnalyticalApproximation2022}
    \Smhndcite{beaulieuNewSimplestExact2017}
    \Smhndcite{royenNonCentralMultivariateChiSquare2016}    
    \Smhndcite{beaulieuNovelRepresentationsBivariate2011}
    \Smhndcite{beaulieuNovelRepresentationsBivariate2011}
    \Smhndcite{royenIntegralRepresentationsApproximations2007}    
    \Smhndcite{tavaresStatisticsSumSquared2007}
    \Smhndcite{soongUsingComplexIntegration1997}
    \Smhndcite{royenCENTRALNONCENTRALMULTIVARIATE1995}
    \Smhndcite{royenMultivariateGammadistributionsConnected1994}
    \Smhndcite{royenMultivariateGammaDistributions1991}
    
    \rule{8cm}{1pt} \newline
    \textbf{Sequences of RVs} \newline
    \Smhndcite{khammassiNewAnalyticalApproximation2022} 
    \\
    \hline
\end{tabular}
                \caption{MultiQF Formula Type}
                \label{tab:FourmulaTypeMulti}
            \end{table}
            As we can see, all the formulae are exact. As for Khammassi's \cite{khammassiNewAnalyticalApproximation2022} formula, although the formula \eqref{eq:Khammassi-Formula} is approximate since it requires $\epsilon-\text{rank}\neq N$, the approximate random vector \eqref{eq:Khammassi-Approx} approaches the original one as $\epsilon$ tends to zero. Actually, when $\epsilon$ becomes smaller than the minimum eigenvalue of $\dmt{\Sigma}$, the approximate vector becomes identically identically distributed to the original random vector.

            \begin{figure}[ht!]
                \begin{center}
                    \scalebox{.5}{\begin{tikzpicture}[scale=3,every node/.append style={draw,rectangle,fill=blue!10},]
      \draw
        (1.28, 3.06) node (0){\cite[]{khammassiNewAnalyticalApproximation2022}}
        (0.63, 0.9) node (2){\cite[]{tekinayMomentsQuadrivariateRayleigh2020}}
        (1.28, 2.34) node (4){\cite[]{wiegandSeriesApproximationsRayleigh2019}}
        (0.27, 1.62) node (5){\cite[]{wiegandSeriesRepresentationMultidimensional2018}}
        (2.47, 0.9) node (6){\cite[Eq8]{beaulieuNewSimplestExact2017}}
        (0.99, 1.62) node (7){\cite[Eq13]{beaulieuNewSimplestExact2017}}
        (5.17, 0.18) node (11){\cite[Eq14]{beaulieuNovelSimpleRepresentations2011}}
        (7.09, 1.62) node (3){\cite[]{bithasNovelResultsMultivariate2019}}
        (6.05, 0.18) node (10){\cite[]{beaulieuNovelRepresentationsBivariate2011}}
        (6.05, 0.9) node (12){\cite[Eq18]{beaulieuNovelSimpleRepresentations2011}}
        (4.09, 5.22) node (8){\cite[]{lasserreComputingGaussianExponential2017}}
        (2.63, 4.5) node (1){\cite[]{lavernyEstimationMultivariateGeneralized2021}}
        (4.9, 3.78) node (9){\cite[Thm4]{royenNonCentralMultivariateChiSquare2016}}
        (2.9, 3.78) node (RoyenComplex){\cite[Sec6]{royenNonCentralMultivariateChiSquare2016}}
        (5.17, 0.9) node (13){\cite[Eq20]{beaulieuNovelSimpleRepresentations2011}}
        (6.05, 1.62) node (14){\cite[Eq22]{beaulieuNovelSimpleRepresentations2011}}
        (2.95, 3.06) node (15){\cite[]{morales-jimenezDiagonalDistributionComplex2011}}
        (3.73, 5.5) node (16){\cite[]{dharmawansaDiagonalDistributionComplex2009}}
        (4.54, 3.06) node (17){\cite[]{dharmawansaNewSeriesRepresentation2009}}
        (3.19, 1.62) node (18){\cite[]{peppasTrivariateNakagamimDistribution2009}}
        (5.98, 2.34) node (19){\cite[]{royenIntegralRepresentationsApproximations2007}}
        (4.45, 4.5) node (20){\cite[]{royenIntegralRepresentationsConvolutions2007}}
        (5.17, 4.5) node (21){\cite[]{tavaresStatisticsSumSquared2007}}
        (2.0, 2.34) node (22){\cite[Eq17]{hagedornTrivariateChisquaredDistribution2006}}
        (2.72, 2.34) node (23){\cite[Eq26]{hagedornTrivariateChisquaredDistribution2006}}
        (2.47, 1.62) node (24){\cite[Eq5]{chenInfiniteSeriesRepresentations2005}}
        (3.44, 2.34) node (25){\cite[Eq16]{chenInfiniteSeriesRepresentations2005}}
        (5.89, 4.5) node (26){\cite[]{schoneJointDistributionQuadratic2000}}
        (6.7, 3.06) node (27){\cite[Eq3.9]{royenCENTRALNONCENTRALMULTIVARIATE1995}}
        (6.7, 2.34) node (28){\cite[Eq3.13]{royenCENTRALNONCENTRALMULTIVARIATE1995}}
        (4.54, 2.34) node (29){\cite[Eq3.6]{royenCENTRALNONCENTRALMULTIVARIATE1995}}
        (5.26, 2.34) node (30){\cite[]{royenMultivariateGammadistributionsConnected1994}}
        (5.26, 3.06) node (31){\cite[]{royenExpansionsMultivariateChisquare1991}}
        (5.17, 1.62) node (32){\cite[Eq2.1.13]{royenMultivariateGammaDistributions1991}};
      \begin{scope}[->]
        \draw (0) to (4);
        \draw (4) to (0);
        \draw (4) to (5);
        \draw (4) to (6);
        \draw (4) to (7);
        \draw (4) to (11);
        \draw (5) to (2);
        \draw (7) to (2);
        \draw (3) to (12);
        \draw (12) to (10);
        \draw (12) to (11);
        \draw (8) to (1);
        \draw (8) to (RoyenComplex);
        \draw (RoyenComplex) to (15);
        \draw (8) to (16);
        \draw (8) to (20);
        \draw (8) to (21);
        \draw (8) to (26);
        \draw (9) to (17);
        \draw (9) to (27);
        \draw (9) to (31);
        \draw (13) to (11);
        \draw (14) to (12);
        \draw (14) to (13);
        \draw (15) to (4);
        \draw (15) to (13);
        \draw (15) to (22);
        \draw (15) to (23);
        \draw (15) to (25);
        \draw (18) to (6);
        \draw (19) to (31);
        \draw (20) to (9);
        \draw (22) to (18);
        \draw (22) to (24);
        \draw (24) to (6);
        \draw (27) to (3);
        \draw (27) to (28);
        \draw (28) to (14);
        \draw (28) to (32);
        \draw (29) to (18);
        \draw (29) to (32);
        \draw (31) to (4);
        \draw (31) to (19);
        \draw (31) to (29);
        \draw (31) to (30);
        \draw (32) to (13);
      \end{scope}
    \end{tikzpicture}}
                \end{center}
                \caption{Subsumption Graph}
                \label{fig:subMulti2}
            \end{figure}

\chapter{Numerical Concerns and Open Problems}
    
    In this section, we discuss a select collection of numerical concerns that one may face while applying the methods discussed in Sections 1, 3, and 4. In addition to that, we end our monograph with a collection of pertinent open problems.
    
    \section{Numerical Concerns}

        \subsection{Curse of Dimensionality}

        In Section 1, we demonstrated that a non-trivial quadratic form can be represented as a linear combination of chi-squares (possibly with an added Gaussian). Although the result hold theoretically for any dimension $N$, the linear coefficients are calculated via (an equivalent of) matrix diagonalization. The computational complexity is $\mathcal{O}(N^3)$. In genetics \cite{chenNumericalEvaluationMethods2019}, the number of Gaussian variables $N$ is in the range of thousands. 

            \subsubsection{Randomized Projection Methods}
             Randomized projection methods are computational techniques used to reduce the dimensionality of high-dimensional data. Low-rank matrix approximations, e.g., the truncated singular value decomposition and the rank-revealing QR decomposition, play a crucial role in data analysis and computation. Randomized algorithms for constructing low-rank matrix approximations use random sampling to identify a subspace that captures most of the action of a matrix. The input matrix is compressed into this subspace, explicitly or implicitly, and further manipulated deterministically to achieve the desired low-rank factorization. 

             The collection of standard matrix decompositions includes the pivoted QR factorization, the eigenvalue decomposition, and the singular value decomposition (SVD), each of which reveals the (numerical) range of a matrix. Truncated forms of these decompositions are frequently employed to represent a low-rank approximation of a matrix: 
             \begin{equation}
                 \label{eqn:Low-RankApprox}
                 \underset{m \times n}{\dmt{A}} \quad \approx \quad \underset{m \times k}{\dmt{B}} \qquad \underset{k \times n}{ \dmt{C}}
             \end{equation}
             The inner dimension is called the numerical rank of the matrix. If the numerical rank is much smaller than $m$ or $n$, a factorization of the form \eqref{eqn:Low-RankApprox} allows inexpensive storage of the matrix and faster multiplication with vectors or matrices.

             Matrix approximation is divided into two computational stages:
                \begin{itemize}
                    \item \textbf{Stage A:} Construct a low-dimensional subspace that captures the matrix's significant features, represented by an orthonormal basis matrix \(\dmt{G}\), and
                     \begin{equation}
                         \dmt{A} \approx \dmt{G}\dmt{G}^H\dmt{A}
                     \end{equation}
                    \item \textbf{Stage B:} Use the basis \( \dmt{G} \) to perform standard matrix factorizations (e.g., SVD, QR) of $\dmt{A}$. Note that $\dmt{B} = \dmt{G}$ and $\dmt{C} = \dmt{G}^H\dmt{A}$.  
                \end{itemize}
                Randomized algorithms efficiently address Stage A by leveraging random sampling to approximate the matrix's range. This significantly reduces computational complexity while maintaining accuracy. Stage B employs established deterministic methods to refine the approximations further.

                Approximating the range of a matrix via randomness can be formulated in the following two problems:  
                \begin{itemize}
                    \item \textbf{Fixed-Precision Approximation:} Construct a basis matrix \( \dmt{G} \) that captures most of the matrix's action within a specified error tolerance. We seek a matrix $\dmt{G}$ with $k$ orthonormal columns such that
                    \begin{equation}
                        \| \dmt{A} - \dmt{G}\dmt{G}^H\dmt{A} \| \le \epsilon.
                    \end{equation}
                    \item \textbf{Fixed-Rank Approximation:} Generate a basis with a pre-defined number of columns to minimize approximation error. Given a matrix $\dmt{A}$, a target rank $k$, and an oversampling parameter $p$ (adding a few extra columns), we seek to construct a matrix $\dmt{G}$ with $k + p$ orthonormal columns such that 
                    \begin{equation}
                       \| \dmt{A} - \dmt{G}\dmt{G}^H\dmt{A} \| \approx \min_{\text{rank}(\dmt{Z}) \le k } \| \dmt{A} - \dmt{Z} \|.
                    \end{equation}
                \end{itemize}  
                The process involves generating random test matrices, projecting them onto the matrix's range, and orthonormalizing the results to form the desired basis. Oversampling ensures robustness and accuracy, even in challenging cases. These methods are computationally efficient and highly stable compared to traditional approaches. 

        \subsection{Machine Precision Limitations}

            Theoretical equality does not guarantee computability. Since real numbers are rounded to a certain precision in a machine, it is possible that certain operations will not be represented correctly by the machine. A straightforward example is the successive division by 2 of a real number. At some point, the quotient will be smaller than the machine precision. 

            Consider the chi-square density expansion of the PDF of the QF as in Section 3 \begin{align*}
                f_{\rs{Q}}(q)&=u(q)\sum_{k=0}^\infty c_k\dfrac{q^{N/2+k-1}e^{-q/2\beta}}{(2\beta)^{N/2+k}\Gamma(N/2+k)}=\sum_{k=0}^\infty c_k f_{\beta \chi_{N+2k}^2}(q)
            \end{align*}
            As the $k^{\text{th}}$ term depends on the fraction $\dfrac{q^{\frac{N}{2}+k-1}}{\Gamma(\frac{N}{2}+k)}$, higher order terms cannot be stored in standard machine bits. Hence, in MATLAB, the numerator and the denominator are flagged infinite, an the term becomes a NaN. 

            As a numerical example, take the quadratic form \begin{equation}
                \rs{Q}=\rv{x}^T\dmt{A}\rv{x},\quad x\sim\mathcal{N}_2(\dv{\mu},\dmt{\Sigma}) \label{eq:QF-chi-series-example}
            \end{equation} where $$\dmt{A}=\begin{bmatrix}
                2 & 1\\
                1 & 2\\
            \end{bmatrix}, \; \dmt{\Sigma}=\begin{bmatrix}
                1 & \frac{1}{2}\\
                \frac{1}{2} & 1
            \end{bmatrix}, \; \dmt{\mu}=\begin{bmatrix}
                5 \\
                5
            \end{bmatrix}$$

            \begin{figure}[ht!]
                \centering
                \includegraphics[width=0.5\linewidth]{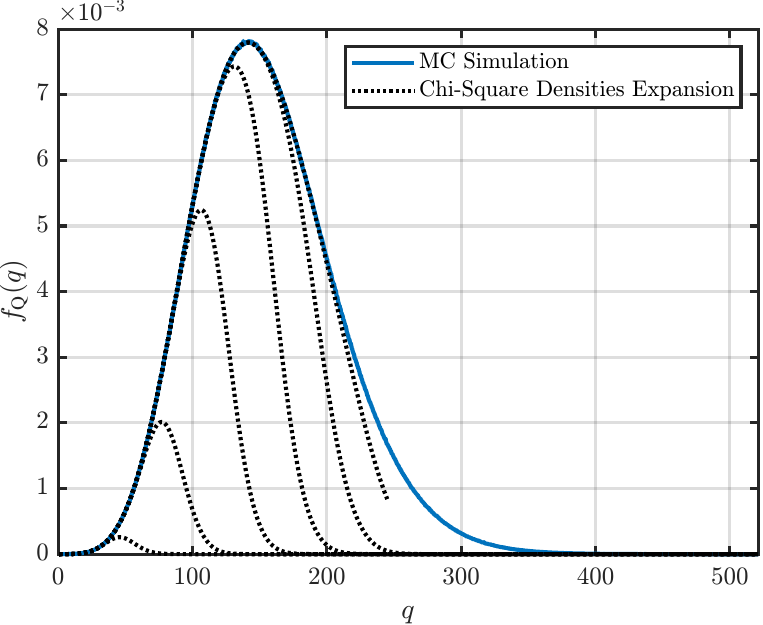}
                \caption{Partial sums of chi-square density expansion of the PDF of \eqref{eq:QF-chi-series-example} using MATLAB standard precision}
                \label{fig:chi-partial-sums}
            \end{figure}
            In Figure \ref{fig:chi-partial-sums}, partial sums of the chi-square density expansion of the probability density function are plotted with the Monte-Carlo simulation. For the chi-square density expansion, MATLAB is used with standard machine precision. The Monte-Carlo simulation was generated using $10^8$ instances of $\rv{x}$. Note that, in Figure \ref{fig:chi-partial-sums}, the truncation index ranges from $30$ to $170$. We can clearly see that, for low truncation indices and high PDF arguments, the truncated chi-square density is significantly inaccurate. Incorporating more terms adds NaNs to the sum, so we can see that the support is shrinking.

            A possible solution to that is to work with non-standard machine precision. Another solution is to resort to logarithmic calculations. In particular, we can rewrite the problematic terms as
            \begin{equation*}
            \dfrac{q^{\frac{N}{2}+k-1}}{\Gamma\left(\frac{N}{2}+k\right)}=\exp\left[\left(\dfrac{N}{2}+k-1\right)\ln(q)-\ln\left(\Gamma\left(\frac{N}{2}+k-1\right)\right)\right]
            \end{equation*}
            The natural logarithm of the gamma function is a numerically built-in function in MATLAB. Using the latter function, we overcome the problem, and reproduce the same graph \ref{fig:chi-partial-sums} showing convergence, as in Figure \ref{fig:chi-partial-sums-log}. Note that the truncation index ranges again from $30$ to $170$.
            \begin{figure}[ht!]
                \centering
                \includegraphics[width=0.5\linewidth]{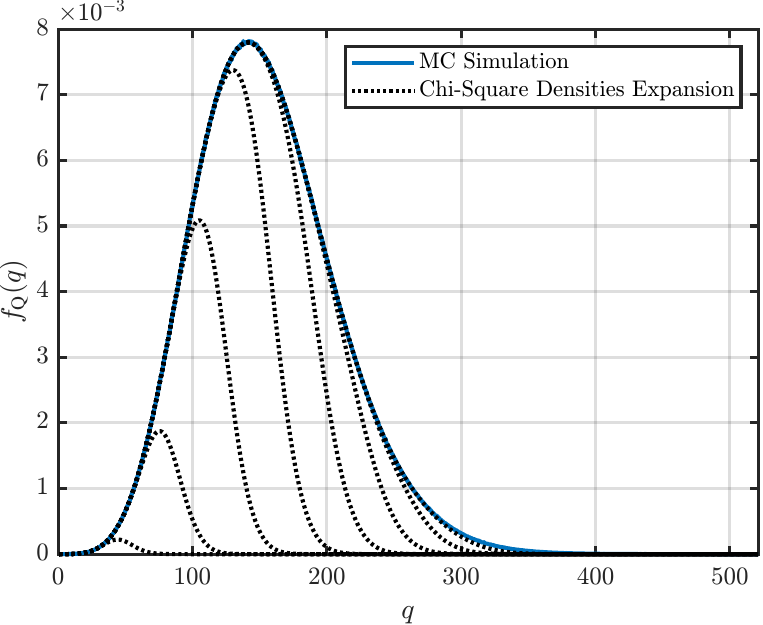}
                \caption{Partial sums of chi-square density expansion of the PDF of \eqref{eq:QF-chi-series-example} using MATLAB standard precision with logarithmic correction}
                \label{fig:chi-partial-sums-log}
            \end{figure}

        \subsection{Evaluation of Special Functions}

            Not all special functions are built-in in common programming languages. Moreover, some are available only as symbolic functions. While symbolic functions are usually better from an accuracy perspective, they slow down the implementation, possibly to the point of infeasibility. 

            Consider, for example, the series expansion of the PDF of an indefinite form (refer to the previous equation), developed by Provost and Rudiuk \cite{provostExactDistributionIndefinite1996}. The Whitaker-W function is available only as a symbolic scheme in MATLAB (refer to the table).

    \section{Open Problems}




        \begin{openProblem} Let $\rv{Q}$ be multivariate quadratic form in $\rv{x} \sim \mathcal{N}(\mu, \Sigma)$, $x \in \reals{N}$ with $\rs{Q}_m = \rv{x}^T \dmt{A}_m \rv{x} + \dv{b}_m^T \rv{x} + c_m$. Define the set $\Omega = \cap_{m=1}^M \{Q_m(x) < q\}$. We seek the computational complexity of $\mathbb{P}{\Omega}$ in terms of the number of quadratic forms $M$ and the numbers of guassians $N$, for the general multivariate form and for some classes of quadratic forms: central forms, positive definite forms.

        \begin{remark}
        Even in the single case, the many complicated infinite series, suggest that the problem is inherently complex. Which class of complexity does this problem fall into is the question we are asking. 
        \end{remark}
            
        \end{openProblem}
        
        \begin{openProblem} [Conditions for the equivalence between Real and complex Multi Chi-squares] Consider the real and complex multi-chi-square MGFs (ref) and (ref). What conditions can be imposed on $R_R,K_r, R_C, K_C$ such that both forms are the same.
        \end{openProblem}
        \begin{remark}
            In the central case, the question is a linear algebaric question about when the polynomials from the determinants are equal $|I-2SR_R|^{K_R/2} = |I-SR_C|^{K_C}$. For the low dimensional cases $M = 2,3,4$ one can explicitly find the conditions and list them out as we have done in (ref). We seek ``simple'' expression(s) in the general case.
        \end{remark}

        \begin{openProblem} [Multivariate saddlepoint for Quadratic forms]
            \label{openProb:multivariateSaddlepoint}
            Let $\rv{Q} = \begin{bmatrix}\rs{Q}_1 & \rs{Q}_2 & \ldots \rs{Q}_M\end{bmatrix} $ be multivariate quadratic form in $\rv{x} \sim \mathcal{N}(\mu, \Sigma)$ with $\rs{Q}_m = \rv{x}^T \dmt{A}_m \rv{x} + \dv{b}_m^T \rv{x} + c_m$. Let $M_{\rv{Q}(\dv{s})}$ be the moment generating function (MGF) of $\rv{Q}$ and $K(s) = \log M_Q(s)$ is cumulant generating function (CGF). Let $s^*$ be a solution to $\nabla_sK_Q(s) = \dv{q}$. We seek a function $F_{\text{spa}}(s^*, q)$ a function of the the saddlepoint $s^*$ and the CDF evaluation point $q$ such that $F_Q(\dv{q})$ and $F_{\text{spa}}(s^*, \dv{q})$ have the same tail decay rate in all directions as $\dv{q} \to \infty$.
        \end{openProblem}
        
        \begin{remark}
            Note in \Cref{openProb:multivariateSaddlepoint} we avoided requiring an absolute error gurantee such as $|F_Q(q) - F_{\text{spa}}| \leq \epsilon$, since even in the univariate case the saddlepoint methods \emph{do not} satisfy such a requirement. In the literature as we have mentioned there are several attempts to produce multivariate saddlepoint methods. In particular, we refer the reader to \cite{2007-Butler-saddlepointAapproximationsWithApplicationst} for a textbook introduction to saddlepoint methods. We refer also to \cite{kolassaMultivariateSaddlepointTailProbabilityApproximations2003,jixinLiKolassa2010MultivariateSaddlepoint}, for derivations of multivariate saddlepoint methods. Finally a more detailed description of multivariate saddlepoint methods can be found \cite{kolassaSeriesapproximationmethodsinstatistics2006}
        \end{remark}
        \begin{remark}
            The saddlepoint PDF is available as a straightforward extension of the univariate case using the Hessian of the CGF: $$f_{Q}(q) \approx [2\pi \det{\nabla^2_sK(s^*)}]^{-\frac{M}{2}}\exp{[K(s^*) - s^{*T}q]}$$
            However, integrating this approximate PDF does not well approximate the CDF. 
        \end{remark}
        
        
        \begin{openProblem} \label{op:gamma-complex}
            Let $x \in \reals{}, y \in \complexs{}$ Consider the gamma function, $\mathcal{G}_\alpha(x,y) = \sum_{n=0}^\infty G_{\alpha + n}(x) y^n $ with $G_{a}(x) = \gamma(\alpha, x)/\Gamma(\alpha)$ being the normalized  lower incomplete gamma function. We seek an efficient evaluation method of $G_\alpha(\cdot, \cdot)$ when the second argument is not pure real.
        \end{openProblem}
        \begin{remark}
            This function, or closely related ones, appear in many of Thomas Royen's multivariate chi-squares papers. In \cite[Eq 2.14]{royenNonCentralMultivariateChiSquare2016} it is expressed as: $G_\alpha(x,y) = [G_\alpha(x) - y^{1-\alpha}e^{x(y-1)}G_\alpha(xy) ] / (1 - y)$. Thus, the problem reduces to evaluating the incomplete gamma function with complex argument. Unlike its real counterpart, to the best of our knowledge, there are no efficient widely known algorithms to evaluate it. For example MATLAB's implementation \href{https://www.mathworks.com/help/symbolic/sym.igamma.html}{igamma} is a symbolic. No software is listed in \cite[\href{https://dlmf.nist.gov/5.4.E1}{DLMF (8.28)}]{NIST:DLMF}, for the complex argument incomplete gamma function. Efficient implementation of this function, would be a starting block for ``efficient'' methods for computing multivariate chi-squares.
        \end{remark}
        
        
        \begin{openProblem}[{Lassere's method with Hermite polynomials}] Let $\gamma$ be the standard gaussian measure on $\reals{n}$. Let $\Omega \subset \reals{n}$ be a semialgebaraic set. Let $\mu, \nu$ be measures with $\mu$ restricted to $\Omega$. We seek to approximate the solution of the program: $\sup_{\mu,\nu} \mu(\Omega)\,\, \text{s.t.}\,\, \gamma = \mu + \nu$ by forming programs that moment match up to the $k^{th}$ order Hermite moments defined as: $h_{\alpha}(\mu) = \int_\Omega \text{He}(x) d\mu$, with $\alpha \in \mathbb{N}^{n}, \sum_i \alpha_i \leq k$.
        \end{openProblem}
        \begin{remark}
            In \cite{lasserreComputingGaussianExponential2017}, it is conjectured that using Hermite polynomials would have significantly better convergence and numerical properties compared to the standard monomial moments. This needs further study and validation. Note the univariate saddlepoint method in a sense accounts for the first Hermite moment and obtains extremely accurate results. To see this derivation of the saddlepoint approximation in terms of hermite polynomials see \cite{1954-Daniels-SaddlepointApproximationsInStatistics},\cite[Ch. 2]{2001-Kuonen-ComputerintensiveStatisticalMethodsSaddlepointApproximations}.
        \end{remark}

\section*{Acknowledgements}
The authors are grateful to Ulrike Fischer, who designed the original style files, and Neal Parikh, who laid the groundwork for those style files.

\appendix
\chapter{Notation}
In this monograph, we adopt the following notation: for random quantities, we use non-italic sans-serif font while for deterministic quantities, we use italic normal font. For vectors and matrices, we use bold fonts for random and deterministic quantities and non-bold fonts for scalar quantities. The following list summarizes the common symbols we use.

\begin{table}[ht!]
\centering
\begin{tabularx}{0.95\textwidth}{@{}cl@{}}
   $\mathbb{R}$ & set of real numbers\\
   $\mathbb{C}$ & set of complex numbers\\
   $\Re(.)$ & real part of a complex number\\
   $\Im(.)$ & imaginary part of a complex number\\
   $\rv{X}_j$ & column $j$ of the random matrix $\rmt{x}$. \\
   $\rs{x}_{ij}$ & element $i,j$ of the random matrix $\rmt{X}$. \\
   $\dmt{I}_N$ & identity matrix of dimension $N$\\
   $|a|$ & absolute value of a complex number $a$\\
   $\|.\|$ & $\ell^2$-norm of a vector or its induced matrix norm\\
   $|\dmt{A}|$ & determinant of a matrix $\dmt{A}$\\
   $.^T$ & transpose of a matrix\\
   $.^H$ & conjugate transpose of a matrix\\
   $\diag(a_1,\ldots,a_n)$ & diagonal matrix with diagonal entries $a_1,\ldots,a_n$\\
   $\diag(\dmt{A})$ & diagonal of the matrix $\dmt{A}$\\
   $\mathcal{N}(\mu,\sigma^2)$ & Gaussian random variable with mean $\mu$ and variance $\sigma^2$\\
   $\mathcal{N}_N(\dv{\mu},\dmt{\Sigma})$ & Gaussian $N$-random vector with mean \\
    & $\dv{\mu}$ and covariance matrix $\dmt{\Sigma}$\\
   $\chi^2_\nu$ & chi-square random variable with $\nu$ degrees of freedom\\
   $\chi^2_\nu(\delta^2)$ & chi-square random variable with $\nu$ degrees of freedom\\
    & and non-centrality parameter $\delta^2$\\
   $\mathbb{P}(.)$ & probability of an event\\
   $\mathbb{E}(.)$ & expectation of a random object\\
   $\binom{m}{n}$ & binomial coefficient
\end{tabularx}
\end{table}

\chapter{Select Proofs}
            \section{Series Expansions for Positive Definite Forms}\label{append:series-expansions}
                This appendix rephrases Kotz's \cite{kotzSeriesRepresentationsDistributions1967Central,kotzSeriesRepresentationsDistributions1967NonCentral} unified proofs of series expansion of positive definite quadratic forms in Gaussian random variables.

                We start from the moment generating function, which coincides, up to the sign of the argument, with the (one-sided) Laplace transform of the probability density function. Note that this coincidence is particularly due to the definiteness of the form.

                We seek a series expansion of the PDF of the form: \begin{equation}
                    f_{\rs{Q}}(q) = \sum_{k\geq 0} c_k h_k(q)
                \end{equation}
                where $\{h_k(q)\}_{k\geq 0}$ is a sequence of well-known functions. Assume that these sequences satisfy $$\sum_{k\geq 0} |c_k| |h_k(q)|\leq Ae^{bq}$$ for some real numbers $A$ and $b$. Define $$g(q) = \sum_{k\geq 0}c_Kh_k(q)$$ Then, we can easily show, using Lebesgue dominated convergence theorem, that the Laplace transforms $\hat{h}$ and $\hat{g}$ exist, and that we can interchange the Laplace transform and the infinite summation: $$\hat{g}(s)=\sum_{k\geq 0}c_k\hat{h}_k$$Assume further that Laplace transforms can be written as $$\hat{h}_k(s)=\xi(s)\eta^k(s)$$
                where $\xi(s)$ is a non-vanishing function, analytic for $\Re(s)>b$, and $\eta(s)$ is also analytic for $\Re(s)>b$, and invertible with $\zeta(\eta(s))=s$. 
                Then \begin{align*}
                    (\hat{g}\circ \zeta)/(\xi\circ \zeta)(\theta)&=\dfrac{1}{\xi(\zeta(\theta))}\sum_{k=0}^\infty c_k\hat{h}_k(\zeta(\theta))\\
                    &=\dfrac{1}{\xi(\zeta(\theta))}\sum_{k=0}^\infty c_k\xi(\zeta(\theta))\eta^k(\zeta(\theta))\\
                    &=\sum_{k=0}^\infty c_k[\eta(\zeta(\theta))]^k\\
                    &=\sum_{k=0}^\infty c_k \theta^k &\text{for } \Re(\zeta(\theta))>b
                \end{align*}
                Let $M(\theta)=(\hat{f}_{\rs{Q}}\circ \zeta)/(\xi\circ \zeta)(\theta)$. Choose the sequences $c_k$ and $h_k$ so that $M(\theta)=\sum_{k=0}^\infty c_k\theta^k$. Prove that the corresponding series $g(q)$ is dominated by an exponential function. Hence $$(\hat{f}_{\rs{Q}}\circ \zeta)/(\xi\circ \zeta)(\theta)=(\hat{g}\circ \zeta)/(\xi\circ \zeta)(\theta)$$ or $$(\hat{f}_{\rs{Q}}\circ \zeta)(\theta)=(\hat{g}\circ \zeta)(\theta)$$ Apply the equality at $\theta=\eta(s)$. Thus $$\hat{f}_{\rs{Q}}(s)=\hat{g}(s) \text{ for } \Re(s)\geq b$$ By the uniqueness of Laplace transform, we can deduce that $$f_{\rs{Q}}(q)=g(q)=\sum_{k=0}^\infty c_kh_k(q)$$




\bibliographystyle{plainnat}
\bibliography{bibfiles/BibFinal2_arxiv_bibtex}

\end{document}